\documentclass[journal, 10pt, twoside]{IEEEtran}

\newcommand{\markchanges}{false}

\ifCLASSOPTIONdraftcls
	\newcommand{\defaultfigwidth}{0.65\columnwidth}
\else
  \newcommand{\defaultfigwidth}{\columnwidth}
\fi



\usepackage{times}
\usepackage{ifthen}
\usepackage[english]{babel}
\usepackage{amsmath, amsfonts, bm}
\usepackage{graphicx}
\usepackage{color}
\usepackage[latin1]{inputenc}
\usepackage{flushend}

\usepackage{psfrag}

\usepackage{amsthm}

\newtheorem{lem}{Lemma}
\newtheorem{cor}{Corollary}
\newtheorem{theorem}{Theorem}
\theoremstyle{definition}

\usepackage{bm}
\newcommand{\ma}[1]{\bm{ {#1} }}         
\newcommand{\ten}[1]{\ma{{\mathcal{#1}}}} 

\newcommand{\compl}{\mathbb{C}}        
\newcommand{\real}{\mathbb{R}}         

\newcommand{\eqdef}{\stackrel{.}{=}} 

\renewcommand{\pi}{\uppi} 
\renewcommand{\j}{\jmath} 

\newcommand{\realof}[1]{{\rm Re}\left\{ #1 \right\}}           
\newcommand{\imagof}[1]{{\rm Im}\left\{ #1 \right\}}           
\newcommand{\imagofnlr}[2]{{\rm Im}#2\{ #1 #2\}}           
\newcommand{\vecof}[1]{{\rm vec}\left\{ #1 \right\}}           
\newcommand{\diagof}[1]{{\rm diag}\left( #1 \right)}         
\newcommand{\traceof}[1]{{\rm Tr}\left( #1 \right)}          
\newcommand{\rankof}[1]{{\rm rank}\left( #1 \right)}         
\newcommand{\argof}[1]{{\rm arg}\left( #1 \right)}         

\newcommand{\vecofnlr}[2]{{\rm vec}#2\{ #1 #2\}}           

\newcommand{\normof}[2]{\left\|#1\right\|_{#2}}
\newcommand{\twonorm}[1]{\left\|#1\right\|_2}                  
\newcommand{\fronorm}[1]{\left\|#1\right\|_{\rm F}}            
\newcommand{\honorm}[1]{\left\|#1\right\|_{\rm H}}             
\newcommand{\normofnlr}[3]{#3\|#1#3\|_{#2}}     
\newcommand{\twonormnlr}[2]{\normofnlr{#1}{2}{#2}} 
\newcommand{\froonormnlr}[2]{\normofnlr{#1}{{\rm F}}{#2}} 

\newcommand{\expvof}[1]{\mathbb{E}\left\{ #1 \right\}}         
\newcommand{\expof}[1]{{\rm e}^{#1}}                           

\newcommand{\eigvof}[2]{\mathop{{\rm EV}}_{#2}\left\{#1\right\}} 


\newcommand{\bigO}[1]{{\mathcal O}\left\{#1\right\}}   

\newcommand{\matelem}[3]{\left[#1\right]_{(#2,#3)}} 

\newcommand{\argmin}{\mathop{{\arg \min}}}         

\newcommand{\proj}[1]{\mathbf{P\!r}_{#1}}           
\newcommand{\projp}[1]{\mathbf{P\!r}_{#1}^\perp}    

\newcommand{\sig}[1]{#1^{\rm[s]}}
\newcommand{\sigH}[1]{#1^{\rm[s]^\herm}}
\newcommand{\sigT}[1]{#1^{\rm[s]^\trans}}
\newcommand{\sigC}[1]{#1^{\rm[s]^\conj}}
\newcommand{\siginv}[1]{#1^{\rm[s]^{-1}}}

\newcommand{\noi}[1]{#1^{\rm[n]}}
\newcommand{\noiH}[1]{#1^{\rm[n]^\herm}}
\newcommand{\noiT}[1]{#1^{\rm[n]^\trans}}
\newcommand{\noiC}[1]{#1^{\rm[n]^\conj}}

\newcommand{\fba}[1]{#1^{({\rm fba})}}
\newcommand{\fbaT}[1]{#1^{({\rm fba})^\trans}}
\newcommand{\fbaH}[1]{#1^{({\rm fba})^\herm}}

\newcommand{\fbainv}[1]{#1^{({\rm fba})^{-1}}}
\newcommand{\sigfba}[1]{#1^{{\rm [s]}({\rm fba})}}


\newcommand{\subsel}[1]{#1^{({\rm sel})}}

\newcommand{\outerp}{\circ}  
\newcommand{\krp}{\diamond}  
\newcommand{\kron}{\otimes}  

\newcommand{\pinv}{+}          
\newcommand{\conj}{*}          
\newcommand{\trans}{{\rm T}}   
\newcommand{\herm}{{\rm H}}    

\newcommand{\tconcat}{\mbox{\Huge \textvisiblespace \hspace{-0.2ex}} \;}
\newcommand{\catten}[3]{\left[#1 \; \tconcat_{#3} \; #2\right]}                    
\newcommand{\cattenmult}[4]{\left[#1 \tconcat_{#4} #2 \tconcat_{#4} \ldots \tconcat_{#4} #3\right]}            
\newcommand{\unf}[2]{\left[ \ten{#1} \right]_{(#2)}}
\newcommand{\unfnot}[2]{\left[ #1 \right]_{(#2)}}

\usepackage{color}
\newcommand{\red}[1]{{\color{red} #1}}

\ifthenelse{\equal{\markchanges}{true}}{
\newcommand{\note}[1]{\red{\bfseries #1}}
\newcommand{\TODO}[1]{\par\red{\bfseries ***TODO: #1***}\\}

\newcommand{\reva}[1]{#1}

\newcommand{\revf}[1]{#1}
\newcommand{\revA}[1]{\red{#1}}
\newcommand{\revB}[1]{\blue{#1}}
\newcommand{\temporary}[1]{#1}
}{
\newcommand{\note}[1]{\errmessage{No notes in the final version!}}
\newcommand{\TODO}[1]{\errmessage{No more ToDOs in the final version!}}

\newcommand{\reva}[1]{#1}

\newcommand{\revf}[1]{#1}
\newcommand{\revA}[1]{#1}
\newcommand{\revB}[1]{#1}
\newcommand{\temporary}[1]{}
}

\newcommand{\thetitle}{A framework for the analytical performance assessment of matrix and tensor-based ESPRIT-type algorithms}
\title{\thetitle}
\author{Florian Roemer$^*$~\IEEEmembership{Student Member,~IEEE}
        and
		    Martin Haardt,~\IEEEmembership{Senior Member,~IEEE}
%
%
%
%
\thanks{
Parts of this paper have been published at the
				{\em IEEE International Conference on Acoustics, Speech, and Signal Processing (ICASSP 2011)},
        Prague, Czech Republic, May 2011, at the
        {\em IEEE International Conference on Acoustics, Speech, and Signal Processing (ICASSP 2010)},
        Dallas, TX, Mar. 2010, and at the {\em 
        Third International Workshop on Computational Advances in Multi-Sensor Adaptive Processing 
        (CAMSAP 2009)},
        Aruba, Dutch Antilles, Dec. 2009.}
%
%
\thanks{
The authors 
F. Roemer and M. Haardt are with
Ilmenau University of Technology,
Communications Research Laboratory,
P.~O.~Box 10~05~65, D-98684 Ilmenau, Germany,
e-mail: \{florian.roemer, martin.haardt\}@tu-ilmenau.de,
phone: +49 (3677) 69-2613,
fax: +49 (3677) 69-1195,
WWW: http://www.tu-ilmenau.de/crl.}
\thanks{$*$ corresponding author}
}

\allowdisplaybreaks 

\usepackage[ps2pdf,pdfpagelabels]{hyperref} 
\hypersetup{
   pdftitle={\thetitle},
   pdfauthor={Florian Roemer and Martin Haardt},
   pdfsubject={},
   pdfkeywords={},
   pdfview=FitH,
   pdfstartview=FitH,
   bookmarksnumbered
} 

\begin{document}

\maketitle

\begin{abstract}
\revA{In this paper we present a generic framework for the asymptotic
performance analysis of subspace-based parameter estimation schemes.
It is based on earlier results on an explicit first-order
expansion of the estimation error in the signal subspace obtained
via an SVD of the noisy observation matrix.
We extend these results in a number of aspects.
Firstly, we demonstrate that an explicit first-order expansion of
the Higher-Order SVD (HOSVD)-based subspace estimate can be derived.
Secondly, we show how to obtain 
explicit first-order expansions of the estimation
error of arbitrary ESPRIT-type algorithms and provide the
expressions for $R$-D matrix-based and tensor-based Standard ESPRIT
as well as Unitary ESPRIT. 
Thirdly, we derive closed-form expressions for the mean square error (MSE)
and show that they only depend on the second-order moments of the 
noise. Hence, we only need the noise to be zero mean and possess
finite second order moments. Additional assumptions  such as Gaussianity or
circular symmetry are not needed.
Fourthly, we investigate the effect of using Structured Least Squares (SLS)
to solve the overdetermined shift invariance equations in ESPRIT
and provide an explicit first-order expansion as well as a
closed-form MSE expression.
Finally, we simplify the MSE for the special case of a single source
and compute the asymptotic efficiency of the investigated ESPRIT-type
algorithms in compact closed-form expressions which only depend on the
array size and the effective SNR.

Our results are more general than existing results on the performance
analysis of ESPRIT-type algorithms since (a) we do not need any assumptions
about the noise except for the mean to be zero and the second-order
moments to be finite (in contrast to earlier results that require Gaussianity
and/or second-order circular symmetry); (b) our results are asymptotic in
the effective SNR, i.e., we do not require the number of samples to be large
(in fact we can analyze even the single-snapshot case); (c) we present
a framework that incorporates the SVD-based and the HOSVD-based
subspace estimates as well as Structured Least Squares in one unified manner.}
\end{abstract}


\IEEEpeerreviewmaketitle

\section{Introduction} \label{sec_intro}
\revA{\PARstart{H}{igh} resolution parameter estimation from $R$-dimensional ($R$-D) signals is a
task required for a variety of applications, such as estimating the
multi-dimensional parameters of the dominant multipath components from
MIMO channel measurements~\cite{HTR:04}, which may be used for
geometry-based channel modeling.
Other applications include radar \cite{NS:10},
wireless communications~\cite{LSS:02}, sonar, seismology, and medical
imaging.
In \cite{HRD:08}, we have shown that in the $R$-D case ($R\geq 2$), tensors can be
used to store and manipulate the $R$-D signals in their native multidimensional
form. Based on this idea, we have proposed an enhanced tensor-based signal subspace estimate 
as well as ESPRIT-type algorithms based on tensors in~\cite{HRD:08}.
Their superior performance was shown based on Monte-Carlo simulations. 

In this paper we present a framework for the analytical performance assessment
of subspace-based parameter estimation schemes and apply it to derive a first-order
perturbation expansion for the tensor-based subspace estimate. Moreover, we find
first-order expansions for the estimation error of ESPRIT-type algorithms and
derive {generic} mean square error (MSE) {expressions which only
depend on the second-order moments of the noise
and hence do not require Gaussianity or circular symmetry.} 
This approach allows to assess the gain from using tensors instead of matrices
analytically in order to determine in which scenarios it is particularly pronounced.
We apply the framework for the analysis of $R$-D Standard ESPRIT, $R$-D Unitary ESPRIT,
$R$-D Standard Tensor-ESPRIT and $R$-D Unitary Tensor-ESPRIT. Moreover, we investigate
the effect of using Structured Least Squares (SLS) for the solution of the invariance equations.
Finally, we present simplified MSE expressions for the special case of a single
source impinging on a Uniform Linear Array (ULA) as well as a Uniform Rectangular Array (URA)
and observed under circularly symmetric white noise. These expressions only depend
on the effective Signal to Noise Ratio (SNR) and the array size and the allow
to compute the asymptotic efficiency in closed-form.}



\revA{Analytical performance assessment of subspace-based parameter estimation schemes
has a long standing history in signal processing.
Shortly after the publication of the most prominent candidates, MUSIC \cite{Sch:86} and
ESPRIT \cite{RPK:86}, analytical results on their performance have appeared.
The most frequently cited  papers}
are~\cite{KB:86} for the MUSIC algorithm and~\cite{RH:89} for ESPRIT. However, many follow-up
papers exist which extend the original results, e.g., \cite{PK:89}, \cite{Fri:90}, \cite{MZ:94}, \cite{ZKM:94}, \cite{MHZ:96},
and many others. However, these results have in common that they all go more or less directly 
back to a result on
the distribution of the eigenvectors of a sample covariance matrix from~\cite{And:63,Bri:75}.

In contrast to these results, in~\cite{LLV:93} an entirely different approach was proposed, which provides
an explicit first-order expansion of the subspace of a desired signal component if observed
superimposed by a small additive perturbation. 
This approach has a number of advantages compared to~\cite{Bri:75}. Firstly,~\cite{Bri:75} 
is asymptotic in the sample
size $N$, i.e., the result becomes only accurate as the number of snapshots $N$ is very large, whereas~\cite{LLV:93}
is asymptotic in the effective SNR, i.e., it can be used even for $N=1$ as long as the noise
variance is sufficiently small. Secondly,~\cite{Bri:75} requires strong Gaussianity assumptions, not
only on the perturbation (i.e., the noise), but also on the source symbols. Since~\cite{LLV:93}
is explicit, no assumptions about the statistics of either the desired signal or the perturbation
are needed. \revA{Note that it has recently been shown that \cite{Bri:75} can be extended to the non-Gaussian
and the non-circular case in \cite{DM:11}. However, the large sample size assumption is still needed.}
Thirdly, the covariance expressions from~\cite{Bri:75} are much less intuitive than
the expansion from~\cite{LLV:93} which shows directly how much of the noise subspace ``leaks into''
the signal subspace due to the erroneous estimate. Finally, the expressions involved in~\cite{Bri:75}
are quite complex and tough to handle, whereas~\cite{LLV:93} requires only a few terms which appear
directly as block matrices of the SVD of the noise-free observation matrix.

Due to these advantages we clearly favor~\cite{LLV:93} as a starting point. The authors in~\cite{LLV:93}
have already shown that their results on the perturbation of the subspace can be used to find
a first order expansion for the MUSIC, the Root-MUSIC, the Min-Norm, the State-Space-Realization,
and even the ESPRIT algorithm. However, they only considered 1-D Standard ESPRIT.
We extend their work by considering multiple dimensions ($R$-D ESPRIT), by
incorporating forward-backward-averaging (for Unitary ESPRIT), by considering
the tensor-based subspace estimate (for Standard and Unitary Tensor-ESPRIT), by investigating
the effect of using Structured Least Squares (SLS) to solve the invariance equation instead of the 
Least Squares (LS) solution used
in~\cite{LLV:93}, and by providing generic mean square error (MSE) expressions of the resulting estimation 
errors in these cases. \revA{Note that our MSE expressions depend on the second-order
moments of the noise only. Hence we only assume it to be zero mean (due to the asymptotic
nature of our performance analysis), but do not require it to be Gaussian distributed,
white, or circularly symmetric.}
\revB{This is a particularly attractive feature of our approach with respect
to different types of preprocessing which
alters the noise statistics, e.g., spatial smoothing (which yields spatially correlated noise)
or forward-backward averaging (which annihilates the circular symmetry of the noise).
Since we do not require spatial whiteness or circular symmetry, our MSE expressions
are directly applicable to a wide range of ESPRIT-type algorithms.}

There have been other follow-up papers based on~\cite{LLV:93}. For instance,~\cite{Xu:02}
provides a first-order and second-order perturbation expansion which can be seen as
a generalization of~\cite{LLV:93}. In~\cite{LLM:08} the authors show that there
is also a first-order contribution of the perturbation of the signal subspace
which lies in the signal subspace (which~\cite{Xu:02} and~\cite{LLV:93} have argued to be of second order
and hence negligible). \revA{Note that a perturbation expansions for the signal subspace
and null space projectors based on the sample covariance matrix is provided in \cite{KFP:92} where the
expansion up to an arbitrary order is derived based on a recurrence relation.}
A mean square error expression for Standard ESPRIT is provided 
in~\cite{LV:92}, however, it
does not generalize easily to the tensor case \revA{and it assumes circular symmetry of the noise}.
\revA{Note that the latter assumption implies that it is not applicable to Unitary ESPRIT, since
Forward-Backward Averaging annihilates the circular symmetry of the noise.}
Moreover, other authors have studied
the asymptotical performance of ESPRIT, e.g.,~\cite{SS:91,ESS:93} where harmonic retrieval
from time series is investigated and MSE expressions for a large number of snapshots
as well as MSE expressions for a high SNR are derived. Note that we find MSE expressions
compatible to~\cite{ESS:93} by only assuming a high effective SNR, i.e., either the number
of snapshots or the SNR can tend to infinity. Interestingly,~\cite{SS:91,ESS:93} also
consider the special case for a single source. However, the expressions provided there are specific
to harmonic retrieval from time series and they are not compared to the corresponding Cram\'er-Rao Bound.
Some analytical results on the asymptotic efficiency of MUSIC, Root-MUSIC, ESPRIT, and TLS-ESPRIT are,
among others, presented
in~\cite{PF:88}, \cite{RH:89}, \cite{RH:89b}, and \cite{OVK:91}, respectively.
However, these results are asymptotic in the number of snapshots 
$N$ and sometimes even in the number of sensors $M$.
The asymptotic equivalence of LS-ESPRIT, TLS-ESPRIT, Pro-ESPRIT, and the Matrix Pencil
method has also been shown, see for instance \cite{HS:91}.
Overall, in the matrix case, the number of existing results is quite large, since the underlying
methods have been known for more than two decades. 
However, 
concerning the tensor case and the incorporation of Structured Least Squares,
the existing results are much more scarce.

In the tensor case a first-order expansion for the HOSVD has been proposed in~\cite{LDV:00}.
However, it is not suitable for our application since it does not consider the HOSVD-based subspace
estimate but the subspaces of the separate $n$-mode unfoldings and their singular values. Moreover,
the perturbation is modeled via a single scalar real-valued parameter $\epsilon$. 
A first-order expansion for the best rank-$(R_1,R_2,R_3)$-expansion is provided in~\cite{LdL:04}.
However, again, it is not directly applicable for analyzing the HOSVD-based subspace estimate
as it investigates the approximation error of the entire tensor.
Consequently, our approach to analyze the HOSVD-based subspace estimate based on the link
to the SVD-based subspace estimate via a structured projection is entirely novel. Moreover,
the application of these results to find the analytical performance of Tensor-ESPRIT-type
algorithms is novel as well. It is a particular strength of the framework we use that
many extensions and modifications of ESPRIT are easily incorporated, e.g., Forward-Backward-Averaging
or Structured Least Squares.

This paper is organized as follows: The notation and the data model are introduced
in Sections~\ref{sec_notation} and \ref{sec_dm}, respectively. 
The subsequent
Section \ref{sec_pert_subsp} reviews the first-order perturbation of the matrix-based
subspace estimate and presents the extension to the tensor case.
The performance analysis of ESPRIT-type algorithms is shown in \ref{sec_perf_esprit}.
Numerical results are presented in Section \ref{sec_perf_sims} before drawing the
conclusions in Section~\ref{sec_concl}.
\vspace{-6mm}
\section{Notation} \label{sec_notation}
In order to facilitate the distinction between scalars, matrices, and
tensors, the following notation is used: Scalars are denoted as italic
letters ($a, b, \ldots, A, B, \ldots, \alpha, \beta, \ldots $), column
vectors as lower-case bold-face letters ($\ma{a}, \ma{b}, \ldots$),
matrices as bold-face capitals ($\ma{A}, \ma{B}, \ldots$), and tensors are
written as bold-face calligraphic letters ($\ten{A}, \ten{B}, \ldots$).
Lower-order parts are consistently named: the $(i,j)$-element of the
matrix $\ma{A}$, is denoted as $a_{i,j}$ and the $(i,j,k)$-element of a
third order tensor $\ten{B}$ as $b_{i,j,k}$. 

We use the superscripts
$^\trans, ^\herm,  ^\conj, ^{-1}, ^\pinv$ for transposition, Hermitian transposition,
complex conjugation, matrix inversion, and the Moore-Penrose pseudo
inverse of a matrix,  respectively. 
The trace of a matrix $\ma{A}$ is written as $\traceof{\ma{A}}$.
Moreover, the Kronecker product of two
matrices $\ma{A}$ and $\ma{B}$ is denoted as $\ma{A} \kron \ma{B}$ and
the Khatri-Rao product (column-wise Kronecker product) as $\ma{A} \krp
\ma{B}$. The operator $\vecof{\ma{A}}$ stacks the column of a matrix $\ma{A} 
\in \compl^{M \times N}$ into a column vector of length $M\cdot N \times 1$.
It satisfies the following property
\begin{align}
    \vecof{\ma{A} \cdot \ma{X} \cdot \ma{B}} = \left(\ma{B}^\trans \kron \ma{A}\right) \cdot \vecof{\ma{X}}.
    \label{eqn_veckron}
\end{align}

An $n$-mode vector of an $(I_1 \times I_2 \times \ldots \times
I_N)$-dimensional tensor $\ten{A}$ is an $I_n$-dimensional vector obtained
from $\ten{A}$ by varying the index $i_n$ and keeping the other indices
fixed.
Moreover, a
matrix unfolding of the tensor $\ten{A}$ along the $n$-th mode is denoted
by
$\unf{A}{n}$
and can be understood as a
matrix containing all the $n$-mode vectors of the tensor $\ten{A}$.
The order of the columns is chosen in accordance with~\cite{LDV:00}.

%
The outer product of the tensors
             $\ten{A} \in \compl^{I_1 \times I_2 \times \ldots \times I_N}$ and
             $\ten{B} \in \compl^{J_1 \times J_2 \times \ldots \times J_M}$ is given by
           \begin{equation}
             \begin{array}{l}
                \ten{C} = \ten{A} \outerp \ten{B} \in \compl^{I_1 \times \ldots \times I_N\times J_1 \times \ldots \times J_M}, \quad \mbox{where} \\
                c_{i_1, i_2, \ldots, i_N, j_1, j_2, \ldots, j_M}  =  a_{i_1, i_2, \ldots, i_N}
                        \cdot b_{j_1, j_2, \ldots, j_M}.
             \end{array} \label{eqn_notation_outerp}
           \end{equation}
           In other words, the tensor $\ten{C}$ contains all possible combinations of pairwise products
           between the elements of $\ten{A}$ and $\ten{B}$. This operator is very closely related to
           the Kronecker product defined for matrices.
             %
  
 The $n$-mode product of a tensor $\ten{A} \in \compl^{I_1 \times I_2 \times
             \ldots \times I_N}$ and a matrix $\ma{U} \in \compl^{J_n \times I_n}$
             along the $n$-th
             mode is denoted as $\ten{B} = \ten{A} \times_n \ma{U}$
             and defined via
\begin{equation}
 \ten{B} = \ten{A} \times_n \ma{U} \quad \Leftrightarrow \unf{B}{n} = \ma{U} \cdot \unf{A}{n},
 \label{eqn_notation_nmp}
\end{equation}
             i.e., it may be visualized by multiplying all $n$-mode vectors
             of $\ten{A}$ from the left-hand side by the matrix $\ma{U}$.
Note that the $n$-mode product satisifies
\ifCLASSOPTIONdraftcls
\begin{align}
   & \left(\ten{A} \times_r \ma{U}_r\right)\times_r \ma{V}_r = \ten{A} \times_r \left(\ma{V}_r \cdot \ma{U}_r\right)
   \label{eqn_notation_nmp_twice}  \\
   & \unfnot{\ten{A} \times_1 \ma{U}_1 \ldots \times_R \ma{U}_R}{r}
   = \ma{U}_r \cdot \unf{A}{r} \cdot \left( \ma{U}_{r+1} \kron \ldots \kron \ma{U}_R \kron \ma{U}_1 \kron 
   \ldots
   \kron \ma{U}_{r-1}\right)^\trans
   \label{eqn_notation_nmp_unf}
\end{align}
\else
\begin{align}
   & \left(\ten{A} \times_r \ma{U}_r\right)\times_r \ma{V}_r = \ten{A} \times_r \left(\ma{V}_r \cdot \ma{U}_r\right)
   \label{eqn_notation_nmp_twice}  \\
   & \unfnot{\ten{A} \times_1 \ma{U}_1 \ldots \times_R \ma{U}_R}{r}
   = \ma{U}_r \cdot \unf{A}{r} \cdot \notag \\
   & \quad \left( \ma{U}_{r+1} \kron \ldots \kron \ma{U}_R \kron \ma{U}_1 \kron 
   \ldots
   \kron \ma{U}_{r-1}\right)^\trans
   \label{eqn_notation_nmp_unf}
\end{align}
\fi
for $\ten{A} \in \compl^{I_1 \times I_2 \times \ldots \times I_N}$,
$\ma{U}_r \in \compl^{J_r \times I_r}$ and $\ma{V}_r \in \compl^{K_r \times J_r}$.
  
 The higher-order SVD (HOSVD) \cite{LDV:00} of a tensor
              $\ten{A} \in \compl^{I_1 \times I_2 \times \ldots \times I_N}$ is given by
              \begin{equation}
        \ten{A} = \ten{S} \times_1 \ma{U}_1 \times_2 \ma{U}_2 \ldots \times_N \ma{U}_N,
                   \label{eqn_notation_hosvd}
              \end{equation}
         where $\ten{S} \in \compl^{I_1 \times I_2 \times \ldots \times I_N}$
         is the core tensor which satisfies the all-orthogonality conditions
         \cite{LDV:00} and $\ma{U}_n
         \in \compl^{I_n \times I_n}, \; n= 1, 2, \ldots, N$,
         are the unitary matrices of $n$-mode singular
         vectors. 

We also define the concatenation of two tensors
along the $n$-th mode via the operator $\catten{\ten{A}}{\ten{B}}{n}$.
The Euclidean (vector) norm, the Frobenius (matrix) norm, and the Higher-Order Frobenius (tensor)
norm are denoted by $\twonorm{\ma{a}}$, $\fronorm{\ma{A}}$, and $\honorm{\ten{A}}$, respectively.
All three norms are computed by taking the square-root of the sum of the squared magnitude of all
the elements in their arguments.


The matrix $\ma{K}_{M \times N}$ denotes the commutation matrices \cite{MN:95}
which satisfy
\begin{align}
  \ma{K}_{M \times N}
 \cdot \vecof{\ma{A}^\trans} = \vecof{\ma{A}}     
    \label{eqn_def_commat} \\
   \ma{K}_{M \times P}^\trans
   \cdot \left(\ma{A} \kron \ma{B} \right)
   \cdot \ma{K}_{N \times Q} = \ma{B} \kron  \ma{A}
     \label{eqn_commat_permkron}
\end{align}
for $\ma{A} \in \compl^{M \times N}$, $\ma{B} \in \compl^{P \times Q}$.

\revA{A projection matrix onto the column space of a matrix $\ma{A} \in \compl^{M \times r}$ 
is denoted as $\proj{\ma{A}} \revB{= \ma{A} \cdot \ma{A}^\pinv} \in \compl^{M \times M}$
and its orthogonal complement by $\projp{\ma{A}} = \ma{I}_M - \proj{\ma{A}}$.
Note that for $r=1$, i.e., $\ma{A} = \ma{a}$, this matrix can be computed as
$\proj{\ma{a}} = \frac{\ma{a}\cdot\ma{a}^\herm}{\ma{a}^\herm\cdot\ma{a}}$.}

A $p \times p$ matrix $\ma{Q}_p$ is called left-$\ma{\Pi}$-real if $\ma{\Pi}_p \cdot \ma{Q}_p^\conj
= \ma{Q}_p$, where $\ma{\Pi}_p$ is the $p \times p$ exchange matrix with ones on its antidiagonal
and zeros elsewhere. The special set of unitary sparse left-$\ma{\Pi}$-real matrices 
 introduced in~\cite{HN:95} is denoted as $\ma{Q}_p^{\rm (s)}$. 
Furthermore, a matrix $\ma{X}\in \compl^{M \times N}$ is called centro-Hermitian if 
$\ma{\Pi}_M \cdot \ma{X}^\conj \cdot \ma{\Pi}_N
= \ma{X}$. The vector $\ma{e}_k$ denotes the $k$-th column of an identity matrix.

\section{Data model}\label{sec_dm}
\subsection{Matrix-based and tensor-based data model} \label{sec_dm_matten}


The observations are modeled as a superposition of $d$ undamped exponentials
 sampled on an
 $R$-dimensional grid of size $M_1 \times M_2 \times \ldots \times M_R$ at $N$ subsequent time
 instants \cite{HN:98}. The measurement samples are given by
 \begin{equation}      \label{eqn_dm_element}
    x_{m_1, m_2, \ldots, m_R, t_n}  =  \sum_{i=1}^d s_i(t_n) \prod_{r=1}^R e^{\j \cdot (m_r-1)
                                       \cdot \mu_i^{(r)} }  
     +  n_{m_1, m_2, \ldots, m_R, t_n},
 \end{equation}
   where $m_r = 1, 2, \ldots, M_r$, $n = 1, 2, \ldots, N$, $s_i(t_n)$
   denotes the complex amplitude of the $i$-th exponential at time
   instant $t_n$, $\mu_i^{(r)}$ symbolizes the spatial frequency of the
   $i$-th exponential in the $r$-th mode for $i=1, 2, \ldots, d$ and $r=1, 2, \ldots, R$,
   and $n_{m_1, m_2, \ldots,
   m_R, t_n}$ represents the zero mean additive noise component inherent in the
   measurement process%
%
\footnote{Note that equation~(\ref{eqn_dm_element})
assumes a uniform sampling in the spatial
domain. However, this assumption can be relaxed to more generic geometries
as long as they feature shift invariances and can be constructed as the outer
product of $R$ one-dimensional sampling grids.}.
In the context of array signal processing, each of the $R$-dimensional 
exponentials represents one planar wavefront and
the complex amplitudes $s_i(t_n)$ are the symbols.
It is our goal to estimate the spatial frequencies $\mu_i^{(r)}$
for $r=1, 2, \ldots, R, i=1, 2, \ldots, d$ and their correct
pairing.

In order to arrive at a more compressed formulation of
the data model in~\eqref{eqn_dm_element} we collect the samples 
$x_{m_1,m_2,\ldots,m_R, t_n}$ into one array.
As our signal is referenced by $R+1$ indices, the most natural way of formulating
the model is to employ an $(R+1)$-way array $\ten{X} \in \compl^{M_1 \times M_2 \ldots \times M_R \times N}$
which contains $x_{m_1,m_2,\ldots,m_R,t_n}$ for $m_r = 1, 2, \ldots, M_r$, $r=1, 2, \ldots, R$,
and $n=1, 2, \ldots, N$.
We can then conveniently express $\ten{X}$ as \cite{HRD:08}
\begin{align}
    \ten{X} = \ten{A} \times_{R+1} \ma{S}^\trans + \ten{N},
    \label{eqn_subsb_dm_tensor}
\end{align}
where $\ma{S} \in \compl^{d \times N}$ contains the amplitudes $s_i[n]$
and $\ten{N} \in \compl^{M_1 \times M_2 \ldots \times M_R \times N}$
collects all the noise samples $n_{m_1,m_2,\ldots,m_R,t_n}$ in the same manner as $\ten{X}$.
Finally, $\ten{A} \in \compl^{M_1 \times M_2 \ldots \times M_R \times d}$
is referred to as the ``array steering tensor'' \cite{HRD:08}. It can be expressed
by virtue of the concatenation operator 
via
\begin{align}
   \ten{A} & = \cattenmult{\ten{A}_1}{\ten{A}_2}{\ten{A}_d}{R+1} \label{eqn_subsp_dm_atencat} \\
   \ten{A}_i & = \ma{a}^{(1)}(\mu_i^{(1)}) \outerp \ma{a}^{(2)}(\mu_i^{(2)}) \outerp \ldots \outerp \ma{a}^{(R)}(\mu_i^{(R)})
   \in \compl^{M_1 \times M_2 \times \ldots \times M_R} \notag 
\end{align}
where $\ma{a}^{(r)}(\mu_i^{(r)}) \in \compl^{M_r \times 1}$ represents the array
steering vector of the $i$-th source in the $r$-th mode. 

An alternative expression for the array steering tensor is given by
\begin{align}
   \ten{A} = \ten{I}_{R+1,d} \times_1 \ma{A}^{(1)} \times_2 \ma{A}^{(2)} \ldots \times_R \ma{A}^{(R)},
   \label{eqn_subsp_dm_aten_cp}
\end{align}
where $\ma{A}^{(r)} = 
\begin{bmatrix} \ma{a}^{(r)}(\mu_1^{(r)}) & \ldots & \ma{a}^{(r)}(\mu_d^{(r)})\end{bmatrix}
 \in \compl^{M_r \times d}$
is referred to as the array steering matrix in the $r$-th mode.

The strength of the data model in~\eqref{eqn_subsb_dm_tensor} is that it represents
the signal in its natural multidimensional structure by virtue of the measurement tensor
$\ten{X}$. Before tensor calculus was used in this area, a matrix-based formulation of~\eqref{eqn_subsb_dm_tensor}
was needed. This requires stacking some of the dimensions into rows or columns. A meaningful
definition of a measurement matrix $\ma{X}$ is to apply stacking to all ``spatial'' dimensions
$1, 2, \ldots, R$ along the rows and align the snapshots $n=1, 2, \ldots, N$ as the columns.
Mathematically, we can write $\ma{X} = \unf{X}{R+1}^\trans \in \compl^{M \times N}$, where
$M = \prod_{r=1}^R M_r$. Applying this stacking operation to~\eqref{eqn_subsb_dm_tensor},
we arrive at the matrix-based data model~\cite{HN:98}
\begin{align}
    \ma{X} = \ma{A} \cdot \ma{S} + \ma{N}. \label{eqn_subsb_dm_matrix_rd}
\end{align}
%
Here, $\ma{A} = \unf{A}{R+1}^\trans \in \compl^{M \times d}$ and $\ma{N} = \unf{N}{R+1}^\trans \in \compl^{M \times N}$.
Note that 
$\ma{A}$ is highly structured since it satisfies
\begin{align}
   \ma{A} = \unf{A}{R+1}^\trans = \ma{A}^{(1)} \krp \ma{A}^{(2)} \krp \ldots \krp \ma{A}^{(R)}.
   \label{eqn_subsb_dm_matrix_amatkrp}
\end{align}
%

\subsection{Subspace estimation} \label{sec_dm_subsp}

\revA{The first step in all subspace-based parameter estimation schemes is the estimation
of a basis for the signal subspace from the noisy observations. In the matrix case, this can\revB{,}
for instance\revB{,} be achieved by a truncated SVD of $\ma{X}$. Let $\ma{\hat{U}}_{\rm s} \in \compl^{M \times d}$
be the matrix containing the $d$ dominant left singular vectors of $\ma{X}$. Then the column
space of $\ma{\hat{U}}_{\rm s}$ is an estimate for the signal subspace spanned by the columns
of $\ma{A}$ and we can write
%
    $\ma{A} \approx \ma{\hat{U}}_{\rm s} \cdot \ma{T}$
%
for a non-singular matrix $\ma{T} \in \compl^{d \times d}$.

A tensor-based extension of this subspace estimate was proposed in \cite{HRD:08}. To this end, let
the truncated HOSVD of $\ten{X}$ be given by
\begin{align}
   \ten{X} \approx \sig{\ten{\hat{S}}} \times_1 \sig{\ma{\hat{U}}}_1 \ldots \times_R \sig{\ma{\hat{U}}}_R \times_{R+1} \sig{\ma{\hat{U}}}_{R+1},
   \label{eqn_dm_trnchosvd}
\end{align}
where $\sig{\ten{\hat{S}}} \in \compl^{p_1 \times \ldots \times p_R \times d}$ is the truncated
core tensor and $\sig{\ma{\hat{U}}}_r \in \compl^{M_r \times p_r}$ for $r=1, 2, \ldots, R$, $\sig{\ma{\hat{U}}}_{R+1}
\in \compl^{N \times d}$ are the matrices of dominant $r$-mode singular vectors. 
Here, $p_r = \rankof{\unfnot{\ten{X}_0}{r}}$ represents the $r$-rank of the noise-free
observation tensor $\ten{X}_0 = \ten{A} \times_{R+1} \ma{S}^\trans$.
Based on \eqref{eqn_dm_trnchosvd}, a tensor-based subspace estimate can be defined as
\begin{align}
   \sig{\ten{\hat{U}}} = \sig{\ten{\hat{S}}} \times_1 \sig{\ma{\hat{U}}}_1 \ldots \times_R \sig{\ma{\hat{U}}}_R
   \times_{R+1} \ma{\hat{\Sigma}}_{\rm s}^{-1}. \label{eqn_subsp_usten_def}
\end{align}
Note that the ($R+1$)-mode multiplication with $\ma{\hat{\Sigma}}_{\rm s}^{-1}$ is introduced
in addition to its original definition in \cite{HRD:08} since it simplifies the notation we need at this point
and it has no impact on the subspace estimation accuracy.
Here $\ma{\hat{\Sigma}}_{\rm s}$ refers to the diagonal matrix 
containing the $d$ dominant singular values of $\ma{X}$ on its main diagonal.

Note that $\sig{\ten{\hat{U}}}$ satisfies $\sig{\ten{\hat{U}}} \approx \ten{A} \times_{R+1} \ma{\bar{T}}$
for a non-singular matrix $\ma{\bar{T}} \in \compl^{d \times d}$.
Based on $\sig{\ten{\hat{U}}}$, an improved signal subspace estimate
is given by the matrix $\unfnot{\sig{\ten{\hat{U}}}}{R+1}^\trans \in \compl^{M \times d}$.
}

\section{Perturbations of the subspace estimates} \label{sec_pert_subsp}

\subsection{Review of perturbation results for the SVD}

Let us first review the results from~\cite{LLV:93} which are relevant to the
discussion in this section. Let $\ma{X}_0 = \ma{A} \cdot \ma{S} \in \compl^{M \times N}$ 
be a matrix containing the noise-free observations such that
$	\ma{X} = \ma{X}_0 + \ma{N}$ where $\ma{N}$ represents the undesired
perturbation (noise). 

The SVD of $\ma{X}_0$ can be expressed as
\ifCLASSOPTIONdraftcls
  \begin{align}
   \ma{X}_0 = \begin{bmatrix} \ma{U}_{\rm s} & \ma{U}_{\rm n} \end{bmatrix} \cdot
   						\begin{bmatrix} \ma{\Sigma}_{\rm s} & \ma{0}_{d \times (N-d)} \\ 
   						                \ma{0}_{(M-d) \times d} & \ma{0}_{(M-d) \times (N-d)}\end{bmatrix} \cdot
							\begin{bmatrix} \ma{V}_{\rm s} & \ma{V}_{\rm n} \end{bmatrix}^\herm,
  \end{align}
\else
  \begin{align}
   \ma{X}_0 = \begin{bmatrix} \ma{U}_{\rm s} & \ma{U}_{\rm n} \end{bmatrix} \cdot
   						\begin{bmatrix} \ma{\Sigma}_{\rm s} & \ma{0} \\ 
   						                \ma{0} & \ma{0}\end{bmatrix} \cdot
							\begin{bmatrix} \ma{V}_{\rm s} & \ma{V}_{\rm n} \end{bmatrix}^\herm,
  \end{align}
\fi
where the columns of $\ma{U}_{\rm s} \in \compl^{M \times d}$ provide an orthonormal basis for the signal subspace
which we want to estimate. Moreover 
$\ma{\Sigma}_{\rm s} = \diagof{\begin{bmatrix} \sigma_1, \sigma_2, \ldots, \sigma_d \end{bmatrix}}
\in \real^{d \times d}$ contains the $d$ non-zero singular values on its main diagonal.
We find an estimate for $\ma{U}_{\rm s}$ by computing an SVD
of the noisy observation matrix $\ma{X}$ which can be expressed as
\ifCLASSOPTIONdraftcls
  \begin{align}
   \ma{X}   = \begin{bmatrix} \ma{\hat{U}}_{\rm s} & \ma{\hat{U}}_{\rm n} \end{bmatrix} \cdot
   						\begin{bmatrix} \ma{\hat{\Sigma}}_{\rm s} & \ma{0}_{d \times (N-d)} \\ 
   						                \ma{0}_{(M-d) \times d} & \ma{\hat{\Sigma}}_{\rm n}\end{bmatrix} \cdot
							\begin{bmatrix} \ma{\hat{V}}_{\rm s} & \ma{\hat{V}}_{\rm n} \end{bmatrix}^\herm,
							\label{eqn_svd_X}
  \end{align}
\else
  \begin{align}
   \ma{X}   = \begin{bmatrix} \ma{\hat{U}}_{\rm s} & \ma{\hat{U}}_{\rm n} \end{bmatrix} \cdot
   						\begin{bmatrix} \ma{\hat{\Sigma}}_{\rm s} & \ma{0} \\ 
   						                \ma{0} & \ma{\hat{\Sigma}}_{\rm n}\end{bmatrix} \cdot
							\begin{bmatrix} \ma{\hat{V}}_{\rm s} & \ma{\hat{V}}_{\rm n} \end{bmatrix}^\herm,
							\label{eqn_svd_X}
  \end{align}
\fi
where the ``hat'' denotes the estimated quantities. We can write $\ma{\hat{U}}_{\rm s} = 
\ma{U}_{\rm s} + \Delta \ma{U}_{\rm s}$, where $\Delta \ma{U}_{\rm s}$ represents
the estimation error. 
At this point we are ready to state the main result
on the first order perturbation expansion of $\Delta \ma{U}_{\rm s}$ from~\cite{LLV:93}
\ifCLASSOPTIONdraftcls
\begin{align}    
    \Delta \ma{U}_{\rm s} & = \ma{U}_{\rm n} \cdot \ma{\Gamma}_{\rm n} + \bigO{\Delta^2}, \; \mbox{where}
    \; \Delta = \normof{\ma{N}}{} \; \mbox{and} \; 
    \ma{\Gamma}_{\rm n}  = \ma{U}_{\rm n}^\herm \cdot \ma{N} \cdot \ma{V}_{\rm s} \cdot \ma{\Sigma}_{\rm s}^{-1}
    \in \compl^{(M - d) \times d}
      \label{eqn_perf_linexp_vac}
\end{align}
\else
\begin{align}    
    \Delta \ma{U}_{\rm s} & = \ma{U}_{\rm n} \cdot \ma{\Gamma}_{\rm n} + \bigO{\Delta^2}, \; \mbox{where}
    \; \Delta = \normof{\ma{N}}{} \; \mbox{and} \; \notag \\
    \ma{\Gamma}_{\rm n}  & = \ma{U}_{\rm n}^\herm \cdot \ma{N} \cdot \ma{V}_{\rm s} \cdot \ma{\Sigma}_{\rm s}^{-1}
    \in \compl^{(M - d) \times d}
      \label{eqn_perf_linexp_vac}
\end{align}
\fi
Here $\normof{.}{}$ represents an arbitrary sub-multiplicative\footnote{{A matrix norm is called
submultiplicative if $\normof{\ma{A} \cdot \ma{B}}{} \leq \normof{\ma{A}}{} \cdot \normof{\ma{B}}{}$
for arbitrary matrices $\ma{A}$ and $\ma{B}$.}} norm, e.g., the Frobenius norm.
Equation~\eqref{eqn_perf_linexp_vac} shows the first order expansion of the signal
subspace estimation error $\Delta \ma{U}_{\rm s}$ in terms of the noise subspace $ \ma{U}_{\rm n}$, i.e.,
how much of the noise subspace ``leaks into'' the signal subspace due to the estimation errors
from the perturbation $\ma{N}$. Since it is explicit in $\ma{N}$ it makes no assumptions about
the {\em statistics} of $\ma{N}$, in fact, it is purely deterministic.

The expansion~\eqref{eqn_perf_linexp_vac} only models the leakage of the noise subspace into
the signal subspace. That is to say, the perturbation of the particular basis (the columns
of $\ma{U}_{\rm s}$) is ignored. While for subspace-based parameter estimation schemes
this is indeed sufficient since the particular choice of the basis is irrelevant, there
are other applications where this term matters. For instance, in a communication system
where the channel is decomposed into its individual eigenmodes and one or several of these
eigenmodes are used for transmission, such errors have a major impact. Therefore, other
authors have extended~\eqref{eqn_perf_linexp_vac} to take this error term into account.
For instance, in~\cite{LLM:08} the authors provide the following expansion
%
\begin{align}    
    \Delta \ma{U}_{\rm s} & = \ma{U}_{\rm n} \cdot \ma{\Gamma}_{\rm n} + \ma{U}_{\rm s} \cdot \ma{\Gamma}_{\rm s} + \bigO{\Delta^2}, 
    \; \mbox{where} \label{eqn_perf_linexp_vacma}\\   
    \ma{\Gamma}_{\rm s} & = 
       \ma{D} \odot \left( \ma{U}_{\rm s}^\herm \cdot \ma{N} \cdot \ma{V}_{\rm s} \cdot \ma{\Sigma}_{\rm s}
                               + \ma{\Sigma}_{\rm s} \cdot \ma{V}_{\rm s}^\herm \cdot \ma{N}^\herm \cdot \ma{U}_{\rm s} \right) \in \compl^{p_r \times p_r}. \notag
\end{align}
Here, the matrix $\ma{D}$ is defined as 
\begin{align}
		\matelem{\ma{D}}{k}{\ell} = 
		 \begin{cases}
		     \frac{1}{\sigma_\ell^2 - \sigma_k^2} & k \neq \ell \\
		     0 & k = \ell
		 \end{cases}
		 \quad 
		 \mbox{for} \; k,\ell=1,2,\ldots,d.
\end{align}
Equation~\eqref{eqn_perf_linexp_vacma} additionally shows the perturbation of the individual
singular vectors via the term $\ma{U}_{\rm s} \cdot \ma{\Gamma}_{\rm s}$. This term can be dropped
for the evaluation of subspace-based parameter estimation schemes since for these, the particular
choice of the basis is irrelevant. Therefore we do not consider it in Section~\ref{sec_perf_esprit}
where ESPRIT-type algorithms are investigated. However, we show its impact in
the simulation results in Section~\ref{sec_perf_sims} where the subspace estimation
accuracy is evaluated.

\subsection{Extension to the HOSVD-based subspace estimate} 

As we have shown in Section~\ref{sec_dm_subsp}, in the multidimensional case, an improved
signal subspace estimate can be computed via the HOSVD of the measurement tensor $\ten{X}$.
Since the HOSVD is computed via SVDs of the unfoldings, we can apply the same framework
to find a perturbation expansion of the HOSVD-based subspace estimate. In order to distinguish
unperturbed from estimated (perturbed) quantities we express $\ten{X}$ as
$\ten{X} = \ten{X}_0 + \ten{N}$,
where $\ten{X}_0 = \ten{A} \times_{R+1} \ma{S}^\trans$ is the unperturbed observation tensor.
%
The SVD of the $r$-mode unfoldings
of $\ten{X}$ and $\ten{X}_0$ are then given by
\ifCLASSOPTIONdraftcls
  \begin{align}
   \unfnot{\ten{X}_0}{r} & = 
   						\begin{bmatrix} \sig{\ma{U}}_r & \noi{\ma{U}}_r \end{bmatrix} \cdot
   						\begin{bmatrix} \sig{\ma{\Sigma}}_r & \ma{0}_{d \times (M\cdot N/M_r-d)} \\ 
   						                \ma{0}_{(M_r-d) \times d} & \ma{0}_{(M_r-d) \times (M\cdot N/M_r-d)}\end{bmatrix} \cdot
							\begin{bmatrix} \sig{\ma{V}}_r & \noi{\ma{V}}_r \end{bmatrix}^\herm \label{eqn_perf_unfsvd_x0} \\
	 \unfnot{\ten{X}}{r} & = 
   						\begin{bmatrix} \sig{\ma{\hat{U}}}_r & \noi{\ma{\hat{U}}}_r \end{bmatrix} \cdot
   						\begin{bmatrix} \sig{\ma{\hat{\Sigma}}}_r & \ma{0}_{d \times (M\cdot N/M_r-d)} \\ 
   						                \ma{0}_{(M_r-d) \times d} & \noi{\ma{\hat{\Sigma}}}_r\end{bmatrix} \cdot
							\begin{bmatrix} \sig{\ma{\hat{V}}}_r & \noi{\ma{\hat{V}}}_r \end{bmatrix}^\herm.
	\label{eqn_perf_unfsvd_x}
  \end{align}
\else
  \begin{align}
   \unfnot{\ten{X}_0}{r} & = 
   						\begin{bmatrix} \sig{\ma{U}}_r & \noi{\ma{U}}_r \end{bmatrix} \cdot
   						\begin{bmatrix} \sig{\ma{\Sigma}}_r & \ma{0} \\ 
   						                \ma{0} & \ma{0}\end{bmatrix} \cdot
							\begin{bmatrix} \sig{\ma{V}}_r & \noi{\ma{V}}_r \end{bmatrix}^\herm \label{eqn_perf_unfsvd_x0} \\
	 \unfnot{\ten{X}}{r} & = 
   						\begin{bmatrix} \sig{\ma{\hat{U}}}_r & \noi{\ma{\hat{U}}}_r \end{bmatrix} \cdot
   						\begin{bmatrix} \sig{\ma{\hat{\Sigma}}}_r & \ma{0} \\ 
   						                \ma{0} & \noi{\ma{\hat{\Sigma}}}_r\end{bmatrix} \cdot
							\begin{bmatrix} \sig{\ma{\hat{V}}}_r & \noi{\ma{\hat{V}}}_r \end{bmatrix}^\herm
	\label{eqn_perf_unfsvd_x}
  \end{align}
\fi  
where $\sig{\ma{\Sigma}}_r = \diagof{[\sigma_1^{(r)}, \sigma_2^{(r)}, \ldots, \sigma_d^{(r)}]}$
and $r=1, 2, \ldots, R$.
Note that since~\eqref{eqn_perf_unfsvd_x0} and~\eqref{eqn_perf_unfsvd_x} are in fact SVDs,
we can apply~\eqref{eqn_perf_linexp_vacma} and find
$\sig{\ma{\hat{U}}}_r = \sig{\ma{U}}_r +\Delta \sig{\ma{U}}_r$ where
\ifCLASSOPTIONdraftcls
\begin{align}
   \Delta \sig{\ma{U}}_r & = \noi{\ma{U}}_r \cdot \noi{\ma{\Gamma}}_{r} + \sig{\ma{U}}_r \cdot \sig{\ma{\Gamma}}_{r} + \bigO{\Delta^2}, \quad 
   \label{eqn_perf_unfexp} \\
   \noi{\ma{\Gamma}}_{r} & = \noiH{\ma{U}}_r \cdot \unf{N}{r} \cdot \sig{\ma{V}}_r \cdot \siginv{\ma{\Sigma}}_r,
   \notag \\
   \sig{\ma{\Gamma}}_{r} & = \ma{D}_r \odot \left( 
       \sigH{\ma{U}}_r \cdot \unf{N}{r} \cdot \sig{\ma{V}}_r \cdot \sig{\ma{\Sigma}}_r
     + \sig{\ma{\Sigma}}_r \cdot \sigH{\ma{V}}_r \cdot \unf{N}{r}^\herm \cdot \sig{\ma{U}}_r
                                \right) \notag \\
   \matelem{\ma{D}_r}{k}{\ell} & = 
		 \begin{cases}
		     \frac{1}{\sigma_\ell^{(r)^2} - \sigma_k^{(r)^2}} & k \neq \ell \\
		     0 & k = \ell
		 \end{cases}
		 \quad 
		 \mbox{for} \; k,\ell=1,2,\ldots,d.   \notag
\end{align} 
\else
\begin{align}
   \Delta \sig{\ma{U}}_r & = \noi{\ma{U}}_r \cdot \noi{\ma{\Gamma}}_{r} + \sig{\ma{U}}_r \cdot \sig{\ma{\Gamma}}_{r} + \bigO{\Delta^2}, \quad 
   \label{eqn_perf_unfexp} \\
   \noi{\ma{\Gamma}}_{r} & = \noiH{\ma{U}}_r \cdot \unf{N}{r} \cdot \sig{\ma{V}}_r \cdot \siginv{\ma{\Sigma}}_r,
   \notag \\
   \sig{\ma{\Gamma}}_{r} & = \ma{D}_r \odot \Big( 
       \sigH{\ma{U}}_r \cdot \unf{N}{r} \cdot \sig{\ma{V}}_r \cdot \sig{\ma{\Sigma}}_r
       \notag \\ &
     + \sig{\ma{\Sigma}}_r \cdot \sigH{\ma{V}}_r \cdot \unf{N}{r}^\herm \cdot \sig{\ma{U}}_r
                                \Big)  \notag \\
   \matelem{\ma{D}_r}{k}{\ell} & = 
		 \begin{cases}
		     \frac{1}{\sigma_\ell^{(r)^2} - \sigma_k^{(r)^2}} & k \neq \ell \\
		     0 & k = \ell
		 \end{cases}
		 \quad 
		 \mbox{for} \; k,\ell=1,2,\ldots,d.   \notag
\end{align} 
\fi
%

Our goal is to use the perturbation of the $r$-mode unfoldings to find a corresponding
expansion for the HOSVD-based subspace estimate introduced in Section~\ref{sec_dm_subsp}.
%
This is facilitated by the following theorem:

\begin{theorem} \label{thm_perf_reltenmat}
The HOSVD-based subspace
estimate $\unfnot{\sig{\ten{\hat{U}}}}{R+1}^\trans$ defined in 
\eqref{eqn_subsp_usten_def} is linked to 
the SVD-based subspace estimate $\ma{\hat{U}}_{\rm s}$ 
via the following algebraic relation
 %
    \begin{align}
       \unfnot{\sig{\ten{\hat{U}}}}{R+1}^\trans = \left(
           \hat{\ma{T}}_1 \otimes \hat{\ma{T}}_2 \otimes \ldots \otimes \hat{\ma{T}}_R
           \right) \cdot \ma{\hat{U}}_{\rm s}, \label{eqn_perf_reltenmat2}
    \end{align}
    where $\ma{\hat{T}}_r \in \compl^{M_r \times M_r}$ represent estimates of the projection
    matrices onto the $r$-spaces of $\ten{X}_0$, which are computed via
    $\ma{\hat{T}}_r = \sig{\ma{\hat{U}}}_r \sigH{\ma{\hat{U}}}_r$.
\end{theorem}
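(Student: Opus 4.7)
The plan is to express the tensor-based subspace estimate $\sig{\ten{\hat{U}}}$ entirely in terms of the measurement tensor $\ten{X}$ and the $n$-mode projection matrices $\hat{\ma{T}}_r$, then perform the $(R+1)$-mode unfolding and reduce the result to an action on $\ma{\hat{U}}_{\rm s}$ via the SVD of $\ma{X}$.

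First, I would eliminate the truncated core from the definition \eqref{eqn_subsp_usten_def}. Using the all-orthogonality of the truncated HOSVD \eqref{eqn_dm_trnchosvd}, the truncated core satisfies
\begin{align}
   \sig{\ten{\hat{S}}} = \ten{X} \times_1 \sigH{\ma{\hat{U}}}_1 \ldots \times_R \sigH{\ma{\hat{U}}}_R \times_{R+1} \sigH{\ma{\hat{U}}}_{R+1}.
\end{align}
Substituting back into \eqref{eqn_subsp_usten_def} and collapsing consecutive $n$-mode products with \eqref{eqn_notation_nmp_twice} yields
\begin{align}
   \sig{\ten{\hat{U}}} = \ten{X} \times_1 \hat{\ma{T}}_1 \ldots \times_R \hat{\ma{T}}_R \times_{R+1} \bigl(\ma{\hat{\Sigma}}_{\rm s}^{-1}\sigH{\ma{\hat{U}}}_{R+1}\bigr),
\end{align}
since $\sig{\ma{\hat{U}}}_r\sigH{\ma{\hat{U}}}_r = \hat{\ma{T}}_r$ by definition.

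Next, I would apply the unfolding identity \eqref{eqn_notation_nmp_unf} along the $(R+1)$-th mode, which (noting that the cyclic Kronecker factor reduces to $\hat{\ma{T}}_1\kron\ldots\kron\hat{\ma{T}}_R$ since there are no factors beyond mode $R+1$) gives
\begin{align}
   \unfnot{\sig{\ten{\hat{U}}}}{R+1} = \bigl(\ma{\hat{\Sigma}}_{\rm s}^{-1}\sigH{\ma{\hat{U}}}_{R+1}\bigr) \cdot \unf{X}{R+1} \cdot \bigl(\hat{\ma{T}}_1\kron\ldots\kron\hat{\ma{T}}_R\bigr)^\trans.
\end{align}
Transposing both sides and using $\unf{X}{R+1}^\trans = \ma{X}$ yields
\begin{align}
   \unfnot{\sig{\ten{\hat{U}}}}{R+1}^\trans = \bigl(\hat{\ma{T}}_1\kron\ldots\kron\hat{\ma{T}}_R\bigr) \cdot \ma{X} \cdot \sig{\ma{\hat{U}}}_{R+1}^\conj \cdot \ma{\hat{\Sigma}}_{\rm s}^{-1},
\end{align}
where I used that $\ma{\hat{\Sigma}}_{\rm s}$ is real and diagonal.

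The final step is to show that $\ma{X}\cdot\sig{\ma{\hat{U}}}_{R+1}^\conj\cdot\ma{\hat{\Sigma}}_{\rm s}^{-1} = \ma{\hat{U}}_{\rm s}$. Since $\ma{X} = \unf{X}{R+1}^\trans$, the SVD \eqref{eqn_svd_X} of $\ma{X}$ and the SVD \eqref{eqn_perf_unfsvd_x} of $\unf{X}{R+1}$ are transposes of each other, so we can choose the SVDs consistently so that $\sig{\ma{\hat{U}}}_{R+1} = \ma{\hat{V}}_{\rm s}^\conj$ and $\sig{\ma{\hat{\Sigma}}}_{R+1} = \ma{\hat{\Sigma}}_{\rm s}$. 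Hence $\sig{\ma{\hat{U}}}_{R+1}^\conj = \ma{\hat{V}}_{\rm s}$, and the orthonormality relations $\ma{\hat{V}}_{\rm s}^\herm\ma{\hat{V}}_{\rm s} = \ma{I}_d$, $\ma{\hat{V}}_{\rm n}^\herm\ma{\hat{V}}_{\rm s} = \ma{0}$ together with \eqref{eqn_svd_X} give $\ma{X}\cdot\ma{\hat{V}}_{\rm s}\cdot\ma{\hat{\Sigma}}_{\rm s}^{-1} = \ma{\hat{U}}_{\rm s}$, completing the proof.

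I expect the principal subtlety to be the bookkeeping in the last step: matching the $(R+1)$-mode SVD factors of $\unf{X}{R+1}$ with the conventional SVD factors of $\ma{X}$, including the resolution of the trivial phase/sign ambiguity so that the identification $\sig{\ma{\hat{U}}}_{R+1}^\conj = \ma{\hat{V}}_{\rm s}$ is consistent. Once this identification is fixed, everything else is routine algebraic manipulation of the $n$-mode product and Kronecker identities.
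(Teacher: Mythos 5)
Your proposal is correct and follows essentially the same route as the paper's own proof: eliminate the truncated core via $\sig{\ten{\hat{S}}} = \ten{X} \times_1 \sigH{\ma{\hat{U}}}_1 \ldots \times_{R+1} \sigH{\ma{\hat{U}}}_{R+1}$, collapse the mode products into the projectors $\ma{\hat{T}}_r$, unfold along mode $R+1$, and identify $\sigC{\ma{\hat{U}}}_{R+1} = \ma{\hat{V}}_{\rm s}$ so that $\ma{X}\cdot\ma{\hat{V}}_{\rm s}\cdot\ma{\hat{\Sigma}}_{\rm s}^{-1} = \ma{\hat{U}}_{\rm s}$. The only quibble is that the core-elimination identity follows from the column-orthonormality of the truncated factor matrices $\sig{\ma{\hat{U}}}_r$ (i.e., $\sigH{\ma{\hat{U}}}_r\ma{\hat{U}}_r = [\ma{I}_{p_r},\;\ma{0}]$, the paper's Lemma~\ref{lem_app_proof_perf_lemtrnccore}) rather than from the all-orthogonality of the core tensor.
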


Proof: Relation \eqref{eqn_perf_reltenmat2} was shown in \cite{RBHW:09} for $R=2$. The proof
for $R>2$ proceeds in an analogous manner and is presented in Appendix \ref{sec_app_proof_perf_reltenmat}.

Equation~\eqref{eqn_perf_reltenmat2} shows that a perturbation expansion for $\unfnot{\sig{\ten{\hat{U}}}}{R+1}^\trans$
can be developed based on the subspaces of all $R+1$ unfoldings, as the core tensor is
not needed for its computation.
The result is shown in the following theorem:
\begin{theorem} \label{thm_perf_exptensub}
  The HOSVD-based signal subspace estimate can be written as
  $\unfnot{\sig{\ten{\hat{U}}}}{R+1}^\trans = 
   \ma{U}_{\rm s} + 
   \unfnot{\Delta\sig{\ten{\hat{U}}}}{R+1}^\trans$,
   where 
\ifCLASSOPTIONdraftcls    
    \begin{align}
        \unfnot{\Delta\sig{\ten{\hat{U}}}}{R+1}^\trans
        & = 
           \left(
           {\ma{T}}_1 \kron {\ma{T}}_2 \kron \ldots \kron {\ma{T}}_R
           \right) \cdot \Delta \ma{{U}}_{\rm s} 
            + 
           \left(
           \left[\Delta\sig{\ma{U}}_1 \cdot \sigH{\ma{U}}_1\right]
           \kron {\ma{T}}_2 \kron \ldots \kron {\ma{T}}_R
           \right) \cdot \ma{{U}}_{\rm s} \notag \\&
            + 
           \left(
           \ma{T}_1 \kron 
           \left[\Delta\sig{\ma{U}}_2 \cdot \sigH{\ma{U}}_2\right]
           \kron \ldots \kron {\ma{T}}_R
           \right) \cdot \ma{{U}}_{\rm s} 
           + \ldots \notag \\&
           + 
           \left(
           {\ma{T}}_1
           \kron {\ma{T}}_2 \kron \ldots 
           \kron \left[\Delta\sig{\ma{U}}_R \cdot \sigH{\ma{U}}_R\right]
           \right) \cdot \ma{{U}}_{\rm s} 
            + \bigO{\Delta^2}, \label{eqn_perf_exptensub}
    \end{align}    
\else
    \begin{align}
        \unfnot{\Delta\sig{\ten{\hat{U}}}}{R+1}^\trans
        & = 
           \left(
           {\ma{T}}_1 \kron {\ma{T}}_2 \kron \ldots \kron {\ma{T}}_R
           \right) \cdot \Delta \ma{{U}}_{\rm s} \notag \\
           & + 
           \left(
           \left[\Delta\sig{\ma{U}}_1 \cdot \sigH{\ma{U}}_1\right]
           \kron {\ma{T}}_2 \kron \ldots \kron {\ma{T}}_R
           \right) \cdot \ma{{U}}_{\rm s} \notag \\
           & + 
           \left(
           \ma{T}_1 \kron 
           \left[\Delta\sig{\ma{U}}_2 \cdot \sigH{\ma{U}}_2\right]
           \kron \ldots \kron {\ma{T}}_R
           \right) \cdot \ma{{U}}_{\rm s} \notag \\
           & + \ldots  \notag \\
           & + 
           \left(
           {\ma{T}}_1
           \kron {\ma{T}}_2 \kron \ldots 
           \kron \left[\Delta\sig{\ma{U}}_R \cdot \sigH{\ma{U}}_R\right]
           \right) \cdot \ma{{U}}_{\rm s} \notag \\
           & + \bigO{\Delta^2}, \label{eqn_perf_exptensub}
    \end{align}
\fi    
    the SVD-based signal subspace perturbation 
    $\Delta \ma{U}_{\rm s}$ is given by~\eqref{eqn_perf_linexp_vacma} 
    and the perturbation of the $r$-space can be computed via
    \begin{align}
     \Delta \sig{\ma{U}}_r & = \noi{\ma{U}}_r\cdot\noi{\ma{\Gamma}}_{r} = \noi{\ma{U}}_r \cdot \noiH{\ma{U}}_r \cdot \unf{N}{r} \cdot \sig{\ma{V}}_r \cdot \siginv{\ma{\Sigma}}_r.
    \end{align}
\end{theorem}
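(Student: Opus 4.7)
The plan is to take the algebraic identity from Theorem \ref{thm_perf_reltenmat} as the starting point and perform a first-order expansion in the noise. Write $\ma{\hat{U}}_{\rm s} = \ma{U}_{\rm s} + \Delta\ma{U}_{\rm s}$ and $\hat{\ma{T}}_r = \ma{T}_r + \Delta\ma{T}_r$, where
\begin{align}
\Delta\ma{T}_r = \Delta\sig{\ma{U}}_r \cdot \sigH{\ma{U}}_r + \sig{\ma{U}}_r \cdot (\Delta\sig{\ma{U}}_r)^\herm + \bigO{\Delta^2}
\end{align}
follows from substituting $\sig{\ma{\hat{U}}}_r = \sig{\ma{U}}_r + \Delta\sig{\ma{U}}_r$ into $\hat{\ma{T}}_r = \sig{\ma{\hat{U}}}_r \cdot \sigH{\ma{\hat{U}}}_r$. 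Expanding the Kronecker product $\hat{\ma{T}}_1 \kron \ldots \kron \hat{\ma{T}}_R$ by multilinearity and retaining only first-order terms produces the nominal term $\ma{T}_1 \kron \ldots \kron \ma{T}_R$ together with $R$ ``single-perturbation'' terms in which one factor is replaced by $\Delta\ma{T}_r$. Multiplying this expansion by $\ma{U}_{\rm s} + \Delta\ma{U}_{\rm s}$ and discarding $\bigO{\Delta^2}$ cross-products yields a preliminary expression with $R+1$ first-order contributions.

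To reduce this preliminary expression to the form in \eqref{eqn_perf_exptensub}, the nested subspace structure of the data model is exploited. From $\ma{A} = \ma{A}^{(1)} \krp \ldots \krp \ma{A}^{(R)}$ and the inclusion $\text{range}(\ma{A}^{(r)}) \subseteq \text{range}(\sig{\ma{U}}_r)$, it follows that $\ma{T}_r \cdot \ma{A}^{(r)} = \ma{A}^{(r)}$ for every $r$, hence $(\ma{T}_1 \kron \ldots \kron \ma{T}_R)\cdot\ma{A} = \ma{A}$ and therefore $(\ma{T}_1 \kron \ldots \kron \ma{T}_R)\cdot\ma{U}_{\rm s} = \ma{U}_{\rm s}$ since both matrices share the same column space. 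This identity simultaneously produces the nominal $\ma{U}_{\rm s}$ on the left-hand side and the first term $(\ma{T}_1 \kron \ldots \kron \ma{T}_R)\cdot\Delta\ma{U}_{\rm s}$ of \eqref{eqn_perf_exptensub}. It remains to evaluate the $R$ terms containing $\Delta\ma{T}_r$. For these, I would argue that only the piece $\Delta\sig{\ma{U}}_r \cdot \sigH{\ma{U}}_r$ of $\Delta\ma{T}_r$ survives: the Hermitian companion $\sig{\ma{U}}_r \cdot (\Delta\sig{\ma{U}}_r)^\herm$, when inserted into the $r$-th Kronecker slot and applied to $\ma{U}_{\rm s}$, acts in mode $r$ through the factor $(\Delta\sig{\ma{U}}_r)^\herm$, whose relevant part $\noiH{\ma{\Gamma}}_r \cdot \noiH{\ma{U}}_r$ annihilates the mode-$r$ columns of the tensor reshaping of $\ma{U}_{\rm s}$, because those columns lie in $\text{range}(\sig{\ma{U}}_r)$ and $\noiH{\ma{U}}_r \cdot \sig{\ma{U}}_r = \ma{0}$.

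In the same step one needs the observation that only the noise-subspace leakage component $\noi{\ma{U}}_r \cdot \noi{\ma{\Gamma}}_r$ of $\Delta\sig{\ma{U}}_r$ contributes to $\Delta\ma{T}_r$; the in-signal-subspace component $\sig{\ma{U}}_r \cdot \sig{\ma{\Gamma}}_r$ from \eqref{eqn_perf_unfexp} drops out automatically because $\sig{\ma{\Gamma}}_r$ is anti-Hermitian (the matrix $\ma{D}_r$ is antisymmetric while its Hadamard partner in \eqref{eqn_perf_unfexp} is Hermitian), so the two summands $\sig{\ma{U}}_r \cdot \sig{\ma{\Gamma}}_r \cdot \sigH{\ma{U}}_r$ and its conjugate cancel. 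This justifies substituting the reduced form \eqref{eqn_perf_unfexp} into the final expression and yields exactly the $R$ remaining terms of \eqref{eqn_perf_exptensub}. The main obstacle will be book-keeping between the matrix and tensor pictures: arguing that Kronecker-structured operators of the form $\ma{T}_1 \kron \ldots \kron (\ldots \noiH{\ma{U}}_r) \kron \ldots \kron \ma{T}_R$ annihilate $\ma{U}_{\rm s}$ requires switching to the mode-$r$ product viewpoint, and the Kronecker-factor ordering in \eqref{eqn_perf_reltenmat2} must be tracked carefully against the convention in \eqref{eqn_notation_nmp_unf} to ensure the ``active'' factor sits in the correct mode. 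Once this correspondence is pinned down, the remaining steps are routine algebraic simplifications.
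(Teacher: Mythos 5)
Your proposal is correct and follows essentially the same route as the paper's proof: expand the identity of Theorem~\ref{thm_perf_reltenmat} to first order in the perturbations, use $\left(\ma{T}_1 \kron \ldots \kron \ma{T}_R\right)\cdot\ma{U}_{\rm s} = \ma{U}_{\rm s}$ to recover the nominal term, and eliminate the $\sig{\ma{U}}_r \cdot \Delta\sigH{\ma{U}_r}$ and $\sig{\ma{U}}_r \cdot \sig{\ma{\Gamma}}_r$ contributions via the orthogonality $\noiH{\ma{U}}_r \cdot \sig{\ma{U}}_r = \ma{0}$ and the skew-Hermitian symmetry of $\sig{\ma{\Gamma}}_r$, respectively. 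The only (immaterial) presentational differences are that the paper obtains the projector identity as the noise-free special case of Theorem~\ref{thm_perf_reltenmat} (Corollary~\ref{cor_app_proofs_perf_kronprojsubsp}) rather than from the Khatri--Rao structure of $\ma{A}$ (your route implicitly assumes the generic case $p_r = \rankof{\ma{A}^{(r)}}$, which the corollary sidesteps), and that the paper performs the cancellation purely algebraically by right-multiplying with $\ma{T}_1 \kron \ldots \kron \ma{T}_R$, using idempotency to reduce everything to $\Delta\ma{T}_r \cdot \ma{T}_r$, instead of your mode-$r$ fiber viewpoint.
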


Proof: cf. Appendix~\ref{sec_app_proof_perf_exptensub}.

Note that while $\Delta \ma{U}_{\rm s}$ in general contains both perturbation terms 
$\ma{U}_{\rm n} \cdot \ma{\Gamma}_{\rm n}$
and $\ma{U}_{\rm s}\cdot \ma{\Gamma}_{\rm s}$, for $\Delta \sig{\ma{U}}_r$ the term 
$\sig{\ma{U}}_r\cdot \sig{\ma{\Gamma}}_r$
cancels. This is not surprising since the $r$-mode subspaces enter~\eqref{eqn_perf_reltenmat2} only
via projection matrices for which the choice of the particular basis is irrelevant.

\section{Asymptotical analysis of the parameter estimation accuracy} \label{sec_perf_esprit}

\subsection{Review of perturbation results for the 1-D Standard ESPRIT}

In~\cite{LLV:93} the authors point out that once a first order expansion of the subspace
estimation error is available it can be used to find a corresponding first order
expansion of the estimation error of a suitable parameter estimation scheme.
One of the examples the authors show is the 1-D Standard ESPRIT algorithm using LS, which
we use as a starting point to discuss various ESPRIT-type algorithms in this section.
In the noise-free case,
the shift invariance equation for 1-D Standard ESPRIT can be expressed
as
\begin{align}
    \ma{J}_1 \cdot \ma{U}_{\rm s} \cdot \ma{\Psi} = \ma{J}_2 \cdot \ma{U}_{\rm s} \label{eqn_perf_esprit_1dsie}
\end{align}
where $\ma{J}_1, \ma{J}_2 \in \real^{\subsel{M} \times M}$
are the selection matrices that select the $\subsel{M}$ elements from the $M$ antenna
elements which correspond to the first and the second subarray, respectively. 
Moreover, $\ma{\Psi} = \ma{Q} \cdot \ma{\Phi} \cdot \ma{Q}^{-1}$, where $\ma{\Phi}
= \diagof{\begin{bmatrix} \expof{\j \mu_1}, & \ldots, & \expof{\j \mu_d}\end{bmatrix}}
\in \compl^{d \times d}$ contains the spatial frequencies $\mu_k$, $k=1, 2, \ldots, d$
that we want to estimate. Therefore, $\mu_k = \argof{\eigvof{\ma{\Psi}}{k}}$, i.e.,
the $k$-th spatial frequency is obtained from the phase of the $k$-th eigenvalue 
($\eigvof{\cdot}{k}$) of $\ma{\Psi}$.

In presence of noise,
we only have an estimate $\ma{\hat{U}}_{\rm s}$ of the signal subspace $\ma{U}_{\rm s}$.
Consequently, \eqref{eqn_perf_esprit_1dsie} does in general not have an exact
solution anymore. A simple way of finding an approximate $\ma{\hat{\Psi}}$ is given by the LS
solution which can be expressed as
\begin{align}
    \ma{\hat{\Psi}}_{\rm LS} = \left(\ma{J}_1 \cdot \ma{\hat{U}}_{\rm s}\right)^+
    \cdot \ma{J}_2 \cdot \ma{\hat{U}}_{\rm s} \label{eqn_perf_esprit_1dlssol}
\end{align}
To simplify the notation we skip the index ``LS'' for the remainder of this section
(since only LS is considered) and pick it up again in the next section where we
expand the discussion to SLS.

For the estimation error of the $k$-th spatial frequency corresponding to the LS solution
from~\eqref{eqn_perf_esprit_1dlssol}, \cite{LLV:93} provides the following expansion
\ifCLASSOPTIONdraftcls
  \begin{align}
    \Delta \mu_k = \imagof{
       \ma{p}_k^\trans \cdot \left( \ma{J}_1 \cdot \ma{U}_{\rm s} \right)^+
       \cdot
       \left[ \ma{J}_2 / \lambda_k -\ma{J}_1 \right]
       \cdot
       \Delta \ma{U}_{\rm s} \cdot \ma{q}_k
    } + \bigO{\Delta^2} \label{eqn_perf_1dse_expl}
  \end{align}
\else
  \begin{align}
    \Delta \mu_k = & {\rm Im}\Big\{
       \ma{p}_k^\trans \cdot \left( \ma{J}_1 \cdot \ma{U}_{\rm s} \right)^+
       \cdot
       \left[ \ma{J}_2 / \lambda_k -\ma{J}_1 \right]
       \notag \\ &
       \cdot
       \Delta \ma{U}_{\rm s} \cdot \ma{q}_k
    \Big\} + \bigO{\Delta^2} \label{eqn_perf_1dse_expl}
  \end{align}
\fi
where $\lambda_k = \expof{\j \mu_k}$ and $\ma{q}_k$ is the $k$-th column of $\ma{Q}$.
Moreover, $\ma{p}_k^\trans$ represents the $k$-th row vector of the matrix $\ma{P} = \ma{Q}^{-1}$.
Note that $\Delta \ma{U}_{\rm s}$ can be expanded in terms of the perturbation term $\ma{N}$
by directly using the expansion~\eqref{eqn_perf_linexp_vac}. The additional term from~\eqref{eqn_perf_linexp_vacma}
is not needed as it is irrelevant for the performance of ESPRIT.


\subsection{\texorpdfstring{Extension to $R$-D Standard (Tensor-)ESPRIT}{Extension to R-D Standard (Tensor-)ESPRIT}}

The previous result from~\cite{LLV:93} on the first order perturbation expansion
of 1-D Standard ESPRIT using LS is easily generalized to the $R$-D case. The reason
is that for $R$-D LS-based ESPRIT, the $R$ shift invariance equations are solved
independently from each other\footnote{If the shift invariance equations are solved
completely independently, the correct pairing of the parameters across dimensions
has to be found in a subsequent step. This is often avoided by computing the LS solutions
for $\ma{\Psi}^{(r)}$ independently but then performing a joint eigendecomposition
of all $R$ dimensions to yield $\ma{\Phi}^{(r)}$. This step is not included in the performance
analysis presented in this section, since no performance results on joint eigendecompositions
are available and it appears to be a very difficult task. Moreover, 
this step has indeed no impact on the asymptotic estimation error of the spatial frequencies 
for high SNRs since the eigenvectors become asymptotically equal.
As shown in~\cite{LT:78}, the impact of the perturbation of the eigenvectors
is of second-order and can hence be ignored in a first-order perturbation analysis.}.
Hence, the arguments from~\cite{LLV:93} are readily
applied to all modes individually and we directly obtain a first order expansion
for the estimation error of the $k$-th spatial frequency in the $r$-th mode
\ifCLASSOPTIONdraftcls
  \begin{align}
    \Delta \mu_k^{(r)} = \imagof{
       \ma{p}_k^\trans \cdot \left( \ma{\tilde{J}}_1^{(r)} \cdot \ma{U}_{\rm s} \right)^+
       \cdot
       \left[ \ma{\tilde{J}}_2^{(r)} / \lambda_k^{(r)} -\ma{\tilde{J}}_1^{(r)} \right]
       \cdot
       \Delta \ma{U}_{\rm s} \cdot \ma{q}_k
    } + \bigO{\Delta^2} \label{eqn_perf_expl_rdse}
  \end{align}
\else
  \begin{align}
    \Delta \mu_k^{(r)} = & {\rm Im}\Big\{
       \ma{p}_k^\trans \cdot \left( \ma{\tilde{J}}_1^{(r)} \cdot \ma{U}_{\rm s} \right)^+
       \cdot
       \left[ \ma{\tilde{J}}_2^{(r)} / \lambda_k^{(r)} -\ma{\tilde{J}}_1^{(r)} \right]
       \notag \\ & 
       \cdot
       \Delta \ma{U}_{\rm s} \cdot \ma{q}_k
    \Big\} + \bigO{\Delta^2} \label{eqn_perf_expl_rdse}
  \end{align}
\fi
where $\ma{\tilde{J}}_1^{(r)}, \ma{\tilde{J}}_2^{(r)} \in \real^{\frac{M}{M_r} \cdot \subsel{M_r} \times M}$
are the effective $R$-D selection matrix for the 
first and the second subarray in the $r$-th mode, respectively.
They can be expressed as $\ma{\tilde{J}}_\ell^{(r)} = \ma{I}_{\prod_{n=1}^{r-1} M_n} \kron 
\ma{J}_\ell^{(r)} \kron \ma{I}_{\prod_{n=r+1}^R M_n}$, for $\ell=1,2$ and $r=1, 2, \ldots, R$, where 
$\ma{J}_\ell^{(r)} \in \real^{\subsel{M_r} \times M_r}$ are the selection matrices which
select the $\subsel{M_r}$ elements belonging to the first and the second subarray in the $r$-th mode,
respectively.

Since this expansion for $R$-D Standard ESPRIT is explicit in the perturbation of the subspace
estimate and $R$-D Standard Tensor-ESPRIT only differs in the fact that it uses the enhanced
HOSVD-based subspace estimate, we immediately conclude that a first order perturbation
expansion for $R$-D Standard Tensor-ESPRIT is given by
\ifCLASSOPTIONdraftcls
  \begin{align}
    \Delta \mu_k^{(r)} = \imagof{
       \ma{p}_k^\trans \cdot \left( \ma{\tilde{J}}_1^{(r)} \cdot \ma{U}_{\rm s} \right)^+
       \cdot
       \left[ \ma{\tilde{J}}_2^{(r)} / \lambda_k^{(r)} -\ma{\tilde{J}}_1^{(r)} \right]
       \cdot
       \unfnot{\Delta\sig{\ten{\hat{U}}}}{R+1}^\trans \cdot \ma{q}_k
    } + \bigO{\Delta^2}. \label{eqn_perf_epxl_ste}
  \end{align}
\else
  \begin{align}
    \Delta \mu_k^{(r)} = & {\rm Im}\Big\{
       \ma{p}_k^\trans \cdot \left( \ma{\tilde{J}}_1^{(r)} \cdot \ma{U}_{\rm s} \right)^+
       \cdot
       \left[ \ma{\tilde{J}}_2^{(r)} / \lambda_k^{(r)} -\ma{\tilde{J}}_1^{(r)} \right]
       \notag \\ &
       \cdot
       \unfnot{\Delta\sig{\ten{\hat{U}}}}{R+1}^\trans \cdot \ma{q}_k
    \Big\} + \bigO{\Delta^2}. \label{eqn_perf_epxl_ste}
  \end{align}
\fi
An explicit expansion of $\Delta \mu_k^{(r)}$
in terms of the noise tensor $\ten{N}$ is obtained
by inserting the previous result~\eqref{eqn_perf_exptensub}.

\subsection{Mean Square Errors}

As it has been said above, the advantage of the first order perturbation expansion
we have discussed so far is that it is explicit in the perturbation term $\ma{N}$
and hence makes no assumptions about its distribution. 
However, it is often also desirable to know the mean square error if a specific
distribution is assumed and the ensemble average over all possible noise realizations
is computed.

\revA{In the sequel we show that the mean square error only depends on the second-order
moments of the noise samples. Hence, we can derive the MSE as a function of the covariance
matrix and the pseudo-covariance matrix only assuming the noise to be zero mean.
We neither need to assume Gaussianity nor circular symmetry.}
For simplicity we consider the special case $R=2$
for Standard Tensor-ESPRIT, however,
a generalization to a larger number of dimensions is quite straightforward. 

\begin{theorem} \label{thm_perf_mse}
	  Assume that the entries of the perturbation term $\ma{N}$ or $\ten{N}$ are
	  \revA{
	  zero mean random variables with finite second-order moments described
	  by the covariance matrix 
	  $\ma{R}_{\rm nn} = \expvof{\ma{n} \cdot \ma{n}^\herm}$
	  and the complementary covariance matrix
	  $\ma{C}_{\rm nn} = \expvof{\ma{n} \cdot \ma{n}^\trans}$
	  for $\ma{n} = \vecof{\ma{N}} = \vecof{\unf{N}{3}^\trans}$.}
    Then, the first-order
    approximation of the mean square estimation error for the $k$-th spatial frequency in the $r$-th mode
    is given by
    \revA{
\ifCLASSOPTIONdraftcls
    \begin{align}
          \expvof{ \left(\Delta \mu_k^{(r)}\right)^2}
   & = \frac{1}{2} \Big(\ma{r}_k^{(r)^\herm} \cdot \ma{W}_{\rm mat}^\conj \cdot \ma{R}_{\rm nn}^\trans \cdot
   \ma{W}_{\rm mat}^\trans \cdot \ma{r}_k^{(r)} \notag \\ &
    - \realof{\ma{r}_k^{(r)^\trans} \cdot \ma{W}_{\rm mat} \cdot \ma{C}_{\rm nn} \cdot 
    \ma{W}_{\rm mat}^\trans \cdot \ma{r}_k^{(r)} }
   \Big)
      + \bigO{\traceof{\ma{R}_{\rm nn}}^2}
   \label{eqn_subsp_perf_mse_se}
    \end{align}
\else
	\begin{align}
          &\expvof{ \left(\Delta \mu_k^{(r)}\right)^2}
   = \frac{1}{2} \Big(\ma{r}_k^{(r)^\herm} \cdot \ma{W}_{\rm mat}^\conj \cdot \ma{R}_{\rm nn}^\trans \cdot
   \ma{W}_{\rm mat}^\trans \cdot \ma{r}_k^{(r)} - \notag \\ &
    \realof{\ma{r}_k^{(r)^\trans} \cdot \ma{W}_{\rm mat} \cdot \ma{C}_{\rm nn} \cdot 
    \ma{W}_{\rm mat}^\trans \cdot \ma{r}_k^{(r)} }
   \Big)
      + \bigO{\traceof{\ma{R}_{\rm nn}}^2}
   \label{eqn_subsp_perf_mse_se}
    \end{align}
\fi}
		for $R$-D Standard ESPRIT and
    \revA{
\ifCLASSOPTIONdraftcls    
    \begin{align}
          \expvof{ \left(\Delta \mu_k^{(r)}\right)^2}
   = & \frac{1}{2} \Big(\ma{r}_k^{(r)^\herm} \cdot \ma{W}_{\rm ten}^\conj \cdot \ma{R}_{\rm nn}^\trans \cdot
   \ma{W}_{\rm ten}^\trans \cdot \ma{r}_k^{(r)} \notag \\ &
    - \realof{\ma{r}_k^{(r)^\trans} \cdot \ma{W}_{\rm ten} \cdot \ma{C}_{\rm nn} \cdot 
    \ma{W}_{\rm ten}^\trans \cdot \ma{r}_k^{(r)} }
   \Big)
      + \bigO{\traceof{\ma{R}_{\rm nn}}^2}
    \label{eqn_subsp_perf_mse_ste}
    \end{align}
\else
   \begin{align}
       &  \expvof{ \left(\Delta \mu_k^{(r)}\right)^2}
   =  \frac{1}{2} \Big(\ma{r}_k^{(r)^\herm} \cdot \ma{W}_{\rm ten}^\conj \cdot \ma{R}_{\rm nn}^\trans \cdot
   \ma{W}_{\rm ten}^\trans \cdot \ma{r}_k^{(r)} - \notag \\ &
    \realof{\ma{r}_k^{(r)^\trans} \cdot \ma{W}_{\rm ten} \cdot \ma{C}_{\rm nn} \cdot 
    \ma{W}_{\rm ten}^\trans \cdot \ma{r}_k^{(r)} }
   \Big)
      + \bigO{\traceof{\ma{R}_{\rm nn}}^2}
    \label{eqn_subsp_perf_mse_ste}
    \end{align}
\fi}
		%
    for 2-D Standard Tensor-ESPRIT\footnote{The reason that this result is specific for $R=2$
    is not~\eqref{eqn_subsp_perf_mse_ste} (which applies to arbitrary $R$) but~\eqref{eqn_subsp_perf_wten_r2}
    which we develop only for $R=2$ here.}. The vector $\ma{r}_k^{(r)}$ and the matrices 
    $\ma{W}_{\rm mat}$ and $\ma{W}_{\rm ten}$ are given by
\ifCLASSOPTIONdraftcls
    \begin{align}
         \ma{r}_k^{(r)} &  = \ma{q}_k \kron 
         \left(\left[ \left(\ma{\tilde{J}}_1^{(r)}  \ma{U}_{\rm s} \right)^+ 
                \left(\ma{\tilde{J}}_2^{(r)}/\expof{\j \cdot \mu_k^{(r)}} - \ma{\tilde{J}}_1^{(r)}\right)
         \right]^\trans \cdot \ma{p}_k\right) \notag \\ 
        \ma{W}_{\rm mat}  & = 
   \left(\ma{\Sigma}_{\rm s}^{-1} \cdot \ma{V}_{\rm s}^\trans \right) \kron \left( {\ma{U}}_{\rm n}  \cdot {\ma{U}}_{\rm n}^\herm \right) \label{eqn_subsp_perf_wmat}
   \\
    \ma{W}_{\rm ten}
    & = 
    \left( \siginv{\ma{\Sigma}_3} \sigH{\ma{U}}_3 \right)\kron \left(\left[ \ma{T}_1 \kron \ma{T}_2\right]
    \noiC{\ma{V}_3} \noiT{\ma{V}_3} \right) \notag \\
    &
     + \left( \ma{U}_{\rm s}^\trans \kron \ma{I}_M\right) \ma{\bar{T}}_2 
        \left( \sigC{\ma{U}}_1 \siginv{\ma{\Sigma}_1} \sigT{\ma{V}}_1
        \kron \noi{\ma{U}_1} \noiH{\ma{U}_1} \right)
        \cdot \ma{K}_{M_2 \times (M_1 \cdot N)} \notag \\
    &
     + \left( \ma{U}_{\rm s}^\trans \kron \ma{I}_M\right) \ma{\bar{T}}_1 
        \left( \sigC{\ma{U}}_2 \siginv{\ma{\Sigma}_2} \sigT{\ma{V}}_2
        \kron \noi{\ma{U}_2} \noiH{\ma{U}_2} \right) \quad \mbox{where} \label{eqn_subsp_perf_wten_r2}\\
\ma{\bar{T}}_1  & = 
\left[
\begin{array}{c}
\ma{I}_{M_2} \kron \ma{t}_{1,1}  \\
\vdots \\
\ma{I}_{M_2} \kron \ma{t}_{1,M_1} 
\end{array}
\right] \kron \ma{I}_{M_2},   \; \ma{\bar{T}}_2  = 
\ma{I}_{M_1} \kron \left[
\begin{array}{c}
\ma{I}_{M_1} \kron \ma{t}_{2,1}  \\
\vdots \\
\ma{I}_{M_1} \kron \ma{t}_{2,M_2}
\end{array}
\right], \notag
    \end{align}
\else
{\small
	 \begin{align}
        & \ma{r}_k^{(r)} = \ma{q}_k \kron 
         \left(\left[ \left(\ma{\tilde{J}}_1^{(r)}  \ma{U}_{\rm s} \right)^+ 
                \left(\ma{\tilde{J}}_2^{(r)}/\expof{\j \cdot \mu_k^{(r)}} - \ma{\tilde{J}}_1^{(r)}\right)
         \right]^\trans \cdot \ma{p}_k\right) \notag \\ 
        & \ma{W}_{\rm mat}  = 
   \left(\ma{\Sigma}_{\rm s}^{-1} \cdot \ma{V}_{\rm s}^\trans \right) \kron \left( {\ma{U}}_{\rm n}  \cdot {\ma{U}}_{\rm n}^\herm \right) \label{eqn_subsp_perf_wmat}
   \\
   & \ma{W}_{\rm ten}
     = 
    \left( \siginv{\ma{\Sigma}_3} \sigH{\ma{U}}_3 \right)\kron \left(\left[ \ma{T}_1 \kron \ma{T}_2\right]
    \noiC{\ma{V}_3} \noiT{\ma{V}_3} \right) \notag \\
    &
     + \left( \ma{U}_{\rm s}^\trans \kron \ma{I}_M\right) \ma{\bar{T}}_2 
        \left( \sigC{\ma{U}}_1 \siginv{\ma{\Sigma}_1} \sigT{\ma{V}}_1
        \kron \noi{\ma{U}_1} \noiH{\ma{U}_1} \right)
        \cdot \ma{K}_{M_2 \times (M_1 \cdot N)} \notag \\
    &
     + \left( \ma{U}_{\rm s}^\trans \kron \ma{I}_M\right) \ma{\bar{T}}_1 
        \left( \sigC{\ma{U}}_2 \siginv{\ma{\Sigma}_2} \sigT{\ma{V}}_2
        \kron \noi{\ma{U}_2} \noiH{\ma{U}_2} \right) \label{eqn_subsp_perf_wten_r2} \\ 
        & \ma{\bar{T}}_1   = 
\begin{bmatrix}\ma{I}_{M_2} \kron \ma{t}_{1,1}  \\
\vdots \\
\ma{I}_{M_2} \kron \ma{t}_{1,M_1} 
\end{bmatrix} \kron \ma{I}_{M_2},  
 \ma{\bar{T}}_2  = 
\ma{I}_{M_1} \kron \begin{bmatrix}
\ma{I}_{M_1} \kron \ma{t}_{2,1}  \\
\vdots \\
\ma{I}_{M_1} \kron \ma{t}_{2,M_2} \end{bmatrix}, \notag
    \end{align}
    }
\fi
    and $\ma{t}_{r,m}$ is the $m$-th column of $\ma{T}_r$. Finally, $\ma{K}_{p,q}$ is
    the commutation matrix from~\eqref{eqn_def_commat}.
\end{theorem}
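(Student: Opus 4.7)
The plan is to reduce both claims to a common master formula: express the first-order error as $\Delta \mu_k^{(r)} = \imagof{\ma{r}_k^{(r)^\trans} \cdot \ma{W} \cdot \ma{n}} + \bigO{\Delta^2}$ with $\ma{n} = \vecof{\ma{N}}$, where $\ma{W}$ equals $\ma{W}_{\rm mat}$ or $\ma{W}_{\rm ten}$ depending on the algorithm variant. Once in this canonical form, the MSE follows by computing the second moment of the imaginary part of a scalar linear form in $\ma{n}$; this only touches the second-order moments of $\ma{N}$ and requires neither Gaussianity nor circular symmetry.

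For Standard ESPRIT, I start from \eqref{eqn_perf_expl_rdse} and substitute \eqref{eqn_perf_linexp_vacma} for $\Delta \ma{U}_{\rm s}$. Only the noise-subspace-leakage part $\ma{U}_{\rm n} \ma{U}_{\rm n}^\herm \ma{N} \ma{V}_{\rm s} \ma{\Sigma}_{\rm s}^{-1}$ contributes, because the signal-subspace rotation $\ma{U}_{\rm s} \ma{\Gamma}_{\rm s}$ cancels when inserted into \eqref{eqn_perf_expl_rdse}: using $(\ma{\tilde{J}}_1^{(r)} \ma{U}_{\rm s})^{\pinv} \ma{\tilde{J}}_2^{(r)} \ma{U}_{\rm s} = \ma{Q} \ma{\Phi}^{(r)} \ma{Q}^{-1}$, together with $\ma{p}_k^\trans \ma{Q} = \ma{e}_k^\trans$ and $\ma{e}_k^\trans \ma{\Phi}^{(r)} = \lambda_k^{(r)} \ma{e}_k^\trans$, the prefactor $\ma{p}_k^\trans \cdot (\ma{\tilde{J}}_1^{(r)} \ma{U}_{\rm s})^{\pinv} [\ma{\tilde{J}}_2^{(r)}/\lambda_k^{(r)} - \ma{\tilde{J}}_1^{(r)}] \cdot \ma{U}_{\rm s}$ collapses to zero. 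The remaining scalar has the form $\ma{\alpha}^\trans \ma{N} \ma{\beta}$, so \eqref{eqn_veckron} gives $\ma{\alpha}^\trans \ma{N} \ma{\beta} = (\ma{\beta}^\trans \kron \ma{\alpha}^\trans) \ma{n}$, which is $\ma{r}_k^{(r)^\trans} \ma{W}_{\rm mat} \ma{n}$ with $\ma{W}_{\rm mat}$ exactly as in \eqref{eqn_subsp_perf_wmat}.

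For the tensor case with $R = 2$, I plug \eqref{eqn_perf_exptensub} into \eqref{eqn_perf_epxl_ste}. The expansion of $\unfnot{\Delta\sig{\ten{\hat{U}}}}{3}^\trans$ splits into three additive pieces. The first, $(\ma{T}_1 \kron \ma{T}_2) \cdot \Delta \ma{U}_{\rm s}$, is handled as in the matrix case but with the extra Kronecker factor $\ma{T}_1 \kron \ma{T}_2$ on the left of $\ma{N}$; this produces the first summand of \eqref{eqn_subsp_perf_wten_r2} involving the third unfolding. The remaining two pieces, for $r \in \{1,2\}$, have the form $(\Delta \sig{\ma{U}}_r \sigH{\ma{U}}_r) \kron \ma{T}_{3-r}$ applied to $\ma{U}_{\rm s}$, with $\Delta \sig{\ma{U}}_r = \noi{\ma{U}}_r \noiH{\ma{U}}_r \unf{N}{r} \sig{\ma{V}}_r \siginv{\ma{\Sigma}}_r$. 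Each is linear in a different unfolding $\unf{N}{r}$, so I apply \eqref{eqn_veckron} together with the commutation-matrix identities \eqref{eqn_def_commat}--\eqref{eqn_commat_permkron} to permute the Kronecker factors and re-express $\vecof{\unf{N}{r}}$ in terms of $\ma{n} = \vecof{\unf{N}{3}^\trans}$. The commutation matrix $\ma{K}_{M_2 \times (M_1 N)}$ enters precisely when rewriting the $r=1$ term; the auxiliary block matrices $\ma{\bar{T}}_1, \ma{\bar{T}}_2$ arise from expressing $(\ma{U}_{\rm s}^\trans \kron \ma{I}_M) \cdot ([\text{column of }\ma{T}_r]\kron\text{projector})$ as a structured left multiplier, which forces the displayed stacking by columns of $\ma{T}_r$.

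Once both variants take the form $\Delta \mu_k^{(r)} = \imagof{z} + \bigO{\Delta^2}$ with $z = \ma{w}^\trans \ma{n}$ and $\ma{w}^\trans = \ma{r}_k^{(r)^\trans} \ma{W}$, I apply the elementary identity $(\imagof{z})^2 = \tfrac{1}{2} |z|^2 - \tfrac{1}{2} \realof{z^2}$ and take expectations, obtaining $\expvof{|z|^2} = \ma{w}^\trans \ma{R}_{\rm nn} \ma{w}^\conj$ and $\expvof{z^2} = \ma{w}^\trans \ma{C}_{\rm nn} \ma{w}$. Transposing the first scalar (which is $1 \times 1$) and substituting $\ma{w}^\trans = \ma{r}_k^{(r)^\trans} \ma{W}$ yields $\ma{r}_k^{(r)^\herm} \ma{W}^\conj \ma{R}_{\rm nn}^\trans \ma{W}^\trans \ma{r}_k^{(r)}$ and $\realof{\ma{r}_k^{(r)^\trans} \ma{W} \ma{C}_{\rm nn} \ma{W}^\trans \ma{r}_k^{(r)}}$, reproducing \eqref{eqn_subsp_perf_mse_se} and \eqref{eqn_subsp_perf_mse_ste}; the residual from the $\bigO{\Delta^2}$ part of the expansion contributes $\bigO{\traceof{\ma{R}_{\rm nn}}^2}$ after integrating against the finite second-order moments of $\ma{N}$. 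The main obstacle is the tensor bookkeeping behind \eqref{eqn_subsp_perf_wten_r2}: forcing the three unfolding-specific terms to share the same vectorization $\ma{n}$ dictates a particular commutation matrix and the block-structured $\ma{\bar{T}}_1, \ma{\bar{T}}_2$, which is why the explicit form is tied to $R = 2$ even though the master identity \eqref{eqn_subsp_perf_mse_ste} is generic.
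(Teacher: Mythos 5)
Your proposal is correct and follows the same overall architecture as the paper's proof: reduce both variants to the canonical form $\Delta \mu_k^{(r)} = \imagof{\ma{r}_k^{(r)^\trans} \cdot \ma{W} \cdot \ma{n}} + \bigO{\Delta^2}$ via \eqref{eqn_veckron}, then compute the second moment of the imaginary part of a linear form in $\ma{n}$; the tensor bookkeeping (the three additive pieces of \eqref{eqn_perf_exptensub}, the structured matrices $\ma{\bar{T}}_1, \ma{\bar{T}}_2$, and the commutation matrix entering only for the $1$-mode unfolding) matches the paper's Appendix step for step. Two local differences are worth noting. First, for the final averaging step the paper expands $\imagof{\ma{z}_1^\trans \ma{z}_2}^2$ into four real/imaginary cross terms and invokes a separate lemma expressing $\expvof{\realof{\ma{n}}\realof{\ma{n}}^\trans}$ etc.\ in terms of $\ma{R}_{\rm nn}$ and $\ma{C}_{\rm nn}$, whereas you use the scalar identity $(\imagof{z})^2 = \tfrac{1}{2}|z|^2 - \tfrac{1}{2}\realof{z^2}$ directly; this is a genuinely shorter route to the same compact expression and avoids the auxiliary lemma entirely. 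Second, you explicitly verify that the in-subspace rotation term $\ma{U}_{\rm s}\ma{\Gamma}_{\rm s}$ from \eqref{eqn_perf_linexp_vacma} is annihilated by the prefactor $\ma{p}_k^\trans (\ma{\tilde{J}}_1^{(r)}\ma{U}_{\rm s})^{\pinv}[\ma{\tilde{J}}_2^{(r)}/\lambda_k^{(r)} - \ma{\tilde{J}}_1^{(r)}]\ma{U}_{\rm s} = \ma{0}$, where the paper simply inserts the leakage-only expansion \eqref{eqn_perf_linexp_vac} and asserts elsewhere that the extra term is irrelevant; your cancellation argument makes that assertion rigorous at no extra cost. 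Both differences are improvements in exposition rather than changes of method.
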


Proof: cf. Appendix~\ref{sec_app_proof_perf_mse}.

Note that for 1-D Standard ESPRIT, this MSE expression agrees with the one
shown in~\cite{LV:92}. However,~\cite{LV:92} does not directly generalize
to the tensor case. This is the advantage of the MSE 
expressions~\eqref{eqn_subsp_perf_mse_se} and~\eqref{eqn_subsp_perf_mse_ste}
 where we only need to replace $\ma{W}_{\rm mat}$
by $\ma{W}_{\rm ten}$ to account for the enhanced signal subspace estimate. 
\revA{Furthermore, note that the special case of circularly symmetric
white noise corresponds to $\ma{R}_{\rm nn} = \sigma_{\rm n}^2 \cdot \ma{I}_{MN}$
and $\ma{C}_{\rm nn}  = \ma{0}_{MN \times MN}$.}

\subsection{Incorporation of Forward-Backward-Averaging}\label{subsec_subsp_perf_fba}

So far we have shown the explicit expansion and the MSE expressions for $R$-D Standard
ESPRIT and $R$-D Standard Tensor-ESPRIT. In order to extend these results to Unitary-ESPRIT-type
algorithms we need to incorporate the mandatory preprocessing for Unitary ESPRIT
which is given by Forward-Backward-Averaging. The second step in Unitary ESPRIT is the
transformation on the real-valued domain. However, it can be shown that this step has no impact on the performance
for high SNRs. Therefore, the asymptotic performance of Unitary-ESPRIT-type algorithms
is found once Forward-Backward-Averaging is taken into account.


Forward-Backward-Averaging augments the $N$ observations of the sampled $R$-D harmonics
by $N$ new ``virtual'' observations which are a conjugated and row- as well as column-flipped version
of the original ones~\cite{HN:95}. This can be expressed in matrix form as
%
\begin{align}
   \fba{\ma{X}} = \begin{bmatrix} \ma{X} & \ma{\Pi}_M \cdot \ma{X}^\conj \cdot \ma{\Pi}_N \end{bmatrix}
   \in \compl^{M \times 2N} \label{eqn_fba_directdata}
\end{align}
Inserting $\ma{X} = \ma{X}_0 + \ma{N}$ we find
\ifCLASSOPTIONdraftcls
\begin{align}
   \fba{\ma{X}} & = \begin{bmatrix} \ma{X}_0, & \ma{\Pi}_M \cdot \ma{X}_0^\conj \cdot \ma{\Pi}_N \end{bmatrix}
   + \begin{bmatrix} \ma{N}, & \ma{\Pi}_M \cdot \ma{N}^\conj \cdot \ma{\Pi}_N \end{bmatrix} 
   = \fba{\ma{X}_0} + \fba{\ma{N}}. \label{eqn_fba_directdata_exp}
\end{align}
\else
\begin{align}
   \fba{\ma{X}} & = \begin{bmatrix} \ma{X}_0, & \ma{\Pi}_M \cdot \ma{X}_0^\conj \cdot \ma{\Pi}_N \end{bmatrix}
   + \begin{bmatrix} \ma{N}, & \ma{\Pi}_M \cdot \ma{N}^\conj \cdot \ma{\Pi}_N \end{bmatrix} \notag \\
   & = \fba{\ma{X}_0} + \fba{\ma{N}}. \label{eqn_fba_directdata_exp}
\end{align}
\fi
%
%
However, the latter relation shows that we are interested in the perturbation of the subspace
of a matrix $\fba{\ma{X}_0}$ superimposed by an additive perturbation $\fba{\ma{N}}$, which is small.
Since the explicit perturbation expansion we have used up to this point requires no additional
assumptions, the surprisingly simple answer is that we do not need to change anything but we
can apply the previous results directly. All we need to do is to replace all exact (noise-free)
subspaces of $\ma{X}_0$ by the corresponding subspaces of $\fba{\ma{X}_0}$.
{From~\eqref{eqn_perf_expl_rdse},} we immediately obtain the following explicit first-order expansion which is valid
for $R$-D Unitary ESPRIT
 \ifCLASSOPTIONdraftcls
\begin{align}
    \Delta \mu_k^{(r)} = \imagof{
       \fbaT{\ma{p}_k} \cdot \left( \ma{\tilde{J}}_1^{(r)} \cdot \fba{\ma{U}}_{\rm s} \right)^+
       \cdot
       \left[ \ma{\tilde{J}}_2^{(r)} / \lambda_k^{(r)} -\ma{\tilde{J}}_1^{(r)} \right]
       \cdot
       \Delta \fba{\ma{U}_{\rm s}} \cdot \fba{\ma{q}_k}
    } + \bigO{\Delta^2}
\end{align}
 \else
\begin{align}
    \Delta \mu_k^{(r)} = & {\rm Im}\Big\{
       \fbaT{\ma{p}_k} \cdot \left( \ma{\tilde{J}}_1^{(r)} \cdot \fba{\ma{U}}_{\rm s} \right)^+
       \cdot
       \left[ \ma{\tilde{J}}_2^{(r)} / \lambda_k^{(r)} -\ma{\tilde{J}}_1^{(r)} \right] \\ \notag
       &
       \cdot
       \Delta \fba{\ma{U}_{\rm s}} \cdot \fba{\ma{q}_k}
    \Big\} + \bigO{\Delta^2}
\end{align} 
 \fi
where $\Delta \fba{\ma{U}_{\rm s}}$ is given by 
\begin{align}
    \Delta \fba{\ma{U}_{\rm s}} = \fba{\ma{U}}_{\rm n} \cdot \fbaH{\ma{U}}_{\rm n}
   \cdot \fba{\ma{N}} \cdot \fba{\ma{V}}_{\rm s} \cdot \fbainv{\ma{\Sigma}}_{\rm s}
\end{align}
and $\fba{\ma{U}_{\rm s}}$, $\fba{\ma{U}_{\rm n}}$, $\fba{\ma{V}}_{\rm s}$, $\fba{\ma{\Sigma}}_{\rm s}$
correspond to the signal subspace, the noise subspace, the row space, and the singular values of $\fba{\ma{X}}_0$, 
respectively. Likewise, $\fba{\ma{q}_k}$ and $\fba{\ma{p}_k}$ 
represent the corresponding versions of $\ma{q}_k$ and $\ma{p}_k$ if ${\ma{U}}_{\rm s}$
is replaced by $\fba{\ma{U}}_{\rm s}$ in the shift invariance equations.

With the same reasoning, an explicit expansion for $R$-D Unitary Tensor-ESPRIT is obtained
by consistently replacing $\ten{X}_0$ by $\fba{\ten{X}}_0$ in~\eqref{eqn_perf_epxl_ste}, i.e.,
 \ifCLASSOPTIONdraftcls
\begin{align}
    \Delta \mu_k^{(r)} = \imagof{
       \fbaT{\ma{p}_k} \cdot \left( \ma{\tilde{J}}_1^{(r)} \cdot \fba{\ma{U}}_{\rm s} \right)^+
       \cdot
       \left[ \ma{\tilde{J}}_2^{(r)} / \lambda_k^{(r)} -\ma{\tilde{J}}_1^{(r)} \right]
       \cdot
       \unfnot{\Delta\sigfba{\ten{\hat{U}}}}{R+1}^\trans \cdot \fba{\ma{q}_k}
    } + \bigO{\Delta^2}. \label{eqn_perf_epxl_ute}
\end{align}
 \else
\begin{align}
    \Delta \mu_k^{(r)} = &{\rm Im}\Big\{
       \fbaT{\ma{p}_k} \cdot \left( \ma{\tilde{J}}_1^{(r)} \cdot \fba{\ma{U}}_{\rm s} \right)^+
       \cdot
       \left[ \ma{\tilde{J}}_2^{(r)} / \lambda_k^{(r)} -\ma{\tilde{J}}_1^{(r)} \right]
       \notag \\
       & 
       \cdot
       \unfnot{\Delta\sigfba{\ten{\hat{U}}}}{R+1}^\trans \cdot \fba{\ma{q}_k}
    \Big\} + \bigO{\Delta^2}. \label{eqn_perf_epxl_ute}
\end{align} 
 \fi
%

\revA{Similarly, Theorem~\ref{thm_perf_mse} can be applied to compute the MSE since
we only assumed the noise to be zero mean and possess finite second order moments,
which is still true after forward-backward
averaging. The following theorem summarizes the results for $R$-D Unitary ESPRIT
and 2-D Unitary Tensor-ESPRIT:}

\begin{theorem}\label{thm_perf_mse_fba}
    For the case where $\ma{N}$ or $\ten{N}$ contain
\revA{
	  zero mean random variables with finite second-order moments described
	  by the covariance matrix 
	  $\ma{R}_{\rm nn} = \expvof{\ma{n} \cdot \ma{n}^\herm}$
	  and the complementary covariance matrix
	  $\ma{C}_{\rm nn} = \expvof{\ma{n} \cdot \ma{n}^\trans}$
	  for $\ma{n} = \vecof{\ma{N}} = \vecof{\unf{N}{3}^\trans}$,}
%
%
	  \revA{the MSE for $R$-D Unitary ESPRIT and 2-D Unitary Tensor-ESPRIT
	  are given by \eqref{eqn_subsp_perf_mse_se}
	  and \eqref{eqn_subsp_perf_mse_ste} if we replace 
		$\ma{r}_k^{{(r)}}$ by $\ma{r}_k^{\fba{(r)}}$,
		${\ma{W}}_{\rm mat}$ by $\fba{\ma{W}}_{\rm mat}$,
		${\ma{W}}_{\rm ten}$ by $\fba{\ma{W}}_{\rm ten}$,
		and $\ma{R}_{\rm nn}$ as well as $\ma{C}_{\rm nn}$
		by $\fba{\ma{R}_{\rm nn}}$ and $\fba{\ma{C}_{\rm nn}}$.}
		Here, 
     $\ma{r}_k^{\fba{(r)}}$, $\fba{\ma{W}}_{\rm mat}$, and $\fba{\ma{W}}_{\rm ten}$
    are computed as in~\eqref{eqn_subsp_perf_mse_se} and~\eqref{eqn_subsp_perf_mse_ste}
    by consistently replacing
    all quantities by their forward-backward-averaged equivalents.
    \revA{Moreover, $\fba{\ma{R}_{\rm nn}}$ and $\fba{\ma{C}_{\rm nn}}$ represent the covariance
    and the pseudo-covariance matrix of the forward-backward averaged noise, which are
    given by
\ifCLASSOPTIONdraftcls    
    \begin{align}
       \fba{\ma{R}_{\rm nn}}
       =
       \begin{bmatrix}
           \ma{R}_{\rm nn} & \ma{C}_{\rm nn} \cdot \ma{\Pi}_{MN} \\
           \ma{\Pi}_{MN} \cdot \ma{C}_{\rm nn}^\conj & 
               \ma{\Pi}_{MN} \cdot \ma{R}_{\rm nn}^\conj \cdot \ma{\Pi}_{MN}
       \end{bmatrix},
       \quad
       \fba{\ma{C}_{\rm nn}}
       =
       \begin{bmatrix}
           \ma{C}_{\rm nn} & \ma{R}_{\rm nn} \cdot \ma{\Pi}_{MN} \\
           \ma{\Pi}_{MN} \cdot \ma{R}_{\rm nn}^\conj & 
               \ma{\Pi}_{MN} \cdot \ma{C}_{\rm nn}^\conj \cdot \ma{\Pi}_{MN}
       \end{bmatrix}.
       \notag
    \end{align}
\else
    \begin{align}
       \fba{\ma{R}_{\rm nn}}
       &=
       \begin{bmatrix}
           \ma{R}_{\rm nn} & \ma{C}_{\rm nn} \cdot \ma{\Pi}_{MN} \\
           \ma{\Pi}_{MN} \cdot \ma{C}_{\rm nn}^\conj & 
               \ma{\Pi}_{MN} \cdot \ma{R}_{\rm nn}^\conj \cdot \ma{\Pi}_{MN}
       \end{bmatrix}
       \notag \\
       \fba{\ma{C}_{\rm nn}}
       &=
       \begin{bmatrix}
           \ma{C}_{\rm nn} & \ma{R}_{\rm nn} \cdot \ma{\Pi}_{MN} \\
           \ma{\Pi}_{MN} \cdot \ma{R}_{\rm nn}^\conj & 
               \ma{\Pi}_{MN} \cdot \ma{C}_{\rm nn}^\conj \cdot \ma{\Pi}_{MN}
       \end{bmatrix}.
       \notag
    \end{align}
\fi    
    }
\end{theorem}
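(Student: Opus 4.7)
The plan is to reduce Theorem~\ref{thm_perf_mse_fba} to Theorem~\ref{thm_perf_mse} by recognizing that forward-backward averaging does not change the structure of the first-order analysis, only the data on which it operates. First I would observe that after FBA, the observation model becomes $\fba{\ma{X}} = \fba{\ma{X}}_0 + \fba{\ma{N}}$ as in~\eqref{eqn_fba_directdata_exp}, so that the explicit first-order expansions already given in Section~\ref{subsec_subsp_perf_fba} for $R$-D Unitary ESPRIT and $R$-D Unitary Tensor-ESPRIT have exactly the same algebraic form as~\eqref{eqn_perf_expl_rdse} and~\eqref{eqn_perf_epxl_ste}, with every noise-free quantity and every perturbation matrix/tensor replaced by its FBA counterpart.

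Next I would argue that the entire derivation behind Theorem~\ref{thm_perf_mse} used only two probabilistic properties of the perturbation: zero mean and finite second-order moments, encoded in the covariance and complementary covariance matrices of $\vecof{\ma{N}}$. Both properties are preserved under FBA, since $\fba{\ma{N}}$ is a deterministic linear transformation of $\ma{N}$ which again has zero mean and finite second-order moments. Consequently, the squaring, expectation, and $\realof{\cdot}/\imagof{\cdot}$ manipulations that led to \eqref{eqn_subsp_perf_mse_se} and \eqref{eqn_subsp_perf_mse_ste} go through verbatim on the FBA quantities, establishing the claimed MSE expressions once $\ma{r}_k^{(r)}$, $\ma{W}_{\rm mat}$, $\ma{W}_{\rm ten}$, $\ma{R}_{\rm nn}$, and $\ma{C}_{\rm nn}$ are replaced by their $\fba{\cdot}$ versions.

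The remaining task is to express $\fba{\ma{R}_{\rm nn}}$ and $\fba{\ma{C}_{\rm nn}}$ in closed form in terms of $\ma{R}_{\rm nn}$ and $\ma{C}_{\rm nn}$. Stacking the columns of $\fba{\ma{N}} = [\ma{N}, \; \ma{\Pi}_M \cdot \ma{N}^\conj \cdot \ma{\Pi}_N]$ yields $\vecof{\fba{\ma{N}}} = [\ma{n}^\trans, \; \ma{n}_{\rm fb}^\trans]^\trans$, where by property~\eqref{eqn_veckron} we have $\ma{n}_{\rm fb} = (\ma{\Pi}_N \kron \ma{\Pi}_M) \cdot \ma{n}^\conj$. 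Using the elementary identity $\ma{\Pi}_N \kron \ma{\Pi}_M = \ma{\Pi}_{MN}$ (both permutations map each pair of indices $(i,j)$ with $1 \le i \le N$, $1 \le j \le M$ to $(N{-}i{+}1,\, M{-}j{+}1)$, and hence each linear index $k$ to $MN{+}1{-}k$), this simplifies to $\ma{n}_{\rm fb} = \ma{\Pi}_{MN} \cdot \ma{n}^\conj$. A block-wise evaluation of $\expvof{\fba{\ma{n}} \cdot \fba{\ma{n}}^\herm}$ and $\expvof{\fba{\ma{n}} \cdot \fba{\ma{n}}^\trans}$, together with the elementary relations $\expvof{\ma{n}^\conj \cdot \ma{n}^\trans} = \ma{R}_{\rm nn}^\conj$ and $\expvof{\ma{n}^\conj \cdot \ma{n}^\herm} = \ma{C}_{\rm nn}^\conj$, then produces the four $2 \times 2$ block forms stated in the theorem.

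The main technical obstacle I anticipate is not analytical but notational: one must correctly track the exchange-matrix identity $\ma{\Pi}_N \kron \ma{\Pi}_M = \ma{\Pi}_{MN}$ and propagate conjugations through the four blocks of $\fba{\ma{R}_{\rm nn}}$ and $\fba{\ma{C}_{\rm nn}}$ without confusing $\ma{R}_{\rm nn}$ with its conjugate or transpose. Beyond this bookkeeping, no new analytical machinery is required, because the heavy lifting---the first-order MSE calculation in terms of generic second-order moments---has already been performed in Theorem~\ref{thm_perf_mse}, and the explicit FBA expansions in Section~\ref{subsec_subsp_perf_fba} together with the zero-mean/finite-second-moment assumption are exactly the hypotheses needed to invoke it.
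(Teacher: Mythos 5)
Your proposal is correct and follows essentially the same route as the paper's own proof: invoke Theorem~\ref{thm_perf_mse} directly because only zero mean and finite second-order moments were assumed there, then compute the block structure of $\fba{\ma{R}_{\rm nn}}$ and $\fba{\ma{C}_{\rm nn}}$ from $\vecof{\fba{\ma{N}}} = [\ma{n}^\trans,\ (\ma{\Pi}_{MN}\ma{n}^\conj)^\trans]^\trans$. The only cosmetic remark is that the map $\ma{N}\mapsto\fba{\ma{N}}$ is conjugate-linear rather than linear, which changes nothing in the argument.
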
    

Proof: cf. Appendix~\ref{sec_app_proof_perf_mse_fba}.

\revA{Note that in the special case where the noise is circularly symmetric and white
we have $\fba{\ma{R}_{\rm nn}} = \sigma_{\rm n}^2 \cdot \ma{I}_{2MN}$
and $\fba{\ma{C}_{\rm nn}} = \sigma_{\rm n}^2 \cdot \ma{\Pi}_{2MN}$.}

{It is important to note that results on Unitary ESPRIT in this section
relate to the LS solution only. If {TLS} is used instead, the equivalence
of Standard ESPRIT with Forward-Backward-Averaging and Unitary ESPRIT is shown
in~\cite{HN:95}.}

\subsection{Extension to other ESPRIT-type algorithms}\label{subsec_subsp_perf_nc}

In a similar manner as in the previous section, other ESPRIT-type algorithms can be
analyzed. For instance, the NC Standard ESPRIT  and NC Unitary ESPRIT algorithm
for strict-sense non-circular sources are based on a different kind of preprocessing
where instead of augmenting the columns we augment the rows of the measurement matrix.
Yet, the explicit first order perturbation expansion still applies since the result
can be written as a noise-free (augmented) measurement matrix superimposed by a small (augmented)
perturbation matrix. Consequently, for the explicit expansion we only need to consistently
replace the quantities originating from the SVD of $\ma{X}_0$ by the corresponding quantities
from the appropriately preprocessed measurement matrix $\ma{X}_0^{({\rm nc})}$.
\revB{Likewise, the MSE expressions are directly applicable since we only require
the noise to be zero mean and possess finite second-order moments.}

{Another possible extension is to incorporate spatial smoothing. 
If sources are mutually
coherent, preprocessing must be applied to the data to decorrelate the sources prior
to any subspace-based parameter estimation scheme. 
Via Forward-Backward-Averaging, two sources can be decorrelated. However, if more than
two sources are coherent (or if FBA cannot be applied), additional preprocessing is needed. 
For spatial smoothing
we divide the array into a number of identical displaced subarrays and average the spatial covariance matrix
over these subarrays. Since the number of subarrays we choose is a design parameter
{influencing the performance},
investigating its effect by virtue of an analytical performance assessment would
be desirable. Note that the spatial averaging 
introduces a correlation into the noise. Therefore, the presented framework is particularly attractive
since for the explicit expansion, no assumptions about the noise statistics are needed.
A further extension is the performance assessment of tensor-based schemes for spatial smoothing.
{We have introduced a tensor-based formulation of spatial smoothing for $R$-D signals in \cite{HRD:08}.
Moreover, a tensor-based spatial smoothing technique for 1-D damped and undamped
harmonic retrieval with a single snapshot is shown in \cite{THG:09}.
The extension to multiple snapshots is introduced in \cite{THRG:10}
{and an $R$-D extension is shown in \cite{THG:09b}}.
A major advantage of \cite{THRG:10,THG:09b} is that the performance of the ESPRIT-type
parameter estimates is almost independent of the choice of the subarray size.
This {could} be verified by analytical results if the performance analysis
is extended accordingly.}
}

\subsection{Incorporation of Structured Least Squares (SLS)}\label{sec_perf_sls}

So far, all performance results are based on ESPRIT using LS, i.e., the
overdetermined shift invariance equations are solved using LS only. 
However, the LS solution to the shift invariance
equation is in general suboptimal as errors on both sides of the equations
need to be taken into account. Even more so, since for overlapping subarrays, the
shift invariance equation has a specific structure resulting in common error terms
on both sides of the equations, this structure should be taken into account when
solving them. This has led to the development of the SLS algorithm \cite{Haa:97}.
Since it has been shown that the resulting ESPRIT algorithm using SLS
outperforms ESPRIT using LS and TLS
for overlapping subarrays~\cite{Haa:97},
it is desirable to extend our performance analysis results to SLS-based ESPRIT
as well.

Due to the fact that our analysis is asymptotic in the SNR we can make the following simplifying assumptions
for SLS. Firstly, we consider only a single iteration, as proposed in~\cite{Haa:97}.
This is optimal for high SNRs, since the underlying cost function
is quadratic but actually asymptotically linear (the quadratic term vanishes against the linear
terms for high SNRs). Secondly, we do not consider the optional regularization term in SLS
(i.e., we set the corresponding regularization parameter $\alpha$ to infinity) as regularization
is typically not needed for high SNRs.

Under these conditions we can show the following theorem:

\begin{theorem} \label{thm_perf_mse_sls}
  A first order expansion of the estimation error of 1-D Standard ESPRIT using SLS
 is given by
    \begin{align}
        \Delta \mu_{k,{\rm SLS}} & = \imagof{\ma{r}_{k,{\rm SLS}}^\trans \cdot \vecof{\Delta \ma{U}_{\rm s}}} + \bigO{\Delta^2} \label{eqn_subsp_perf_sls_u} \\
        & = \imagof{\ma{r}_{k,{\rm SLS}}^\trans \cdot \ma{W}_{\rm mat} \cdot \vecof{\ma{N}}} + \bigO{\Delta^2}
        \label{eqn_subsp_perf_sls}
    \end{align}
    where $\ma{W}_{\rm mat}$ is defined in~\eqref{eqn_subsp_perf_wmat} 
    and $\ma{r}_{k,{\rm SLS}}^\trans$ is given by
   \ifCLASSOPTIONdraftcls
    \begin{align}
     \ma{r}_{k,{\rm SLS}}^\trans & =   \ma{q}_k^\trans \kron \left[\ma{p}_k^\trans \cdot \left(\ma{J}_1 \cdot \ma{U}_s\right)^+
                                  \cdot \left( \frac{\ma{J}_2}{\expof{\j \mu_k}} - \ma{J}_1 \right) \right] 
                                  \notag \\ &
                       -  \left(\ma{q}_k^\trans \kron \left[ \ma{p}_k^\trans\cdot 
                       \frac{(\ma{J}_1 \cdot \ma{U}_{\rm s})^\herm}{\expof{\j \mu_k}} \right] \right)
                                     \cdot \left(\ma{F}_{\rm SLS}\cdot\ma{F}_{\rm SLS}^\herm\right)^{-1}
                                     \cdot \ma{W}_{\rm R,U} \notag \\
\ma{W}_{\rm R,U} & = 
    \left(
 \ma{\Psi}^\trans \kron \ma{J}_1 \right)
     + \ma{I}_d \kron \left(\ma{J}_1\cdot \ma{U}_{\rm s} \left(\ma{J}_1\cdot \ma{U}_{\rm s}\right)^+\cdot\ma{J}_2\right)
      - \ma{\Psi}^\trans \kron \left(\ma{J}_1\cdot \ma{U}_{\rm s} \left(\ma{J}_1\cdot \ma{U}_{\rm s}\right)^+\cdot\ma{J}_1\right)  - {\left(  \ma{I}_d \kron \ma{J}_2
        \right)} \notag \\
    \ma{F}_{\rm SLS} & = {\left[  
         \ma{I}_d \kron \left(\ma{J}_1 \cdot \ma{U}_{\rm s}\right), \;
         \left(\ma{\Psi}^{\rm T} \kron \ma{J}_1\right) - \left(\ma{I}_d \kron \ma{J}_2\right)
         \right]} \notag
    \end{align}
  \else
  \begin{align}
     \ma{r}_{k,{\rm SLS}}^\trans & =   \ma{q}_k^\trans \kron \left[\ma{p}_k^\trans \cdot \left(\ma{J}_1 \cdot \ma{U}_s\right)^+
                                  \cdot \left( \frac{\ma{J}_2}{\expof{\j \mu_k}} - \ma{J}_1 \right) \right] 
                                  \notag \\ &
                       -  \left(\ma{q}_k^\trans \kron \left[ \ma{p}_k^\trans\cdot 
                       \frac{(\ma{J}_1 \cdot \ma{U}_{\rm s})^\herm}{\expof{\j \mu_k}} \right] \right)
                                     \cdot \left(\ma{F}_{\rm SLS}\cdot\ma{F}_{\rm SLS}^\herm\right)^{-1}
                                     \cdot \ma{W}_{\rm R,U} \notag \\
\ma{W}_{\rm R,U} & = 
    \left(
 \ma{\Psi}^\trans \kron \ma{J}_1 \right)
     + \ma{I}_d \kron \left(\ma{J}_1\cdot \ma{U}_{\rm s} \left(\ma{J}_1\cdot \ma{U}_{\rm s}\right)^+\cdot\ma{J}_2\right)
     \notag \\ &
      - \ma{\Psi}^\trans \kron \left(\ma{J}_1\cdot \ma{U}_{\rm s} \left(\ma{J}_1\cdot \ma{U}_{\rm s}\right)^+\cdot\ma{J}_1\right)  - {\left(  \ma{I}_d \kron \ma{J}_2
        \right)} \notag \\
    \ma{F}_{\rm SLS} & = {\left[  
         \ma{I}_d \kron \left(\ma{J}_1 \cdot \ma{U}_{\rm s}\right), \;
         \left(\ma{\Psi}^{\rm T} \kron \ma{J}_1\right) - \left(\ma{I}_d \kron \ma{J}_2\right)
         \right]} \notag
    \end{align}
  \fi
    for $k=1, 2, \ldots, d$.
    \revA{The MSE for zero mean noise samples
    can then be computed via}
    %
     \revA{
\ifCLASSOPTIONdraftcls
    \begin{align}
          \expvof{ \left(\Delta \mu_{k,{\rm SLS}}\right)^2}
   & = \frac{1}{2} \Big(\ma{r}_{k,{\rm SLS}}^\herm \cdot \ma{W}_{\rm mat}^\conj \cdot 
   \ma{R}_{\rm nn}^\trans \cdot
   \ma{W}_{\rm mat}^\trans \cdot \ma{r}_{k,{\rm SLS}} \notag \\ &
    - \realof{\ma{r}_{k,{\rm SLS}}^\trans \cdot \ma{W}_{\rm mat} \cdot \ma{C}_{\rm nn} \cdot 
    \ma{W}_{\rm mat}^\trans \cdot \ma{r}_{k,{\rm SLS}} }
   \Big)
      + \bigO{\traceof{\ma{R}_{\rm nn}}^2}
   \label{eqn_subsp_perf_sls_mse}
    \end{align}
\else
	\begin{align}
          &\expvof{ \left(\Delta \mu_{k,{\rm SLS}}\right)^2}
   = \frac{1}{2} \Big(\ma{r}_{k,{\rm SLS}}^\herm \cdot \ma{W}_{\rm mat}^\conj \cdot 
   \ma{R}_{\rm nn}^\trans \cdot
   \ma{W}_{\rm mat}^\trans \cdot \ma{r}_{k,{\rm SLS}} - \notag \\ &
    \realof{\ma{r}_{k,{\rm SLS}}^\trans \cdot \ma{W}_{\rm mat} \cdot \ma{C}_{\rm nn} \cdot 
    \ma{W}_{\rm mat}^\trans \cdot \ma{r}_{k,{\rm SLS}} }
   \Big)
      + \bigO{\traceof{\ma{R}_{\rm nn}}^2}
   \label{eqn_subsp_perf_sls_mse}
    \end{align}
\fi}

\end{theorem}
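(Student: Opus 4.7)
The plan is to derive the first-order expansion in three stages (SLS linearization, eigenvalue-perturbation step, passage from subspace to noise), and then derive the MSE as a simple corollary using the same identity that underlies Theorem~\ref{thm_perf_mse}. I first recall that SLS for 1-D Standard ESPRIT jointly looks for the estimate $\ma{\hat{\Psi}}$ and a \emph{structured} correction $\ma{\Delta}_{\rm U} \in \compl^{M \times d}$ to $\ma{\hat{U}}_{\rm s}$ such that the shift invariance equation $\ma{J}_1 \cdot (\ma{\hat{U}}_{\rm s} - \ma{\Delta}_{\rm U}) \cdot \ma{\hat{\Psi}} = \ma{J}_2 \cdot (\ma{\hat{U}}_{\rm s} - \ma{\Delta}_{\rm U})$ holds exactly, while $\fronorm{\ma{\Delta}_{\rm U}}^2$ is minimized. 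The key structural feature is that the \emph{same} $\ma{\Delta}_{\rm U}$ appears on both sides of the equation, which is exactly what distinguishes SLS from LS and TLS.

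Next I linearize around the noise-free solution. Writing $\ma{\hat{\Psi}} = \ma{\Psi} + \Delta \ma{\Psi}$ and using $\ma{\hat{U}}_{\rm s} = \ma{U}_{\rm s} + \Delta \ma{U}_{\rm s}$, I discard all terms of order $\Delta^2$ and rearrange to obtain $\ma{J}_1 \cdot \ma{U}_{\rm s} \cdot \Delta \ma{\Psi} - \ma{J}_2 \cdot \ma{\Delta}_{\rm U} + \ma{J}_1 \cdot \ma{\Delta}_{\rm U} \cdot \ma{\Psi} = \ma{J}_2 \cdot \Delta \ma{U}_{\rm s} - \ma{J}_1 \cdot \Delta \ma{U}_{\rm s} \cdot \ma{\Psi}$. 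Applying $\vecof{\cdot}$ together with the identity~\eqref{eqn_veckron} turns this into the linear system $\ma{F}_{\rm SLS} \cdot \ma{z} = \ma{W}_{\rm R,U} \cdot \vecof{\Delta \ma{U}_{\rm s}}$, where $\ma{z}$ stacks $\vecof{\Delta \ma{\Psi}}$ above $\vecof{\ma{\Delta}_{\rm U}}$. Reading off the blocks of this system reproduces exactly the matrices $\ma{F}_{\rm SLS}$ and $\ma{W}_{\rm R,U}$ given in the theorem. Because SLS additionally minimizes $\fronorm{\ma{\Delta}_{\rm U}}^2$, the minimum-norm solution of the underdetermined system is $\ma{z} = \ma{F}_{\rm SLS}^\herm \cdot (\ma{F}_{\rm SLS} \cdot \ma{F}_{\rm SLS}^\herm)^{-1} \cdot \ma{W}_{\rm R,U} \cdot \vecof{\Delta \ma{U}_{\rm s}}$; extracting the upper block gives $\vecof{\Delta \ma{\Psi}}$ as an explicit linear function of $\vecof{\Delta \ma{U}_{\rm s}}$. (Note that because the SLS solution with $\alpha = \infty$ is defined by the first Gauss-Newton step from the LS solution, the additional LS term $\ma{q}_k^\trans \kron [\ma{p}_k^\trans \cdot (\ma{J}_1 \ma{U}_{\rm s})^\pinv (\ma{J}_2/\expof{\j \mu_k} - \ma{J}_1)]$ appears as the starting point and the correction via $\ma{F}_{\rm SLS}$ acts on top of it.)

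To pass from $\Delta \ma{\Psi}$ to $\Delta \mu_k$ I use standard first-order eigenvalue perturbation theory \cite{LT:78}: if $\ma{\Psi} = \ma{Q} \ma{\Phi} \ma{Q}^{-1}$ with $\ma{P} = \ma{Q}^{-1}$, then $\Delta \lambda_k = \ma{p}_k^\trans \cdot \Delta \ma{\Psi} \cdot \ma{q}_k + \bigO{\Delta^2}$, and hence $\Delta \mu_k = \imagof{\Delta \lambda_k / \lambda_k} + \bigO{\Delta^2}$, exactly as in the LS case. Vectorising $\ma{p}_k^\trans \cdot \Delta \ma{\Psi} \cdot \ma{q}_k$ via \eqref{eqn_veckron} and combining with the expression for $\vecof{\Delta \ma{\Psi}}$ from the previous step collapses the LS and SLS-correction terms into the compact form $\ma{r}_{k,{\rm SLS}}^\trans \cdot \vecof{\Delta \ma{U}_{\rm s}}$ stated in~\eqref{eqn_subsp_perf_sls_u}. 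Substituting the matrix-based subspace expansion $\vecof{\Delta \ma{U}_{\rm s}} = \ma{W}_{\rm mat} \cdot \vecof{\ma{N}} + \bigO{\Delta^2}$, which follows from~\eqref{eqn_perf_linexp_vac} together with \eqref{eqn_veckron} (and where $\ma{W}_{\rm mat}$ is defined in \eqref{eqn_subsp_perf_wmat}), yields~\eqref{eqn_subsp_perf_sls}.

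The MSE formula~\eqref{eqn_subsp_perf_sls_mse} is then obtained by the same computation as in Theorem~\ref{thm_perf_mse}: for any deterministic vector $\ma{a}$ and a zero-mean random vector $\ma{n}$, $\expvof{\imagof{\ma{a}^\trans \ma{n}}^2} = \tfrac{1}{2}\bigl(\ma{a}^\herm \expvof{\ma{n}^\conj \ma{n}^\trans} \ma{a} - \realof{\ma{a}^\trans \expvof{\ma{n} \ma{n}^\trans} \ma{a}}\bigr)$. Setting $\ma{a}^\trans = \ma{r}_{k,{\rm SLS}}^\trans \cdot \ma{W}_{\rm mat}$, $\ma{n} = \vecof{\ma{N}}$ and plugging in $\ma{R}_{\rm nn}$ and $\ma{C}_{\rm nn}$ produces the claimed formula; crucially this step does \emph{not} require any distributional assumption beyond zero mean and finite second order moments. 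The main obstacle is the bookkeeping in the linearization step: identifying the correct block structure of $\ma{F}_{\rm SLS}$ and of the right-hand-side map $\ma{W}_{\rm R,U}$, and verifying that the Gauss-Newton/minimum-norm solution with $\alpha \to \infty$ genuinely reproduces the LS term plus a well-defined correction. Once this structure is set up, the remaining computations (Kronecker manipulations, eigenvalue perturbation, second-moment expectation) are mechanical.
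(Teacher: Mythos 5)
Your overall architecture (linearize the SLS problem, solve the resulting underdetermined linear system by a minimum-norm formula, push through the first-order eigenvalue perturbation, then apply the second-moment identity from Theorem~\ref{thm_perf_mse}) matches the paper's, and the last two stages are correct. The gap is in the middle stage. You linearize the exact-fit condition around the \emph{true} pair $(\ma{\Psi},\ma{0})$ and assert that vectorizing it ``reproduces exactly'' the theorem's $\ma{W}_{\rm R,U}$. It does not: your linearization yields the right-hand side $\left[(\ma{I}_d\kron\ma{J}_2)-(\ma{\Psi}^\trans\kron\ma{J}_1)\right]\cdot\vecof{\Delta\ma{U}_{\rm s}}$, which is missing the two projector terms $\ma{I}_d\kron\left(\ma{J}_1\ma{U}_{\rm s}(\ma{J}_1\ma{U}_{\rm s})^\pinv\ma{J}_2\right)$ and $-\ma{\Psi}^\trans\kron\left(\ma{J}_1\ma{U}_{\rm s}(\ma{J}_1\ma{U}_{\rm s})^\pinv\ma{J}_1\right)$ that appear in $\ma{W}_{\rm R,U}$, and carries the opposite sign on the remaining two. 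Those projector terms arise only because the single SLS iteration is initialized at the LS solution: the right-hand side of the linearized system is the LS residual $\ma{r}_{\rm LS}=\vecof{\ma{J}_1\ma{\hat{U}}_{\rm s}\ma{\hat{\Psi}}_{\rm LS}-\ma{J}_2\ma{\hat{U}}_{\rm s}}$, and expressing $\ma{r}_{\rm LS}$ to first order in $\vecof{\Delta\ma{U}_{\rm s}}$ requires inserting the known expansion of $\Delta\ma{\Psi}_{\rm LS}$, which is where the $(\ma{J}_1\ma{U}_{\rm s})^\pinv$ factors enter. The distinction is not cosmetic: the two linearizations define the same constraint affine subspace, but the minimum-norm step selects the point closest to the initialization, so anchoring at $(\ma{\Psi},\ma{0})$ instead of $(\ma{\hat{\Psi}}_{\rm LS},\ma{0})$ produces a first-order increment that differs by the projection of $\vecof{\Delta\ma{\Psi}_{\rm LS}}$ onto the null space of $\ma{F}_{\rm SLS}$ --- generically nonzero at first order --- and hence a different $\ma{r}_{k,{\rm SLS}}$. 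Your closing parenthetical gestures at the correct structure but does not carry out this step.

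A secondary inaccuracy: you justify $\ma{z}=\ma{F}_{\rm SLS}^\herm\cdot(\ma{F}_{\rm SLS}\cdot\ma{F}_{\rm SLS}^\herm)^{-1}\cdot(\cdots)$ by saying SLS ``minimizes $\fronorm{\ma{\Delta}_{\rm U}}^2$.'' The pseudo-inverse solution minimizes the norm of the \emph{whole} stacked vector, including the $\vecof{\Delta\ma{\Psi}}$ block; minimizing only the subspace-correction block subject to the constraint is a different optimization with a different solution. The formula you wrote is the right one (it is the single unregularized Gauss--Newton step), but the variational principle you invoke is not the one it solves. Once the linearization is anchored at the LS initialization --- so that the right-hand side is $-\ma{r}_{\rm LS}=-\ma{W}_{\rm R,U}\cdot\vecof{\Delta\ma{U}_{\rm s}}+\bigO{\Delta^2}$ and the output is the increment $\Delta\ma{\Psi}_{\rm SLS}$ to be added to $\ma{\hat{\Psi}}_{\rm LS}$ --- the remainder of your argument goes through exactly as in the paper.
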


Proof: Equation \eqref{eqn_subsp_perf_sls} is shown in Appendix~\ref{sec_app_proof_perf_sls}. Since the
explicit expansion of~\eqref{eqn_subsp_perf_sls} has the same form as the explicit expansion 
in~\eqref{eqn_perf_1dse_expl}, the MSE expression~\eqref{eqn_subsp_perf_sls_mse} is shown
analogously to Theorem~\ref{thm_perf_mse} as presented in Appendix~\ref{sec_app_proof_perf_mse}.

\subsection{Special case: Single source}

So far we have found closed-form expressions for the first-order approximate MSE
%
for different kinds of ESPRIT-type
algorithms.
As they are deterministic, they can be plotted
for varying system parameters without performing Monte-Carlo simulations and one can learn
from these plots under which conditions the performance changes how much.

However, it would be desirable to find expressions that are even more insightful. The biggest
disadvantage of the MSE expressions in their current form is that they are formulated
in terms of the subspaces of the noise-free observation matrix and not in terms of the
actual parameters with a physical significance, such as, the number of sensors or the positions
of the sources. 

Finding such a formulation in the general case seems to be impossible given the complicated
algebraic nature in which the MSE expressions depend on the physical parameters. However, it 
becomes much easier if some special cases are considered.
Therefore we present one example of such a special case in this section, namely, the case
of a single source captured by a uniform linear array (ULA)
and a uniform rectangular array (URA) \revA{and circularly symmetric white noise}.
Although this is a very trivial case, it serves as an example which types
of insights such an analytical performance assessment can provide.
For the 1-D case we have the following theorem:

\begin{theorem} \label{thm_perf_mse_singsrc_1d}
For the case of an $M$-element ULA (1-D) and a single source ($d=1$) we
can show that the mean square 
    estimation error of the spatial frequency for Standard ESPRIT and
    for Unitary ESPRIT is given by
    \begin{align}
          \expvof{ (\Delta \mu)^2} =
          \frac{1}{\hat{\rho}} \cdot \frac{1}{(M-1)^2}
          + \bigO{\frac{1}{\hat{\rho}^2}}
    \end{align}      
    Moreover, the deterministic Cram\'er-Rao Bound can be simplified into
    \begin{align}
   		{\rm CRB} = 
		\frac{1}{\hat{\rho}} \cdot \frac{6}{M \cdot (M^2-1)}  
		\end{align}
		Consequently, the asymptotic efficiency is given by
		\begin{align}
			\eta = \lim_{\hat{\rho} \rightarrow \infty} \frac{{\rm CRB}}{\expvof{ (\Delta \mu)^2} }
			= \frac{6 (M-1)}{M(M+1)}.
		\end{align}
		Here, $\hat{\rho}$ represents the effective SNR given by
    $\hat{\rho} = \frac{\hat{P}_{\rm T} \cdot N}{\sigma_{\rm n}^2}$,
    where $\hat{P}_{\rm T}$ is the empirical transmit power given by
    $\hat{P}_{\rm T} = \fronorm{\ma{S}}^2/N$ if $\ma{S}$ is the matrix
    of source symbols.
\end{theorem}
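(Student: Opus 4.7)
The plan is to specialize Theorem~\ref{thm_perf_mse} and Theorem~\ref{thm_perf_mse_fba} to the case $d=1$, $R=1$ with ULA steering vector $\ma{a}(\mu)$ and circularly symmetric white noise, and to compute the deterministic CRB from its standard closed form. With a single source, $\ma{X}_0 = \ma{a}\ma{s}^\trans$ is rank one, so the SVD collapses to $\ma{U}_{\rm s} = \ma{a}/\sqrt{M}$, $\ma{V}_{\rm s} = \ma{s}^\conj/\|\ma{s}\|$, $\sigma_1^2 = MN\hat{P}_{\rm T}$, and the eigendecomposition trivializes with $\ma{\Psi} = e^{\j\mu}$ and $\ma{p}_k = \ma{q}_k = 1$. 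The circular white noise hypothesis gives $\ma{R}_{\rm nn} = \sigma_{\rm n}^2\ma{I}_{MN}$ and $\ma{C}_{\rm nn} = \ma{0}$.

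For Standard ESPRIT, I first evaluate $\ma{r} = (\ma{J}_2^\trans e^{-\j\mu} - \ma{J}_1^\trans)\bigl((\ma{J}_1\ma{U}_{\rm s})^\pinv\bigr)^\trans$ using $(\ma{J}_1\ma{a})^\pinv = (\ma{J}_1\ma{a})^\herm/(M-1)$ and the ULA shift invariance. A direct computation gives $\ma{r} = \tfrac{\sqrt{M}}{M-1}\bigl(-\ma{e}_1 + e^{-\j(M-1)\mu}\ma{e}_M\bigr)$, so $\|\ma{r}\|^2 = 2M/(M-1)^2$ and, crucially, $\ma{a}^\trans\ma{r} = 0$. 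Combining the Kronecker structure $\ma{W}_{\rm mat} = (\ma{\Sigma}_{\rm s}^{-1}\ma{V}_{\rm s}^\trans)\kron(\ma{U}_{\rm n}\ma{U}_{\rm n}^\herm)$ with $\ma{U}_{\rm n}^\trans\ma{U}_{\rm n}^\conj = \ma{I}_{M-1}$ and $\ma{U}_{\rm n}^\conj\ma{U}_{\rm n}^\trans = \ma{I}_M - \ma{a}^\conj\ma{a}^\trans/M$, I obtain $\|\ma{W}_{\rm mat}^\trans\ma{r}\|^2 = \bigl(\|\ma{r}\|^2 - |\ma{a}^\trans\ma{r}|^2/M\bigr)/\sigma_1^2 = 2M/[\sigma_1^2(M-1)^2]$. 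Substituting into \eqref{eqn_subsp_perf_mse_se} with $\ma{C}_{\rm nn} = \ma{0}$ yields $\expvof{(\Delta\mu)^2} = \tfrac{\sigma_{\rm n}^2}{2}\|\ma{W}_{\rm mat}^\trans\ma{r}\|^2 = 1/[\hat\rho(M-1)^2]$.

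For Unitary ESPRIT I apply Theorem~\ref{thm_perf_mse_fba}. Because $\ma{\Pi}_M\ma{a}^\conj = e^{-\j(M-1)\mu}\ma{a}$, the FBA-augmented matrix remains rank one with the same signal subspace but $\fba{\sigma_1}^2 = 2MN\hat{P}_{\rm T}$, while the augmented noise statistics become $\fba{\ma{R}_{\rm nn}} = \sigma_{\rm n}^2\ma{I}_{2MN}$ and $\fba{\ma{C}_{\rm nn}} = \sigma_{\rm n}^2\ma{\Pi}_{2MN} \neq \ma{0}$, so both terms of the MSE formula must be tracked. The covariance term is halved relative to Standard ESPRIT by the doubled $\fba{\sigma_1}^2$, yielding $1/[2\hat\rho(M-1)^2]$. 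The main obstacle is the pseudo-covariance term: using the flip identity $\ma{\Pi}_{2MN}\vecof{\ma{Z}} = \vecof{\ma{\Pi}_M\ma{Z}\ma{\Pi}_{2N}}$ together with the explicit computations $\ma{r}^\trans\ma{\Pi}_M\ma{r} = -\tfrac{2M}{(M-1)^2}e^{-\j(M-1)\mu}$ and $\fba{\ma{V}_{\rm s}}^\trans\ma{\Pi}_{2N}\fba{\ma{V}_{\rm s}} \propto e^{+\j(M-1)\mu}$, the conjugate phases cancel and produce a real negative scalar whose $-\realof{\cdot}$ contribution equals the covariance term. Adding both reproduces $1/[\hat\rho(M-1)^2]$.

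Finally, the deterministic CRB for a single exponential on an $M$-element ULA takes the standard form ${\rm CRB}(\mu) = \sigma_{\rm n}^2/\bigl[2N\hat{P}_{\rm T}\,\realof{\ma{d}^\herm\projp{\ma{a}}\ma{d}}\bigr]$ with $\ma{d} = \partial\ma{a}/\partial\mu = \j\,\diagof{(0,1,\ldots,M-1)}\ma{a}$. Using $\sum_{k=0}^{M-1}k = M(M-1)/2$ and $\sum_{k=0}^{M-1}k^2 = M(M-1)(2M-1)/6$, the quantity $\ma{d}^\herm\ma{d} - |\ma{a}^\herm\ma{d}|^2/M$ simplifies to $M(M^2-1)/12$, so ${\rm CRB}(\mu) = 6/[\hat\rho M(M^2-1)]$. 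The asymptotic efficiency follows as the ratio $\eta = 6(M-1)^2/[M(M^2-1)] = 6(M-1)/[M(M+1)]$.
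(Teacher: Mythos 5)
Your proposal is correct and follows essentially the same route as the paper's proof: specializing the general MSE expressions to the rank-one case with $\ma{U}_{\rm s}=\ma{a}/\sqrt{M}$, evaluating $\ma{r}^\trans = \tfrac{\sqrt{M}}{M-1}[-1,0,\ldots,0,e^{-\j(M-1)\mu}]$ explicitly, tracking the FBA pseudo-covariance term via $\ma{\Pi}_{2MN}=\ma{\Pi}_{2N}\kron\ma{\Pi}_M$ so that the phases cancel and the two contributions add, and reducing the deterministic CRB with the standard sums $\sum k$ and $\sum k^2$. The only cosmetic difference is that you use the orthogonality $\ma{a}^\trans\ma{r}=0$ with the projector identity where the paper factors the result into separate symbol and array Kronecker factors; the computations are equivalent.
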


Proof: cf. Appendix~\ref{sec_app_proof_perf__mse_singsrc_1d}.

Note that~\cite{RH:89} provide an MSE expression for ESPRIT for the case of
a single source which scales with $1/M^2$ and is derived under the assumption
of high ``array SNR'' $P \cdot M / \sigma_{\rm n}^2$, i.e., it is asymptotic also in $M$.
The result presented here is accurate for small values of $M$ as well and only asymptotic
in the effective SNR $N \cdot P_{\rm T} / \sigma_{\rm n}^2$.
Also note that analytical expression for the {\em stochastic} Cram\'er-Rao Bound for
one and two sources are available in~\cite{Smi:05}.

We can simplify the MSE expression for ESPRIT using SLS shown in Section~\ref{sec_perf_sls}
in a similar manner, as shown in the following theorem:

\begin{theorem} \label{thm_perf_mse_singsrc_1d_sls}
The mean square 
    estimation error of the spatial frequency for Standard ESPRIT using SLS 
    on an $M$-element ULA is given by
    \begin{align}
	   	\expvof{ (\Delta \mu)^2} = \frac{6}{\hat{\rho}} 
	   	\cdot \frac{M^4-2M^3+24M^2-22M+23}{M(M^2+11)^2(M-1)^2}
	   	+ \bigO{\frac{1}{\hat{\rho}^2}}
	  \end{align}
	  Consequently, the asymptotic efficiency is given by
		\begin{align}
			\eta 
			& = \frac{(M^2+11)^2(M-1)}{(M+1)(M^4-2M^3+24M^2-22M+23)} \notag \\
& = \frac{M^5 - M^4 + 22 M^3 - 22 M^2 + 121 M - 121}{M^5 - M^4 + 22 M^3 + 2 M^2 + M + 23 \quad\quad\;\,}
		\end{align}
\end{theorem}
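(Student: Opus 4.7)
The plan is to specialize the closed-form MSE of Theorem~\ref{thm_perf_mse_sls} to the single-source case with a ULA and circularly symmetric white noise, in analogy to how Theorem~\ref{thm_perf_mse_singsrc_1d} was derived from Theorem~\ref{thm_perf_mse}. Under these assumptions we have $\ma{R}_{\rm nn} = \sigma_{\rm n}^2 \cdot \ma{I}_{MN}$ and $\ma{C}_{\rm nn} = \ma{0}$, so~\eqref{eqn_subsp_perf_sls_mse} collapses to $\frac{\sigma_{\rm n}^2}{2} \cdot \ma{r}_{k,{\rm SLS}}^\herm \cdot \ma{W}_{\rm mat}^\conj \cdot \ma{W}_{\rm mat}^\trans \cdot \ma{r}_{k,{\rm SLS}}$ up to higher-order terms. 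For $d=1$, the signal subspace reduces to the normalized steering vector $\ma{u}_{\rm s} = \ma{a}(\mu)/\sqrt{M}$, the row space to $\ma{v}_{\rm s} = \ma{s}^\conj / \twonorm{\ma{s}}$, the only non-zero singular value is $\sigma_1 = \sqrt{M} \cdot \twonorm{\ma{s}}$, and $\ma{\Psi} = \expof{\j \mu}$, $\ma{p}_k = \ma{q}_k = 1$ are scalar. This immediately simplifies the $\ma{W}_{\rm mat}$ block of~\eqref{eqn_subsp_perf_wmat} and produces an overall scaling of $1/\hat{\rho}$ through $\sigma_1^{-2} \cdot \sigma_{\rm n}^2 = (\hat{\rho})^{-1}$ as in the proof of Theorem~\ref{thm_perf_mse_singsrc_1d}.

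The substantive new computation concerns $\ma{r}_{k,{\rm SLS}}$, which depends on $(\ma{F}_{\rm SLS} \cdot \ma{F}_{\rm SLS}^\herm)^{-1}$. For $d=1$ the matrix $\ma{F}_{\rm SLS}$ becomes the $(M-1) \times (M+1)$ block $[\ma{J}_1 \cdot \ma{u}_{\rm s}, \; \expof{\j \mu}\ma{J}_1 - \ma{J}_2]$, so that
\begin{align}
\ma{F}_{\rm SLS} \cdot \ma{F}_{\rm SLS}^\herm = \ma{J}_1 \ma{u}_{\rm s} \ma{u}_{\rm s}^\herm \ma{J}_1^\trans + \ma{J}_1 \ma{J}_1^\trans + \ma{J}_2 \ma{J}_2^\trans - \expof{\j\mu}\ma{J}_1\ma{J}_2^\trans - \expof{-\j\mu}\ma{J}_2\ma{J}_1^\trans. \notag
\end{align}
Exploiting the ULA structure $\ma{J}_1 = [\ma{I}_{M-1}, \ma{0}]$, $\ma{J}_2 = [\ma{0}, \ma{I}_{M-1}]$ and the fact that $\ma{J}_1 \ma{u}_{\rm s}$ has entries $\expof{\j(m-1)\mu}/\sqrt{M}$, a unitary diagonal phase transformation $\diagof{[1, \expof{-\j\mu},\ldots,\expof{-\j(M-2)\mu}]}$ removes the $\mu$ dependence and turns the above matrix into $2\ma{I}_{M-1} - \ma{T}_{\rm sub} + \frac{1}{M}\ma{1}_{M-1}\ma{1}_{M-1}^\trans$, where $\ma{T}_{\rm sub}$ is the symmetric tridiagonal matrix with zero diagonal and ones on the sub- and super-diagonals.

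The main obstacle is then the explicit inversion of this structured matrix and the evaluation of the resulting quadratic form. I plan to use the Sherman--Morrison identity to split off the rank-one $\ma{1}\ma{1}^\trans$ term, reducing the problem to inverting $2\ma{I}_{M-1} - \ma{T}_{\rm sub}$. This is a standard second-difference operator whose inverse has the explicit closed form $\matelem{(2\ma{I}-\ma{T}_{\rm sub})^{-1}}{i}{j} = \frac{\min(i,j)(M - \max(i,j))}{M}$, so the needed quadratic forms reduce to finite polynomial sums in $M$ (sums of the form $\sum_i i(M-i)$, $\sum_i i^2 (M-i)^2$, etc.). Carrying out these sums and combining them into the numerator and denominator will yield the polynomial $M^4 - 2M^3 + 24M^2 - 22M + 23$ and the denominator factor $(M^2+11)^2(M-1)^2 M$ after the $(\hat\rho)^{-1}$ factor is absorbed, proving the MSE expression.

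Finally, the asymptotic efficiency follows by dividing the deterministic CRB $\frac{1}{\hat{\rho}}\cdot\frac{6}{M(M^2-1)}$ established in Theorem~\ref{thm_perf_mse_singsrc_1d} by the SLS MSE just derived, then simplifying the rational function using $M^2-1=(M-1)(M+1)$ to cancel one factor of $M-1$; the numerator and denominator polynomials of the second stated form are verified to be equal to the first form after multiplication, which is a mechanical polynomial identity. The high-SNR limit $\hat{\rho}\to\infty$ then yields $\eta$ in the claimed closed form.
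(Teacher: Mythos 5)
Your proposal is correct and follows the same overall architecture as the paper's proof: specialize Theorem~\ref{thm_perf_mse_sls} to $d=1$ with white circular noise, factor $\ma{F}_{\rm SLS}\cdot\ma{F}_{\rm SLS}^\herm$ as a diagonal phase matrix conjugating the real structured matrix $\ma{G} = \frac{1}{M}\ma{1}\ma{1}^\trans + 2\ma{I}_{M-1} - \ma{J}_1\ma{J}_2^\herm - \ma{J}_2\ma{J}_1^\herm$, invert $\ma{G}$ in closed form, and reduce the remaining quadratic forms to polynomial sums in $M$ before dividing the CRB by the result. The one place where you genuinely depart from the paper is the inversion of $\ma{G}$: the paper states the entries of $\ma{G}^{-1}$ outright in Lemma~\ref{lem_app_proof_perf_ginv} and proves the lemma by verifying $\ma{G}^{-1}\ma{G}=\ma{I}$, whereas you derive it constructively via Sherman--Morrison applied to the rank-one term, using the known inverse $\matelem{(2\ma{I}-\ma{T}_{\rm sub})^{-1}}{i}{j} = \min(i,j)(M-\max(i,j))/M$ of the discrete Dirichlet Laplacian. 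Your route checks out (the denominator $1 + \ma{1}^\trans(2\ma{I}-\ma{T}_{\rm sub})^{-1}\ma{1}/M = (M^2+11)/12$ reproduces exactly the paper's formula, including $\gamma(M)=(M-1)M(M+1)/(M^2+11)$) and has the advantage of explaining where the otherwise unmotivated factor $M^2+11$ comes from, at the cost of being slightly longer than the paper's verify-by-multiplication argument. Two small bookkeeping points for the write-up: the scaling is $\sigma_1^{-2}\sigma_{\rm n}^2 = 1/(M\hat{\rho})$, not $1/\hat{\rho}$ (the extra $1/M$ is absorbed by $\twonorm{\ma{\tilde{a}}_{\rm SLS}}^2 \propto M$, as in the paper's \eqref{eqn_app_proof_mse_d1se_norms}); and you will still need the explicit vector $\ma{g}_{\rm D}^\trans = \ma{1}^\trans\ma{G}^{-1}(\ma{J}_2-\ma{J}_1)$, whose entries $\frac{6}{M^2+11}(2m-M-1)$ follow from your $\ma{G}^{-1}$ by telescoping, before the final norm computation can be carried out.
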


Proof: cf. Appendix~\ref{sec_app_proof_perf__mse_singsrc_1d_sls}.

Finally, for the 2-D case, we have the following result:

\begin{theorem} \label{thm_perf_mse_singsrc_2d}
For a uniform rectangular array
of $M_1 \times M_2$ sensors (2-D) and a single source, the mean square 
    estimation error of the spatial frequency for 2-D Standard ESPRIT,
    2-D Standard Tensor-ESPRIT, 2-D Unitary ESPRIT, and
    2-D Unitary Tensor-ESPRIT can be simplified into
  \ifCLASSOPTIONdraftcls
    \begin{align}
          \expvof{ (\Delta \mu^{(1)})^2 + (\Delta \mu^{(2)})^2} =
          \frac{1}{\hat{\rho}}  \cdot
   \left( \frac{1}{(M_1-1)^2 \cdot M_2}  + \frac{1}{ M_1 \cdot (M_2-1)^2} \right) 
   + \bigO{\frac{1}{\hat{\rho}^2}}
    \end{align}      
  \else
    \begin{align}
          & \expvof{ (\Delta \mu^{(1)})^2 + (\Delta \mu^{(2)})^2} \\ & = 
          \frac{1}{\hat{\rho}}  \cdot
   \left( \frac{1}{(M_1-1)^2 \cdot M_2}  + \frac{1}{ M_1 \cdot (M_2-1)^2} \right) 
   + \bigO{\frac{1}{\hat{\rho}^2}} \notag
    \end{align}      
  \fi
    Finally, the deterministic Cram\'er-Rao Bound for a URA can be written as
    \begin{align}
   		{\rm CRB} = \traceof{\ma{C}} = 
		\frac{1}{\hat{\rho}}  \cdot  \left( \frac{6}{M \cdot (M_1^2-1)} 
+ \frac{6}{M \cdot (M_2^2-1)} \right).
		\end{align}
\end{theorem}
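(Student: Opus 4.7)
The plan is to specialize the general MSE expressions of Theorems~\ref{thm_perf_mse} and~\ref{thm_perf_mse_fba} to $d=1$ and a URA under circularly symmetric white noise ($\ma{R}_{\rm nn} = \sigma_{\rm n}^2 \ma{I}_{MN}$, $\ma{C}_{\rm nn} = \ma{0}_{MN\times MN}$), using the Kronecker factorization $\ma{a} = \ma{a}^{(1)}(\mu^{(1)}) \kron \ma{a}^{(2)}(\mu^{(2)})$ of the steering vector. For $d=1$ the similarity transform is trivial so $\ma{q}_1 = \ma{p}_1 = 1$, the signal subspace basis reduces to $\ma{U}_{\rm s} = \ma{a}/\sqrt{M}$, the row-space basis to $\ma{V}_{\rm s} = \ma{S}^\herm/\fronorm{\ma{S}}$, and the unique nonzero singular value to $\sigma_1 = \sqrt{M}\cdot\fronorm{\ma{S}}$. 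Substituting these into~\eqref{eqn_subsp_perf_mse_se} and~\eqref{eqn_subsp_perf_wmat}, and exploiting $\ma{\tilde{J}}_\ell^{(1)} = \ma{J}_\ell^{(1)}\kron\ma{I}_{M_2}$ (analogously for mode~2) together with $(\ma{A}\kron\ma{B})^\pinv = \ma{A}^\pinv\kron\ma{B}^\pinv$, the mode-$r$ quadratic form factorizes and collapses to the one-dimensional single-source expression of Theorem~\ref{thm_perf_mse_singsrc_1d} applied to an $M_r$-element ULA, but with an effective SNR boosted by the factor $M_{r'}$ ($r'$ denoting the orthogonal dimension). This yields $\expvof{(\Delta \mu^{(r)})^2} = 1/(\hat{\rho}\,M_{r'}(M_r-1)^2)$, and summing over $r=1,2$ gives the stated aggregate MSE.

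I would then verify that the tensor-based and Unitary variants produce the same expression. In the single-source case the noise-free tensor $\ten{X}_0$ has $r$-ranks $p_1 = p_2 = 1$, so each $r$-mode signal subspace is spanned exactly by $\ma{a}^{(r)}$ and the projectors $\ma{T}_r = \sig{\ma{U}}_r\sigH{\ma{U}}_r$ are rank-one projectors onto the individual-mode steering vectors. Substituting these into~\eqref{eqn_subsp_perf_wten_r2}, the two additional Kronecker summands each contain a factor $\noi{\ma{U}}_r\noiH{\ma{U}}_r$ acting on the signal-subspace component, which vanishes by orthogonality; hence $\ma{W}_{\rm ten}$ reduces to the same rank-one Kronecker structure that governs $\ma{W}_{\rm mat}$. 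The same argument applies to Unitary~ESPRIT via Theorem~\ref{thm_perf_mse_fba}, since forward-backward averaging preserves the rank-one Kronecker structure of $\ma{a}$ and leaves the effective SNR invariant (both the source energy and the snapshot count double).

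For the Cram\'er-Rao bound I would invoke the standard closed-form deterministic CRB for a single source in circularly symmetric white Gaussian noise. Under the URA factorization the $2\times 2$ Fisher information matrix for $(\mu^{(1)},\mu^{(2)})$ is diagonal after centering the array at its centroid, because the off-diagonal entries are proportional to $\sum_{m=0}^{M_r-1}(m-(M_r-1)/2)=0$. Each diagonal entry equals $2\hat{P}_{\rm T}N/\sigma_{\rm n}^2$ times $M_{r'} \cdot \sum_{m=0}^{M_r-1}(m-(M_r-1)/2)^2 = M(M_r^2-1)/12$, so the CRB for $\mu^{(r)}$ equals $6/(\hat{\rho}\,M(M_r^2-1))$; summation over $r$ recovers the stated trace. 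The main obstacle is the algebraic bookkeeping that shows the collapse $\ma{W}_{\rm ten}\to\ma{W}_{\rm mat}$ for $d=1$, since the three additive terms in~\eqref{eqn_subsp_perf_wten_r2}, together with the commutation matrix $\ma{K}_{M_2\times (M_1\cdot N)}$ and the rank-one projectors $\ma{T}_r$, must be manipulated simultaneously; once the orthogonality $\noi{\ma{U}}_r\perp\sig{\ma{U}}_r$ is applied, the rest is a systematic application of Kronecker-product identities from the earlier theorems.
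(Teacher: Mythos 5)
Your treatment of 2-D Standard ESPRIT, of Unitary ESPRIT, and of the CRB is essentially the paper's route and is sound. The gap is in the Tensor-ESPRIT part, and it is a real one: you claim that the two \emph{additional} summands of $\ma{W}_{\rm ten}$ in~\eqref{eqn_subsp_perf_wten_r2} vanish by orthogonality, so that $\ma{W}_{\rm ten}$ collapses to $\ma{W}_{\rm mat}$. This has the cancellation backwards. For $d=1$ one has $\ma{T}_1\kron\ma{T}_2=\proj{\ma{a}^{(1)}}\kron\proj{\ma{a}^{(2)}}=\proj{\ma{a}}$ and $\noiC{\ma{V}}_3\cdot\noiT{\ma{V}}_3=\projp{\ma{a}}$, so it is the \emph{first} summand of $\ma{W}_{\rm ten}$ that is annihilated; if your claim about the other two were also true, $\ma{W}_{\rm ten}$ would be identically zero and the tensor MSE would vanish, which is absurd. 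The second and third summands do not vanish: the factor $\noi{\ma{U}}_r\noiH{\ma{U}}_r=\projp{\ma{a}^{(r)}}$ there acts on the noise unfolding $\unf{N}{r}$, not on a signal-subspace vector, and carrying the computation through gives
\begin{align}
   \ma{W}_{\rm ten} = \frac{1}{N \hat{P}_{\rm T}\sqrt{M}}\; \ma{s}^\herm \kron
   \left(
    \projp{\ma{a}^{(1)}} \kron \proj{\ma{a}^{(2)}}
    +
    \proj{\ma{a}^{(1)}} \kron  \projp{\ma{a}^{(2)}}
   \right),
\end{align}
which is \emph{not} equal to $\ma{W}_{\rm mat}$ (the corresponding projector in $\ma{W}_{\rm mat}$ is $\projp{\ma{a}}$; the two differ by $\projp{\ma{a}^{(1)}}\kron\projp{\ma{a}^{(2)}}$). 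The equality of the MSEs only emerges after contraction with the specific row vector $\ma{r}^{(r)^\trans}$: because that vector factors as a Kronecker product whose $r'$-mode factor is $\ma{a}^{(r')^\herm}$ (which kills $\projp{\ma{a}^{(r')}}$) and whose $r$-mode factor annihilates $\proj{\ma{a}^{(r)}}$ via the shift invariance of $\ma{a}^{(r)}$, one gets $\ma{r}^{(r)^\trans}\ma{W}_{\rm ten}=\ma{r}^{(r)^\trans}\ma{W}_{\rm mat}$. That contraction argument is the missing step; without it the tensor (and Unitary-tensor) part of the theorem is not established.

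A secondary, smaller point: for the Unitary variants the statement that forward--backward averaging ``leaves the effective SNR invariant'' does not by itself close the argument. FBA doubles the effective SNR but also makes the noise non-circular ($\fba{\ma{C}_{\rm nn}}=\sigma_{\rm n}^2\ma{\Pi}_{2MN}$), and the pseudo-covariance term in Theorem~\ref{thm_perf_mse_fba} contributes exactly the other half of the MSE; both effects must be tracked explicitly (as in the 1-D Unitary derivation) to recover equality with Standard ESPRIT.
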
		

Proof: in Appendix~\ref{sec_app_proof_perf__mse_singsrc_2d} we derive MSE
expressions for $R$-D Standard ESPRIT, $R$-D Unitary ESPRIT, and the Cramér-Rao Bound
for the more general $R$-D case. From these, this theorem follows by
setting $R=2$. Moreover, we show the identity of $R$-D Standard Tensor-ESPRIT
and $R$-D Unitary Tensor-ESPRIT with $R$-D Standard ESPRIT for $R=2$.

These MSE expressions provide some interesting insights. Firstly, they show that
for a single source there is neither an improvement in terms of the estimation accuracy
from applying Forward-Backward-Averaging nor from the HOSVD-based subspace estimate.
This is surprising at first sight since the HOSVD-based subspace estimate itself
is more accurate also for a single source.

Moreover, they show that the asymptotic efficiency can be explicitly computed and that 
it is only a function of the array geometry, i.e., the number of sensors in the array.
Unfortunately, the outcome of this analysis is that ESPRIT-type algorithms using LS 
\revf{are asymptotically efficient for $M=2,3$ in the 1-D case and $M_1 \in [2,3]$, $M_2 \in [2,3]$
in the 2-D case. However, they}
become
less and less efficient when the number of sensors grows, in fact, for $M\rightarrow \infty$
we even have $\eta \rightarrow 0$. 
\revf{A possible
explanation for this phenomenon could be that an $M$-sensor ULA offers not only the one shift
invariance used in LS (the first and last $M-1$ sensors) but multiple invariances \cite{SORK:92},
which are not fully exploited by LS.}
However, for ESPRIT based on SLS, the asymptotic efficiency is in general higher, in fact,
we have $\eta \rightarrow 1$ for $M=2$, $M=3$ and $M\rightarrow \infty$ for a single source. Moreover, even
for limited $M$, $\eta$ is never far away from 1.
\revf{As we show in the simulations below, we have $\eta = 1$ for $M=2, 3$ and the smallest value
of $\eta$ is obtained for $M=5$ where $\eta = 36/37 \approx 0.973$ {for $d=1$}.}

\section{Simulation results} \label{sec_perf_sims}

In this section we show numerical results to demonstrate the asymptotic
behavior of the analytical performance assessment presented in this chapter.
We first investigate the subspace estimation accuracy in order to verify~\eqref{eqn_perf_exptensub}.
Note that the analytical results
for the subspace estimates are explicit expansions in terms of the perturbation
(i.e., the additive noise). Therefore, we repeat the experiment with
a number of randomly generated realizations of the noise and perform
Monte-Carlo averaging over the analytical expansions. These ``{\bf semi-analytical}''
results are then compared with purely {\bf empirical} results where we estimate
the subspace via an SVD or a HOSVD and compute the estimation error
compared to the true signal subspace.

The subsequent numerical results demonstrate the performance of
ESPRIT-type parameter estimation schemes. Here, we compute the mean
square estimation error in three different ways. Firstly, {\bf analytically}, via 
the MSE expressions provided
in Theorem \ref{thm_perf_mse}, Theorem \ref{thm_perf_mse_fba}, and Theorem \ref{thm_perf_mse_sls}
(eqn.~\eqref{eqn_subsp_perf_sls_mse}), respectively.
Secondly, {\bf semi-analytically}, by performing Monte-Carlo averaging over the explicit first-order
expansions provided in equation~\eqref{eqn_perf_epxl_ste},~\eqref{eqn_perf_epxl_ute},
and~\eqref{eqn_subsp_perf_sls}, respectively.
Thirdly, {\bf empirically}, by estimating the spatial frequencies via the corresponding
ESPRIT-type algorithms and comparing the estimates to the true spatial frequencies.

For all the simulations we assume a known number of planar wavefronts impinging
on an antenna array of $M$ istrotropic sensor elements. We assume uniform $\lambda/2$
spacing in all dimensions, i.e., an $M$-element uniform linear array (ULA) in the 1-D
case and an $M_1 \times M_2$ uniform rectangular array (URA) in the 2-D case.
The sources emit narrow-band waveforms $s_i(t)$ modeled as complex Gaussian distributed symbols $s_i(t)$ 
and we observe $N$ subsequent snapshots $t=1, 2, \ldots, N$. 
All sources are assumed to have unit power, i.e., $\expvof{|s_i(t)|^2} = 1$.
In the case where source correlation
is investigated we generate the symbols $s_i(t)$ such that $\expvof{s_i(t) \cdot s_j(t)^\conj} = \rho \cdot
\expof{\j \varphi_{i,j}}$ for $i\neq j = 1, 2, \ldots, d$, where $\rho$ is the correlation coefficient
between each pair of sources and $\varphi_{i,j}$ is a uniformly distributed correlation phase.
The additive noise is generated according to a circularly symmetric complex Gaussian distribution
with zero mean and variance $\sigma_{\rm n}^2$ and noise samples are assumed to be mutually independent.
Therefore, the Signal to Noise Ratio (SNR) is defined as $1 / \sigma_{\rm n}^2$.

\subsection{Subspace estimation accuracy}

We evaluate the subspace estimation accuarcy by computing the Frobenius norm of the subspace estimation
error, i.e., $\fronorm{\Delta \ma{U}_{\rm s}}^2$ in the matrix case and $\honorm{\unfnot{\Delta \sig{\ten{U}}}{R+1}^\trans}^2$
in the tensor case. 

In order to find the estimation error empirically, we obtain a subspace
estimate $\ma{\hat{U}}_{\rm s}$ via an SVD of the noisy observation and then compare
it to the true subspace $\ma{U}_{\rm s}$ column by column. The estimation error of the $n$-th
column is computed via
\begin{align}
   \Delta\ma{u}_n = \ma{\hat{u}}_n \cdot \frac{\ma{\hat{u}}_n^\herm \ma{u}_n}{\left|\ma{\hat{u}}_n^\herm \ma{u}_n\right|} - \ma{u}_n, \quad n=1, 2, \ldots, d
\end{align}
to account for the inherent phase ambiguity in each column of the SVD, cf.~\cite{LLM:08}.

For the analytical estimation error we calculate $\Delta \ma{U}_{\rm s}$ via the first-order
expansion $\Delta \ma{U}_{\rm s} \approx \ma{U}_{\rm n} \cdot \noi{\ma{\Gamma}}$
 provided in~\eqref{eqn_perf_linexp_vac} and
 the expansion $\Delta \ma{U}_{\rm s} \approx \ma{U}_{\rm n} \cdot \noi{\ma{\Gamma}}+\ma{U}_{\rm s} \cdot \sig{\ma{\Gamma}}$
 provided in~\eqref{eqn_perf_linexp_vacma}, respectively.
 Note that the latter is more accurate since it additionally considers the perturbation of the individual singular vectors,
 i.e., the particular choice of the basis for the signal subspace.
  However, this contribution is irrelevant for the performance of
 ESPRIT-type algorithms. 

\begin{figure}%
   \begin{center}
		\includegraphics[width=\defaultfigwidth]{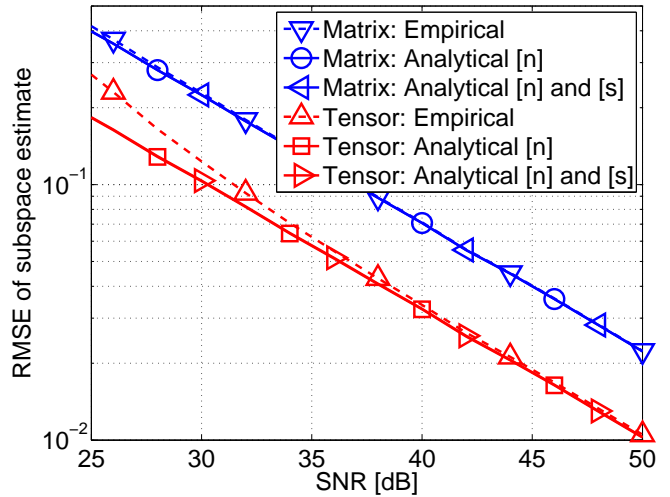}%
	 \end{center}
		\caption{Subspace estimation accuracy using $\noi{\ma{\Gamma}}$ only vs. using 
		$\noi{\ma{\Gamma}}$ and $\sig{\ma{\Gamma}}$. Scenario: $d=3$ correlated sources
		($\rho = 0.97$) at $\mu_1^{(1)} = 0.7, \mu_2^{(1)} = 0.9, \mu_3^{(1)} = 1.1, 
\mu_1^{(2)} = -0.1, \mu_2^{(2)} = -0.3, \mu_3^{(2)} = -0.5$, an $8 \times 8$ URA,
and $N=20$ snapshots.}%
		\label{fig_perf_subsp_3scorr}%
\end{figure}
\begin{figure}
   \begin{center}
		\includegraphics[width=\defaultfigwidth]{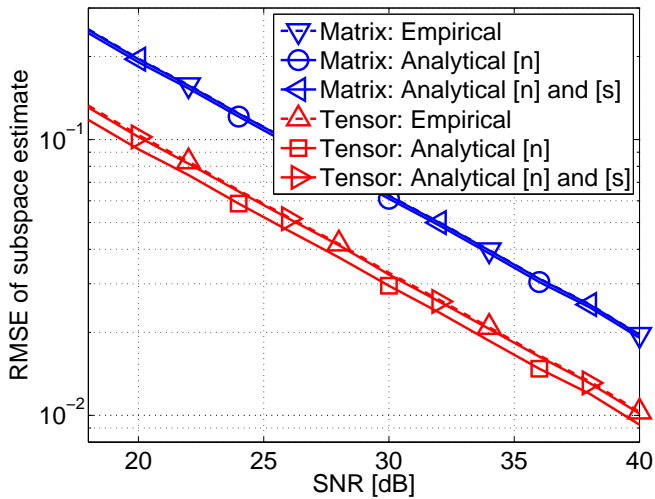}%
	 \end{center}
		\caption{Subspace estimation accuracy using $\noi{\ma{\Gamma}}$ only vs. using 
		$\noi{\ma{\Gamma}}$ and $\sig{\ma{\Gamma}}$. Scenario: $d=4$ uncorrelated sources
		at $\mu_1^{(1)} = -1.5$, $\mu_2^{(1)} = 0.5$, $\mu_3^{(1)} = 1.0$,
   $\mu_4^{(1)} = -0.3$, $\mu_1^{(2)} = 1.3$, $\mu_2^{(2)} = -0.2$, $\mu_3^{(2)} = 0.7$, $\mu_4^{(2)} = -1.5$,
   an $8 \times 8$ URA, and $N=5$ snapshots.}%
		\label{fig_perf_subsp_4sunc}%
\end{figure}

In Figure~\ref{fig_perf_subsp_3scorr} we have $d=3$ sources
positioned at $\mu_1^{(1)} = 0.7, \mu_2^{(1)} = 0.9, \mu_3^{(1)} = 1.1, 
\mu_1^{(2)} = -0.1, \mu_2^{(2)} = -0.3, \mu_3^{(2)} = -0.5$
and mutually correlated with a correlation coefficient of $\rho = 0.97$. Moreover, the array
size is increased to an $8 \times 8$ URA.
For the simulation result shown in Figure~\ref{fig_perf_subsp_4sunc} we consider $d=4$
uncorrelated sources located at $\mu_1^{(1)} = -1.5$, $\mu_2^{(1)} = 0.5$, $\mu_3^{(1)} = 1.0$,
   $\mu_4^{(1)} = -0.3$, $\mu_1^{(2)} = 1.3$, $\mu_2^{(2)} = -0.2$, $\mu_3^{(2)} = 0.7$, $\mu_4^{(2)} = -1.5$
and $N=5$ snapshots. 

Both simulations show that the empirical estimation errors agree with the analytical
results as the SNR tends to infinity. Therefore, the improvement obtained by the HOSVD-based
subspace estimate can reliably be predicted via the analytical expressions. In general, it is
particularly pronounced for correlated sources and for a small number of snapshots.
Moreover, while for three correlated sources, the impact
of the additional term $\ma{U}_{\rm s} \cdot \sig{\ma{\Gamma}}$ is negligibly small, it is
clearly visible for the four uncorrelated sources shown in Figure~\ref{fig_perf_subsp_4sunc}.

\subsection{\texorpdfstring{$R$-D Tensor-ESPRIT}{R-D Tensor-ESPRIT}}

The following set of simulation results demonstrates the performance of $R$-D matrix-based
and tensor-based ESPRIT. As explained in the beginning of this section,
for the analytical results we use~\eqref{eqn_subsp_perf_mse_se}
and~\eqref{eqn_subsp_perf_mse_ste} for $R$-D Standard ESPRIT
and $R$-D Standard Tensor-ESPRIT, respectively, and their
extensions for Unitary ESPRIT and $R$-D Unitary Tensor-ESPRIT
as discussed in Theorem \ref{thm_perf_mse_fba}.
Likewise, the semi-analytical results are obtained by Monte-Carlo averaging
of the explicit first-order expansion provided in~\eqref{eqn_perf_epxl_ste}
and~\eqref{eqn_perf_epxl_ute}, respectively.

\begin{figure}%
   \begin{center}
		\includegraphics[width=\defaultfigwidth]{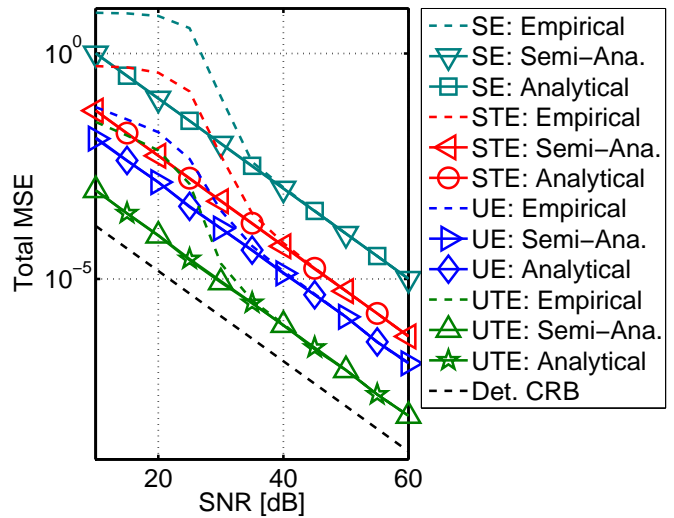}%
	 \end{center}
		\caption{Performance of 2-D SE, STE, UE, UTE for $d=2$ highly correlated sources ($\rho = 0.9999$)
		located at $\mu_1^{(1)} = 1$, $\mu_2^{(1)} = -0.5$, $\mu_1^{(2)} = -0.5$, and
   $\mu_2^{(2)} = 1$, a $5 \times 6$ URA, and $N=20$ snapshots.}%
		\label{fig_perf_esprit_2scorr}%
\end{figure}
\begin{figure}	
   \begin{center}
		\includegraphics[width=\defaultfigwidth]{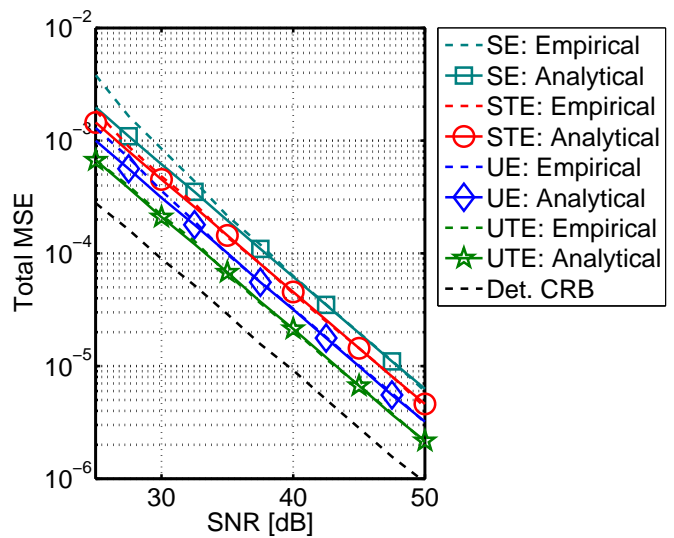}%
	 \end{center}
		\caption{Performance of 2-D SE, STE, UE, UTE for $d=3$ correlated sources ($\rho = 0.97$)
		positioned at $\mu_1^{(1)} = 0.7, \mu_2^{(1)} = 0.9, \mu_3^{(1)} = 1.1, 
\mu_1^{(2)} = -0.1, \mu_2^{(2)} = -0.3, \mu_3^{(2)} = -0.5$, an $8 \times 8$ URA, and $N=20$ snapshots.}%
		\label{fig_perf_esprit_3scorr}%
\end{figure}

For Figure~\ref{fig_perf_esprit_2scorr} we employ a $5 \times 6$ URA
and collect $N=20$ snapshots from two sources located at
   $\mu_1^{(1)} = 1$, $\mu_2^{(1)} = -0.5$, $\mu_1^{(2)} = -0.5$, and
   $\mu_2^{(2)} = 1$. The sources are highly correlated with
   a correlation of $\rho = 0.9999$.
On the other hand, for Figure~\ref{fig_perf_esprit_3scorr} we increase the number
of sources to $d=3$ and the correlation coefficient to $\rho = 0.97$.
Moreover, the spatial frequencies of the sources are given by
$\mu_1^{(1)} = 0.7, \mu_2^{(1)} = 0.9, \mu_3^{(1)} = 1.1, 
\mu_1^{(2)} = -0.1, \mu_2^{(2)} = -0.3, \mu_3^{(2)} = -0.5$
and we use an $8 \times 8$ URA.

To enhance the legibility, we show the semi-analytical estimation errors only
in Figure~\ref{fig_perf_esprit_2scorr} since they always agree with the analytical
results, as expected. Moreover, the empirical estimation errors agree with the
analytical ones for high SNRs. This is also expected as the performance analysis
framework presented here is asymptotically accurate for high effective SNRs.
We conclude that the improvement in terms of estimation accuracy for Tensor-ESPRIT-type
parameter estimation schemes can reliably be predicted via the analytical
expressions we have derived.

\subsection{Structured Least Squares}

The next set of simulation results illustrates the analytical expressions
for ESPRIT using SLS. The semi-analytical MSE is obtained
by Monte-Carlo averaging over the explicit expansion provided in~\eqref{eqn_subsp_perf_sls}
and the analytical MSE is computed via~\eqref{eqn_subsp_perf_sls_mse}.
For the empirical estimation errors we perform a single iteration of the
Structured Least Squares algorithm and do not use regularization (i.e., the regularization
parameter $\alpha$ is set to $\infty$).

\begin{figure}%
   \begin{center}
		\includegraphics[width=\defaultfigwidth]{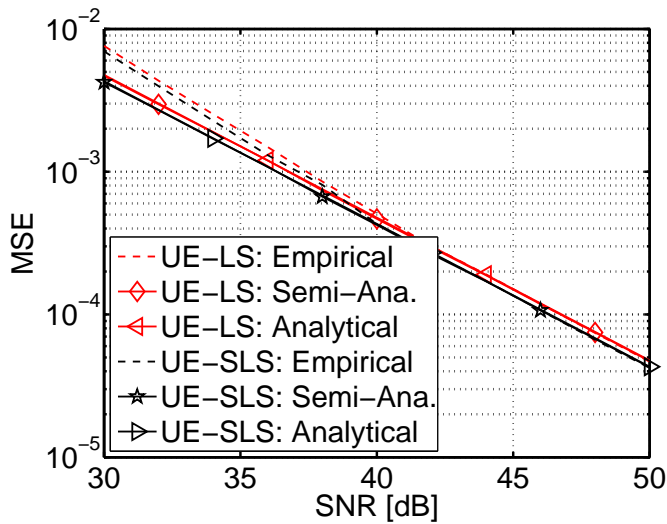}%
	 \end{center}
		\caption{Performance of LS-ESPRIT vs. SLS-ESPRIT for 4 sources at $\mu_1 = 1.0, \mu_2 = 0.7, \mu_3 = -0.6, \mu_4 = -0.3$,
		an $M=8$ ULA, $N=3$ shapshots.}%
		\label{fig_perf_sls_4s}%
\end{figure}
\begin{figure}		
   \begin{center}
		\includegraphics[width=\defaultfigwidth]{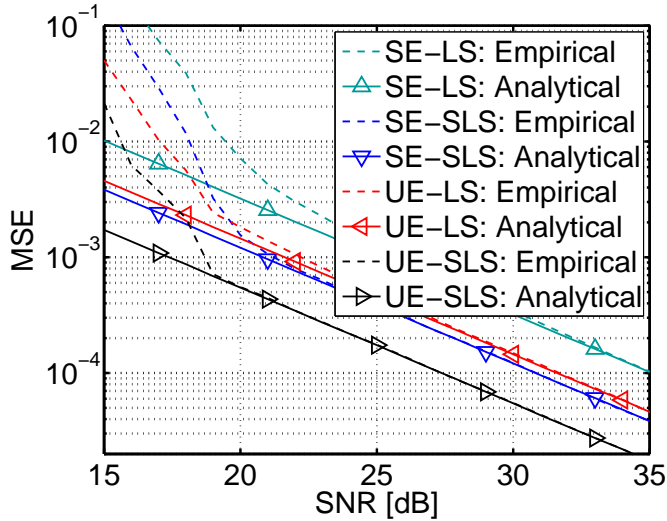}%
	 \end{center}
		\caption{Performance of LS-ESPRIT vs. SLS-ESPRIT for $d=3$ correlated sources ($\rho = 0.99$)
		at $\mu_1 = 1$, $\mu_2 = 0$, $\mu_3 = -1$, a $M=12$ ULA and $N=10$ shapshots. }%
		\label{fig_perf_sls_3scorr}%
\end{figure}

The first simulation result is shown in Figure~\ref{fig_perf_sls_4s}.
Here we consider $N=3$ snapshots from $d=4$ uncorrelated sources captured by an $M=8$ element
uniform linear array. The sources' spatial frequencies are given by
$\mu_1 = 1.0, \mu_2 = 0.7, \mu_3 = -0.6, \mu_4 = -0.3$. Note that since $N<d$, we cannot apply
Standard ESPRIT, therefore, only Unitary ESPRIT is used.
On the other hand, in the second scenario we consider $N=10$ snapshots from $d=3$ sources that
are mutually correlated with a correlation coefficient of $\rho = 0.99$. The sources are
located at $\mu_1 = 1$, $\mu_2 = 0$, $\mu_3 = -1$ and a $M=12$ element ULA is used.
The corresponding estimation errors are shown in Figure~\ref{fig_perf_sls_3scorr}.

As before, the empirical results agree with the analytical results for high SNRs. Moreover,
the improvement in MSE obtained via SLS is particularly pronounced for the correlated sources.
However, even the very slight improvement which is present for four uncorrelated sources
can reliably be predicted via the analytical MSE expressions we have derived.

\subsection{Asymptotic efficiency for a single source}

The final set of simulation results demonstrates the special case of a single source, in which
case the MSE expressions can be simplified to very compact closed-form expressions which
only depend on the physical parameters, i.e., the array size and the SNR.

\begin{figure}%
   \begin{center}
		\includegraphics[width=\defaultfigwidth]{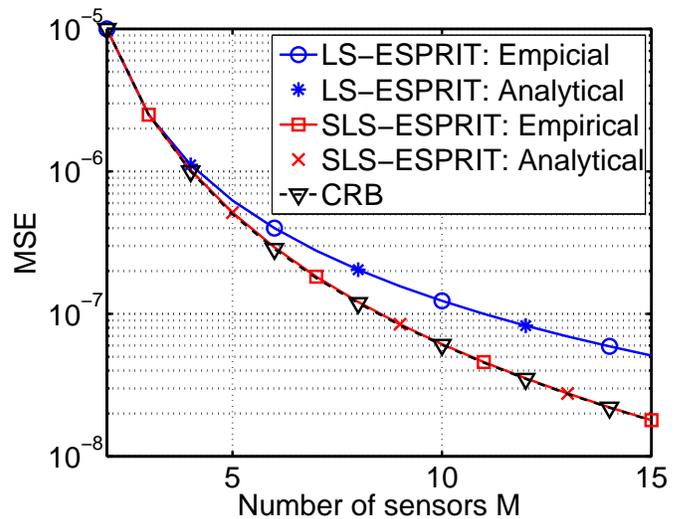}%
	 \end{center}
		\caption{Performance of LS-ESPRIT and SLS-ESPRIT for a single source
		 vs. the number of sensors $M$ ($M$-ULA) at an effective SNR of 25~dB ($P_{\rm T} = 1, \sigma_{\rm n}^2 = 0.032, N = 10$).}%
		\label{fig_perf_sls_singsrc_mse}%
\end{figure}
\begin{figure}
   \begin{center}
		\includegraphics[width=\defaultfigwidth]{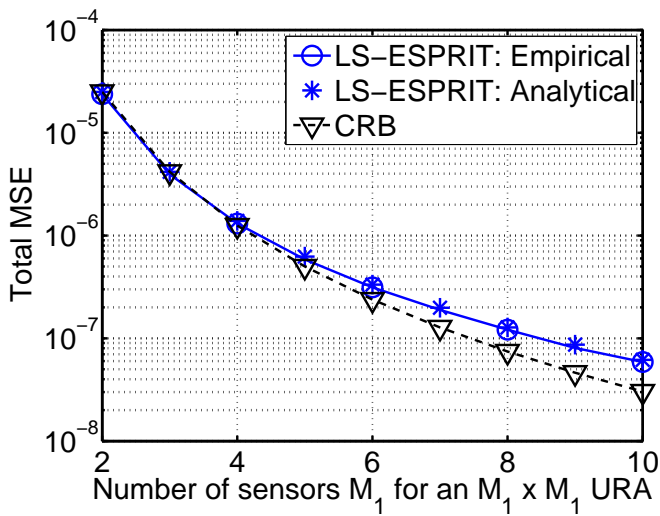}%
	 \end{center}
		\caption{Performance of LS-ESPRIT for a single source vs. $M_1$ using an $M_1 \times M_1$ URA
		at an effective SNR of 46~dB ($P_{\rm T} = 1, \sigma_{\rm n}^2 = 10^{-4}, N = 4$).}%
		\label{fig_perf_sls_singsrc2d_mse}%
\end{figure}

\ifCLASSOPTIONdraftcls
\begin{figure*}%
   \begin{center}
		\includegraphics[width=\columnwidth]{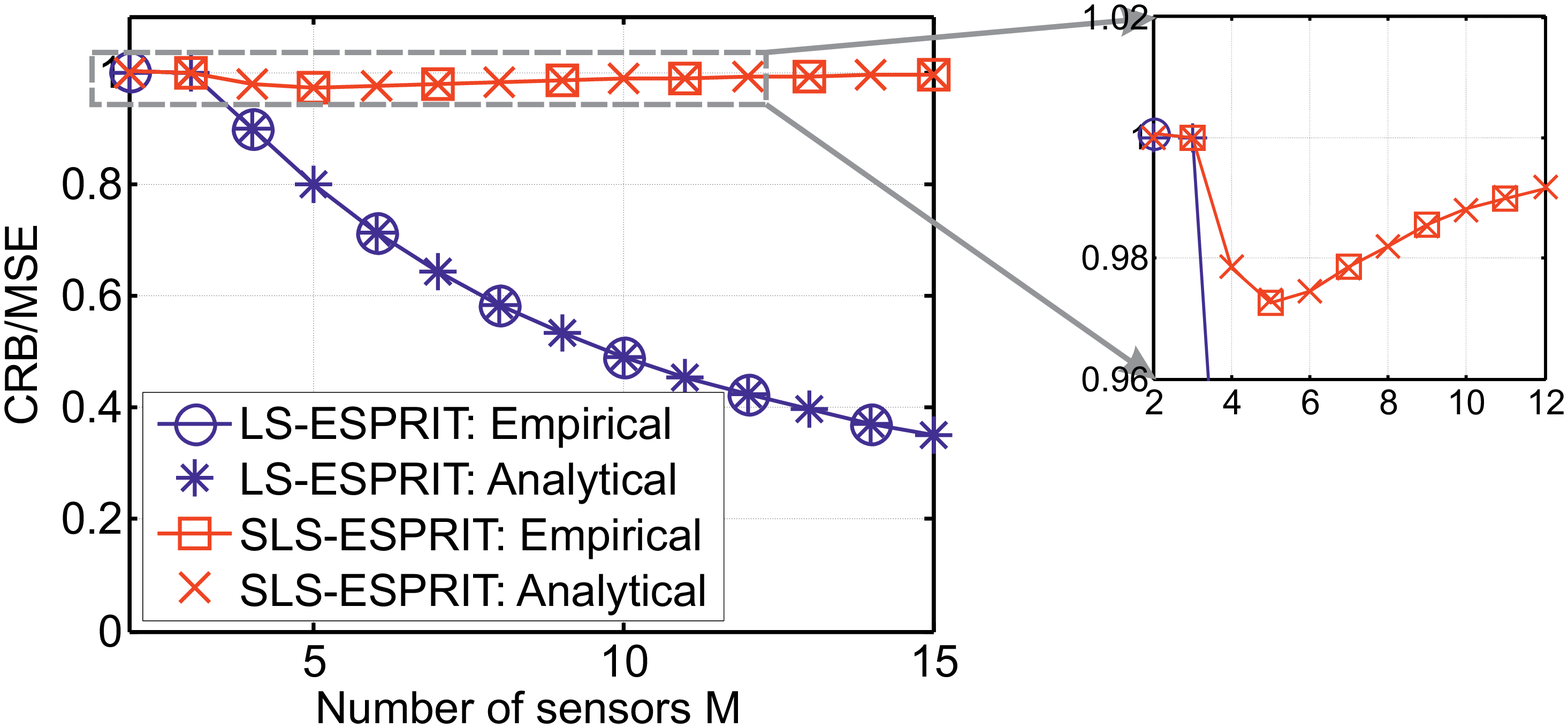}%
	 \end{center}
		\caption{Asymptotic efficiency of LS-ESPRIT vs. SLS-ESPRIT vs. $M$. Same scenario as in Figure~\ref{fig_perf_sls_singsrc_mse}. The left-hand side shows a zoom.}%
		\label{fig_perf_sls_singsrc_eta}%
\end{figure*}
\else
\begin{figure*}%
   \begin{center}
		\includegraphics[width=0.65\linewidth]{figures/res_lssls_singsrc}%
	 \end{center}
		\caption{Asymptotic efficiency of LS-ESPRIT vs. SLS-ESPRIT vs. $M$. Same scenario as in Figure~\ref{fig_perf_sls_singsrc_mse}. The left-hand side shows a zoom.}%
		\label{fig_perf_sls_singsrc_eta}%
\end{figure*}
\fi

Figures~\ref{fig_perf_sls_singsrc_mse} and~\ref{fig_perf_sls_singsrc2d_mse} show the MSE
vs. the number of sensors $M$ for a (1-D) Uniform Linear Array and 
vs.~$M_1$ for a (2-D) $M_1 \times M_1$ Uniform
Rectangular Array, respectively. For both scenarios, the spatial frequencies of the single
source were drawn randomly (note that they have no impact on the MSE). The effective
SNR was set to 25~dB for Figure~\ref{fig_perf_sls_singsrc_mse} ($P_{\rm T} = 1, \sigma_{\rm n}^2 = 0.032, N = 10$)
and to 46~dB for Figure~\ref{fig_perf_sls_singsrc2d_mse} ($P_{\rm T} = 1, \sigma_{\rm n}^2 = 10^{-4}, N = 4$),
respectively.

For both plots we observe that LS-ESPRIT is asymptotically efficient for $M=2$ and $M=3$ (which, in the
2-D case means, a $3 \times 3$ URA) and then becomes increasingly inefficient as the array size grows.
Moreover, for the 1-D case we see that SLS-based ESPRIT is in fact very close to the Cram\'er-Rao Bound,
which may, at first sight, lead to believe that the asymptotic efficiency is in fact 1 for all $M$.
However, as we have shown it is in fact slightly lower than one. Therefore, we provide two additional
figures where we depict the ``asymptotic efficiency'', i.e., we divide the CRB by the corresponding
value of the MSE. The resulting efficiency plot is shown in Figure~\ref{fig_perf_sls_singsrc_eta}.
This plot shows more clearly that LS-ESPRIT becomes increasingly inefficient for $M>3$, whereas
SLS-ESPRIT approaches $\eta = 1$ for large $M$. The worst efficiency is found for $M=5$ where we have
$\eta = 36 / 37 \approx 0.973$.

\section{Conclusions}\label{sec_concl}

In this paper we have discussed a framework for analytical performance assessment
of subspace-based parameter estimation schemes. It is based on earlier results
on an explicit first-order expansion of the SVD and its application to
1-D versions of subspace-based parameter estimation schemes, e.g., ESPRIT.
We have extended this framework in a number of ways. Firstly, we have
derived an explicit first-order expansion of the HOSVD-based subspace estimate
which is the basis for Tensor-ESPRIT-type algorithms. 
%
Secondly, we have shown that the first-order expansion for 1-D Standard ESPRIT
can be extended to other ESPRIT-type algorithms, e.g., $R$-D Standard ESPRIT, 
$R$-D Unitary ESPRIT, $R$-D Standard Tensor-ESPRIT, or $R$-D Unitary Tensor-ESPRIT.
%
Thirdly, we have derived a corresponding first-order expansion for Structured Least
Squared (SLS)-based ESPRIT-type algorithms.

All these expansions have in common that they are explicit, i.e., no assumption
about the statistics of either desired signal or additive perturbation need to be
made. We only require the perturbation to be small compared to the desired signal.
{We also do not need the number of snapshots to be large, i.e., they even apply
to the single snapshot case ($N=1$).}
Our fourth contribution is to show that the mean square error can readily be computed
in closed-form \revA{and that it depends only on the second-order moments of the noise.}
\revA{Consequently, for the MSE expressions we only need the noise
to be zero mean and its second order moments to be finite. Neither Gaussianity nor
circular symmetry is required.}
\revB{This is a particularly attractive feature of our approach with respect
to different types of preprocessing which
alters the noise statistics, e.g., spatial smoothing (which yields spatially correlated noise)
or forward-backward averaging (which annihilates the circular symmetry of the noise).
Since we do not require spatial whiteness or circular symmetry, our MSE expressions
are directly applicable.}
The resulting
MSE expressions are asymptotic in the effective SNR, i.e., they become accurate as either
the noise variance goes to zero or the number of observations goes to infinity.

As a final contribution we have investigated the special case of a single source,
\revA{circularly symmetric white noise,}
and uniform linear (1-D) or uniform rectangular (2-D) arrays.
In this case we have been able to show analytically, that $R$-D Standard ESPRIT, $R$-D Unitary
ESPRIT, and (for $R=2$) $R$-D Standard Tensor-ESPRIT as well as $R$-D Unitary Tensor-ESPRIT
yield the same MSE, which only depends on the effective SNR and the number of antenna elements.
We have also shown that 1-D Standard ESPRIT using SLS has a lower MSE which is also expressed
explicitly as a function of the effective SNR and the number of antenna elements. 

\appendices

\section{Proof of Theorem~\ref{thm_perf_reltenmat}}\label{sec_app_proof_perf_reltenmat}

As shown in~\eqref{eqn_subsp_usten_def}, the estimated signal subspace tensor can be computed via
\begin{align}
   \sig{\ten{\hat{U}}} = \sig{\ten{\hat{S}}} \times_1 \sig{\ma{\hat{U}}}_1 \ldots \times_R \sig{\ma{\hat{U}}}_R
   \times_{R+1} \ma{\hat{\Sigma}}_{\rm s}^{-1}. \label{eqn_app_proof_perf_subsptendef}
\end{align}
Here, $\sig{\ten{\hat{S}}}$ represents the truncated version core tensor $\ten{\hat{S}}$
from the HOSVD of $\ten{X}$. In order to eliminate $\sig{\ten{\hat{S}}}$ we require the following
Lemma:

\begin{lem} \label{lem_app_proof_perf_lemtrnccore}
	 The truncated core tensor $\sig{\ten{\hat{S}}}$ can be computed from $\ten{X}$ directly via
	 \begin{align}
   		\ten{\hat{S}} = \ten{X} \times_1 \sigH{\ma{\hat{U}}}_1 \ldots \times_{R+1} \sigH{\ma{\hat{U}}}_{R+1}.
   		\label{eqn_app_proof_perf_coretentrunc}
	\end{align}
\end{lem}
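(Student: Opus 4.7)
The plan is to start from the non-truncated HOSVD of $\ten{X}$ and then left-multiply in each mode by the Hermitian transpose of the corresponding truncated matrix of $n$-mode singular vectors, exploiting the orthonormality of the columns of the full unitary factors to collapse the expression down to exactly the truncated core.

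First, I would write the (full) HOSVD of $\ten{X}$ as
\begin{align}
   \ten{X} = \ten{\hat{S}} \times_1 \ma{\hat{U}}_1 \times_2 \ma{\hat{U}}_2 \ldots \times_{R+1} \ma{\hat{U}}_{R+1}, \notag
\end{align}
where $\ma{\hat{U}}_n \in \compl^{M_n \times M_n}$ are the full unitary matrices of $n$-mode singular vectors (with $M_{R+1} = N$) and $\sig{\ma{\hat{U}}}_n$ consists of the first $p_n$ columns of $\ma{\hat{U}}_n$ (with $p_{R+1} = d$). By definition of the truncated HOSVD in~\eqref{eqn_dm_trnchosvd}, the truncated core tensor $\sig{\ten{\hat{S}}} \in \compl^{p_1 \times \ldots \times p_R \times d}$ is precisely the sub-block of $\ten{\hat{S}}$ indexed by $(i_1, \ldots, i_{R+1})$ with $i_n \in \{1, \ldots, p_n\}$.

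Next, I would invoke the key orthogonality relation $\sigH{\ma{\hat{U}}}_n \cdot \ma{\hat{U}}_n = \bigl[\ma{I}_{p_n}, \ma{0}_{p_n \times (M_n - p_n)}\bigr]$, which follows directly from the fact that $\ma{\hat{U}}_n$ has orthonormal columns. Combining this with the nested $n$-mode product identity \eqref{eqn_notation_nmp_twice}, one step of left-multiplication in mode $n$ yields
\begin{align}
   \ten{X} \times_n \sigH{\ma{\hat{U}}}_n = \ten{\hat{S}} \times_1 \ma{\hat{U}}_1 \ldots \times_n \bigl[\ma{I}_{p_n}, \ma{0}\bigr] \ldots \times_{R+1} \ma{\hat{U}}_{R+1}. \notag
\end{align}
The $n$-mode product with $\bigl[\ma{I}_{p_n}, \ma{0}\bigr]$ simply extracts the first $p_n$ slices of the core along mode $n$.

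Applying this reduction successively for $n = 1, 2, \ldots, R+1$, each step leaves the other modes untouched (since different $n$-mode products commute) and selects the leading $p_n$ indices along the $n$-th mode. The net effect is that $\ten{X} \times_1 \sigH{\ma{\hat{U}}}_1 \times_2 \ldots \times_{R+1} \sigH{\ma{\hat{U}}}_{R+1}$ equals exactly the $(p_1 \times \ldots \times p_R \times d)$ sub-block of $\ten{\hat{S}}$, which is $\sig{\ten{\hat{S}}}$ by definition.

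The main obstacle is purely notational bookkeeping: one must keep track of dimensions across $R+1$ successive mode products and verify that ``selecting the first $p_n$ slices in each mode'' of the full core indeed matches the core tensor produced by the truncated HOSVD \eqref{eqn_dm_trnchosvd}. Once that correspondence is made explicit, the proof boils down to $R+1$ applications of the same orthogonality identity.
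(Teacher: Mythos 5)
Your proof is correct and takes essentially the same route as the paper's: both insert the full HOSVD of $\ten{X}$, use the nested mode-product identity to reduce each mode to $\sigH{\ma{\hat{U}}}_n \cdot \ma{\hat{U}}_n = \bigl[\ma{I}_{p_n},\; \ma{0}_{p_n \times (M_n-p_n)}\bigr]$, and observe that this selects the leading $p_n$ slices of the core in each mode, yielding exactly the truncated core tensor.
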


\begin{proof}
   To show~\eqref{eqn_app_proof_perf_coretentrunc} we insert the HOSVD of $\ten{X}$ given by
   $\ten{X} = \ten{\hat{S}} \times_1 \ma{\hat{U}}_1 \ldots \times_{R+1} \ma{\hat{U}}_{R+1}$. 
   Using 
   \eqref{eqn_notation_nmp_twice},
   we obtain
   \begin{align}
       \ten{\hat{S}} = \ten{S} \times_1 \left(\sigH{\ma{\hat{U}}}_1\cdot \ma{\hat{U}}_1\right) \ldots 
       \times_{R+1} \left(\sigH{\ma{\hat{U}}}_{R+1} \cdot \ma{\hat{U}}_{R+1} \right).
   \end{align}
   However, since the matrices of $r$-mode singular vectors $\ma{\hat{U}}_r$ are unitary, they satisfy
   $\sigH{\ma{\hat{U}}}_r \cdot \ma{\hat{U}}_r = \left[ \ma{I}_{p_r}, \; \ma{0}_{p_r \times (M_r - p_r)}\right]$.
   Therefore, $\sig{\ten{\hat{S}}}$ computed via~\eqref{eqn_app_proof_perf_coretentrunc} contains the first
   $p_r$ elements of $\ten{\hat{S}}$ in the $r$-th mode, which shows that it is indeed the truncated
   core tensor.
\end{proof}

Next, we use Lemma~\ref{lem_app_proof_perf_lemtrnccore} to eliminate $\ten{\hat{S}}$
in~\eqref{eqn_app_proof_perf_subsptendef}. We obtain
\ifCLASSOPTIONdraftcls
\begin{align}
   \sig{\ten{\hat{U}}} & = \ten{X} \times_1 \left(\sig{\ma{\hat{U}}}_1\cdot \sigH{\ma{\hat{U}}}_1\right) 
   \ldots \times_R \left(\sig{\ma{\hat{U}}}_R\cdot \sigH{\ma{\hat{U}}}_R\right)
   \times_{R+1} \left(\ma{\hat{\Sigma}}_{\rm s}^{-1} \cdot \sigH{\ma{\hat{U}}}_{R+1}\right)  \\
   & = \ten{X} \times_1 \ma{\hat{T}}_1 
   \ldots \times_R \ma{\hat{T}}_R
   \times_{R+1} \left(\ma{\hat{\Sigma}}_{\rm s}^{-1} \cdot \sigH{\ma{\hat{U}}}_{R+1}\right), \label{eqn_proof_perf_usten_ir}
\end{align}
\else
\begin{align}
   \sig{\ten{\hat{U}}} = &\ten{X} \times_1 \left(\sig{\ma{\hat{U}}}_1\cdot \sigH{\ma{\hat{U}}}_1\right) 
   \ldots \times_R \left(\sig{\ma{\hat{U}}}_R\cdot \sigH{\ma{\hat{U}}}_R\right) \notag \\
   & \times_{R+1} \left(\ma{\hat{\Sigma}}_{\rm s}^{-1} \cdot \sigH{\ma{\hat{U}}}_{R+1}\right)  \\
    = &\ten{X} \times_1 \ma{\hat{T}}_1 
   \ldots \times_R \ma{\hat{T}}_R
   \times_{R+1} \left(\ma{\hat{\Sigma}}_{\rm s}^{-1} \cdot \sigH{\ma{\hat{U}}}_{R+1}\right), \label{eqn_proof_perf_usten_ir}
\end{align}
\fi
where we have introduced the short hand notation $\ma{\hat{T}}_r = \sig{\ma{\hat{U}}}_r\cdot \sigH{\ma{\hat{U}}}_r$.
The next step is to compute the matrix $\unfnot{\sig{\ten{\hat{U}}}}{R+1}^\trans$.
Inserting~\eqref{eqn_proof_perf_usten_ir} and using 
\eqref{eqn_notation_nmp_unf},
we obtain
\begin{align}
   \unfnot{\sig{\ten{\hat{U}}}}{R+1}^\trans & = 
   \left( \ma{\hat{T}}_1 \otimes \ldots \otimes \ma{\hat{T}}_R\right) \cdot 
   \unfnot{\ten{X}}{R+1}^\trans \cdot 
   \sigC{\ma{\hat{U}}}_{R+1} \cdot 
   \ma{\hat{\Sigma}}_{\rm s}^{-1}. \label{eqn_proof_perf_tensubspunf_ir2}
\end{align}
%

As pointed out in Section~\ref{sec_dm_matten}, the link between the measurement matrix
$\ma{X}$ and the measurement tensor $\ten{X}$ is given by
$\ma{X} = \unfnot{\ten{X}}{R+1}^\trans$. Therefore, their SVDs (cf.~\eqref{eqn_svd_X} and~\eqref{eqn_perf_unfsvd_x}) 
are linked through the following identities
\begin{align}
    \ma{\hat{U}}_{\rm s} = \sigC{\ma{\hat{V}}}_{R+1}, \; \ma{\hat{U}}_{\rm n} = \noiC{\ma{\hat{V}}}_{R+1}, \;
    \ma{\hat{V}}_{\rm s} = \sigC{\ma{\hat{U}}}_{R+1}, \; \ma{\hat{V}}_{\rm n} = \noiC{\ma{\hat{U}}}_{R+1}.
    \notag
\end{align}
Consequently we can write 
\ifCLASSOPTIONdraftcls
\begin{align}
   \unfnot{\ten{X}}{R+1}^\trans \cdot 
   \sigC{\ma{\hat{U}}}_{R+1} \cdot 
   \ma{\hat{\Sigma}}_{\rm s}^{-1}
   = 
   \ma{X} \cdot 
   \ma{\hat{V}}_{\rm s} \cdot 
   \ma{\hat{\Sigma}}_{\rm s}^{-1}   
   = \ma{\hat{U}} \cdot \ma{\hat{\Sigma}} \cdot \ma{\hat{V}}^\herm  \cdot 
   \ma{\hat{V}}_{\rm s} \cdot 
   \ma{\hat{\Sigma}}_{\rm s}^{-1}   
   = \ma{\hat{U}}_{\rm s} \cdot \ma{\hat{\Sigma}}_s \cdot \ma{\hat{\Sigma}}_{\rm s}^{-1}   
   = \ma{\hat{U}}_{\rm s}. \label{eqn_proof_perf_xunfus}
\end{align}
\else
\begin{align}
  \begin{split}
   & \unfnot{\ten{X}}{R+1}^\trans \cdot 
   \sigC{\ma{\hat{U}}}_{R+1} \cdot 
   \ma{\hat{\Sigma}}_{\rm s}^{-1}
   = 
   \ma{X} \cdot 
   \ma{\hat{V}}_{\rm s} \cdot 
   \ma{\hat{\Sigma}}_{\rm s}^{-1} \\
   = & \ma{\hat{U}} \cdot \ma{\hat{\Sigma}} \cdot \ma{\hat{V}}^\herm  \cdot 
   \ma{\hat{V}}_{\rm s} \cdot 
   \ma{\hat{\Sigma}}_{\rm s}^{-1}   
   = \ma{\hat{U}}_{\rm s} \cdot \ma{\hat{\Sigma}}_s \cdot \ma{\hat{\Sigma}}_{\rm s}^{-1}   
   = \ma{\hat{U}}_{\rm s}. 
  \end{split} \label{eqn_proof_perf_xunfus}
\end{align}
\fi
%
Finally, inserting~\eqref{eqn_proof_perf_xunfus} into~\eqref{eqn_proof_perf_tensubspunf_ir2}
yields
\begin{align}
       \unfnot{\sig{\ten{\hat{U}}}}{R+1}^\trans =
           \left( \ma{\hat{T}}_1 \otimes \ldots \otimes \ma{\hat{T}}_R\right) \cdot 
           \ma{\hat{U}}_{\rm s},
    \end{align}
which is the desired result. \qed

\begin{cor} \label{cor_app_proofs_perf_kronprojsubsp}
   A corollary which follows from this theorem is that the exact subspace $\ma{U}_{\rm s}$
   satisfies the following identity
   \begin{align}
       \ma{U}_{\rm s} = \left( \ma{T}_1 \otimes \ldots \otimes \ma{T}_R\right) \cdot 
           \ma{U}_{\rm s}.
   \end{align}
\end{cor}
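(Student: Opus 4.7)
The plan is to apply Theorem~\ref{thm_perf_reltenmat} in the noise-free scenario, where all ``hatted'' estimates coincide with their exact counterparts, and then evaluate the left-hand side of the resulting identity directly. Specifically, setting $\ten{N} = \ma{0}$ (so that $\ten{X} = \ten{X}_0$ and $\ma{X} = \ma{X}_0$), Theorem~\ref{thm_perf_reltenmat} yields
\begin{align}
   \unfnot{\sig{\ten{U}}}{R+1}^\trans
   = \left( \ma{T}_1 \otimes \ldots \otimes \ma{T}_R\right) \cdot \ma{U}_{\rm s},
   \notag
\end{align}
since $\sig{\ma{\hat{U}}}_r \to \sig{\ma{U}}_r$ (so $\ma{\hat{T}}_r \to \ma{T}_r$) and $\ma{\hat{U}}_{\rm s} \to \ma{U}_{\rm s}$ in the absence of perturbation.

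It remains to show that the left-hand side equals $\ma{U}_{\rm s}$ when no noise is present. I would do this by reusing intermediate step~\eqref{eqn_proof_perf_usten_ir} from the proof of Theorem~\ref{thm_perf_reltenmat}, which in the noise-free case reads
\begin{align}
   \sig{\ten{U}} = \ten{X}_0 \times_1 \ma{T}_1 \ldots \times_R \ma{T}_R \times_{R+1} \left( \ma{\Sigma}_{\rm s}^{-1} \cdot \sigH{\ma{U}}_{R+1} \right).
   \notag
\end{align}
Since $\ma{T}_r = \sig{\ma{U}}_r \sigH{\ma{U}}_r$ is, by construction, the orthogonal projector onto the column space of $\unfnot{\ten{X}_0}{r}$, we have $\ten{X}_0 \times_r \ma{T}_r = \ten{X}_0$ for every $r = 1, \ldots, R$. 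Unfolding along the $(R+1)$-th mode and using \eqref{eqn_proof_perf_xunfus} with $\ma{X}$ replaced by $\ma{X}_0$ (so that $\ma{\hat{U}}_{\rm s}$ is replaced by $\ma{U}_{\rm s}$) then yields $\unfnot{\sig{\ten{U}}}{R+1}^\trans = \ma{U}_{\rm s}$, which combined with the identity above gives the claim.

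An alternative route, which I would include as a sanity check, is purely algebraic: the columns of $\ma{U}_{\rm s}$ span the range of $\ma{A} = \ma{A}^{(1)} \krp \ldots \krp \ma{A}^{(R)}$, so each column is a linear combination of Kronecker products $\ma{a}^{(1)}(\mu_i^{(1)}) \otimes \ldots \otimes \ma{a}^{(R)}(\mu_i^{(R)})$. Since $\ma{T}_r$ projects onto $\text{range}(\ma{A}^{(r)})$ (the column space of the $r$-mode unfolding of $\ten{X}_0$, which equals the range of $\ma{A}^{(r)}$ under the usual identifiability assumptions), the Kronecker product $\ma{T}_1 \otimes \ldots \otimes \ma{T}_R$ fixes each such column, hence fixes $\ma{U}_{\rm s}$. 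There is no real obstacle in either argument; the only point that warrants care is the justification that the range of $\unfnot{\ten{X}_0}{r}$ coincides with $\text{range}(\ma{A}^{(r)})$, which follows from \eqref{eqn_notation_nmp_unf} together with the genericity assumption that the factor matrices and the source symbol matrix have full column rank.
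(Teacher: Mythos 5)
Your proposal is correct and follows essentially the same route as the paper: specialize Theorem~\ref{thm_perf_reltenmat} to the noise-free case $\ten{X}=\ten{X}_0$, so that $\ma{\hat{T}}_r=\ma{T}_r$ and $\ma{\hat{U}}_{\rm s}=\ma{U}_{\rm s}$, and use the fact that the HOSVD-based subspace estimate coincides with $\ma{U}_{\rm s}$ when there is no perturbation. The paper simply asserts that last identity, whereas you justify it via $\ten{X}_0\times_r\ma{T}_r=\ten{X}_0$ and \eqref{eqn_proof_perf_xunfus}; that extra detail (and your Kronecker-structure sanity check) is sound but not a different method.
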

\begin{proof}
   The corollary follows by considering the special case where $\ten{X} = \ten{X}_0$ and
   hence $\ma{\hat{T}}_r = \ma{T}_r$ as well as $\ma{\hat{U}}_{\rm s} = \ma{U}_{\rm s}$.
   For this case we also have $\unfnot{\sig{\ten{\hat{U}}}}{R+1}^\trans 
   = \unfnot{\sig{\ten{{U}}}}{R+1}^\trans = \ma{U}_{\rm s}$, where the last identity
   resembles the fact that in the noise-free case, the HOSVD-based subspace estimate
   coincides with the SVD-bases subspace estimate.
\end{proof}

\section{Proof of Theorem~\ref{thm_perf_exptensub}}\label{sec_app_proof_perf_exptensub}

We start by inserting $\ma{\hat{U}}_{\rm s} = \ma{U}_{\rm s} + \Delta \ma{U}_{\rm s}$
and $\ma{\hat{T}}_r = \ma{T}_r + \Delta \ma{T}_r$ into~\eqref{eqn_perf_reltenmat2}.
{Then} we obtain
\ifCLASSOPTIONdraftcls
\begin{align}
       \unfnot{\sig{\ten{\hat{U}}}}{R+1}^\trans &= \left[
           \left(\ma{{T}}_1+\Delta\ma{T}_1\right) \kron 
           \ldots 
           \kron
           \left(\ma{{T}}_R+\Delta\ma{T}_R\right)\right]
           \cdot \left( \ma{{U}}_{\rm s}+\Delta\ma{U}_{\rm s}\right) \notag \\
		& = \underbrace{\left[\ma{T}_1 \kron \ldots \kron \ma{T}_R \right] \cdot \ma{U}_{\rm s}}_{\ma{U}_{\rm s}}
		+ \left[\ma{T}_1 \kron \ldots \kron \ma{T}_R \right] \cdot \Delta \ma{U}_{\rm s}  \label{eqn_app_proof_perttenexp_ir1} \\
		& + \left[\Delta\ma{T}_1 \kron \ma{T}_2 \kron \ldots \kron \ma{T}_R \right]
           \cdot \ma{U}_{\rm s}
    + \ldots 
    + \left[\ma{T}_1 \kron \ma{T}_2 \kron \ldots \kron \Delta\ma{T}_R \right]
           \cdot \ma{U}_{\rm s}
    + \bigO{\Delta^2}, \notag
\end{align}
\else
\begin{align}
       \unfnot{\sig{\ten{\hat{U}}}}{R+1}^\trans &= \left[
           \left(\ma{{T}}_1+\Delta\ma{T}_1\right) \kron 
           \ldots 
           \kron
           \left(\ma{{T}}_R+\Delta\ma{T}_R\right)\right]
           \cdot \left( \ma{{U}}_{\rm s}+\Delta\ma{U}_{\rm s}\right) \notag\\
		& = \underbrace{\left[\ma{T}_1\kron \ldots \kron \ma{T}_R \right] \cdot \ma{U}_{\rm s}}_{\ma{U}_{\rm s}}
		+ \left[\ma{T}_1 \kron \ldots \kron \ma{T}_R \right] \cdot \Delta \ma{U}_{\rm s}  \notag \\
		& + \left[\Delta\ma{T}_1 \kron \ma{T}_2 \kron \ldots \kron \ma{T}_R \right]
           \cdot \ma{U}_{\rm s}
    + \ldots \label{eqn_app_proof_perttenexp_ir1} \\
    & + \left[\ma{T}_1 \kron \ma{T}_2 \kron \ldots \kron \Delta\ma{T}_R \right]
           \cdot \ma{U}_{\rm s}
    + \bigO{\Delta^2}, \notag
\end{align}
\fi
since all terms that contain more than one perturbation term can be absorbed into 
$\bigO{\Delta^2}$.
The first term in~\eqref{eqn_app_proof_perttenexp_ir1} represents the exact signal subspace
(cf.~Corollary~\ref{cor_app_proofs_perf_kronprojsubsp}),
hence the remaining terms \reva{are} the first order expansion of 
$\unfnot{\Delta\sig{\ten{\hat{U}}}}{R+1}^\trans$. 
As the first term of this expansion already agrees with Theorem~\ref{thm_perf_exptensub},
we still need to show that for the
remaining terms we have for $r=1, 2, \ldots, R$
\ifCLASSOPTIONdraftcls
\begin{align}
	\left[\ma{T}_1 \kron \ldots \kron \Delta\ma{T}_r \kron \ldots \ma{T}_R \right]
           \cdot \ma{U}_{\rm s}
  =  \left[\ma{T}_1 \kron \ldots \kron (\noi{\ma{U}}_r \cdot \noi{\ma{\Gamma}}_r\cdot\sigH{\ma{U}}_r) \kron \ldots \ma{T}_R \right]
           \cdot \ma{U}_{\rm s} + \bigO{\Delta^2}. \label{eqn_app_proof_perttenexp_ir1p5}
\end{align}
\else
\begin{align}
	&\left[\ma{T}_1 \kron \ldots \kron \Delta\ma{T}_r \kron \ldots \ma{T}_R \right]
           \cdot \ma{U}_{\rm s}
  =  \label{eqn_app_proof_perttenexp_ir1p5} \\
  & \left[\ma{T}_1 \kron \ldots \kron (\noi{\ma{U}}_r \cdot \noi{\ma{\Gamma}}_r\cdot\sigH{\ma{U}}_r) \kron \ldots \ma{T}_R \right]
           \cdot \ma{U}_{\rm s} + \bigO{\Delta^2}. \notag
\end{align}
\fi
As a first step, we expand the left-hand side of~\eqref{eqn_app_proof_perttenexp_ir1p5}
by applying Corollary~\ref{cor_app_proofs_perf_kronprojsubsp}
\ifCLASSOPTIONdraftcls
\begin{align}
	\left[\ma{T}_1 \kron \ldots \kron \Delta\ma{T}_r \kron \ldots \ma{T}_R \right]
           \cdot \ma{U}_{\rm s}
  & = \left[\ma{T}_1 \kron \ldots \kron \Delta\ma{T}_r \kron \ldots \ma{T}_R \right]
  \cdot
     \left[\ma{T}_1 \kron \ldots \kron \ma{T}_r \kron \ldots \ma{T}_R \right]
           \cdot \ma{U}_{\rm s} \notag \\
	& = \left[(\ma{T}_1\cdot \ma{T}_1) \kron \ldots \kron (\Delta\ma{T}_r\cdot \ma{T}_r) \kron \ldots (\ma{T}_R\cdot \ma{T}_R) \right]
           \cdot \ma{U}_{\rm s}  \notag \\   
	& = \left[\ma{T}_1 \kron \ldots \kron (\Delta\ma{T}_r\cdot \ma{T}_r) \kron \ldots \ma{T}_R \right]
           \cdot \ma{U}_{\rm s},           
\end{align}
\else
\begin{align}
	& \left[\ma{T}_1 \kron \ldots \kron \Delta\ma{T}_r \kron \ldots \ma{T}_R \right]
           \cdot \ma{U}_{\rm s} \notag \\
   = & \left[\ma{T}_1 \kron \ldots \kron \Delta\ma{T}_r \kron \ldots \ma{T}_R \right]
  \cdot
     \left[\ma{T}_1 \kron \ldots \kron \ma{T}_r \kron \ldots \ma{T}_R \right]
           \cdot \ma{U}_{\rm s} \notag \\
	 = & \left[(\ma{T}_1\cdot \ma{T}_1) \kron \ldots \kron (\Delta\ma{T}_r\cdot \ma{T}_r) \kron \ldots (\ma{T}_R\cdot \ma{T}_R) \right]
           \cdot \ma{U}_{\rm s}  \notag \\   
	 = & \left[\ma{T}_1 \kron \ldots \kron (\Delta\ma{T}_r\cdot \ma{T}_r) \kron \ldots \ma{T}_R \right]
           \cdot \ma{U}_{\rm s},           
\end{align}
\fi
where we have used the fact that the matrices $\ma{T}_r$ are projection matrices and hence idempotent, i.e.,
$\ma{T}_r \cdot \ma{T}_r = \ma{T}_r$.
What remains to be shown is that $\Delta\ma{T}_r\cdot \ma{T}_r = \noi{\ma{U}}_r \cdot \noi{\ma{\Gamma}}_r\cdot\sigH{\ma{U}}_r + \bigO{\Delta^2}$.
%
Since $\ma{\hat{T}}_r = \sig{\ma{\hat{U}}}_r \cdot \sigH{\ma{\hat{U}}}_r$
and $\sig{\ma{\hat{U}}}_r = \sig{\ma{{U}}}_r + \Delta \sig{\ma{{U}}}_r$, a first order expansion
for $\Delta\ma{T}_r$ is obtained via
\begin{align}
   \ma{\hat{T}}_r  & = \left(\sig{\ma{{U}}_r} + \Delta \sig{\ma{{U}}_r}\right)
             \cdot  \left(\sigH{\ma{U}_r} + \Delta \sigH{\ma{{U}}_r}\right) \notag \\
              & = \ma{{T}}_r + \sig{\ma{{U}}_r} \cdot \Delta \sigH{\ma{{U}}_r}
             + \Delta \sig{\ma{{U}}_r} \cdot \sigH{\ma{{U}}_r}
             + \bigO{\Delta^2} \notag \\
	\Rightarrow \Delta\ma{T}_r & = 
	\sig{\ma{{U}}_r} \cdot \Delta \sigH{\ma{{U}}_r}
             + \Delta \sig{\ma{{U}}_r} \cdot \sigH{\ma{{U}}_r} + \bigO{\Delta^2},
             \label{eqn_app_proof_perttenexp_ir2}
\end{align}
where in general we have $\Delta \sig{\ma{{U}}_r} = \noi{\ma{U}}_r \cdot \noi{\ma{\Gamma}}_r + \sig{\ma{U}}_r \cdot \sig{\ma{\Gamma}}_r + \bigO{\Delta^2}$ (cf.~\eqref{eqn_perf_unfexp}).
Using this expansion in~\eqref{eqn_app_proof_perttenexp_ir2} we obtain
\ifCLASSOPTIONdraftcls
\begin{align}
	\Delta\ma{T}_r \cdot \ma{T}_r & = 
	\sig{\ma{{U}}_r} \cdot \left( \noi{\ma{U}}_r \cdot \noi{\ma{\Gamma}}_r + \sig{\ma{U}}_r \cdot \sig{\ma{\Gamma}}_r \right)^\herm
	\cdot \ma{T}_r
	+
   \left( \noi{\ma{U}}_r \cdot \noi{\ma{\Gamma}}_r + \sig{\ma{U}}_r \cdot \sig{\ma{\Gamma}}_r \right)
    \cdot \sigH{\ma{{U}}_r}\cdot \ma{T}_r + \bigO{\Delta^2} \notag \\
    & = 
    \sig{\ma{{U}}_r} \cdot \noiH{\ma{\Gamma}}_r \cdot \noiH{\ma{U}}_r\cdot \ma{T}_r +
    \sig{\ma{{U}}_r} \cdot \sigH{\ma{\Gamma}}_r \cdot \sigH{\ma{U}}_r\cdot \ma{T}_r \notag \\ & +
    \sig{\ma{{U}}_r} \cdot \sig{\ma{\Gamma}}_r \cdot \sigH{\ma{U}}_r\cdot \ma{T}_r +
    \noi{\ma{{U}}_r} \cdot \noi{\ma{\Gamma}}_r \cdot \sigH{\ma{U}}_r\cdot \ma{T}_r +
    \bigO{\Delta^2} \notag \\
    & = 
    \sig{\ma{{U}}_r} \cdot \noiH{\ma{\Gamma}}_r \cdot \underbrace{\noiH{\ma{U}}_r\cdot \ma{T}_r}_{\ma{0}_{M_r-p_r \times M_r}} +
    \sig{\ma{{U}}_r} \cdot \underbrace{\left(\sig{\ma{\Gamma}}_r + \sigH{\ma{\Gamma}}_r \right)}_{\ma{0}_{p_r \times p_r}} \cdot \sigH{\ma{U}}_r \cdot \ma{T}_r+    
    \noi{\ma{{U}}_r} \cdot \noi{\ma{\Gamma}}_r \cdot \underbrace{\sigH{\ma{U}}_r \cdot \ma{T}_r}_{\sigH{\ma{U}}_r} +
    \bigO{\Delta^2} \notag \\
    & =     
    \noi{\ma{{U}}_r} \cdot \noi{\ma{\Gamma}}_r \cdot \sigH{\ma{U}}_r +
    \bigO{\Delta^2}, \label{eqn_app_proof_perttenexp_ir3}
\end{align}
\else
\begin{align}
	& \Delta\ma{T}_r \cdot \ma{T}_r \notag \\ = &
	\sig{\ma{{U}}_r} \cdot \left( \noi{\ma{U}}_r \cdot \noi{\ma{\Gamma}}_r 
	+ \sig{\ma{U}}_r \cdot \sig{\ma{\Gamma}}_r \right)^\herm
	\cdot \ma{T}_r \notag \\ &
	+
   \left( \noi{\ma{U}}_r \cdot \noi{\ma{\Gamma}}_r + \sig{\ma{U}}_r \cdot \sig{\ma{\Gamma}}_r \right)
    \cdot \sigH{\ma{{U}}_r}\cdot \ma{T}_r + \bigO{\Delta^2} \notag \\
     = &
    \sig{\ma{{U}}_r} \cdot \noiH{\ma{\Gamma}}_r \cdot \noiH{\ma{U}}_r\cdot \ma{T}_r +
    \sig{\ma{{U}}_r} \cdot \sigH{\ma{\Gamma}}_r \cdot \sigH{\ma{U}}_r\cdot \ma{T}_r \notag \\ & +
    \sig{\ma{{U}}_r} \cdot \sig{\ma{\Gamma}}_r \cdot \sigH{\ma{U}}_r\cdot \ma{T}_r +
    \noi{\ma{{U}}_r} \cdot \noi{\ma{\Gamma}}_r \cdot \sigH{\ma{U}}_r\cdot \ma{T}_r +
    \bigO{\Delta^2} \notag \\
     = &
    \sig{\ma{{U}}_r} \cdot \noiH{\ma{\Gamma}}_r \cdot \underbrace{\noiH{\ma{U}}_r\cdot \ma{T}_r}_{\ma{0}_{M_r-p_r \times M_r}} +
    \sig{\ma{{U}}_r} \cdot \underbrace{\left(\sig{\ma{\Gamma}}_r + \sigH{\ma{\Gamma}}_r \right)}_{\ma{0}_{p_r \times p_r}} \cdot \sigH{\ma{U}}_r \cdot \ma{T}_r
    \notag \\ &
    +    
    \noi{\ma{{U}}_r} \cdot \noi{\ma{\Gamma}}_r \cdot \underbrace{\sigH{\ma{U}}_r \cdot \ma{T}_r}_{\sigH{\ma{U}}_r} +
    \bigO{\Delta^2} \notag \\
     =  &
    \noi{\ma{{U}}_r} \cdot \noi{\ma{\Gamma}}_r \cdot \sigH{\ma{U}}_r +
    \bigO{\Delta^2}, \label{eqn_app_proof_perttenexp_ir3}
\end{align}
\fi
which is the desired result. 
Note that 
$\sig{\ma{\Gamma}}_r + \sigH{\ma{\Gamma}}_r = \ma{0}_{p_r \times p_r}$ follows from the fact that $\sig{\ma{\Gamma}}_r$
is a skew-Hermitian matrix (which is apparent from its definition shown in~\eqref{eqn_perf_unfexp}).
This completes the proof of the theorem. \qed

\section{Proof of Theorem~\ref{thm_perf_mse}}\label{sec_app_proof_perf_mse}

For $R$-D Standard ESPRIT, the explicit first-order expansion of the estimation error 
for the $k$-th spatial frequency in the $r$-th mode 
in terms of the signal subspace estimation error $\Delta \ma{U}_{\rm s}$ is 
given by~\eqref{eqn_perf_1dse_expl}.
This error can be expressed in terms of the perturbation (noise) matrix $\ma{N}$
by inserting \eqref{eqn_perf_linexp_vac}.
We obtain
%
%
\ifCLASSOPTIONdraftcls
\begin{align}
    \Delta \mu_k^{(r)} = & \imagof{
       \ma{r}_k^{(r)^\trans}
       \cdot
       \vecof{
       \Delta \ma{U}_{\rm s}}
    } + \bigO{\Delta^2}
    =  \imagof{
       \ma{r}_k^{(r)^\trans}
       \cdot
       \ma{W}_{\rm mat} \cdot 
       \vecof{\ma{N}}
    } + \bigO{\Delta^2} \label{eqn_app_proof_pert_mse_ir1} \\
\ma{r}_k^{(r)^\trans} = & \ma{q}_k^{(r)^\trans} \kron \left(
\ma{p}_k^{(r)^\trans} \cdot \left( \ma{\tilde{J}}_1^{(r)} \cdot \ma{U}_{\rm s} \right)^+
       \cdot
       \left[ \ma{\tilde{J}}_2^{(r)} / \lambda_k^{(r)} -\ma{\tilde{J}}_1^{(r)} \right]\right)  \notag \\
\ma{W}_{\rm mat} = &
\left(\ma{\Sigma}_{\rm s}^{-1} \cdot \ma{V}_{\rm s}^\trans \right) \kron \left( {\ma{U}}_{\rm n}  \cdot {\ma{U}}_{\rm n}^\herm \right)  \notag
\end{align}
\else
\begin{align}
    \Delta \mu_k^{(r)} = & \imagof{
       \ma{r}_k^{(r)^\trans}
       \cdot
       \vecof{
       \Delta \ma{U}_{\rm s}}
    } + \bigO{\Delta^2} \notag \\
    = & \imagof{
       \ma{r}_k^{(r)^\trans}
       \cdot
       \ma{W}_{\rm mat} \cdot 
       \vecof{\ma{N}}
    } + \bigO{\Delta^2} \label{eqn_app_proof_pert_mse_ir1} \\
\ma{r}_k^{(r)^\trans} = & \ma{q}_k^{(r)^\trans} \kron \left(
\ma{p}_k^{(r)^\trans} \cdot \left( \ma{\tilde{J}}_1^{(r)} \cdot \ma{U}_{\rm s} \right)^+
       \cdot
       \left[ \ma{\tilde{J}}_2^{(r)} / \lambda_k^{(r)} -\ma{\tilde{J}}_1^{(r)} \right]\right)  \notag \\
\ma{W}_{\rm mat} = &
\left(\ma{\Sigma}_{\rm s}^{-1} \cdot \ma{V}_{\rm s}^\trans \right) \kron \left( {\ma{U}}_{\rm n}  \cdot {\ma{U}}_{\rm n}^\herm \right),   \notag    
\end{align}
\fi
which follows directly by applying property~\eqref{eqn_veckron} 
to \eqref{eqn_perf_1dse_expl} and to \eqref{eqn_perf_linexp_vac}.
%
%
In order to expand $\expvof{(\Delta \mu_k^{(r)})^2}$ using~\eqref{eqn_app_proof_pert_mse_ir1},
\reva{we} observe that for arbitrary complex vectors $\ma{z}_1,\ma{z}_2$ we have
%
   $\imagof{\ma{z}_1^\trans \cdot \ma{z}_2}  = \imagof{\ma{z}_1}^\trans \cdot \realof{\ma{z}_2} 
   + \realof{\ma{z}_1}^\trans \cdot \imagof{\ma{z}_2}$ 
   and hence
\ifCLASSOPTIONdraftcls
\begin{align}   
   \imagof{\ma{z}_1^\trans \cdot \ma{z}_2}^2 
     & = \imagof{\ma{z}_1}^\trans \cdot \realof{\ma{z}_2} \cdot \realof{\ma{z}_2}^\trans \cdot \imagof{\ma{z}_1} 
       + \realof{\ma{z}_1}^\trans \cdot \imagof{\ma{z}_2} \cdot \imagof{\ma{z}_2}^\trans \cdot \realof{\ma{z}_1} \notag \\ &
       + \imagof{\ma{z}_1}^\trans \cdot \realof{\ma{z}_2} \cdot \imagof{\ma{z}_2}^\trans \cdot \realof{\ma{z}_1} 
       + \realof{\ma{z}_1}^\trans \cdot \imagof{\ma{z}_2} \cdot \realof{\ma{z}_2}^\trans \cdot \imagof{\ma{z}_1} 
       \label{eqn_app_proof_perf_imz1z2sq}
\end{align}
\else
   \begin{align}   
    \begin{split}
   \imagof{\ma{z}_1^\trans \cdot \ma{z}_2}^2 
     & = \imagof{\ma{z}_1}^\trans \cdot \realof{\ma{z}_2} \cdot \realof{\ma{z}_2}^\trans \cdot \imagof{\ma{z}_1} \\
     & + \realof{\ma{z}_1}^\trans \cdot \imagof{\ma{z}_2} \cdot \imagof{\ma{z}_2}^\trans \cdot \realof{\ma{z}_1} \\
     & + \imagof{\ma{z}_1}^\trans \cdot \realof{\ma{z}_2} \cdot \imagof{\ma{z}_2}^\trans \cdot \realof{\ma{z}_1} \\
     & + \realof{\ma{z}_1}^\trans \cdot \imagof{\ma{z}_2} \cdot \realof{\ma{z}_2}^\trans \cdot \imagof{\ma{z}_1}       
     \end{split}
   \end{align}
   \label{eqn_app_proof_perf_imz1z2sq}
\fi
Using~\eqref{eqn_app_proof_pert_mse_ir1} in $\expvof{(\Delta \mu_k^{(r)})^2}$ and 
applying~\eqref{eqn_app_proof_perf_imz1z2sq} for $\ma{z}_1^\trans = \ma{r}_k^{(r)^\trans} \cdot \ma{W}_{\rm mat}$
and $\ma{z}_2 = \vecof{\ma{N}} = \ma{n}$ we find
\ifCLASSOPTIONdraftcls
\begin{align}
   & \expvof{(\Delta \mu_k^{(r)})^2}  \label{eqn_app_proof_pert_mse_ir2}
 \\ &   
 = \expvof{\imagof{\ma{z}_1^\trans} \cdot \realof{{\ma{n}}} \cdot \realof{{\ma{n}}}^\trans \cdot \imagof{\ma{z}_1}} 
 + \expvof{\realof{\ma{z}_1^\trans} \cdot \imagof{{\ma{n}}} \cdot \imagof{{\ma{n}}}^\trans \cdot \realof{\ma{z}_1}} \notag \\ &
 + \expvof{\imagof{\ma{z}_1^\trans} \cdot \realof{{\ma{n}}} \cdot \imagof{{\ma{n}}}^\trans \cdot \realof{\ma{z}_1}} 
 + \expvof{\realof{\ma{z}_1^\trans} \cdot \imagof{{\ma{n}}} \cdot \realof{{\ma{n}}}^\trans \cdot \imagof{\ma{z}_1}} 
 \notag
\end{align}
\else
\begin{align}
   \expvof{(\Delta \mu_k^{(r)})^2} & 
 = \expvof{\imagof{\ma{z}_1^\trans} \cdot \realof{{\ma{n}}} \cdot \realof{{\ma{n}}}^\trans \cdot \imagof{\ma{z}_1}} \notag \\ &
 + \expvof{\realof{\ma{z}_1^\trans} \cdot \imagof{{\ma{n}}} \cdot \imagof{{\ma{n}}}^\trans \cdot \realof{\ma{z}_1}} \notag \\ &
 + \expvof{\imagof{\ma{z}_1^\trans} \cdot \realof{{\ma{n}}} \cdot \imagof{{\ma{n}}}^\trans \cdot \realof{\ma{z}_1}} \notag \\ &
 + \expvof{\realof{\ma{z}_1^\trans} \cdot \imagof{{\ma{n}}} \cdot \realof{{\ma{n}}}^\trans \cdot \imagof{\ma{z}_1}} 
 \label{eqn_app_proof_pert_mse_ir2}
\end{align}
\fi
Since the only random quantity in~\eqref{eqn_app_proof_pert_mse_ir2} is the vector of noise samples $\ma{n}$,
we can move $\ma{z}_1$
out of the expectation operator.
We are then left with the covariance matrices of the real part and the imaginary part of the noise, respectively,
as well as with the cross-covariance matrix between the real and the imaginary part. 
To proceed we require the following lemma:
\begin{lem}\label{lem_cov_rv_noncirc}
Let $\ma{n}$ be a zero mean random vector with covariance matrix
$\ma{R}_{\rm nn} = \expvof{\ma{n} \cdot \ma{n}^\herm}$
and pseudo-covariance matrix
$\ma{C}_{\rm nn} = \expvof{\ma{n} \cdot \ma{n}^\trans}$.
Then, the covariance matrices of the real part of $\ma{n}$, the imaginary part of $\ma{n}$
and the cross-covariance between the real and the imaginary part of $\ma{n}$ are given by
	\begin{align}
	  \ma{R}_{\rm nn}^{(\rm R,R)} \eqdef
	 \expvof{\realof{{\ma{n}}} \cdot \realof{{\ma{n}}}^\trans} &= \frac{1}{2} \realof{\ma{R}_{\rm nn} + \ma{C}_{\rm nn}} 
	 \notag \\
	  \ma{R}_{\rm nn}^{(\rm I,I)} \eqdef
	 \expvof{\imagof{{\ma{n}}} \cdot \imagof{{\ma{n}}}^\trans} &= \frac{1}{2} \realof{\ma{R}_{\rm nn} - \ma{C}_{\rm nn}} 	 \notag \\
	  \ma{R}_{\rm nn}^{(\rm R,I)} \eqdef
	 \expvof{\realof{{\ma{n}}} \cdot \imagof{{\ma{n}}}^\trans} &= -\frac{1}{2} \imagof{\ma{R}_{\rm nn} -\ma{C}_{\rm nn}} 
	 \notag \\
	  \ma{R}_{\rm nn}^{(\rm I,R)} \eqdef
	 \expvof{\imagof{{\ma{n}}} \cdot \realof{{\ma{n}}}^\trans} &= \frac{1}{2} \imagof{\ma{R}_{\rm nn} + \ma{C}_{\rm nn}}.
	 \notag
	\end{align}
\end{lem}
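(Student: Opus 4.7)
Proof proposal: The lemma is a purely algebraic identity relating the covariances of the real/imaginary parts of $\ma{n}$ to its (ordinary) covariance matrix $\ma{R}_{\rm nn}$ and its pseudo-covariance $\ma{C}_{\rm nn}$. The plan is to decompose $\ma{n} = \ma{n}_R + \j \ma{n}_I$ where $\ma{n}_R = \realof{\ma{n}}$ and $\ma{n}_I = \imagof{\ma{n}}$ are real-valued random vectors with zero mean, and then to expand both outer products $\ma{n}\cdot\ma{n}^\herm$ and $\ma{n}\cdot\ma{n}^\trans$ in terms of $\ma{n}_R$ and $\ma{n}_I$.

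Carrying out the expansions gives
\begin{align}
\ma{n}\ma{n}^\herm &= \ma{n}_R\ma{n}_R^\trans + \ma{n}_I\ma{n}_I^\trans + \j\,(\ma{n}_I\ma{n}_R^\trans - \ma{n}_R\ma{n}_I^\trans), \notag \\
\ma{n}\ma{n}^\trans &= \ma{n}_R\ma{n}_R^\trans - \ma{n}_I\ma{n}_I^\trans + \j\,(\ma{n}_I\ma{n}_R^\trans + \ma{n}_R\ma{n}_I^\trans). \notag
\end{align}
Taking the expectation on both sides and using the definitions of $\ma{R}_{\rm nn}^{(\rm R,R)}$, $\ma{R}_{\rm nn}^{(\rm I,I)}$, $\ma{R}_{\rm nn}^{(\rm R,I)}$, and $\ma{R}_{\rm nn}^{(\rm I,R)}$ then expresses $\ma{R}_{\rm nn}$ and $\ma{C}_{\rm nn}$ as complex linear combinations of these four real matrices.

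Next, since the four real covariance matrices are by construction real, I would identify real and imaginary parts on both sides of the resulting two complex matrix equations, producing four real matrix identities: $\realof{\ma{R}_{\rm nn}} = \ma{R}_{\rm nn}^{(\rm R,R)} + \ma{R}_{\rm nn}^{(\rm I,I)}$, $\realof{\ma{C}_{\rm nn}} = \ma{R}_{\rm nn}^{(\rm R,R)} - \ma{R}_{\rm nn}^{(\rm I,I)}$, $\imagof{\ma{R}_{\rm nn}} = \ma{R}_{\rm nn}^{(\rm I,R)} - \ma{R}_{\rm nn}^{(\rm R,I)}$, and $\imagof{\ma{C}_{\rm nn}} = \ma{R}_{\rm nn}^{(\rm I,R)} + \ma{R}_{\rm nn}^{(\rm R,I)}$. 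Solving this trivial $2 \times 2$ linear system in each case (by pairwise addition and subtraction of the two relevant identities and division by two) yields exactly the four expressions claimed in the lemma.

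There is no genuine obstacle here; the result is a bookkeeping exercise and the only thing to be careful about is keeping track of signs when separating $\j(\ma{n}_I\ma{n}_R^\trans \pm \ma{n}_R\ma{n}_I^\trans)$ into the real and imaginary contributions to $\ma{R}_{\rm nn}$ and $\ma{C}_{\rm nn}$. No assumptions on the distribution of $\ma{n}$ beyond the existence and finiteness of its second-order moments (already present in the hypothesis) are invoked, which is consistent with the spirit of the framework developed in the paper.
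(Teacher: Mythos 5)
Your proposal is correct and follows essentially the same route as the paper: insert $\ma{n} = \realof{\ma{n}} + \j\,\imagof{\ma{n}}$ into the definitions of $\ma{R}_{\rm nn}$ and $\ma{C}_{\rm nn}$, take expectations to obtain the two complex matrix identities, and solve the resulting real linear system for the four real covariance blocks. The sign bookkeeping in your expansions and in the final inversion matches the lemma's statement, so nothing further is needed.
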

\begin{proof}
   To prove this Lemma we expand 
   $\ma{R}_{\rm nn}$ and $\ma{C}_{\rm nn}$ by inserting
   $\ma{n} = \realof{\ma{n}} + \j \imagof{\ma{n}}$. 
   We then obtain 
   \begin{align}
      \ma{R}_{\rm nn} & = \ma{R}_{\rm nn}^{(\rm R,R)} + \ma{R}_{\rm nn}^{(\rm I,I)}
                       + \j \left(\ma{R}_{\rm nn}^{(\rm I,R)} - \ma{R}_{\rm nn}^{(\rm R,I)}\right) 
                       \label{eqn_app_proof_lemcovnonc_ir1} \\
      \ma{C}_{\rm nn} & = \ma{R}_{\rm nn}^{(\rm R,R)} - \ma{R}_{\rm nn}^{(\rm I,I)}
                       + \j \left(\ma{R}_{\rm nn}^{(\rm I,R)} + \ma{R}_{\rm nn}^{(\rm R,I)}\right).
                       \label{eqn_app_proof_lemcovnonc_ir2} 
   \end{align}
   Since $\ma{R}_{\rm nn}^{(\rm R,R)}$, $\ma{R}_{\rm nn}^{(\rm R,I)}$, 
   $\ma{R}_{\rm nn}^{(\rm I,R)}$, and $\ma{R}_{\rm nn}^{(\rm I,I)}$ are real-valued, the 
   solution of \eqref{eqn_app_proof_lemcovnonc_ir1} and \eqref{eqn_app_proof_lemcovnonc_ir2} 
   is straightforward.
\end{proof}


Using Lemma~\ref{lem_cov_rv_noncirc} in~\eqref{eqn_app_proof_pert_mse_ir2}
we obtain
for the mean square error
\ifCLASSOPTIONdraftcls
\begin{align}
   \expvof{(\Delta \mu_k^{(r)})^2} & 
 = \frac{1}{2} \Big(
   {\imagof{\ma{z}_1^\trans} \cdot \realof{\ma{R}_{\rm nn} + \ma{C}_{\rm nn}} \cdot \imagof{\ma{z}_1}} 
 + {\realof{\ma{z}_1^\trans} \cdot \realof{\ma{R}_{\rm nn} - \ma{C}_{\rm nn}} \cdot \realof{\ma{z}_1}} \notag \\ &
 + {\imagof{\ma{z}_1^\trans} \cdot \imagof{-\ma{R}_{\rm nn} +\ma{C}_{\rm nn}} \cdot \realof{\ma{z}_1}} 
 + {\realof{\ma{z}_1^\trans} \cdot \imagof{\ma{R}_{\rm nn} + \ma{C}_{\rm nn}} \cdot \imagof{\ma{z}_1}} \Big)
 \label{eqn_app_proof_pert_mse_ir3}
\end{align}
\else
\begin{align}
   \expvof{(\Delta \mu_k^{(r)})^2} & 
 = \frac{1}{2} \Big(
   {\imagof{\ma{z}_1^\trans} \cdot \realof{\ma{R}_{\rm nn} + \ma{C}_{\rm nn}} \cdot \imagof{\ma{z}_1}} \notag \\ &
 + {\realof{\ma{z}_1^\trans} \cdot \realof{\ma{R}_{\rm nn} - \ma{C}_{\rm nn}} \cdot \realof{\ma{z}_1}} \notag \\ &
 + {\imagof{\ma{z}_1^\trans} \cdot \imagof{-\ma{R}_{\rm nn} +\ma{C}_{\rm nn}} \cdot \realof{\ma{z}_1}} \notag \\ &
 + {\realof{\ma{z}_1^\trans} \cdot \imagof{\ma{R}_{\rm nn} + \ma{C}_{\rm nn}} \cdot \imagof{\ma{z}_1}} \Big)
 \label{eqn_app_proof_pert_mse_ir3}
\end{align}
\fi

Finally, \eqref{eqn_app_proof_pert_mse_ir3} can be expressed in more compact form
as
\begin{align}
   \expvof{(\Delta \mu_k^{(r)})^2} & = \frac{1}{2}
   \left( \ma{z}_1^\herm \cdot \ma{R}_{\rm nn}^\trans \cdot \ma{z}_1
   - \realof{\ma{z}_1^\trans \cdot \ma{C}_{\rm nn}^\trans \cdot \ma{z}_1} \right)
\end{align}
for $\ma{z}_1 = \ma{W}_{\rm mat}^\trans \cdot \ma{r}_k^{(r)}$,
which is the desired result. \qed

Note that Gaussianity is not needed for these properties to hold. Consequently, the MSE
expressions are still valid if the noise is not Gaussian. We only need it to be zero
mean.
Also, note that for the special case of circularly symmetric white noise we have
$\ma{R}_{\rm nn} = \sigma_{\rm n}^2 \cdot \ma{I}_{MN}$
and $\ma{C}_{\rm nn} = \ma{0}_{MN \times MN}$ and hence the MSE simplifies into

\ifCLASSOPTIONdraftcls
\begin{align}
   \expvof{(\Delta \mu_k^{(r)})^2} 
 & = \frac{\sigma_{\rm n}^2}{2} \cdot 
    \twonorm{\ma{z}_1}^2 =  \frac{\sigma_{\rm n}^2}{2} \cdot 
    \twonorm{\ma{W}_{\rm mat}^\trans \cdot \ma{r}_k^{(r)}}^2.
 \label{eqn_app_proof_pert_mse_whitenoise}
\end{align}
\else
\begin{align}
 &  \expvof{(\Delta \mu_k^{(r)})^2} 
    = \frac{\sigma_{\rm n}^2}{2} \cdot 
    \twonorm{\ma{z}_1}^2 =  \frac{\sigma_{\rm n}^2}{2} \cdot 
    \twonorm{\ma{W}_{\rm mat}^\trans \cdot \ma{r}_k^{(r)}}^2.
 \label{eqn_app_proof_pert_mse_whitenoise}
\end{align}
\fi

The procedure for 2-D Standard Tensor-ESPRIT is in fact quite similar. 
The first step is to express the estimation error in $\mu_k^{(r)}$ in terms
of the perturbation 
$\ma{n} = \vecof{\ma{N}} = \vecof{\unf{N}{3}^\trans}$ {(cf.~\eqref{eqn_subsb_dm_matrix_rd})}. This
expression takes the form
\ifCLASSOPTIONdraftcls
\begin{align}
    \Delta \mu_k^{(r)} = \imagof{
       \ma{r}_k^{(r)^\trans}
       \cdot
       \vecof{
       \unfnot{\Delta\sig{\ten{\hat{U}}}}{R+1}^\trans
       }
    } + \bigO{\Delta^2}
    =  \imagof{
       \ma{r}_k^{(r)^\trans}
       \cdot
       \ma{W}_{\rm ten} \cdot 
       \vecof{\ma{N}}
    } + \bigO{\Delta^2} \label{eqn_app_proof_pert_mste_ir1}
\end{align}
\else
\begin{align}
    \Delta \mu_k^{(r)} & = \imagof{
       \ma{r}_k^{(r)^\trans}
       \cdot
       \vecof{
       \unfnot{\Delta\sig{\ten{\hat{U}}}}{R+1}^\trans
       }
    } + \bigO{\Delta^2} \notag \\
    & =  \imagof{
       \ma{r}_k^{(r)^\trans}
       \cdot
       \ma{W}_{\rm ten} \cdot 
       \vecof{\ma{N}}
    } + \bigO{\Delta^2} \label{eqn_app_proof_pert_mste_ir1}
\end{align}
\fi
since $\unfnot{\Delta\sig{\ten{\hat{U}}}}{R+1}^\trans$ depends linearly on 
$\vecof{\ma{N}}$. Due to the fact that~\eqref{eqn_app_proof_pert_mste_ir1} has the same form
as~\eqref{eqn_app_proof_pert_mse_ir1}, the second step to expand the MSE expressions
follows the same lines as for $R$-D Standard ESPRIT, which immediately shows that the
MSE becomes 
\begin{align}
   \expvof{(\Delta \mu_k^{(r)})^2} & = \frac{1}{2}
   \left( \ma{z}_1^\herm \cdot \ma{R}_{\rm nn}^\trans \cdot \ma{z}_1
   - \realof{\ma{z}_1^\trans \cdot \ma{C}_{\rm nn}^\trans \cdot \ma{z}_1} \right)
\end{align}
for $\ma{z}_1 = \ma{W}_{\rm ten}^\trans \cdot \ma{r}_k^{(r)}$.
%
%
Therefore, the final step is finding an explicit expression for $\ma{W}_{\rm ten}$
which satisfies
\begin{align}
   \unfnot{\Delta\sig{\ten{\hat{U}}}}{3}^\trans
   = 
   \ma{W}_{\rm ten} \cdot 
       \vecof{\ma{N}}
   + \bigO{\Delta^2}.
\end{align}
Recall from Theorem~\ref{thm_perf_exptensub} that
for $R=2$, the HOSVD-based \reva{signal} subspace estimation error
$\unfnot{\Delta\sig{\ten{\hat{U}}}}{R+1}$ can be expanded into
\ifCLASSOPTIONdraftcls
	  \begin{align}
        \unfnot{\Delta\sig{\ten{\hat{U}}}}{3}^\trans
        = 
           \left(
           {\ma{T}}_1 \kron {\ma{T}}_2 
           \right) \cdot \Delta \ma{{U}}_{\rm s}
            + 
           \left(
           \left[\Delta\sig{\ma{U}}_1 \cdot \sigH{\ma{U}}_1\right]
           \kron {\ma{T}}_2 
           \right) \cdot \ma{{U}}_{\rm s} 
            +
           \left(
           {\ma{T}}_1           
           \kron \left[\Delta\sig{\ma{U}}_2 \cdot \sigH{\ma{U}}_2\right]
           \right) \cdot \ma{{U}}_{\rm s} 
            + \bigO{\Delta^2}, \label{eqn_app_proof_perf_exptensubr2}
    \end{align}
\else
    \begin{align}
        \unfnot{\Delta\sig{\ten{\hat{U}}}}{3}^\trans
        = &
           \left(
           {\ma{T}}_1 \kron {\ma{T}}_2 
           \right) \cdot \Delta \ma{{U}}_{\rm s}
            + 
           \left(
           \left[\Delta\sig{\ma{U}}_1 \cdot \sigH{\ma{U}}_1\right]
           \kron {\ma{T}}_2 
           \right) \cdot \ma{{U}}_{\rm s} \notag \\
           & +
           \left(
           {\ma{T}}_1           
           \kron \left[\Delta\sig{\ma{U}}_2 \cdot \sigH{\ma{U}}_2\right]
           \right) \cdot \ma{{U}}_{\rm s} 
            + \bigO{\Delta^2}, \label{eqn_app_proof_perf_exptensubr2}
    \end{align}
\fi    
where $\Delta \ma{{U}}_{\rm s}$, $\Delta\sig{\ma{U}}_1$, and $\Delta\sig{\ma{U}}_2$
are given by
\ifCLASSOPTIONdraftcls
   \begin{align}
     \Delta \ma{U}_{\rm s} & =      
   {\ma{U}}_{\rm n}  \cdot {\ma{U}}_{\rm n}^\herm \cdot \ma{N} \cdot \ma{V}_{\rm s} \cdot \ma{\Sigma}_{\rm s}^{-1} 
   = \noiC{\ma{V}}_3 \cdot \noiT{\ma{V}}_3 \cdot \ma{N} \cdot \sigC{\ma{U}}_3 \cdot \siginv{\ma{\Sigma}}_3 
   \quad \mbox{and} \notag \\
     \Delta \sig{\ma{U}}_r & = \noi{\ma{U}}_r \cdot \noiH{\ma{U}}_r \cdot \unf{N}{r} \cdot \sig{\ma{V}}_r \cdot \siginv{\ma{\Sigma}}_r \quad \mbox{for $r=1, 2$}. \label{eqn_app_proof_perf_exptensubdelurs}
   \end{align}
\else
   \begin{align}
     \Delta \ma{U}_{\rm s} & =      
   {\ma{U}}_{\rm n}  \cdot {\ma{U}}_{\rm n}^\herm \cdot \ma{N} \cdot \ma{V}_{\rm s} \cdot \ma{\Sigma}_{\rm s}^{-1} 
   \notag \\
   & = \noiC{\ma{V}}_3 \cdot \noiT{\ma{V}}_3 \cdot \ma{N} \cdot \sigC{\ma{U}}_3 \cdot \siginv{\ma{\Sigma}}_3 
   \quad \mbox{and} \notag \\
     \Delta \sig{\ma{U}}_r & = \noi{\ma{U}}_r \cdot \noiH{\ma{U}}_r \cdot \unf{N}{r} \cdot \sig{\ma{V}}_r \cdot \siginv{\ma{\Sigma}}_r \quad \mbox{for $r=1, 2$}. \label{eqn_app_proof_perf_exptensubdelurs}
   \end{align}
\fi    
The first term in~\eqref{eqn_app_proof_perf_exptensubr2} is easily vectorized 
by applying property~\eqref{eqn_veckron} which yields the first term
of $\ma{W}_{\rm ten}$ as
\ifCLASSOPTIONdraftcls
 	\begin{align}
	\vecof{\left({\ma{T}}_1 \kron {\ma{T}}_2 \right) \cdot \Delta \ma{{U}}_{\rm s}}
	& = 
	\vecof{\left({\ma{T}}_1 \kron {\ma{T}}_2 \right) \cdot 
	\noiC{\ma{V}}_3 \cdot \noiT{\ma{V}}_3 \cdot \ma{N} \cdot \sigC{\ma{U}}_3 \cdot \siginv{\ma{\Sigma}}_3}
	 \notag \\
	& = 
	\left(\sigC{\ma{U}}_3 \cdot \siginv{\ma{\Sigma}}_3\right)^\trans \kron 
	\left[
	\left({\ma{T}}_1 \kron {\ma{T}}_2 \right) \cdot 
	\noiC{\ma{V}}_3 \cdot \noiT{\ma{V}}_3 \right] \reva{\cdot \vecof{\ma{N}}} \notag \\
	& = 
	\left(\siginv{\ma{\Sigma}}_3 \cdot \sigH{\ma{U}}_3 \right) \kron 
	\left[
	\left({\ma{T}}_1 \kron {\ma{T}}_2 \right) \cdot 
	\noiC{\ma{V}}_3 \cdot \noiT{\ma{V}}_3 \right] \reva{\cdot \vecof{\ma{N}}}.
	\notag
	\end{align} 
\else
	\begin{align}
	& \vecof{\left({\ma{T}}_1 \kron {\ma{T}}_2 \right) \cdot \Delta \ma{{U}}_{\rm s}} \notag \\
	& = 
	\vecof{\left({\ma{T}}_1 \kron {\ma{T}}_2 \right) \cdot 
	\noiC{\ma{V}}_3 \cdot \noiT{\ma{V}}_3 \cdot \ma{N} \cdot \sigC{\ma{U}}_3 \cdot \siginv{\ma{\Sigma}}_3}
	 \notag \\
	& = 
	\left(\sigC{\ma{U}}_3 \cdot \siginv{\ma{\Sigma}}_3\right)^\trans \kron 
	\left[
	\left({\ma{T}}_1 \kron {\ma{T}}_2 \right) \cdot 
	\noiC{\ma{V}}_3 \cdot \noiT{\ma{V}}_3 \right] \reva{\cdot \vecof{\ma{N}}} \notag \\
	& = 
	\left(\siginv{\ma{\Sigma}}_3 \cdot \sigH{\ma{U}}_3 \right) \kron 
	\left[
	\left({\ma{T}}_1 \kron {\ma{T}}_2 \right) \cdot 
	\noiC{\ma{V}}_3 \cdot \noiT{\ma{V}}_3 \right] \reva{\cdot \vecof{\ma{N}}}.
	\notag
	\end{align} 
\fi
However, for the second term in~\eqref{eqn_app_proof_perf_exptensubr2} we get
\ifCLASSOPTIONdraftcls
\begin{align}
  & \vecof{  \left(
           \left[\noi{\ma{U}}_1 \cdot \noiH{\ma{U}}_1 \cdot \unf{N}{1} \cdot \sig{\ma{V}}_1 \cdot \siginv{\ma{\Sigma}}_1 \cdot \sigH{\ma{U}}_1\right]
           \kron {\ma{T}}_2 
           \right) \cdot \ma{{U}}_{\rm s} } \notag \\
   = & \left( \ma{{U}}_{\rm s}^\trans \kron \ma{I}_M \right)
     \cdot 
    \vecof{
           \left[\noi{\ma{U}}_1 \cdot \noiH{\ma{U}}_1 \cdot \unf{N}{1} \cdot \sig{\ma{V}}_1 \cdot \siginv{\ma{\Sigma}}_1 \cdot \sigH{\ma{U}}_1\right]
           \kron {\ma{T}}_2 
           } \notag
\end{align}
\else
\begin{align}
  & \vecof{  \left(
           \left[\noi{\ma{U}}_1 \cdot \noiH{\ma{U}}_1 \cdot \unf{N}{1} \cdot \sig{\ma{V}}_1 \cdot \siginv{\ma{\Sigma}}_1 \cdot \sigH{\ma{U}}_1\right]
           \kron {\ma{T}}_2 
           \right) \cdot \ma{{U}}_{\rm s} } \notag \\
   = & \left( \ma{{U}}_{\rm s}^\trans \kron \ma{I}_M \right)
     \cdot  \notag \\ &
    \vecof{
           \left[\noi{\ma{U}}_1 \cdot \noiH{\ma{U}}_1 \cdot \unf{N}{1} \cdot \sig{\ma{V}}_1 \cdot \siginv{\ma{\Sigma}}_1 \cdot \sigH{\ma{U}}_1\right]
           \kron {\ma{T}}_2 
           } \notag
\end{align}
\fi
\reva{by inserting~\eqref{eqn_app_proof_perf_exptensubdelurs} for $\Delta \sig{\ma{U}}_1$.}
To proceed we need to rewrite the vectorization of a Kronecker product. 
After straightforward calculations we obtain
\ifCLASSOPTIONdraftcls
\begin{align}
& \left( \ma{{U}}_{\rm s}^\trans \kron \ma{I}_M \right)
     \cdot 
    \vecof{
           \left[\noi{\ma{U}}_1 \cdot \noiH{\ma{U}}_1 \cdot \unf{N}{1} \cdot \sig{\ma{V}}_1 \cdot \siginv{\ma{\Sigma}}_1 \cdot \sigH{\ma{U}}_1\right]
           \kron {\ma{T}}_2 
           } \notag \\ 
 = & \left( \ma{{U}}_{\rm s}^\trans \kron \ma{I}_M \right)
     \cdot {\ma{\bar{T}}}_2 \cdot
    \vecof{
           \left[\noi{\ma{U}}_1 \cdot \noiH{\ma{U}}_1 \cdot \unf{N}{1} \cdot \sig{\ma{V}}_1 \cdot \siginv{\ma{\Sigma}}_1 \cdot \sigH{\ma{U}}_1\right]
           } \\
 = & \left( \ma{{U}}_{\rm s}^\trans \kron \ma{I}_M \right)
     \cdot {\ma{\bar{T}}}_2 \cdot
     \left[
     \left( \sig{\ma{V}}_1 \cdot \siginv{\ma{\Sigma}}_1 \cdot \sigH{\ma{U}}_1 \right)^\trans
     \kron
     \left(\noi{\ma{U}}_1 \cdot \noiH{\ma{U}}_1 \right)
     \right]
    \vecof{\unf{N}{1}} \notag
\end{align}
\else
\begin{align}
& \left( \ma{{U}}_{\rm s}^\trans \kron \ma{I}_M \right)
     \cdot \notag \\ &
    \vecof{
           \left[\noi{\ma{U}}_1 \cdot \noiH{\ma{U}}_1 \cdot \unf{N}{1} \cdot \sig{\ma{V}}_1 \cdot \siginv{\ma{\Sigma}}_1 \cdot \sigH{\ma{U}}_1\right]
           \kron {\ma{T}}_2 
           } \notag \\ 
 = & \left( \ma{{U}}_{\rm s}^\trans \kron \ma{I}_M \right)
     \cdot {\ma{\bar{T}}}_2 \cdot \notag \\ &
    \vecof{
           \left[\noi{\ma{U}}_1 \cdot \noiH{\ma{U}}_1 \cdot \unf{N}{1} \cdot \sig{\ma{V}}_1 \cdot \siginv{\ma{\Sigma}}_1 \cdot \sigH{\ma{U}}_1\right]
           } \\
 = & \left( \ma{{U}}_{\rm s}^\trans \kron \ma{I}_M \right)
     \cdot {\ma{\bar{T}}}_2 \cdot \notag \\ &
     \left[
     \left( \sig{\ma{V}}_1 \cdot \siginv{\ma{\Sigma}}_1 \cdot \sigH{\ma{U}}_1 \right)^\trans
     \kron
     \left(\noi{\ma{U}}_1 \cdot \noiH{\ma{U}}_1 \right)
     \right]
    \vecof{\unf{N}{1}} \notag
\end{align}
\fi
where the matrix $\ma{\bar{T}}_2$ is constructed from the columns of $\ma{T}_2$ given by
$\ma{t}_{2,m}$ for $m=1, 2, \ldots, M_2$ in the following manner
\begin{align}
\ma{\bar{T}}_2  = 
\ma{I}_{M_1} \kron 
\begin{bmatrix}
\ma{I}_{M_1} \kron \ma{t}_{2,1}  \\
\vdots \\
\ma{I}_{M_1} \kron \ma{t}_{2,M_2}
\end{bmatrix}.
\end{align}
The final step is to rearrange the elements of $\vecof{\unf{N}{1}}$
so that they appear in the same order as in $\vecof{\ma{N}}$. 
However, since $\ma{N} = \unf{N}{3}^\trans$, this can easily be achieved
in the following manner
   \begin{align}
       \vecof{\unf{N}{1}} = \ma{K}_{M_2 \times (M_1\cdot N)} \cdot \vecof{\ma{N}}
   \end{align}
where $\ma{K}_{M_2 \times (M_1\cdot N)}$ is the commutation matrix (cf. equation~\eqref{eqn_def_commat}).
This completes the derivation of the second term of $\ma{W}_{\rm ten}$. The third
term is obtained in a similar manner. In this case, no permutation is needed, since
$\vecof{\unf{N}{2}} = \vecof{\unf{N}{3}^\trans} = \vecof{\ma{N}}$ .
\qed

\section{Proof of Theorem~\ref{thm_perf_mse_fba}}\label{sec_app_proof_perf_mse_fba}

As pointed out in Section~\ref{subsec_subsp_perf_fba}, the inclusion of Forward-Backward-Averaging
leads to a very similar model, where all quantities originating from the noise-free
observation $\ma{X}_0$ (or $\ten{X}_0$) are replaced by the corresponding quantities
for $\fba{\ma{X}}_0$ (or $\fba{\ten{X}}_0$). 
\revA{Since for Theorem~\ref{thm_perf_mse} it was only assumed that the desired signal
component is superimposed by a zero mean noise contribution,
it is directly applicable.

The only point we need to derive are the covariance matrix and the pseudo-covariance
matrix of the forward-backward averaged noise
$\fba{\ma{n}} \eqdef \vecof{\fba{\ma{N}}}$, which are needed for the MSE expressions. }
To this end, we can express $\fba{\ma{n}}$ as
\ifCLASSOPTIONdraftcls
\begin{align}
   \vecof{\fba{\ma{N}}}
   & = 
   \vecof{\begin{bmatrix} \ma{N}, & \ma{\Pi}_M \cdot \ma{N}^\conj \cdot \ma{\Pi}_N\end{bmatrix}} 
   \notag \\
   & = 
   \begin{bmatrix}
       \vecof{\ma{N}} \\
       \left( \ma{\Pi}_N \kron \ma{\Pi}_M \right) \cdot \vecof{\ma{N}^\conj} 
   \end{bmatrix} 
    = 
   \begin{bmatrix}
       {\ma{n}} \\
       \ma{\Pi}_{NM}  \cdot \ma{n}^\conj
   \end{bmatrix} 
   \label{eqn_app_proof_msefba_exp1}
\end{align}
\else
\begin{align}
   \vecof{\fba{\ma{N}}}
   & = 
   \vecof{\begin{bmatrix} \ma{N}, & \ma{\Pi}_M \cdot \ma{N}^\conj \cdot \ma{\Pi}_N\end{bmatrix}} 
   \notag \\
   & = 
   \begin{bmatrix}
       \vecof{\ma{N}} \\
       \left( \ma{\Pi}_N \kron \ma{\Pi}_M \right) \cdot \vecof{\ma{N}^\conj} 
   \end{bmatrix} \notag \\ &
    = 
   \begin{bmatrix}
       {\ma{n}} \\
       \ma{\Pi}_{NM}  \cdot \ma{n}^\conj
   \end{bmatrix} 
   \label{eqn_app_proof_msefba_exp1}
\end{align}
\fi
Equation~\eqref{eqn_app_proof_msefba_exp1} allows us to express the 
\revA{covariance matrix and the pseudo-covariance matrix of $\fba{\ma{n}}$
via the covariance matrix and the pseudo-covariance matrix of $\ma{n}$.
We obtain
\ifCLASSOPTIONdraftcls
\begin{align}
   \expvof{\fba{\ma{n}} \cdot \fbaH{\ma{n}}}
   & = 
   \begin{bmatrix}
     \expvof{\ma{n} \cdot \ma{n}^\herm} &
        \expvof{\ma{n} \cdot \ma{n}^\trans} \cdot \ma{\Pi}_{MN} \\
     \ma{\Pi}_{MN} \cdot  \expvof{\ma{n}^\conj \cdot \ma{n}^\herm} & 
        \ma{\Pi}_{MN} \cdot  \expvof{\ma{n}^\conj \cdot \ma{n}^\trans} \cdot \ma{\Pi}_{MN}
   \end{bmatrix} 
   \notag \\
   & = 
   \begin{bmatrix}
     \ma{R}_{\rm nn} &
        \ma{C}_{\rm nn} \cdot \ma{\Pi}_{MN} \\
     \ma{\Pi}_{MN} \cdot \ma{C}_{\rm nn}^\conj & 
        \ma{\Pi}_{MN} \cdot  \ma{R}_{\rm nn}^\conj \cdot \ma{\Pi}_{MN}
   \end{bmatrix} 
   \notag \\
   \expvof{\fba{\ma{n}} \cdot \fbaT{\ma{n}}}   
   & = \begin{bmatrix}
     \expvof{\ma{n} \cdot \ma{n}^\trans} &
        \expvof{\ma{n} \cdot \ma{n}^\herm} \cdot \ma{\Pi}_{MN} \\
     \ma{\Pi}_{MN} \cdot  \expvof{\ma{n}^\conj \cdot \ma{n}^\trans} & 
        \ma{\Pi}_{MN} \cdot  \expvof{\ma{n}^\conj \cdot \ma{n}^\herm} \cdot \ma{\Pi}_{MN}
   \end{bmatrix} 
   \notag \\ 
   &  = 
   \begin{bmatrix}
     \ma{C}_{\rm nn} &
        \ma{R}_{\rm nn} \cdot \ma{\Pi}_{MN} \\
     \ma{\Pi}_{MN} \cdot \ma{R}_{\rm nn}^\conj & 
        \ma{\Pi}_{MN} \cdot  \ma{C}_{\rm nn}^\conj \cdot \ma{\Pi}_{MN}
   \end{bmatrix} \notag
  \end{align}
\else
\begin{align}
   & \expvof{\fba{\ma{n}} \cdot \fbaH{\ma{n}}}  \notag \\ &
   = \begin{bmatrix}
     \expvof{\ma{n} \cdot \ma{n}^\herm} &
        \expvof{\ma{n} \cdot \ma{n}^\trans} \cdot \ma{\Pi}_{MN} \\
     \ma{\Pi}_{MN} \cdot  \expvof{\ma{n}^\conj \cdot \ma{n}^\herm} & 
        \ma{\Pi}_{MN} \cdot  \expvof{\ma{n}^\conj \cdot \ma{n}^\trans} \cdot \ma{\Pi}_{MN}
   \end{bmatrix} \notag \\ &
   = 
   \begin{bmatrix}
     \ma{R}_{\rm nn} &
        \ma{C}_{\rm nn} \cdot \ma{\Pi}_{MN} \\
     \ma{\Pi}_{MN} \cdot \ma{C}_{\rm nn}^\conj & 
        \ma{\Pi}_{MN} \cdot  \ma{R}_{\rm nn}^\conj \cdot \ma{\Pi}_{MN}
   \end{bmatrix} \notag \\
   & \expvof{\fba{\ma{n}} \cdot \fbaT{\ma{n}}}  \notag \\ &
   = \begin{bmatrix}
     \expvof{\ma{n} \cdot \ma{n}^\trans} &
        \expvof{\ma{n} \cdot \ma{n}^\herm} \cdot \ma{\Pi}_{MN} \\
     \ma{\Pi}_{MN} \cdot  \expvof{\ma{n}^\conj \cdot \ma{n}^\trans} & 
        \ma{\Pi}_{MN} \cdot  \expvof{\ma{n}^\conj \cdot \ma{n}^\herm} \cdot \ma{\Pi}_{MN}
   \end{bmatrix} \notag \\ &
   = 
   \begin{bmatrix}
     \ma{C}_{\rm nn} &
        \ma{R}_{\rm nn} \cdot \ma{\Pi}_{MN} \\
     \ma{\Pi}_{MN} \cdot \ma{R}_{\rm nn}^\conj & 
        \ma{\Pi}_{MN} \cdot  \ma{C}_{\rm nn}^\conj \cdot \ma{\Pi}_{MN}
   \end{bmatrix} \notag
\end{align}
\fi}

This completes the proof of the theorem. \qed

\section{Proof of Theorem~\ref{thm_perf_mse_sls}}\label{sec_app_proof_perf_sls}

Without regularization, the cost function for 1-D Structured Least Squares can
be expressed as~\cite{Haa:97}
\begin{align}
\ma{\hat{\Psi}}_{\rm SLS}
= \argmin_{\ma{\Psi},\Delta\ma{\overline{U}}_{\rm s}}
\fronorm{
       \ma{J}_1 \cdot \left(\ma{\hat{U}}_{\rm s} + \Delta\ma{\overline{U}}_{\rm s} \right) \cdot \ma{\Psi} 
    -  \ma{J}_2 \cdot \left(\ma{\hat{U}}_{\rm s} + \Delta\ma{\overline{U}}_{\rm s} \right)
}^2 \label{eqn_app_proof_perf_sls_cf}.
\end{align}
where we have used $\Delta\ma{\overline{U}}_{\rm s}$ only to avoid confusion
with the $\Delta \ma{U}_{\rm s}$ associated to the estimation error in $\ma{\hat{U}}_{\rm s}$.
Note that the cost function is solved in an iterative manner starting with 
$\reva{\Delta\ma{\overline{U}}_{\rm s}} = \ma{0}_{M \times d}$ and with ${\ma{\Psi}} = \ma{\Psi}_{\rm LS}$,
where $\ma{\Psi}_{\rm LS} = \left(\ma{J}_1 \cdot\ma{\hat{U}}_{\rm s}\right)^\pinv \cdot
\left(\ma{J}_2 \cdot\ma{\hat{U}}_{\rm s}\right)$ represents the LS solution to the shift
invariance equation. As we compute only a single iteration we find one update term for
$\ma{\hat{U}}_{\rm s}$ and one for $\ma{\Psi}_{\rm LS}$ which we denote as 
$\Delta \ma{U}_{\rm s,SLS}$ and $\Delta \ma{\Psi}_{\rm SLS}$ \reva{(i.e.,
$\Delta \ma{U}_{\rm s,SLS}$ represents the $\Delta\ma{\overline{U}}_{\rm s}$
which minimizes the linearized version of~\eqref{eqn_app_proof_perf_sls_cf})}.
In other words, the cost function
becomes
\ifCLASSOPTIONdraftcls
\begin{align}
\ma{\hat{\Psi}}_{\rm SLS} & = \ma{\hat{\Psi}}_{\rm LS} + \Delta \ma{{\Psi}}_{\rm SLS} \quad \mbox{where} \\
\Delta \ma{{\Psi}}_{\rm SLS}&  = \argmin_{\Delta \ma{\Psi},{\Delta\ma{\overline{U}}_{\rm s}}}
\fronorm{
 \ma{J}_1 \cdot \left(\ma{\hat{U}}_{\rm s}+{\Delta\ma{\overline{U}}_{\rm s}}\right) \cdot 
 \left({\ma{\Psi}}_{\rm LS} + \Delta \ma{\Psi}\right)  -  \ma{J}_2 \cdot\left(\ma{\hat{U}}_{\rm s} + {\Delta\ma{\overline{U}}_{\rm s}}\right) 
}^2 \notag \\
& = \argmin_{\Delta \ma{\Psi},{\Delta\ma{U}_{\rm s}}}
\fronorm{
 \ma{R}_{\rm LS} 
 + \ma{J}_1 \cdot \Delta\ma{\overline{U}}_{\rm s} \cdot \ma{\Psi}_{\rm LS}
 + \ma{J}_1 \cdot \ma{\hat{U}}_{\rm s} \cdot \Delta \ma{\Psi}
 - \ma{J}_2 \cdot \Delta\ma{\overline{U}}_{\rm s} 
 + \bigO{\Delta^2} }^2 \label{eqn_app_proof_subsp_sls_cf1}
\end{align}
\else
\begin{align}
\ma{\hat{\Psi}}_{\rm SLS} = & \ma{\hat{\Psi}}_{\rm LS} + \Delta \ma{{\Psi}}_{\rm SLS} \quad \mbox{where} \\
\Delta \ma{{\Psi}}_{\rm SLS}&  = \argmin_{\Delta \ma{\Psi},{\Delta\ma{\overline{U}}_{\rm s}}}
\froonormnlr{
 \ma{J}_1 \cdot \left(\ma{\hat{U}}_{\rm s}+{\Delta\ma{\overline{U}}_{\rm s}}\right) \cdot 
 \left({\ma{\Psi}}_{\rm LS} + \Delta \ma{\Psi}\right) \notag\\& \quad -  \ma{J}_2 \cdot\left(\ma{\hat{U}}_{\rm s} + {\Delta\ma{\overline{U}}_{\rm s}}\right) 
}{\Big}^2 \notag \\
& = \argmin_{\Delta \ma{\Psi},{\Delta\ma{U}_{\rm s}}}
\froonormnlr{
 \ma{R}_{\rm LS} 
 + \ma{J}_1 \cdot \Delta\ma{\overline{U}}_{\rm s} \cdot \ma{\Psi}_{\rm LS}
 + \ma{J}_1 \cdot \ma{\hat{U}}_{\rm s} \cdot \Delta \ma{\Psi} \notag \\ & \quad
 - \ma{J}_2 \cdot \Delta\ma{\overline{U}}_{\rm s} 
 + \bigO{\Delta^2} }{\Big}^2 \label{eqn_app_proof_subsp_sls_cf1}
\end{align}
\fi
where we have defined the matrix $\ma{R}_{\rm LS} =
\ma{J}_1 \cdot \ma{\hat{U}}_{\rm s} \cdot \ma{\Psi}_{\rm LS} - \ma{J}_2 \cdot \ma{\hat{U}}_{\rm s}$ which
contains the residual error in the shift invariance equation after the LS fit.
Since~\eqref{eqn_app_proof_subsp_sls_cf1} \reva{is} linearized \reva{by skipping the
quadratic terms in $\bigO{\Delta^2}$}, it is easily solved by
an LS fit. To express the result in closed-form we vectorize~\eqref{eqn_app_proof_subsp_sls_cf1}
using the fact that $\fronorm{\ma{A}} = \twonorm{\vecof{\ma{A}}}$ and obtain
\ifCLASSOPTIONdraftcls
\begin{align}
   \Delta \ma{{\psi}}_{\rm SLS}&  = \argmin_{\vecof{\Delta \ma{\Psi}},\vecof{{\Delta\ma{\overline{U}}_{\rm s}}}}
    \twonormnlr{
    \ma{r}_{\rm LS} 
    + \left(\ma{\hat{\Psi}}_{\rm LS}^\trans \kron \ma{J}_1  \right) \cdot \vecof{\Delta\ma{\overline{U}}_{\rm s}}
    \notag \\ &
    + \left(\ma{I}_d \kron \left( \ma{J}_1 \cdot \ma{\hat{U}}_{\rm s} \right)\right) \cdot \vecof{\Delta \ma{\Psi}} 
    - \left(\ma{I}_d \kron \ma{J}_2  \right) \cdot \vecof{\Delta\ma{\overline{U}}_{\rm s}}    
    + \bigO{\Delta^2}
    }{\Big}^2 \\
    \Rightarrow 
    \Delta \ma{{\psi}}_{\rm SLS}&  = \argmin_{\vecof{\Delta \ma{\Psi}},\vecof{{\Delta\ma{\overline{U}}_{\rm s}}}}
    \twonorm{
    \ma{r}_{\rm LS} 
    + \ma{\hat{F}}_{\rm SLS} \cdot 
    \begin{bmatrix}
        \vecof{\Delta \ma{\Psi}} \\
        \vecof{\Delta\ma{\overline{U}}_{\rm s}}
    \end{bmatrix}
    + \bigO{\Delta^2}
    }^2 \label{eqn_app_proof_subsp_sls_cfir1} \\
 \Rightarrow 
 \begin{bmatrix}
        \Delta \ma{\psi}_{\rm SLS} \\
        \Delta\ma{{u}}_{\rm s,SLS}
    \end{bmatrix}
    & = 
    - \ma{\hat{F}}_{\rm SLS}^\pinv \cdot \ma{r}_{\rm LS}   \label{eqn_app_proof_subsp_sls_solcf}
\end{align}
\else
\begin{align}
   & \Delta \ma{{\psi}}_{\rm SLS}  = \argmin_{\vecof{\Delta \ma{\Psi}},\vecof{{\Delta\ma{\overline{U}}_{\rm s}}}}
    \twonormnlr{
    \ma{r}_{\rm LS} 
    + \left(\ma{\hat{\Psi}}_{\rm LS}^\trans \kron \ma{J}_1  \right) \cdot \vecof{\Delta\ma{\overline{U}}_{\rm s}}
    \notag \\ & \quad\quad\quad
    + \left(\ma{I}_d \kron \left( \ma{J}_1 \cdot \ma{\hat{U}}_{\rm s} \right)\right) \cdot \vecof{\Delta \ma{\Psi}} 
    \notag \\ & \quad\quad\quad
    - \left(\ma{I}_d \kron \ma{J}_2  \right) \cdot \vecof{\Delta\ma{\overline{U}}_{\rm s}}    
    + \bigO{\Delta^2}
    }{\Big}^2 \\
    & \Rightarrow 
    \Delta \ma{{\psi}}_{\rm SLS}  = \argmin_{\vecof{\Delta \ma{\Psi}},\vecof{{\Delta\ma{\overline{U}}_{\rm s}}}}
    \twonormnlr{
    \ma{r}_{\rm LS} \notag \\ & \quad\quad\quad
    + \ma{\hat{F}}_{\rm SLS} \cdot 
    \begin{bmatrix}
        \vecof{\Delta \ma{\Psi}} \\
        \vecof{\Delta\ma{\overline{U}}_{\rm s}}
    \end{bmatrix}
    + \bigO{\Delta^2}
    }{\Big}^2 \label{eqn_app_proof_subsp_sls_cfir1} \\
 & \Rightarrow 
 \begin{bmatrix}
        \Delta \ma{\psi}_{\rm SLS} \\
        \Delta\ma{{u}}_{\rm s,SLS}
    \end{bmatrix}
    = 
    - \ma{\hat{F}}_{\rm SLS}^\pinv \cdot \ma{r}_{\rm LS}   \label{eqn_app_proof_subsp_sls_solcf}
\end{align}
\fi
where we have introduced the vectorized quantities $\ma{r}_{\rm LS}  = \vecof{\ma{R}_{\rm LS}}$, 
$\Delta \ma{\psi}_{\rm SLS} = \vecof{\Delta \ma{\Psi}_{\rm SLS}}$, and 
$\Delta\ma{{u}}_{\rm s,SLS} = \vecof{\Delta\ma{{U}}_{\rm s,SLS}}$, respectively.
\reva{We have also skipped the quadratic terms $\bigO{\Delta^2}$ in~\eqref{eqn_app_proof_subsp_sls_cfir1}
for the solution in~\eqref{eqn_app_proof_subsp_sls_solcf}.}
Moreover, the matrix $\ma{\hat{F}}_{\rm SLS} \in \compl^{(M-1)d \times (d^2+M\cdot d)}$ becomes
\begin{align}
   \ma{\hat{F}}_{\rm SLS} = \left[
 \ma{I}_d \kron \left( \ma{J}_1 \cdot \ma{\hat{U}}_{\rm s} \right), \;
 \left(\ma{\hat{\Psi}}_{\rm LS}^\trans \kron \ma{J}_1  \right)
-  \left(\ma{I}_d \kron \ma{J}_2\right)
\right].
\end{align}
Therefore, our next goal is to find a first order expansion of $\Delta \ma{\psi}_{\rm SLS}$
in~\eqref{eqn_app_proof_subsp_sls_solcf}. This looks difficult at first
sight as it involves an expansion of a pseudo-inverse due to $\ma{\hat{F}}_{\rm SLS}$. However,
this step simplifies significantly by realizing that $\ma{\hat{F}}_{\rm SLS}$ can be expressed
as $\ma{\hat{F}}_{\rm SLS} = \ma{{F}}_{\rm SLS} + \Delta \ma{\hat{F}}_{\rm SLS}$, where
\begin{align}
    \ma{F}_{\rm SLS} & = {\left[  
         \ma{I}_d \kron \left(\ma{J}_1 \cdot \ma{U}_{\rm s}\right), \;
         \left(\ma{\Psi}^\trans \kron \ma{J}_1\right) - \left(\ma{I}_d \kron \ma{J}_2\right)
         \right]}\label{eqn_app_proof_subsp_slsfsls} \\
   \Delta \ma{F}_{\rm SLS}
 & = {\left[  
         \ma{I}_d \kron \left(\ma{J}_1 \cdot {\Delta\ma{U}_{\rm s}}\right), \;
         \left({\Delta\ma{\Psi}_{\rm LS}^\trans} \kron \ma{J}_1\right) 
         \right]} \notag         
\end{align}
where $\ma{\hat{\Psi}}_{\rm LS} = \ma{\Psi} + \Delta\ma{\Psi}_{\rm LS}$.
Since $ \ma{F}_{\rm SLS}$ is not random (i.e., only dependent on $\ma{X}_0$ but not on $\ma{N}$)
and $\Delta\ma{\Psi}_{\rm LS} = \ma{0}_{d \times d} + \bigO{\Delta}$, i.e., at least linear in the perturbation,
we have
\begin{align}
    \ma{\hat{F}}_{\rm SLS}^\pinv = \ma{F}_{\rm SLS}^\pinv + \bigO{\Delta}. \label{eqn_app_proof_subsp_sls_fpinv}
\end{align}
This relation only describes the ``zero-th'' term of the expansion of $\ma{\hat{F}}_{\rm SLS}^\pinv$.
However, as we see below, the linear term is not needed for a first order expansion
of \reva{$\Delta \ma{\psi}_{\rm SLS}$}. 
Continuing with~\eqref{eqn_app_proof_subsp_sls_solcf}, the second term \reva{of the right-hand side} 
is given by $\ma{r}_{\rm LS}$
for which we can write
\ifCLASSOPTIONdraftcls
\begin{align}
    \ma{r}_{\rm LS} & = \vecof{\ma{J}_1 \cdot \ma{\hat{U}}_{\rm s}  \cdot
    \ma{\hat{\Psi}}_{\rm LS}- \ma{J}_2 \cdot\ma{\hat{U}}_{\rm s} } \notag \\
     & = \vecof{\ma{J}_1 \cdot \left(\ma{{U}}_{\rm s} + \Delta \ma{{U}}_{\rm s}\right) \cdot
    \left(\ma{\Psi} + \Delta\ma{\Psi}_{\rm LS}\right)
     - \ma{J}_2 \cdot \left(\ma{{U}}_{\rm s} + \Delta \ma{{U}}_{\rm s}\right)} \notag\\
      & = \vecof{ \ma{J}_1 \cdot \Delta \ma{{U}}_{\rm s} \cdot \ma{\Psi}
                + \ma{J}_1 \cdot \ma{{U}}_{\rm s} \cdot \Delta\ma{\Psi}_{\rm LS}    
                - \ma{J}_2 \cdot \Delta \ma{{U}}_{\rm s}
                + \bigO{\Delta^2}}  \label{eqn_app_proof_subsp_sls_rls}
\end{align}
\else
\begin{align}
    \ma{r}_{\rm LS}  = & \vecof{\ma{J}_1 \cdot \ma{\hat{U}}_{\rm s}  \cdot
    \ma{\hat{\Psi}}_{\rm LS}- \ma{J}_2 \cdot\ma{\hat{U}}_{\rm s} } \notag \\
      = & \vecofnlr{\ma{J}_1 \cdot \left(\ma{{U}}_{\rm s} + \Delta \ma{{U}}_{\rm s}\right) \cdot
    \left(\ma{\Psi} + \Delta\ma{\Psi}_{\rm LS}\right) \notag \\ & \quad\quad
     - \ma{J}_2 \cdot \left(\ma{{U}}_{\rm s} + \Delta \ma{{U}}_{\rm s}\right)}{} \notag\\
       = & \vecofnlr{ \ma{J}_1 \cdot \Delta \ma{{U}}_{\rm s} \cdot \ma{\Psi}
                + \ma{J}_1 \cdot \ma{{U}}_{\rm s} \cdot \Delta\ma{\Psi}_{\rm LS}    
                 \notag \\ & \quad\quad
                - \ma{J}_2 \cdot \Delta \ma{{U}}_{\rm s}
                + \bigO{\Delta^2}}{}  \label{eqn_app_proof_subsp_sls_rls}
\end{align}
\fi
Moreover, as shown in~\cite{LLV:93}, $\Delta\ma{\Psi}_{\rm LS}$ can be expressed
in terms of $\Delta \ma{{U}}_{\rm s}$ via 
\ifCLASSOPTIONdraftcls
\begin{align}
\Delta\ma{\Psi}_{\rm LS} = (\ma{J}_1 \cdot \ma{U}_{\rm s})^\pinv \cdot \ma{J}_2 \cdot \Delta \ma{U}_{\rm s}
- (\ma{J}_1 \cdot \ma{U}_{\rm s})^\pinv \cdot \ma{J}_1 \cdot \Delta \ma{U}_{\rm s} \cdot \ma{\Psi} + \bigO{\Delta^2}.
\label{eqn_app_proof_subsp_sls_psils}
\end{align}
\else
\begin{align}
\Delta\ma{\Psi}_{\rm LS} = & (\ma{J}_1 \cdot \ma{U}_{\rm s})^\pinv \cdot \ma{J}_2 \cdot \Delta \ma{U}_{\rm s}
 \notag \\ & \quad
- (\ma{J}_1 \cdot \ma{U}_{\rm s})^\pinv \cdot \ma{J}_1 \cdot \Delta \ma{U}_{\rm s} \cdot \ma{\Psi} + \bigO{\Delta^2}.
\label{eqn_app_proof_subsp_sls_psils}
\end{align}

\fi
Using this expansion in~\eqref{eqn_app_proof_subsp_sls_rls} we obtain
\ifCLASSOPTIONdraftcls
\begin{align}
\ma{r}_{\rm LS} & = 
    \vecof{ \ma{J}_1 \cdot \Delta \ma{{U}}_{\rm s} \cdot \ma{\Psi}}
    + \vecof{\ma{J}_1 \cdot \ma{{U}}_{\rm s} \cdot (\ma{J}_1 \cdot \ma{U}_{\rm s})^\pinv \cdot \ma{J}_2 \cdot \Delta \ma{U}_{\rm s}} \notag \\ &
    - \vecof{\ma{J}_1 \cdot \ma{{U}}_{\rm s} \cdot (\ma{J}_1 \cdot \ma{U}_{\rm s})^\pinv \cdot \ma{J}_1 \cdot \Delta \ma{U}_{\rm s} \cdot \ma{\Psi}}
    - \vecof{\ma{J}_2 \cdot \Delta \ma{{U}}_{\rm s}}
                + \bigO{\Delta^2} \notag \\
   & = \ma{W}_{\rm R,U}\cdot \vecof{\Delta\ma{{U}}_{\rm s}} + \bigO{\Delta^2} \quad \mbox{where} \label{eqn_app_proof_subsp_sls_rlsexp} \\
   \ma{W}_{\rm R,U} & = 
    \left(
 \ma{\Psi}^\trans \kron \ma{J}_1 \right)
     + \ma{I}_d \kron \left(\ma{J}_1\cdot \ma{U}_{\rm s} \left(\ma{J}_1\cdot \ma{U}_{\rm s}\right)^+\cdot\ma{J}_2\right)
      - \ma{\Psi}^\trans \kron \left(\ma{J}_1\cdot \ma{U}_{\rm s} \left(\ma{J}_1\cdot \ma{U}_{\rm s}\right)^+\cdot\ma{J}_1\right)  - {\left(  \ma{I}_d \kron \ma{J}_2
        \right)}. \notag
\end{align}
\else
\begin{align}
\ma{r}_{\rm LS} & = 
    \vecof{ \ma{J}_1 \cdot \Delta \ma{{U}}_{\rm s} \cdot \ma{\Psi}} \notag \\ &
    + \vecof{\ma{J}_1 \cdot \ma{{U}}_{\rm s} \cdot (\ma{J}_1 \cdot \ma{U}_{\rm s})^\pinv \cdot \ma{J}_2 \cdot \Delta \ma{U}_{\rm s}} \notag \\ &
    - \vecof{\ma{J}_1 \cdot \ma{{U}}_{\rm s} \cdot (\ma{J}_1 \cdot \ma{U}_{\rm s})^\pinv \cdot \ma{J}_1 \cdot \Delta \ma{U}_{\rm s} \cdot \ma{\Psi}} \notag \\ &
    - \vecof{\ma{J}_2 \cdot \Delta \ma{{U}}_{\rm s}}
                + \bigO{\Delta^2} \notag \\
   & = \ma{W}_{\rm R,U}\cdot \vecof{\Delta\ma{{U}}_{\rm s}} + \bigO{\Delta^2} \quad \mbox{where} \label{eqn_app_proof_subsp_sls_rlsexp} \\
   \ma{W}_{\rm R,U} & = 
    \left(
 \ma{\Psi}^\trans \kron \ma{J}_1 \right)
     + \ma{I}_d \kron \left(\ma{J}_1\cdot \ma{U}_{\rm s} \left(\ma{J}_1\cdot \ma{U}_{\rm s}\right)^+\cdot\ma{J}_2\right)
     \notag \\ &
      - \ma{\Psi}^\trans \kron \left(\ma{J}_1\cdot \ma{U}_{\rm s} \left(\ma{J}_1\cdot \ma{U}_{\rm s}\right)^+\cdot\ma{J}_1\right)  - {\left(  \ma{I}_d \kron \ma{J}_2
        \right)}. \notag
\end{align}
\fi
Using~\eqref{eqn_app_proof_subsp_sls_rlsexp} and~\eqref{eqn_app_proof_subsp_sls_fpinv} 
in~\eqref{eqn_app_proof_subsp_sls_solcf} we find that
\ifCLASSOPTIONdraftcls
\begin{align}
\begin{bmatrix}
        \Delta \ma{\psi}_{\rm SLS} \\
        \Delta\ma{{u}}_{\rm s,SLS}
    \end{bmatrix}
    & = 
    - \ma{\hat{F}}_{\rm SLS}^\pinv \cdot \ma{r}_{\rm LS} 
    = - \left(\ma{{F}}_{\rm SLS}^\pinv + \bigO{\Delta}\right) 
        \cdot
        \left(\ma{W}_{\rm R,U}\cdot \vecof{\Delta\ma{{U}}_{\rm s}} + \bigO{\Delta^2}\right) \notag \\
   & = - \ma{{F}}_{\rm SLS}^\pinv \cdot \ma{W}_{\rm R,U}\cdot \vecof{\Delta\ma{{U}}_{\rm s}} + \bigO{\Delta^2}.
   \label{eqn_app_proof_subsp_lssol_ir1}
\end{align}
\else
\begin{align}
   & \begin{bmatrix}
        \Delta \ma{\psi}_{\rm SLS} \\
        \Delta\ma{{u}}_{\rm s,SLS}
    \end{bmatrix}
     = 
    - \ma{\hat{F}}_{\rm SLS}^\pinv \cdot \ma{r}_{\rm LS} \notag \\ & \quad\quad
    = - \left(\ma{{F}}_{\rm SLS}^\pinv + \bigO{\Delta}\right) 
        \cdot
        \left(\ma{W}_{\rm R,U}\cdot \vecof{\Delta\ma{{U}}_{\rm s}} + \bigO{\Delta^2}\right) 
        \notag \\ & \quad\quad
    = - \ma{{F}}_{\rm SLS}^\pinv \cdot \ma{W}_{\rm R,U}\cdot \vecof{\Delta\ma{{U}}_{\rm s}} + \bigO{\Delta^2}.
   \label{eqn_app_proof_subsp_lssol_ir1}
\end{align}
\fi
Note that from~\eqref{eqn_app_proof_subsp_slsfsls} it follows that $\ma{F}_{\rm SLS}$ has full row-rank
and hence its pseudo-inverse can be expressed as $\ma{F}_{\rm SLS}^\pinv = \ma{F}_{\rm SLS}^\herm \cdot(
\ma{F}_{\rm SLS}\cdot\ma{F}_{\rm SLS}^\herm)^{-1}$. This allows us to extract 
$ \Delta \ma{\psi}_{\rm SLS}$ from~\eqref{eqn_app_proof_subsp_lssol_ir1} since $\Delta\ma{{u}}_{\rm s,SLS}$
is not explicitly needed as long as only one SLS iteration is performed. We obtain
\ifCLASSOPTIONdraftcls
\begin{align}
   \Delta \ma{\psi}_{\rm SLS}
   = - \left( \ma{I}_d \kron \left(\ma{J}_1 \cdot \ma{U}_{\rm s}\right)^\herm \right)
   \cdot (\ma{F}_{\rm SLS}\cdot\ma{F}_{\rm SLS}^\herm)^{-1} \cdot 
   \ma{W}_{\rm R,U}\cdot \vecof{\Delta\ma{{U}}_{\rm s}} + \bigO{\Delta^2} \label{eqn_app_proof_subsp_sls_psisls}
\end{align}
\else
\begin{align}
   \Delta \ma{\psi}_{\rm SLS}
    = & - \left( \ma{I}_d \kron \left(\ma{J}_1 \cdot \ma{U}_{\rm s}\right)^\herm \right)
   \cdot (\ma{F}_{\rm SLS}\cdot\ma{F}_{\rm SLS}^\herm)^{-1} \notag \\ & \cdot 
   \ma{W}_{\rm R,U}\cdot \vecof{\Delta\ma{{U}}_{\rm s}} + \bigO{\Delta^2} \label{eqn_app_proof_subsp_sls_psisls}
\end{align}
\fi
The final step in SLS-based ESPRIT is to replace the LS-based estimate $\ma{\hat{\Psi}}_{\rm LS}$
by $\ma{\hat{\Psi}}_{\rm SLS} =\ma{\hat{\Psi}}_{\rm LS} + \Delta \ma{\Psi}_{\rm SLS}
= \ma{\Psi} + \Delta \ma{{\Psi}}_{\rm LS} + \Delta \ma{\Psi}_{\rm SLS}$. Following the
first-order expansion for Standard ESPRIT from~\cite{LLV:93} we obtain
\ifCLASSOPTIONdraftcls
\begin{align}
    \Delta \mu_k & = \imagof{\ma{p}_k^\trans \cdot (\Delta \ma{{\Psi}}_{\rm LS} + \Delta \ma{\Psi}_{\rm SLS})
     \cdot \ma{q}_k} / \expof{\j \mu_k} + \bigO{\Delta^2} \notag \\
     & = 
     \imagof{\ma{p}_k^\trans \cdot \Delta \ma{{\Psi}}_{\rm LS}\cdot \ma{q}_k} / \expof{\j \mu_k}
     + \imagof{\left(\ma{q}_k^\trans \kron \ma{p}_k^\trans\right) \cdot \Delta \ma{\psi}_{\rm SLS}} 
     / \expof{\j \mu_k}
      + \bigO{\Delta^2}
\end{align}
\else
\begin{align}
    \Delta \mu_k & = \imagof{\ma{p}_k^\trans \cdot (\Delta \ma{{\Psi}}_{\rm LS} + \Delta \ma{\Psi}_{\rm SLS})
     \cdot \ma{q}_k} / \expof{\j \mu_k} + \bigO{\Delta^2} \notag \\
     & = 
     \imagof{\ma{p}_k^\trans \cdot \Delta \ma{{\Psi}}_{\rm LS}\cdot \ma{q}_k} / \expof{\j \mu_k}
     \notag \\ &
     + \imagof{\left(\ma{q}_k^\trans \kron \ma{p}_k^\trans\right) \cdot \Delta \ma{\psi}_{\rm SLS}} 
     / \expof{\j \mu_k}
      + \bigO{\Delta^2}
\end{align}
\fi
Since the first term is exactly the same as for LS-based 1-D Standard ESPRIT, we can
use the result from~\cite{LLV:93}.
Inserting the expansion 
for $\Delta \ma{\psi}_{\rm SLS}$ from~\eqref{eqn_app_proof_subsp_sls_psisls} we obtain
\ifCLASSOPTIONdraftcls
\begin{align}
  \Delta \mu_k & =
    \imagof{
       \ma{p}_k^\trans \cdot \left( \ma{J}_1 \cdot \ma{U}_{\rm s} \right)^+
       \cdot
       \left( \frac{\ma{J}_2}{\expof{\j \mu_k}} -\ma{J}_1 \right)
       \cdot
       \Delta \ma{U}_{\rm s} \cdot \ma{q}_k
    } 
   \notag \\
& - \imagof{\left(\ma{q}_k^\trans \kron 
     \ma{p}_k^\trans \cdot \left(\ma{J}_1 \cdot \ma{U}_{\rm s}\right)^\herm \right)
   \cdot (\ma{F}_{\rm SLS}\cdot\ma{F}_{\rm SLS}^\herm)^{-1} \cdot 
   \ma{W}_{\rm R,U}\cdot \vecof{\Delta\ma{{U}}_{\rm s}}}
     / \expof{\j \mu_k} + \bigO{\Delta^2}
\end{align}
\else
\begin{align}
  \Delta \mu_k & =
    \imagof{
       \ma{p}_k^\trans \cdot \left( \ma{J}_1 \cdot \ma{U}_{\rm s} \right)^+
       \cdot
       \left( \frac{\ma{J}_2}{\expof{\j \mu_k}} -\ma{J}_1 \right)
       \cdot
       \Delta \ma{U}_{\rm s} \cdot \ma{q}_k
    } 
   \notag \\
& - \imagofnlr{\left(\ma{q}_k^\trans \kron 
     \ma{p}_k^\trans \cdot \left(\ma{J}_1 \cdot \ma{U}_{\rm s}\right)^\herm \right)
   \cdot (\ma{F}_{\rm SLS}\cdot\ma{F}_{\rm SLS}^\herm)^{-1} \notag \\ & \quad \cdot 
   \ma{W}_{\rm R,U}\cdot \vecof{\Delta\ma{{U}}_{\rm s}}}{\Big}
     / \expof{\j \mu_k} + \bigO{\Delta^2}
\end{align}
\fi
Finally, rearranging the first term as 
$\ma{a}^\trans \cdot \ma{B} \cdot \ma{c} = (\ma{c}^\trans \kron \ma{a}^\trans) \cdot \vecof{\ma{B}}$
we have
\ifCLASSOPTIONdraftcls
\begin{align}
  \Delta \mu_k & =
    \imagof{
       \ma{q}_k^\trans \kron \left(
       \ma{p}_k^\trans \cdot \left( \ma{J}_1 \cdot \ma{U}_{\rm s} \right)^+
       \cdot
       \left( \frac{\ma{J}_2}{\expof{\j \mu_k}} -\ma{J}_1 \right)\right)
       \cdot
       \vecof{\Delta \ma{U}_{\rm s}} 
    } 
   \notag \\
& - \imagof{\left(\ma{q}_k^\trans \kron 
     \ma{p}_k^\trans \cdot  \frac{\left(\ma{J}_1 \cdot \ma{U}_{\rm s}\right)^\herm}{\expof{\j \mu_k}} \right)
   \cdot (\ma{F}_{\rm SLS}\cdot\ma{F}_{\rm SLS}^\herm)^{-1} \cdot 
   \ma{W}_{\rm R,U}\cdot \vecof{\Delta\ma{{U}}_{\rm s}}}
     + \bigO{\Delta^2} \notag \\
     & = 
     \imagof{\ma{r}_{k,{\rm SLS}}^\trans \cdot \vecof{\Delta\ma{{U}}_{\rm s}}} + \bigO{\Delta^2}, \quad \mbox{where} \notag \\
     \ma{r}_{k,{\rm SLS}}^\trans & =   \ma{q}_k^\trans \kron \left[\ma{p}_k^\trans \cdot \left(\ma{J}_1 \cdot \ma{U}_s\right)^+
                                  \cdot \left( \frac{\ma{J}_2}{\expof{\j \mu_k}} - \ma{J}_1 \right) \right] 
                       -  \left(\ma{q}_k^\trans \kron \left[ \ma{p}_k^\trans\cdot 
                       \frac{(\ma{J}_1 \cdot \ma{U}_{\rm s})^\herm}{\expof{\j \mu_k}} \right] \right)
                                     \cdot \left(\ma{F}_{\rm SLS}\cdot\ma{F}_{\rm SLS}^\herm\right)^{-1}
                                     \cdot \ma{W}_{\rm R,U},
\end{align}
\else
\begin{align}
  \Delta \mu_k & =
    \imagofnlr{
       \ma{q}_k^\trans \kron \left(
       \ma{p}_k^\trans \cdot \left( \ma{J}_1 \cdot \ma{U}_{\rm s} \right)^+
       \cdot
       \left( \frac{\ma{J}_2}{\expof{\j \mu_k}} -\ma{J}_1 \right)\right)
       \notag \\ & \quad\quad
       \cdot
       \vecof{\Delta \ma{U}_{\rm s}} 
    }{\Big}
   \notag \\
& - \imagofnlr{\Big(\ma{q}_k^\trans \kron 
     \ma{p}_k^\trans \cdot  \frac{\left(\ma{J}_1 \cdot \ma{U}_{\rm s}\right)^\herm}{\expof{\j \mu_k}} \Big)
   \cdot (\ma{F}_{\rm SLS}\cdot\ma{F}_{\rm SLS}^\herm)^{-1} 
   \notag \\ & \quad\quad
   \cdot 
   \ma{W}_{\rm R,U}\cdot \vecof{\Delta\ma{{U}}_{\rm s}}}{\Big}
     + \bigO{\Delta^2} \notag \\
     & = 
     \imagof{\ma{r}_{k,{\rm SLS}}^\trans \cdot \vecof{\Delta\ma{{U}}_{\rm s}}} + \bigO{\Delta^2}, \quad \mbox{where} \notag \\
     \ma{r}_{k,{\rm SLS}}^\trans & =   \ma{q}_k^\trans \kron \left[\ma{p}_k^\trans \cdot \left(\ma{J}_1 \cdot \ma{U}_s\right)^+
                                  \cdot \left( \frac{\ma{J}_2}{\expof{\j \mu_k}} - \ma{J}_1 \right) \right] 
                                  \notag \\ & \!\!\!\!\!\!\!\!\!\!\!\!
                       -  \left(\ma{q}_k^\trans \kron \left[ \ma{p}_k^\trans\cdot 
                       \frac{(\ma{J}_1 \cdot \ma{U}_{\rm s})^\herm}{\expof{\j \mu_k}} \right] \right)
                                     \cdot \left(\ma{F}_{\rm SLS}\cdot\ma{F}_{\rm SLS}^\herm\right)^{-1}
                                     \cdot \ma{W}_{\rm R,U},
\end{align}
\fi
which is the desired result~\eqref{eqn_subsp_perf_sls_u}. 
Equation~\eqref{eqn_subsp_perf_sls} follows from~\eqref{eqn_subsp_perf_sls_u} by inserting
the first order expansion for $\vecof{\Delta\ma{U}_{\rm s}}$ in terms of $\vecof{\ma{N}}$ as shown in Appendix~\ref{sec_app_proof_perf_mse}. \qed

\section{Proof of Theorem~\ref{thm_perf_mse_singsrc_1d}}\label{sec_app_proof_perf__mse_singsrc_1d}

This theorem consists of several parts which we address in separate subsections. 

\subsection{MSE for Standard ESPRIT}

We start by simplifying the MSE expression for 1-D Standard ESPRIT.
In the case of a single source we can write
\begin{align}
    \ma{X}_0 = \ma{a}(\mu) \cdot \ma{s}^\trans, \label{eqn_app_proof_perf_singsrc_1d_x0}
\end{align}
where $\ma{a} \in \compl^{M \times 1}$ is the array steering vector and
$\ma{s} \in \compl^{N \times 1}$ contains the source symbols. 
Let $\hat{P}_{\rm T} = \twonorm{\ma{s}}^2 / N$ be the empirical source power.
Furthermore, since we assume a ULA of isotropic elements, $\ma{a}(\mu)$
is given by $\ma{a}(\mu) = [1, \expof{\j \mu}, \expof{2 \j \mu}, \ldots, \expof{(M-1) \j \mu}]$.
Note that $\twonorm{\ma{a}(\mu)}^2 = M$.
For notational convenience, we drop the explicit dependence of $\ma{a}$ on $\mu$
and write just $\ma{a}(\mu) = \ma{a}$ in the sequel.
The selection matrices $\ma{J}_1$ and $\ma{J}_2$ are then chosen as
\begin{align}
    \ma{J}_1 = \begin{bmatrix} \ma{I}_{M-1} & \ma{0}_{M-1 \times 1} \end{bmatrix}
    \quad
    \ma{J}_2 = \begin{bmatrix} \ma{0}_{M-1 \times 1} & \ma{I}_{M-1} \end{bmatrix}
\end{align}
for maximum overlap, i.e., $\subsel{M} = M-1$.
Since~\eqref{eqn_app_proof_perf_singsrc_1d_x0} is a rank-one matrix, we can directly
relate the subspaces to the array steering vector and the source symbol matrix, namely
\begin{align}
    \ma{U}_{\rm s} & = \ma{u}_{\rm s} = \frac{\ma{a}}{\twonorm{\ma{a}}} 
    = \frac{1}{\sqrt{M}} \cdot \ma{a} 
    \\
    \ma{V}_{\rm s} & = \ma{v}_{\rm s} = \frac{\ma{s}^\conj}{\twonorm{\ma{s}}} 
    = \frac{1}{\sqrt{\hat{P}_{\rm T}\cdot N}} \cdot \ma{s}^\conj 
    \\
    \ma{\Sigma}_{\rm s} & = \sigma_{\rm s} = \sqrt{M \cdot N \cdot \hat{P}_{\rm T}}.
\end{align}
%
For the MSE expression from Theorem~\ref{thm_perf_mse} we also require the quantity 
$\ma{U}_{\rm n} \cdot \ma{U}_{\rm n}^\herm$, which resembles
a projection matrix on th noise subspace. However, since the signal subspace is spanned by 
$\ma{a}$ we can write 
$\ma{U}_{\rm n} \cdot \ma{U}_{\rm n}^\herm = \projp{\ma{a}} = \ma{I}_M - \frac{\ma{a} \cdot \ma{a}^\herm}{\twonorm{\ma{a}}^2}
= \ma{I}_M - \frac{1}{M} \cdot \ma{a} \cdot \ma{a}^\herm$.
The MSE expression for 1-D Standard ESPRIT
also include the eigenvectors of $\ma{\Psi}$ which for the special case
discussed here is scalar and given by $\ma{\Psi} = \expof{\j \mu}$. Consequently, we have $\ma{p}_k
= \ma{q}_k = 1$ for the eigenvectors.

Combining these expressions and inserting into
\revA{the special case of~\eqref{eqn_subsp_perf_mse_se} for white noise, which is shown
in equation \eqref{eqn_app_proof_pert_mse_whitenoise}}, we have
$\expvof{ (\Delta \mu_k^{(r)})^2} = \sigma_{\rm n}^2/2 \cdot \twonorm{ \ma{W}_{\rm mat}^\trans \cdot \ma{r} }^2
= \sigma_{\rm n}^2/2 \cdot \twonorm{ \ma{r}^\trans \cdot \ma{W}_{\rm mat} }^2$ with
\ifCLASSOPTIONdraftcls
\begin{align}
  \ma{r}_k^{(r)} & = \ma{r} =
         \left[ \left(\ma{{J}}_1  \frac{\ma{a}}{\sqrt{M}}  \right)^+ 
                \left(\ma{{J}}_2/\expof{\j \cdot \mu} - \ma{{J}}_1\right)
         \right]^\trans  \label{eqn_app_proof_subsp_d1_rk_ls} \\
        \ma{W}_{\rm mat} & =    
   \left(\frac{1}{\sqrt{M \cdot N \cdot \hat{P}_{\rm T}}} \cdot \frac{\ma{s}^\herm}{\sqrt{\hat{P}_{\rm T}\cdot N}}
    \right) \kron \projp{\ma{a}}    \label{eqn_app_proof_subsp_d1_wmat} 
\end{align}
\else
\begin{align}
  \ma{r}_k^{(r)} & = \ma{r} =
         \left[ \left(\ma{{J}}_1  \frac{\ma{a}}{\sqrt{M}}  \right)^+ 
                \left(\ma{{J}}_2/\expof{\j \cdot \mu} - \ma{{J}}_1\right)
         \right]^\trans  \label{eqn_app_proof_subsp_d1_rk_ls} \\
        \ma{W}_{\rm mat} & =    
   \left(\frac{1}{\sqrt{M \cdot N \cdot \hat{P}_{\rm T}}} \cdot \frac{\ma{s}^\herm}{\sqrt{\hat{P}_{\rm T}\cdot N}}
     \right) 
    \kron \projp{\ma{a}}    \label{eqn_app_proof_subsp_d1_wmat} 
\end{align}
\fi
Note that $\ma{W}_{\rm mat}$ is the Kronecker product of a $1 \times N$ vector and an $M \times M$
matrix. Hence, $\ma{r}^\trans \cdot \ma{W}_{\rm mat}$ can be written as
$\ma{r}^\trans \cdot \ma{W}_{\rm mat} = \tilde{\ma{s}}^\trans \kron \tilde{\ma{a}}^\trans$,
where
\begin{align}
\tilde{\ma{s}}^\trans & = 
\frac{1}{\sqrt{M \cdot N \cdot \hat{P}_{\rm T}}} \cdot \frac{\ma{s}^\herm}{\sqrt{\hat{P}_{\rm T}\cdot N}}  \\
\tilde{\ma{a}}^\trans & = 
 \left(\ma{{J}}_1  \frac{\ma{a}}{\sqrt{M}} \right)^+ 
                \left(\frac{\ma{{J}}_2}{\expof{\j \cdot \mu}} - \ma{{J}}_1\right)
          \cdot 
         \projp{\ma{a}}
 \label{eqn_app_proof_subsp_d1_defatst}
\end{align}
Therefore, the MSE can be expressed as
$\expvof{ (\Delta \mu_k^{(r)})^2} = \sigma_{\rm n}^2/2 \cdot \twonorm{\tilde{\ma{s}}^\trans \kron \tilde{\ma{a}}^\trans}^2$,
which is equal to 
$\expvof{ (\Delta \mu_k^{(r)})^2} = \sigma_{\rm n}^2/2 \cdot \twonorm{\tilde{\ma{s}}^\trans}^2 \cdot \twonorm{\tilde{\ma{a}}^\trans}^2$.

Since $\ma{\tilde{s}}^\trans$ is a scaled version 
of $\ma{s}^\herm$ and $\twonorm{\ma{s}^\herm}^2 = N \cdot \hat{P}_{\rm T}$
we find that the first term in the MSE expression can conveniently be expressed as
\begin{align}
 \twonorm{\tilde{\ma{s}}^\trans}^2 = \frac{1}{M \cdot N \cdot \hat{P}_{\rm T}} \cdot \frac{\hat{P}_{\rm T}\cdot N}{\hat{P}_{\rm T}\cdot N} = \frac{1}{M \cdot N \cdot \hat{P}_{\rm T}}. \label{eqn_app_proof_mse_d1se_norms}
\end{align}
Next, we proceed to simplify $\ma{\tilde{a}}^\trans$ further. To this end, we expand the pseudo-inverse
of $\ma{J}_1 \cdot \ma{a}$ using the rule $\ma{x}^\pinv = \ma{x}^\herm / \twonorm{\ma{x}}^2$
and multiply the brackets out. After straightforward algebraic manipulations we obtain
\begin{align}
\ma{\tilde{a}}^\trans
& = \frac{\sqrt{M}}{M-1} \Big(\ma{\tilde{a}}_1^\trans - \ma{\tilde{a}}_2^\trans\Big), \quad \mbox{where} \notag \\
\ma{\tilde{a}}_1^\trans & = \ma{a}^\herm \cdot \ma{{J}}_1^\herm \cdot \ma{{J}}_2/\expof{\j \cdot \mu} 
\quad \mbox{and} \quad
\ma{\tilde{a}}_1^\trans  = \ma{a}^\herm \cdot \ma{{J}}_1^\herm \cdot \ma{{J}}_1.
 \label{eqn_app_proof_subsp_d1se_atilde}
\end{align}
%
Since we have $\ma{a}^\herm = [1, \expof{-\j \mu}, \expof{-2 \j \mu}, \ldots, \expof{-(M-1) \j \mu}]$
it is easy to see that
\begin{align}
   \ma{\tilde{a}}_1^\trans 
   & = [0, \expof{-\j \mu}, \expof{-2 \j \mu}, \ldots, \expof{-(M-2) \j \mu}, \expof{-(M-1) \j \mu}] 
   \notag \\
   \ma{\tilde{a}}_2^\trans & = [1, \expof{-\j \mu}, \expof{-2 \j \mu}, \ldots, \expof{-(M-2) \j \mu}, 0]  \quad \mbox{and hence} \notag \\
   \ma{\tilde{a}}_1^\trans - \ma{\tilde{a}}_2^\trans & = [-1, 0, \ldots, 0, \expof{-(M-1) \j \mu}].
   \label{eqn_app_proof_subsp_d1se_a12tilde}
\end{align}
Consequently, we find $\twonorm{\ma{\tilde{a}}^\trans}^2 = \frac{M}{(M-1)^2} \cdot 2$.
Combining this result with~\eqref{eqn_app_proof_mse_d1se_norms} we finally have
\ifCLASSOPTIONdraftcls
\begin{align}
\expvof{ (\Delta \mu_k^{(r)})^2} & = 
\frac{\sigma_{\rm n}^2}{2} \cdot \twonorm{\tilde{\ma{s}}^\trans}^2 \cdot \twonorm{\tilde{\ma{a}}^\trans}^2 \label{eqn_app_proof_perf_singsrc_1d_afr}
= \frac{\sigma_{\rm n}^2}{2} \cdot \frac{1}{M \cdot N \cdot \hat{P}_{\rm T}} \cdot 
2 \cdot \frac{M}{(M-1)^2}  \\
& = \frac{\sigma_{\rm n}^2}{N \cdot \hat{P}_{\rm T}} \cdot \frac{1}{(M-1)^2} 
\label{eqn_app_proof_subsp_d1se_res}
\end{align}
\else
\begin{align}
\expvof{ (\Delta \mu_k^{(r)})^2} & = 
\frac{\sigma_{\rm n}^2}{2} \cdot \twonorm{\tilde{\ma{s}}^\trans}^2 \cdot \twonorm{\tilde{\ma{a}}^\trans}^2 \label{eqn_app_proof_perf_singsrc_1d_afr} \\
& = \frac{\sigma_{\rm n}^2}{2} \cdot \frac{1}{M \cdot N \cdot \hat{P}_{\rm T}} \cdot 
2 \cdot \frac{M}{(M-1)^2}  \notag\\
& = \frac{\sigma_{\rm n}^2}{N \cdot \hat{P}_{\rm T}} \cdot \frac{1}{(M-1)^2} 
\label{eqn_app_proof_subsp_d1se_res}
\end{align}
\fi
which is the desired result. \qed

\subsection{MSE for Unitary ESPRIT}\label{sec_app_proof_perf__mse_singsrc_1due}

The second part of the theorem is to show that for a single source, the MSE
for Unitary ESPRIT is the same as the MSE for Standard ESPRIT. Firstly, we expand
$\fba{\ma{X}}_0$ and find
\ifCLASSOPTIONdraftcls
\begin{align}
    \fba{\ma{X}}_0 = 
       \begin{bmatrix}
           \ma{a} \cdot \ma{s}^\trans, &
            \ma{\Pi}_M \cdot \ma{a}^\conj \cdot \ma{s}^\herm \cdot \ma{\Pi}_N
       \end{bmatrix}
       = \ma{a} \cdot \begin{bmatrix} \ma{s}^\trans & \expof{-\j \mu (M-1)} \cdot \ma{s}^\herm \cdot \ma{\Pi}_N \end{bmatrix}
       = \ma{a} \cdot \overline{\ma{s}}^\trans \label{eqn_app_proof_subsp_dmfba}
\end{align}
\else
\begin{align}
    \fba{\ma{X}}_0 & = 
       \begin{bmatrix}
           \ma{a} \cdot \ma{s}^\trans, &
            \ma{\Pi}_M \cdot \ma{a}^\conj \cdot \ma{s}^\herm \cdot \ma{\Pi}_N
       \end{bmatrix} \notag \\
       & = \ma{a} \cdot \begin{bmatrix} \ma{s}^\trans & \expof{-\j \mu (M-1)} \cdot \ma{s}^\herm \cdot \ma{\Pi}_N \end{bmatrix}
       = \ma{a} \cdot \overline{\ma{s}}^\trans \label{eqn_app_proof_subsp_dmfba}
\end{align}
\fi
where we have used the fact that for our ULA 
we have $\ma{\Pi}_M \cdot \ma{a}^\conj = \ma{a} \cdot \expof{-\j \mu (M-1)}$
and we have defined $\overline{\ma{s}}$ to be
\begin{align}
    \overline{\ma{s}} = \begin{bmatrix} \ma{s} \\ \expof{-\j \mu (M-1)}\cdot \ma{\Pi}_N\cdot \ma{s}^\conj \end{bmatrix}
\end{align}
Note that $\overline{\ma{s}}^\herm \overline{\ma{s}} = \ma{s}^\herm \cdot \ma{s} + \ma{s}^\trans \cdot \ma{\Pi}_N \cdot \ma{\Pi}_N \cdot \ma{s}^\conj = 2 \cdot \ma{s}^\herm \cdot \ma{s}$.
As for Standard ESPRIT, we relate~\eqref{eqn_app_proof_subsp_dmfba} to its
SVD and obtain
\ifCLASSOPTIONdraftcls
\begin{align}
   \fba{\ma{u}_{\rm s}}  = \frac{\ma{a}}{\sqrt{M}} = \ma{u}_{\rm s} , \quad 
   \fba{\ma{v}_{\rm s}}  = \frac{\overline{\ma{s}}^\conj}{\sqrt{2 \cdot N \cdot \hat{P}_{\rm T}}}, \quad
   \fba{\sigma_{\rm s}}  = \sqrt{2 \cdot M \cdot N \cdot \hat{P}_{\rm T}}. \label{eqn_app_proof_subsp_usfba}
\end{align}
\else
\begin{align}
   \fba{\ma{u}_{\rm s}}  &= \frac{\ma{a}}{\sqrt{M}} = \ma{u}_{\rm s} , \quad 
   \fba{\ma{v}_{\rm s}}  = \frac{\overline{\ma{s}}^\conj}{\sqrt{2 \cdot N \cdot \hat{P}_{\rm T}}}, \notag \\ 
   \fba{\sigma_{\rm s}}  &= \sqrt{2 \cdot M \cdot N \cdot \hat{P}_{\rm T}}. \label{eqn_app_proof_subsp_usfba}
\end{align}
\fi
An important consequence we can draw from~\eqref{eqn_app_proof_subsp_usfba} is that the column space
$\ma{u}_{\rm s}$ remains unaffected from the 
forward-backward-averaging. Therefore we also have $\fba{\ma{U}_{\rm n}} = \ma{U}_{\rm n}$ and hence
$\fba{\ma{U}_{\rm n}}\fbaH{\ma{U}_{\rm n}} = \ma{I}_M - \frac{\ma{a}\cdot \ma{a}^\herm}{M}$.
However, the equivalence of Standard ESPRIT and Unitary ESPRIT is still not obvious since
Forward-Backward Averaging destroys the circular symmetry of the noise, which leads to an additional
term in the MSE expressions.
Following the lines of the derivation for 1-D Standard ESPRIT we can show that
\begin{align}
    \fbaT{\ma{r}} \cdot \fba{\ma{W}}_{\rm mat} = \ma{\tilde{\overline{s}}}^\trans \kron \ma{\tilde{a}}^\trans,
    \label{eqn_app_proof_subsp_d1ue_ir1}
\end{align}
where $\ma{\tilde{\overline{s}}}$ is given by 
\begin{align}
   \ma{\tilde{\overline{s}}} = \frac{1}{\sqrt{2 \cdot M \cdot N \cdot \hat{P}_{\rm T}}} \cdot
   \frac{\overline{\ma{s}}^\conj}{\sqrt{2 \cdot N \cdot \hat{P}_{\rm T}}} 
   \label{eqn_app_proof_subsp_d1ue_stildebar}
\end{align}
and $\ma{\tilde{a}}$ is the same as in the derivation for 1-D Standard ESPRIT (cf. equation~\eqref{eqn_app_proof_subsp_d1_defatst}).

According to Theorem~\ref{thm_perf_mse_fba}, the MSE for Unitary ESPRIT can be computed
as 
\begin{align}
    \frac{\sigma_{\rm n}^2}{2} \cdot 
    \left( \ma{z}^\trans \cdot \ma{z}^\conj
    - \realof{\ma{z}^\trans \cdot \ma{\Pi}_{2MN} \cdot \ma{z}}\right),
\end{align}
for $\ma{z}^\trans = \fbaT{\ma{r}} \cdot \fba{\ma{W}}_{\rm mat}$,
\revA{where we have inserted $\fba{\ma{R}_{\rm nn}} = \sigma_{\rm n}^2 \cdot \ma{I}_{2MN}$
and $\fba{\ma{C}_{\rm nn}} = \sigma_{\rm n}^2 \cdot \ma{\Pi}_{2MN}$
since we are considering the special case of circularly symmetric white noise.}
Using~\eqref{eqn_app_proof_subsp_d1ue_ir1} and the fact
that $\ma{\Pi}_{2MN} = \ma{\Pi}_{2N} \kron \ma{\Pi}_M$, this expression can be written into
\begin{align}
\frac{\sigma_{\rm n}^2}{2} \cdot 
    \left( \twonorm{\ma{\tilde{\overline{s}}}}^2
    \twonorm{\ma{\tilde{a}}}^2
    -  \ma{\tilde{\overline{s}}}^\trans \cdot \ma{\Pi}_{2N} \cdot \ma{\tilde{\overline{s}}}^\conj
    \cdot \ma{\tilde{a}}^\trans \cdot \ma{\Pi}_{M} \cdot \ma{\tilde{a}}\right).
    \label{eqn_app_proof_subsp_d1ue_ir2} 
\end{align}
Since $\ma{\tilde{a}}$ is the same as in~\eqref{eqn_app_proof_subsp_d1se_atilde}
we know that $\twonorm{\ma{\tilde{a}}}^2 = \frac{2 M}{(M-1)^2}$. Moreover, 
$\twonorm{\ma{\tilde{\overline{s}}}}^2 = \frac{1}{2 \cdot M \cdot N \cdot \hat{P}_{\rm T}}$ follows
directly from~\eqref{eqn_app_proof_subsp_d1ue_stildebar}.
For the second term in~\eqref{eqn_app_proof_subsp_d1ue_ir2} we have
\begin{align}
   & \ma{\tilde{\overline{s}}}^\trans \cdot \ma{\Pi}_{2N} \cdot \ma{\tilde{\overline{s}}}^\conj 
    = 
   \frac{1}{2  M  N  \hat{P}_{\rm T}} \cdot \frac{1}{2  N  \hat{P}_{\rm T}}
   \cdot
    \ma{\overline{s}}^\herm \cdot \ma{\Pi}_{2N} \cdot \ma{\overline{s}} \notag \\
   & =    
   \frac{1}{2  M  N  \hat{P}_{\rm T}} \cdot \frac{1}{2  N  \hat{P}_{\rm T}}
   \cdot
   \left[ \ma{s}^\herm \cdot \ma{s} \cdot \expof{\j \mu (M-1)} + 
   \ma{s}^\trans \cdot \ma{s}^\conj \cdot \expof{\j \mu (M-1)}
   \right] \notag \\   
   & =
   \frac{1}{2  M  N  \hat{P}_{\rm T}}
   \cdot \expof{\j \mu (M-1)} 
   \label{eqn_app_proof_subsp_d1ue_ir3}
\end{align}
Similarly we can simplify $\ma{\tilde{a}}^\trans \cdot \ma{\Pi}_{M} \cdot \ma{\tilde{a}}$ 
by using~\eqref{eqn_app_proof_subsp_d1se_atilde} and~\eqref{eqn_app_proof_subsp_d1se_a12tilde}.
We obtain
\begin{align}
   \ma{\tilde{a}}^\trans \cdot \ma{\Pi}_{M} \cdot \ma{\tilde{a}}
   & = - \frac{2 M}{(M-1)^2} \cdot \expof{-(M-1) \j \mu}.
   \label{eqn_app_proof_subsp_d1ue_ir4}
\end{align}
Combining the results from~\eqref{eqn_app_proof_subsp_d1ue_ir3} and~\eqref{eqn_app_proof_subsp_d1ue_ir4} 
into~\eqref{eqn_app_proof_subsp_d1ue_ir2} we finally obtain for the MSE
\ifCLASSOPTIONdraftcls
\begin{align}
&
\frac{\sigma_{\rm n}^2}{2} \cdot 
    \left(  \frac{1}{2  M  N  \hat{P}_{\rm T}} \cdot
    \frac{2 M}{(M-1)^2}
    +  \frac{1}{2  M  N  \hat{P}_{\rm T}}
   \cdot \expof{\j \mu (M-1)}  \cdot 
    \frac{2 M}{(M-1)^2} \cdot \expof{-(M-1) \j \mu}
    \right) \notag \\
 = &
\frac{\sigma_{\rm n}^2}{2} \cdot 
    \left(  \frac{1}{ N  \hat{P}_{\rm T}} \cdot
    \frac{1}{(M-1)^2}
    +  \frac{1}{ N  \hat{P}_{\rm T}}
   \cdot \frac{1}{(M-1)^2} \cdot 
    \right)    \\
= &
\frac{\sigma_{\rm n}^2}{N \cdot \hat{P}_{\rm T}} \cdot  \frac{1}{(M-1)^2},
\end{align}
\else
\begin{align}
&
\frac{\sigma_{\rm n}^2}{2} \cdot 
    \Big(  \frac{1}{2  M  N  \hat{P}_{\rm T}} \cdot
    \frac{2 M}{(M-1)^2} \notag \\ & 
    +  \frac{1}{2  M  N  \hat{P}_{\rm T}}
   \cdot \expof{\j \mu (M-1)}  \cdot 
    \frac{2 M}{(M-1)^2} \cdot \expof{-(M-1) \j \mu}
    \Big) \notag \\
 = &
\frac{\sigma_{\rm n}^2}{2} \cdot 
    \left(  \frac{1}{ N  \hat{P}_{\rm T}} \cdot
    \frac{1}{(M-1)^2}
    +  \frac{1}{ N  \hat{P}_{\rm T}}
   \cdot \frac{1}{(M-1)^2} \cdot 
    \right)    \\
= &
\frac{\sigma_{\rm n}^2}{N \cdot \hat{P}_{\rm T}} \cdot  \frac{1}{(M-1)^2},
\end{align}
\fi
which is equal to the result for 1-D Standard ESPRIT from~\eqref{eqn_app_proof_subsp_d1se_res}
and hence proves this part
of the theorem. \qed

\subsection{Cramér-Rao Bound} \label{app_proof_d11d_crb}

The third part of the theorem is to simplify the deterministic Cramér-Rao
Bound (CRB) for the special case of a single source. To this end, a closed-form
expression for the deterministic CRB for this setting is given by~\cite{SN:89}
\begin{align}
    \ma{C} = \frac{\sigma_{\rm n}^2}{2\cdot N} \cdot
    \realof{
      \left[ \ma{D}^\herm \cdot \projp{\ma{A}} \cdot \ma{D}
      \right] \odot \ma{\hat{R}}_{\rm S}^\trans
    }^{-1}
\end{align}
where $\ma{\hat{R}}_{\rm S} = \frac{1}{N} \cdot \ma{S} \cdot \ma{S}^\herm$ is the sample covariance
matrix of the source symbols, $\ma{D} \in \compl^{M \times d}$ is the matrix of partial
derivatives of the array steering vectors with respect to the parameters of interest,
and $\projp{\ma{A}} = \ma{I}_M - \ma{A} \cdot \left(\ma{A}^\herm \cdot \ma{A} \right)^{-1} \cdot \ma{A}^\herm$.
In the case $d=1$, we have $\ma{\hat{R}}_{\rm S} = \twonorm{\ma{s}}^2 / N = \hat{P}_{\rm T}$
and the CRB expression simplifies into
\begin{align}
    C & = \frac{\sigma_{\rm n}^2}{2\cdot N\cdot \hat{P}_{\rm T}} \cdot
    \realof{
      \ma{d}^\herm \cdot \left(
      \ma{I}_M - \frac{\ma{a} \cdot \ma{a}^\herm}{M}
      \right) \cdot \ma{d}      
    }^{-1} \\
    & = 
    \frac{1}{2 \cdot \hat{\rho}} \cdot
    \realof{
       \ma{d}^\herm \cdot \ma{d} - 
      \frac{1}{M}\cdot
      \ma{d}^\herm \cdot \ma{a} \cdot \ma{a}^\herm \cdot \ma{d} 
    }^{-1}  \\
    & = 
    \frac{1}{2 \cdot \hat{\rho}} \cdot
      \left[
      \ma{d}^\herm \cdot \ma{d} - 
      \frac{1}{M}\cdot
      |\ma{d}^\herm \cdot \ma{a}|^2
      \right] ^{-1}  \label{eqn_app_proof_subsp_d1crb_ir1}
\end{align}
Since for a ULA the array steering vector can be expressed as
$ \ma{a} = \begin{bmatrix} 1 & \expof{\j \mu} & \expof{2 \j \mu} & \ldots & \expof{(M-1) \j \mu} \end{bmatrix}$
we have 
\begin{align}
   \ma{d} = \frac{\partial \ma{a}}{\partial \mu} = 
   \j \cdot \begin{bmatrix} 0 & \expof{\j \mu} & 2\cdot \expof{2 \j \mu} & \ldots & 
       (M-1) \cdot \expof{(M-1) \j \mu} \end{bmatrix}.
\end{align}
Consequently the terms $\ma{d}^\herm \cdot \ma{d}$ and $\ma{d}^\herm \cdot \ma{a}$ become
\begin{align}
   \ma{d}^\herm \cdot \ma{d} = \sum_{m=0}^{M-1} m^2 = \frac{1}{6} \cdot (M-1) \cdot M \cdot (2M-1) \\
   \ma{d}^\herm \cdot \ma{a} = -\j \sum_{m=0}^{M-1} m = -\j \cdot \frac{1}{2} \cdot (M-1) \cdot M 
\end{align}
Using these expressions in~\eqref{eqn_app_proof_subsp_d1crb_ir1}, we obtain
\ifCLASSOPTIONdraftcls
\begin{align}
   C = & \frac{1}{2 \cdot \hat{\rho}} \cdot
      \left[
      \frac{1}{6} \cdot (M-1) \cdot M \cdot (2M-1) - 
      \frac{1}{M}\cdot \left| -\j \cdot \frac{1}{2} \cdot (M-1) \cdot M\right|^2
      \right]^{-1}  \notag \\ 
   = & \frac{1}{2 \cdot \hat{\rho}} \cdot
      \left[
      \frac{1}{12} \cdot (M-1) \cdot M \cdot (M+1)
      \right]^{-1} \\ 
   = & \frac{1}{\hat{\rho}} \cdot \frac{6}{(M-1) \cdot M \cdot (M+1)},
\end{align}
\else
\begin{align}
   C = & \frac{1}{2 \cdot \hat{\rho}} \cdot
      \Big[
      \frac{1}{6} \cdot (M-1) \cdot M \cdot (2M-1)  \notag \\ & \quad - 
      \frac{1}{M}\cdot \Big| -\j \cdot \frac{1}{2} \cdot (M-1) \cdot M\Big|^2
      \Big]^{-1} \notag \\ 
   = & \frac{1}{2 \cdot \hat{\rho}} \cdot
      \left[
      \frac{1}{12} \cdot (M-1) \cdot M \cdot (M+1)
      \right]^{-1} \\ 
   = & \frac{1}{\hat{\rho}} \cdot \frac{6}{(M-1) \cdot M \cdot (M+1)},
\end{align}
\fi
which is the desired result. \qed

\section{Proof of Theorem~\ref{thm_perf_mse_singsrc_1d_sls}}\label{sec_app_proof_perf__mse_singsrc_1d_sls}

As shown in Theorem \ref{thm_perf_mse_sls}, the MSE for SLS in the special
case of circularly symmetric white noise can be expressed as
\begin{align}
   \expvof{\left(\Delta \mu_{k,{\rm SLS}}\right)^2 }
   = \frac{\sigma_{\rm n}^2}{2} \cdot \twonorm{\ma{W}_{\rm mat} \cdot \ma{r}_{k,{\rm SLS}}}^2,
\end{align}
where $\ma{r}_{k,{\rm SLS}} = \ma{r}_{k,{\rm LS}} - \Delta\ma{r}_{k,{\rm SLS}}$,
\ifCLASSOPTIONdraftcls
    \begin{align}
     \Delta\ma{r}_{k,{\rm SLS}}^\trans & =  \left(\ma{q}_k^\trans \kron \left[ \ma{p}_k^\trans\cdot 
                       \frac{(\ma{J}_1 \cdot \ma{U}_{\rm s})^\herm}{\expof{\j \mu_k}} \right] \right)
                                     \cdot \left(\ma{F}_{\rm SLS}\cdot\ma{F}_{\rm SLS}^\herm\right)^{-1}
                                     \cdot \ma{W}_{\rm R,U}, \notag
    \end{align}
\else
        \begin{align}
     \Delta\ma{r}_{k,{\rm SLS}}^\trans & =  \left(\ma{q}_k^\trans \kron \left[ \ma{p}_k^\trans\cdot 
                       \frac{(\ma{J}_1 \cdot \ma{U}_{\rm s})^\herm}{\expof{\j \mu_k}} \right] \right)
                       \notag \\ & \quad
                                     \cdot \left(\ma{F}_{\rm SLS}\cdot\ma{F}_{\rm SLS}^\herm\right)^{-1}                                                    \cdot \ma{W}_{\rm R,U}, \notag
        \end{align}                                     
\fi
and $\ma{W}_{\rm R,U}$ as well as $\ma{F}_{\rm SLS}$ are defined in
Theorem \ref{thm_perf_mse_sls}.
For a single source, we have $\ma{p}_k = \ma{q}_k = 1$, $\ma{\Psi} = \Psi = \expof{\j \mu}$,
and $\ma{U}_{\rm s} = \ma{a} / \sqrt{M}$, and therefore
$\Delta\ma{r}_{k,{\rm SLS}}^\trans = \Delta\ma{r}_{{\rm SLS}}^\trans$ simplifies to
\ifCLASSOPTIONdraftcls
\begin{align}
     \Delta\ma{r}_{{\rm SLS}}^\trans & =  
      \frac{(\ma{J}_1 \cdot \ma{a})^\herm}{\sqrt{M} \cdot \expof{\j \mu}} 
                                     \cdot \left(\ma{F}_{\rm SLS}\cdot\ma{F}_{\rm SLS}^\herm\right)^{-1}
                                     \cdot \ma{W}_{\rm R,U} \label{eqn_app_proof_perf_singsrc_sls_deltarls} \\
\ma{W}_{\rm R,U} & = 
    \left(
 \expof{\j \mu} \cdot \ma{J}_1 \right)
     + \left(\ma{J}_1\cdot \ma{a} \left(\ma{J}_1\cdot \ma{a} \right)^+\cdot\ma{J}_2\right)
      - \expof{\j \mu} \cdot \left(\ma{J}_1\cdot \ma{a} \left(\ma{J}_1\cdot \ma{a}\right)^+\cdot\ma{J}_1\right)  - \ma{J}_2 \notag \\
    \ma{F}_{\rm SLS} & = {\left[  
          \ma{J}_1 \cdot \frac{\ma{a}}{\sqrt{M}}, \;
         \expof{\j \mu} \cdot \ma{J}_1 -  \ma{J}_2
         \right]} \notag
\end{align}
\else
\begin{align}
     \Delta\ma{r}_{{\rm SLS}}^\trans & =  
      \frac{(\ma{J}_1 \cdot \ma{a})^\herm}{\sqrt{M} \cdot \expof{\j \mu}} 
                                     \cdot \left(\ma{F}_{\rm SLS}\cdot\ma{F}_{\rm SLS}^\herm\right)^{-1}
                                     \cdot \ma{W}_{\rm R,U} \label{eqn_app_proof_perf_singsrc_sls_deltarls} \\
\ma{W}_{\rm R,U} & = 
    \left(
 \expof{\j \mu} \cdot \ma{J}_1 \right)
     + \left(\ma{J}_1\cdot \ma{a} \left(\ma{J}_1\cdot \ma{a} \right)^+\cdot\ma{J}_2\right)
     \notag \\ &
      - \expof{\j \mu} \cdot \left(\ma{J}_1\cdot \ma{a} \left(\ma{J}_1\cdot \ma{a}\right)^+\cdot\ma{J}_1\right)  - \ma{J}_2 \notag \\
    \ma{F}_{\rm SLS} & = {\left[  
          \ma{J}_1 \cdot \frac{\ma{a}}{\sqrt{M}}, \;
         \expof{\j \mu} \cdot \ma{J}_1 -  \ma{J}_2
         \right]} \notag
\end{align}
\fi
We can write $\ma{W}_{\rm R,U} $ as 
\begin{align}
	\ma{W}_{\rm R,U}  = 
	   \left( \ma{J}_1\cdot \ma{a} \left(\ma{J}_1\cdot \ma{a} \right)^+  - \ma{I}_{M-1}\right)
	   \cdot
	   \left( \ma{J}_2 - \expof{\j \mu} \cdot \ma{J}_1 \right)  \label{eqn_app_proof_perf_singsrc_sls_wru}
\end{align}
Moreover, we need to simplify the term $\left(\ma{F}_{\rm SLS}\cdot\ma{F}_{\rm SLS}^\herm\right)^{-1}$.
It is easily verified that $\ma{F}_{\rm SLS}\cdot\ma{F}_{\rm SLS}^\herm$ can be written as
\ifCLASSOPTIONdraftcls
\begin{align}
   \ma{F}_{\rm SLS}\cdot\ma{F}_{\rm SLS}^\herm
  & = \diagof{\ma{J}_1 \cdot \ma{a}} \cdot \ma{G} \cdot \diagof{\ma{J}_1 \cdot \ma{a}}^\herm \label{eqn_app_proof_perf_singsrc_sls_ffh} \quad \mbox{where}  \\
  \ma{G} & = \frac{1}{M} \cdot \ma{1}_{(M-1)\times (M-1)} + 2 \cdot \ma{I}_{M-1} - \ma{J}_1 \cdot \ma{J}_2^\herm - \ma{J}_2 \cdot \ma{J}_1^\herm    \label{eqn_app_proof_perf_singsrc_sls_Gdef}
\end{align}
\else
\begin{align}
  &  \ma{F}_{\rm SLS}\cdot\ma{F}_{\rm SLS}^\herm
   = \diagof{\ma{J}_1 \cdot \ma{a}} \cdot \ma{G} \cdot \diagof{\ma{J}_1 \cdot \ma{a}}^\herm \label{eqn_app_proof_perf_singsrc_sls_ffh}  \\
  & \ma{G} = \frac{1}{M} \cdot \ma{1}_{(M-1)\times (M-1)} + 2 \cdot \ma{I}_{M-1} - \ma{J}_1 \cdot \ma{J}_2^\herm - \ma{J}_2 \cdot \ma{J}_1^\herm    \label{eqn_app_proof_perf_singsrc_sls_Gdef}
\end{align}

\fi
%
Equation~\eqref{eqn_app_proof_perf_singsrc_sls_ffh} shows that 
the inverse of $\ma{F}_{\rm SLS}\cdot\ma{F}_{\rm SLS}^\herm$ can be expressed as
$\diagof{\ma{J}_1 \cdot \ma{a}} \cdot \ma{G}^{-1}       
      \cdot \diagof{\ma{J}_1 \cdot \ma{a}}^\herm$.
To proceed further, we require the following Lemma:
\begin{lem} \label{lem_app_proof_perf_ginv}
   The inverse of the matrix $\ma{G}$
   defined in \eqref{eqn_app_proof_perf_singsrc_sls_Gdef}
   is given by the following expression
\ifCLASSOPTIONdraftcls   
   \begin{align}
       \matelem{\ma{G}^{-1}}{m_1}{m_2} = 
       &\begin{cases}
       \frac{1}{M} \cdot \left((M-m_1)\cdot m_2 - 3 \cdot \frac{m_1 \cdot (M - m_1) \cdot m_2 \cdot (M - m_2)}{M^2+11} \right)
           & m_1 \geq m_2 \\
			 \frac{1}{M} \cdot \left(m_1\cdot (M-m_2) - 3 \cdot \frac{m_1 \cdot (M - m_1) \cdot m_2 \cdot (M - m_2)}{M^2+11} \right)
           & m_1 < m_2
       \end{cases}
       \label{eqn_app_proof_perf_singsrc_sls_ginv} \\
       & m_1, m_2= 1, 2, \ldots, M-1. \notag
   \end{align}
\else
   \begin{align}
       \matelem{\ma{G}^{-1}}{m_1}{m_2} = 
       &\begin{cases}
       \frac{1}{M} \cdot \Big((M-m_1)\cdot m_2 &  \\ - 3 \cdot \frac{m_1 \cdot (M - m_1) \cdot m_2 \cdot (M - m_2)}{M^2+11} \Big)
           & m_1 \geq m_2 \\
			 \frac{1}{M} \cdot \Big(m_1\cdot (M-m_2) & \\ - 3 \cdot \frac{m_1 \cdot (M - m_1) \cdot m_2 \cdot (M - m_2)}{M^2+11} \Big)
           & m_1 < m_2
       \end{cases}
       \label{eqn_app_proof_perf_singsrc_sls_ginv} \\
       & m_1, m_2= 1, 2, \ldots, M-1. \notag
   \end{align}
\fi   
\end{lem}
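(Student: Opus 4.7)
The plan is to exploit the rank-one structure of the correction term in $\ma{G}$. First I would observe that the selection matrices act as shift operators, so $\ma{J}_1 \cdot \ma{J}_2^\herm$ and $\ma{J}_2 \cdot \ma{J}_1^\herm$ are the sub- and superdiagonal shift matrices in $\real^{(M-1)\times(M-1)}$. Hence $\ma{G}$ can be written as $\ma{G} = \ma{L} + \frac{1}{M} \cdot \ma{1}_{M-1} \cdot \ma{1}_{M-1}^\trans$, where $\ma{L} = 2 \ma{I}_{M-1} - \ma{J}_1 \cdot \ma{J}_2^\herm - \ma{J}_2 \cdot \ma{J}_1^\herm$ is the classical tridiagonal ``discrete Laplacian'' matrix. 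This decomposition turns the problem into the inversion of a rank-one perturbation of a well-known matrix, which is handled cleanly by the Sherman--Morrison formula.

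Next I would invoke the standard closed form for $\ma{L}^{-1}$, namely $\matelem{\ma{L}^{-1}}{i}{j} = \min(i,j) \cdot (M - \max(i,j)) / M$, which can be verified directly by checking that $\ma{L} \cdot \ma{L}^{-1} = \ma{I}_{M-1}$ (each row of the product reduces to a telescoping identity). The two remaining ingredients for Sherman--Morrison are the vector $\ma{L}^{-1} \cdot \ma{1}_{M-1}$ and the scalar $\ma{1}_{M-1}^\trans \cdot \ma{L}^{-1} \cdot \ma{1}_{M-1}$. By splitting the sum at the diagonal, one obtains
\begin{align}
   \vecelem{\ma{L}^{-1} \cdot \ma{1}_{M-1}}{i}
   & = \frac{M-i}{M} \sum_{j=1}^{i} j + \frac{i}{M} \sum_{j=i+1}^{M-1}(M-j)
   = \frac{i \cdot (M-i)}{2}, \notag
\end{align}
and summing this over $i=1, 2, \ldots, M-1$ yields $\ma{1}_{M-1}^\trans \cdot \ma{L}^{-1} \cdot \ma{1}_{M-1} = (M-1) \cdot M \cdot (M+1) / 12$.

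Plugging these into $\ma{G}^{-1} = \ma{L}^{-1} - \ma{L}^{-1} \cdot \ma{1}_{M-1} \cdot \ma{1}_{M-1}^\trans \cdot \ma{L}^{-1} / (M + \ma{1}_{M-1}^\trans \cdot \ma{L}^{-1} \cdot \ma{1}_{M-1})$ gives the correction factor $M + M \cdot (M^2-1)/12 = M \cdot (M^2+11)/12$ in the denominator, and the numerator contributes $i \cdot (M-i) \cdot j \cdot (M-j) / 4$. Collecting terms yields exactly~\eqref{eqn_app_proof_perf_singsrc_sls_ginv}, with the case distinction $m_1 \geq m_2$ versus $m_1 < m_2$ coming directly from the $\min/\max$ in $\matelem{\ma{L}^{-1}}{i}{j}$.

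The main obstacle is not conceptual but bookkeeping: one must carefully verify that the shift matrices are indexed so that $\ma{L}$ truly is the Dirichlet Laplacian on $M-1$ nodes (so that the known inverse formula applies), and that the $1/M$ scaling in the rank-one update is preserved through Sherman--Morrison. Once those indices are checked, the remaining computations are routine polynomial manipulations, and the final expression follows by inspection.
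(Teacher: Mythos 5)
Your proof is correct, but it takes a genuinely different --- and more constructive --- route than the paper's. The paper's proof of this lemma is a pure verification: it simply states that it suffices to multiply the claimed $\ma{G}^{-1}$ from \eqref{eqn_app_proof_perf_singsrc_sls_ginv} by $\ma{G}$ and check that the product is the identity. You instead derive the formula by writing $\ma{G} = \ma{L} + \frac{1}{M}\,\ma{1}_{(M-1)\times 1}\cdot\ma{1}_{1 \times (M-1)}$ with $\ma{L} = 2\ma{I}_{M-1} - \ma{J}_1\ma{J}_2^\herm - \ma{J}_2\ma{J}_1^\herm$ the tridiagonal second-difference matrix, quoting its classical inverse $\matelem{\ma{L}^{-1}}{i}{j} = \min(i,j)\,(M-\max(i,j))/M$, and applying Sherman--Morrison. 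I checked your intermediate quantities: $\ma{J}_1\ma{J}_2^\herm$ and $\ma{J}_2\ma{J}_1^\herm$ are indeed the two shift matrices, $\vecelem{\ma{L}^{-1}\ma{1}}{i} = i(M-i)/2$, $\ma{1}^\trans\ma{L}^{-1}\ma{1} = (M-1)M(M+1)/12$, the denominator $M + (M-1)M(M+1)/12 = M(M^2+11)/12$, and the resulting correction $3\,m_1(M-m_1)\,m_2(M-m_2)/\bigl(M(M^2+11)\bigr)$, all of which reproduce \eqref{eqn_app_proof_perf_singsrc_sls_ginv} exactly, with the $\min/\max$ structure of $\ma{L}^{-1}$ accounting for the two cases $m_1 \geq m_2$ and $m_1 < m_2$. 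What your approach buys is an explanation of where the formula comes from rather than an after-the-fact check, and as a by-product it yields $\gamma(M) = \ma{1}^\trans\ma{G}^{-1}\ma{1} = sM/(M+s)$ with $s = M(M^2-1)/12$, i.e.\ $\gamma(M) = (M-1)M(M+1)/(M^2+11)$, which the paper computes separately afterwards. What it costs is reliance on the closed form for $\ma{L}^{-1}$, which itself must be verified by a multiplication of essentially the same flavor as the paper's check --- so the total verification burden is comparable, merely relocated to a standard, well-known fact.
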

\begin{proof}
 To prove this Lemma it is sufficient to multiply $\ma{G}^{-1}$ in \eqref{eqn_app_proof_perf_singsrc_sls_ginv}
 with $\ma{G}$ defined in \eqref{eqn_app_proof_perf_singsrc_sls_Gdef}
 and show that the result is an identity matrix. 
\end{proof}
Collecting our intermediate results from~\eqref{eqn_app_proof_perf_singsrc_sls_deltarls},
\eqref{eqn_app_proof_perf_singsrc_sls_ffh},
and \eqref{eqn_app_proof_perf_singsrc_sls_wru}
we have for $\Delta \ma{r}_{\rm SLS}^\trans$
\ifCLASSOPTIONdraftcls   
\begin{align}
    \Delta \ma{r}_{\rm SLS}^\trans = & \frac{(\ma{J}_1 \ma{a})^\herm}{\sqrt{M} \cdot \expof{\j \mu}}
    \cdot \diagof{\ma{J}_1 \cdot \ma{a}} \cdot \ma{G}^{-1} \cdot \diagof{\ma{J}_1 \cdot \ma{a}}^\herm
    \cdot \left(\ma{J}_1\cdot \ma{a} \left(\ma{J}_1\cdot \ma{a} \right)^+ - \ma{I}_{M-1} \right)
	   \cdot
	   \left(  \ma{J}_2 - \expof{\j \mu} \cdot \ma{J}_1 \right) \notag \\
     = & \frac{\gamma(M)}{\sqrt{M}}     
     \cdot \left(\ma{J}_1\cdot \ma{a} \right)^+ \cdot
	   \left( \ma{J}_2 / \expof{\j \mu} - \ma{J}_1 \right) 
      - \frac{\ma{g}_{\rm D}^\trans}{\sqrt{M}}     
	    \cdot \diagof{\ma{a}}^\herm
      \label{eqn_app_proof_perf_singsrc_sls_deltarls_ir2}
\end{align}
\else
\begin{align}
    \Delta \ma{r}_{\rm SLS}^\trans = & \frac{(\ma{J}_1 \ma{a})^\herm}{\sqrt{M} \cdot \expof{\j \mu}}
    \cdot \diagof{\ma{J}_1 \cdot \ma{a}} \cdot \ma{G}^{-1} \cdot \diagof{\ma{J}_1 \cdot \ma{a}}^\herm
    \notag \\ 
    & 
    \cdot \left(\ma{J}_1\cdot \ma{a} \left(\ma{J}_1\cdot \ma{a} \right)^+ - \ma{I}_{M-1} \right)
	   \cdot
	   \left(  \ma{J}_2 - \expof{\j \mu} \cdot \ma{J}_1 \right) \notag \\
     = & \frac{\gamma(M)}{\sqrt{M}}     
     \cdot \left(\ma{J}_1\cdot \ma{a} \right)^+ \cdot
	   \left( \ma{J}_2 / \expof{\j \mu} - \ma{J}_1 \right) 
      - \frac{\ma{g}_{\rm D}^\trans}{\sqrt{M}}     
	    \cdot \diagof{\ma{a}}^\herm
      \label{eqn_app_proof_perf_singsrc_sls_deltarls_ir2}
\end{align}
\fi
where the scalar $\gamma(M)$ and the row-vector $\ma{g}_{\rm D}^\trans$ are defined
as
\begin{align}
   \gamma(M) & = \ma{1}_{1 \times (M-1)} \cdot \ma{G}^{-1} \cdot \ma{1}_{(M-1) \times 1} \\
   \ma{g}_{\rm D}^\trans & = \ma{1}_{1 \times (M-1)} \cdot \ma{G}^{-1} \cdot \left( \ma{J}_2  - \ma{J}_1 \right)
   \in \real^{1 \times M}
\end{align}
For $\gamma(M)$ we can show via Lemma~\ref{lem_app_proof_perf_ginv}
%
\begin{align}
   \gamma(M) & = 
   \sum_{m_1=1}^{M-1}  \sum_{m_2=1}^{M-1} \matelem{\ma{G}^{-1}}{m_1}{m_2}
   = \frac{(M-1) M (M+1)}{M^2+11}. \notag
\end{align}
%
Moreover, for 
 the $m$-th element of 
the vector $\ma{g}_{\rm D}^\trans$, which we denote as $g_{{\rm D},m}$
we can show
\begin{align}
  g_{{\rm D},m}
 & =  \frac{6}{M^2+11}
      \cdot (2m-M-1), \quad m=1, 2, \ldots, M
\end{align}
Collecting our intermediate results, we have shown that $\ma{r}_{\rm SLS}^\trans$
can be written as
\ifCLASSOPTIONdraftcls   
\begin{align}
   \ma{r}_{{\rm SLS}}^\trans & = \ma{r}_{{\rm LS}}^\trans - \Delta\ma{r}_{{\rm SLS}}^\trans \notag \\
   & = 
   \sqrt{M} \cdot \left(
    \left(1 - \frac{\gamma(M)}{M}\right)\cdot (\ma{J}_1 \ma{a})^+ \cdot \left(\frac{\ma{J}_2}{\expof{\j \mu}} - \ma{J}_1\right)
    + \frac{1}{M} \cdot\ma{g}_{\rm D}^\trans \cdot \diagof{\ma{a}}^\herm\right)
    \label{eqn_app_proof_perf_singsrc_sls_deltarls_rslsT1}
\end{align}
\else
\begin{align}
   \ma{r}_{{\rm SLS}}^\trans & = \ma{r}_{{\rm LS}}^\trans - \Delta\ma{r}_{{\rm SLS}}^\trans \notag \\
   & = 
   \sqrt{M} \cdot \Big(
    \Big(1 - \frac{\gamma(M)}{M}\Big)\cdot (\ma{J}_1 \ma{a})^+ \cdot \Big(\frac{\ma{J}_2}{\expof{\j \mu}} - \ma{J}_1\Big) \notag \\ &
    + \frac{1}{M} \cdot\ma{g}_{\rm D}^\trans \cdot \diagof{\ma{a}}^\herm\Big)
    \label{eqn_app_proof_perf_singsrc_sls_deltarls_rslsT1}
\end{align}
\fi
where $\ma{r}_{\rm LS}$ has been taken from~\eqref{eqn_app_proof_subsp_d1_rk_ls}.
The next step to computing the mean square error is to calculate the squared norm
of the vector
$\ma{W}_{\rm mat}^\trans \cdot \ma{r}_{{\rm SLS}}$. The first few steps in computing this
product are very similar to the LS case. Following~\eqref{eqn_app_proof_subsp_d1_defatst}
we find that in the SLS case, the result is again equal to the product of
the squared norm the same vector $\ma{\tilde{s}}^\trans$ and 
the squared norm of a modified vector $\ma{\tilde{a}}^\trans_{\rm SLS}$,
i.e., $\twonorm{\ma{r}_{{\rm SLS}}^\trans \cdot \ma{W}_{\rm mat}}^2  = \twonorm{\ma{\tilde{s}}^\trans}^2 \cdot
\twonorm{\ma{\tilde{a}}^\trans_{\rm SLS}}^2$, where
\begin{align}
\ma{\tilde{a}}^\trans_{\rm SLS}  & = 
    \ma{r}_{{\rm SLS}}^\trans \cdot \projp{\ma{a}} 
\end{align}
and $\ma{r}_{{\rm SLS}}^\trans$ has been computed in \eqref{eqn_app_proof_perf_singsrc_sls_deltarls_rslsT1}.
Applying similar 
arguments as in~\eqref{eqn_app_proof_subsp_d1se_atilde}, $\ma{\tilde{a}}^\trans_{\rm SLS}$
can be further simplified into
\ifCLASSOPTIONdraftcls   
\begin{align}
\ma{\tilde{a}}^\trans_{\rm SLS}
  = & \phantom{+} 12 \cdot \frac{\sqrt{M}}{(M-1)(M^2+11)} \cdot 
[-1, 0, \ldots, 0, \expof{-(M-1) \j \mu}] 
 + 
\frac{\sqrt{M}}{M} \cdot
     \frac{6}{M^2+11}
      \cdot \notag \\
      & \phantom{+}[(-M+1), (-M+3)\expof{-\j\mu}, \ldots, (M-1)\cdot\expof{-(M-1) \j \mu}] \notag
\end{align}
\else
\begin{align}
& \ma{\tilde{a}}^\trans_{\rm SLS}
  =  12 \cdot \frac{\sqrt{M}}{(M-1)(M^2+11)} \cdot 
[-1, 0, \ldots, 0, \expof{-(M-1) \j \mu}] \notag \\
 &\quad + 
\frac{\sqrt{M}}{M} \cdot
     \frac{6}{M^2+11}
      \cdot \notag \\ & \quad
      [(-M+1), (-M+3)\expof{-\j\mu}, \ldots, (M-1)\cdot\expof{-(M-1) \j \mu}] \notag
\end{align}
\fi
We conclude that the two vectors $\ma{\tilde{a}}^\trans_{\rm SLS}$ consists of have the same
phase in each element and can hence be conveniently combined. When computing the squared norm of 
$\ma{\tilde{a}}^\trans_{\rm SLS}$ by summing the squared magnitude of all elements the phase
terms cancel which also confirms the intuition the the result should be independent
of the particular position $\mu$. 
We obtain
\ifCLASSOPTIONdraftcls   
\begin{align}
   \twonorm{\ma{\tilde{a}}^\trans_{\rm SLS}}^2
   = & \phantom{+} \sqrt{M}^2  \cdot \left(
   \frac{-12}{(M-1) (M^2+11)} + \frac{6\cdot(-M+1)}{M(M^2+11)}\right)^2 \notag \\
     & + \sum_{m=2}^{M-1}
     \frac{36}{M(M^2+11)^2} \cdot (-M+2m-1)^2 \notag \\
     & + \sqrt{M}^2  \cdot \left(
   \frac{12}{(M-1) (M^2+11)} + \frac{6\cdot(M-1)}{M(M^2+11)}\right)^2 \notag \\
  = & 12\cdot \frac{M^4 - 2 M^3 + 24 M^2 - 22 M + 23}{(M^2+11)^2(M-1)^2} 
  \label{eqn_app_proof_perf_singsrc_sls_normatsls}
\end{align}
\else
\begin{align}
    \twonorm{\ma{\tilde{a}}^\trans_{\rm SLS}}^2
   & =  \sqrt{M}^2  \cdot \left(
   \frac{-12}{(M-1) (M^2+11)} + \frac{6\cdot(-M+1)}{M(M^2+11)}\right)^2 \notag \\
     & + \sum_{m=2}^{M-1}
     \frac{36}{M(M^2+11)^2} \cdot (-M+2m-1)^2 \notag \\
     & + \sqrt{M}^2  \cdot \left(
   \frac{12}{(M-1) (M^2+11)} + \frac{6\cdot(M-1)}{M(M^2+11)}\right)^2 \notag \\
  = & 12\cdot \frac{M^4 - 2 M^3 + 24 M^2 - 22 M + 23}{(M^2+11)^2(M-1)^2} 
  \label{eqn_app_proof_perf_singsrc_sls_normatsls}
\end{align}
\fi
The mean square error is given by (cf. equation~\eqref{eqn_app_proof_perf_singsrc_1d_afr})
$   \expvof{(\Delta\mu_{\rm SLS})^2} = \frac{\sigma_{\rm n}^2}{2} \cdot \twonorm{\ma{\tilde{s}}^\trans}^2 \cdot \twonorm{\ma{\tilde{a}}^\trans_{\rm SLS}}^2$.
Inserting~\eqref{eqn_app_proof_mse_d1se_norms} and~\eqref{eqn_app_proof_perf_singsrc_sls_normatsls}
we have
\ifCLASSOPTIONdraftcls   
\begin{align}
   \expvof{(\Delta\mu_{\rm SLS})^2} & = \frac{\sigma_{\rm n}^2}{2} \cdot \frac{12}{M  N  \hat{P}_{\rm T}} \cdot \frac{M^4 - 2 M^3 + 24 M^2 - 22 M + 23}{(M^2+11)^2(M-1)^2}  \notag \\
   & = \frac{\sigma_{\rm n}^2}{N \cdot \hat{P}_{\rm T}} \cdot 6 \cdot 
   \frac{M^4 - 2 M^3 + 24 M^2 - 22 M + 23}{M(M^2+11)^2(M-1)^2},
\end{align}
\else
\begin{align}
   & \expvof{(\Delta\mu_{\rm SLS})^2} \notag \\
   &= \frac{\sigma_{\rm n}^2}{2} \cdot \frac{12}{M  N  \hat{P}_{\rm T}} \cdot \frac{M^4 - 2 M^3 + 24 M^2 - 22 M + 23}{(M^2+11)^2(M-1)^2} \notag \\
   & = \frac{\sigma_{\rm n}^2}{N \cdot \hat{P}_{\rm T}} \cdot 6 \cdot 
   \frac{M^4 - 2 M^3 + 24 M^2 - 22 M + 23}{M(M^2+11)^2(M-1)^2},
\end{align}
\fi
which is the desired result. \qed

\section{Proof of Theorem~\ref{thm_perf_mse_singsrc_2d}}\label{sec_app_proof_perf__mse_singsrc_2d}

\subsection{\texorpdfstring{$R$-D Standard ESPRIT}{R-D Standard ESPRIT}}

The proof for the $R$-D extension is in fact quite similar to the proof for the 1-D case
provided in Section~\ref{sec_app_proof_perf__mse_singsrc_1d}. 
In fact,~\eqref{eqn_app_proof_perf_singsrc_1d_x0} is still valid, the only difference being
that $\ma{a}(\mu)$ becomes $\ma{a}(\mu^{(1)}) \kron \ldots \kron \ma{a}(\mu^{(R)}) = \ma{a}(\ma{\mu})$. Therefore,
the first steps of the derivation can still be performed in the very same way. We obtain
the MSE for $R$-D Standard ESPRIT as
\begin{align}
    \expvof{\left(\Delta\mu^{(r)}\right)^2} = \frac{\sigma_{\rm n}^2}{2}
    \cdot \twonorm{\ma{r}^{(r)^\trans} \cdot \ma{W}_{\rm mat}}^2
    = \frac{\sigma_{\rm n}^2}{2} \cdot \twonorm{\ma{\tilde{\ma{s}}}}^2 \cdot \twonorm{\tilde{\ma{a}}^{(r)}}^2,
\end{align}
where $\tilde{\ma{s}}$ is the same as in the 1-D case (cf. equation~\eqref{eqn_app_proof_subsp_d1_defatst})
and $\tilde{\ma{a}}^{(r)}$ is given by
\begin{align}
\tilde{\ma{a}}^{(r)^\trans} = 
\sqrt{M}
\frac{\ma{a}^\herm \ma{\tilde{J}}_1^{(r)^\herm}}{\twonorm{\ma{\tilde{J}}_1^{(r)} \ma{a}}^2}
                \left(\frac{\ma{\tilde{J}}_2^{(r)}}{\expof{\j \cdot \mu^{(r)}}} - \ma{\tilde{J}}_1^{(r)}\right)
          \cdot 
         \projp{\ma{a}}. \label{eqn_app_proof_subsp_d1rd_at}
\end{align}
Since $\ma{\tilde{J}}_1^{(r)}$ selects the $M_r-1$ out of $M_r$ elements in the $r$-th mode,
we have $\twonorm{\ma{\tilde{J}}_1^{(r)} \ma{a}}^2 = \frac{M}{M_r} \cdot (M_r-1)$.
Moreover, 
multiplying \eqref{eqn_app_proof_subsp_d1rd_at} out and using the fact that
$\ma{a}$ satisfies the shift invariance equation in the $r$-th mode, we
obtain
\ifCLASSOPTIONdraftcls
\begin{align}
\tilde{\ma{a}}^{(r)^\trans} = &
\frac{\sqrt{M}\cdot M_r}{M \cdot (M_r-1)}
\cdot \left(  \ma{a}^\herm \cdot \ma{\tilde{J}}_1^{(r)^\herm} \cdot     
                \ma{\tilde{J}}_2^{(r)}/\expof{\j \cdot \mu^{(r)}} 
            - \ma{a}^\herm \cdot \ma{\tilde{J}}_1^{(r)^\herm} \cdot     
                \ma{\tilde{J}}_1^{(r)}\right)      \label{eqn_app_proof_subsp_d1rd_ir1}      
\end{align}
\else
\begin{align}
\tilde{\ma{a}}^{(r)^\trans} = &
\frac{\sqrt{M}\cdot M_r}{M \cdot (M_r-1)}
\cdot \Big(  \ma{a}^\herm \cdot \ma{\tilde{J}}_1^{(r)^\herm} \cdot     
                \ma{\tilde{J}}_2^{(r)}/\expof{\j \cdot \mu^{(r)}}  \notag \\ &
            - \ma{a}^\herm \cdot \ma{\tilde{J}}_1^{(r)^\herm} \cdot     
                \ma{\tilde{J}}_1^{(r)}\Big)      \label{eqn_app_proof_subsp_d1rd_ir1}      
\end{align}
\fi
Since the the array steering vector $\ma{a}$ and the selection matrices $\ma{\tilde{J}}^{(r)}_\ell$ 
can be factored into Kronecker products according to
$\ma{a} = \ma{a}^{(1)} \kron \ldots \ma{a}^{(R)}$ and
$\ma{\tilde{J}}_\ell^{(r)} = \ma{I}_{\prod_{n=1}^{r-1} M_n} \kron 
\ma{J}_\ell^{(r)} \kron \ma{I}_{\prod_{n=r+1}^R M_n}$, for $\ell=1,2$ and $r=1, 2, \ldots, R$,
all ``unaffected'' modes can be factored out of~\eqref{eqn_app_proof_subsp_d1rd_ir1} and
we have
\ifCLASSOPTIONdraftcls
\begin{align}
   \tilde{\ma{a}}^{(r)^\trans}  = 
   \frac{\sqrt{M}\cdot M_r}{M \cdot (M_r-1)}
   \cdot \left( \ma{a}^{(1)} \kron \ldots \kron \ma{a}^{(r-1)}\right)^\herm
   \kron \left(\ma{\tilde{a}}^{(r)}_1 - \ma{\tilde{a}}^{(r)}_2 \right)^\trans
   \kron \left( \ma{a}^{(r+1)} \kron \ldots \kron \ma{a}^{(R)}\right)^\herm \label{eqn_app_proof_subsp_d1rd_atrk}
\end{align}
\else
\begin{align}
   \tilde{\ma{a}}^{(r)^\trans} & = 
   \frac{\sqrt{M}\cdot M_r}{M \cdot (M_r-1)}
   \cdot \left( \ma{a}^{(1)} \kron \ldots \kron \ma{a}^{(r-1)}\right)^\herm \notag \\ &
   \kron \left(\ma{\tilde{a}}^{(r)}_1 - \ma{\tilde{a}}^{(r)}_2 \right)^\trans
   \kron \left( \ma{a}^{(r+1)} \kron \ldots \kron \ma{a}^{(R)}\right)^\herm \label{eqn_app_proof_subsp_d1rd_atrk}
\end{align}
\fi
where $\ma{\tilde{a}}^{(r)}_1$ and $\ma{\tilde{a}}^{(r)}_2$ are given by
\ifCLASSOPTIONdraftcls
\begin{align}
    \ma{\tilde{a}}^{(r)^\trans}_1  = \ma{a}^{(r)^\herm} \cdot \ma{{J}}_1^{(r)^\herm} \cdot     
                \ma{{J}}_2^{(r)}/\expof{\j \cdot \mu^{(r)}} \quad \mbox{and} \quad
    \ma{\tilde{a}}^{(r)^\trans}_2  = \ma{a}^{(r)^\herm} \cdot \ma{{J}}_1^{(r)^\herm} \cdot     
                \ma{{J}}_1^{(r)}.
\end{align}
\else
\begin{align}
    \ma{\tilde{a}}^{(r)^\trans}_1 & = \ma{a}^{(r)^\herm} \cdot \ma{{J}}_1^{(r)^\herm} \cdot     
                \ma{{J}}_2^{(r)}/\expof{\j \cdot \mu^{(r)}} \quad \mbox{and} \notag\\ 
    \ma{\tilde{a}}^{(r)^\trans}_2  & = \ma{a}^{(r)^\herm} \cdot \ma{{J}}_1^{(r)^\herm} \cdot     
                \ma{{J}}_1^{(r)}.
\end{align}
\fi
Following the same reasoning as for~\eqref{eqn_app_proof_subsp_d1se_a12tilde} we
find
\begin{align}
   \ma{\tilde{a}}^{(r)^\trans}_1 - \ma{\tilde{a}}^{(r)^\trans}_2
   =  [-1, 0, \ldots, 0, \expof{-(M_r-1) \j \mu^{(r)}}]
   \label{eqn_app_proof_subsp_d1serd_a12tilde}
\end{align}
Consequently, the desired norm $\twonorm{\tilde{\ma{a}}^{(r)}}^2$ is directly
found to be
\ifCLASSOPTIONdraftcls
\begin{align}
  \twonorm{\tilde{\ma{a}}^{(r)}}^2
  = &
    \frac{M\cdot M_r^2}{M^2 \cdot (M_r-1)^2}
    \cdot
    \left(
    \prod_{n=1}^{r-1} \twonorm{\ma{a}^{(n)}}^2 
    \right)\cdot 2 \cdot     
    \left(
    \prod_{n=r+1}^{R} \twonorm{\ma{a}^{(n)}}^2
    \right)
%
 =  2 \cdot 
   \frac{M_r}{(M_r-1)^2}.    
\end{align}
\else
\begin{align}
  \twonorm{\tilde{\ma{a}}^{(r)}}^2
  = &
    \frac{M\cdot M_r^2}{M^2 \cdot (M_r-1)^2}
    \cdot
    \left(
    \prod_{n=1}^{r-1} \twonorm{\ma{a}^{(n)}}^2 
    \right)\cdot 2 \cdot     
    \prod_{n=r+1}^{R} \twonorm{\ma{a}^{(n)}}^2
%
\notag \\  
 = & 2 \cdot 
   \frac{M_r}{(M_r-1)^2}.    
\end{align}
\fi
Therefore, the MSE expression for $R$-D Standard ESPRIT
is given by
\ifCLASSOPTIONdraftcls
\begin{align}
   \expvof{\left(\Delta\mu^{(r)}\right)^2}
   & = 
    \frac{\sigma_{\rm n}^2}{2} \cdot
    \frac{1}{M \cdot N \cdot \hat{P}_{\rm T}}
    \cdot
    2 \cdot 
   \frac{M_r}{(M_r-1)^2}    
  = \frac{\sigma_{\rm n}^2}{ N \cdot \hat{P}_{\rm T}} \cdot
   \frac{M_r}{M \cdot(M_r-1)^2},    
\end{align}
\else
\begin{align}
   \expvof{\left(\Delta\mu^{(r)}\right)^2}
   & = 
    \frac{\sigma_{\rm n}^2}{2} \cdot
    \frac{1}{M \cdot N \cdot \hat{P}_{\rm T}}
    \cdot
    2 \cdot 
   \frac{M_r}{(M_r-1)^2}    \notag \\ &
  = \frac{\sigma_{\rm n}^2}{ N \cdot \hat{P}_{\rm T}} \cdot
   \frac{M_r}{M \cdot(M_r-1)^2},    
\end{align}
\fi
which proofs the first part of the theorem. \qed

\subsection{\texorpdfstring{$R$-D Unitary ESPRIT}{R-D Unitary ESPRIT}}

The second part of the theorem is to prove that in the $R$-D case the performance
of $R$-D Unitary ESPRIT and $R$-D Standard ESPRIT are the same as long
as a single source is present. However, for this part, no changes have
to be made compared to Appendix~\ref{sec_app_proof_perf__mse_singsrc_1due}: As
it was shown there, Forward-Backward-Averaging only affects $\ma{v}_{\rm s}$
and has no effect on $\ma{u}_{\rm s}$ or $\ma{U}_{\rm n}$. Applying the same
steps here immediately proves this part of the theorem.

\subsection{\texorpdfstring{Cram\'er-Rao Bound}{Cramer-Rao Bound}}

The third part is the simplification of the Cramér-Rao Bound.
In the $R$-D case, the CRB is given by
\begin{align}
    \ma{C} = \frac{\sigma_{\rm n}^2}{2\cdot N} \cdot
    \realof{
      \left[ \ma{D}^{(R)^\herm} \cdot 
      \projp{\ma{A}} \cdot \ma{D}^{(R)}
      \right] \odot \left( \ma{1}_{R \times R} \kron \ma{\hat{R}}_{\rm S}^\trans\right)
    }^{-1} \notag
\end{align}
where $\projp{\ma{A}} = \ma{I}_M - \ma{A} \cdot \left(\ma{A}^\herm \cdot \ma{A} \right)^{-1} \cdot \ma{A}^\herm$ and $\ma{D}^{(R)} \in \compl^{M \times (d\cdot R)}$ contains the partial derivatives
of the array steering vectors $\ma{a}_n$ with respect to $\mu_n^{(r)}$ for $n=1, 2, \ldots, d$
and $r=1, 2, \ldots, R$. For the special case of a single source, the CRB simplifies into
(cf.~Appendix \ref{app_proof_d11d_crb})
\begin{align}
    \ma{C} & =
    \frac{\sigma_{\rm n}^2}{2\cdot N \cdot \hat{P}_{\rm T}} \cdot
    \realof{\ma{J}}^{-1}, \notag \\
    \ma{J} & = 
       \ma{D}^{(R)^\herm} \cdot \left(
      \ma{I}_M - \frac{1}{M} \cdot \ma{a} \cdot \ma{a}^\herm
      \right) \cdot \ma{D}^{(R)}           
\end{align}
The columns of $\ma{D}^{(R)} \in \compl^{M \times R}$ are given by
$\ma{\tilde{d}}^{(r)} = \frac{\partial \ma{a}}{\partial \mu^{(r)}} \in \compl^{M \times 1}$. Using
the fact that $\ma{a} = \ma{a}^{(1)} \kron \ldots \kron \ma{a}^{(R)}$ we
obtain
\begin{align}
\ma{\tilde{d}}^{(r)} = \ma{a}^{(1)} \kron \ldots \kron \ma{a}^{(r-1)}
                   \kron \ma{d}^{(r)} \kron
                        \ma{a}^{(r+1)} \kron \ldots \kron \ma{a}^{(R)} \label{eqn_app_proof_subsp_d1rdcrb_ir1}
\end{align}
where $\ma{d}^{(r)} = \frac{\partial \ma{a}^{(r)}}{\partial \mu^{(r)}} \in \compl^{M_r \times 1}
= \j \cdot [0, \expof{\j \mu^{(r)}}, 2\cdot \expof{2 \j \mu^{(r)}}, \ldots, 
(M-1) \expof{(M-1) \j \mu^{(r)}}]$.
Therefore, the elements of the matrix $\ma{J}$ are given by
\begin{align}
   \matelem{\ma{J}}{r_1}{r_2} = \ma{\tilde{d}}^{(r_1)^\herm} \cdot \ma{\tilde{d}}^{(r_2)} - \frac{1}{M} \cdot \ma{\tilde{d}}^{(r_1)^\herm}\cdot \ma{a}\cdot
\ma{a}^{\herm} \cdot \ma{\tilde{d}}^{(r_1)}
\end{align}
With the help of~\eqref{eqn_app_proof_subsp_d1rdcrb_ir1} we find for the diagonal
elements ($r_1 = r_2 = r$)
\ifCLASSOPTIONdraftcls
\begin{align}
   \ma{\tilde{d}}^{(r)^\herm} \cdot \ma{\tilde{d}}^{(r)}    
   & = \frac{M}{M_r}
   \cdot \left(\sum_{m=0}^{M_r-1} m^2\right)  
    = \frac{1}{6} \cdot M \cdot (M_r-1)  \cdot (2M_r-1)
\end{align}
\else
\begin{align}
   \ma{\tilde{d}}^{(r)^\herm} \cdot \ma{\tilde{d}}^{(r)}    
   & = \frac{M}{M_r}
   \cdot \left(\sum_{m=0}^{M_r-1} m^2\right)  \notag \\
   & = \frac{1}{6} \cdot M \cdot (M_r-1)  \cdot (2M_r-1)
\end{align}
\fi
and similarly
\ifCLASSOPTIONdraftcls
\begin{align}
   \ma{\tilde{d}}^{(r)^\herm} \cdot \ma{a}
   & = \frac{M}{M_r}
   \cdot \left(-\j \sum_{m=0}^{M_r-1} m\right)
    = -\j \cdot M \cdot \frac{1}{2} \cdot (M_r-1).
\end{align}
\else
\begin{align}
   \ma{\tilde{d}}^{(r)^\herm} \cdot \ma{a}
   & = \frac{M}{M_r}
   \cdot \left(-\j \sum_{m=0}^{M_r-1} m\right) \notag \\
   & = -\j \cdot M \cdot \frac{1}{2} \cdot (M_r-1).
\end{align}
\fi
Combining these two results we have for $\matelem{\ma{J}}{r}{r}$
\begin{align}
    \matelem{\ma{J}}{r}{r}
    & = \frac{1}{12} \cdot M \cdot (M_r-1)  (M_r+1).
\end{align}
On the other hand, for the off-diagonal elements we obtain
\begin{align}
   \ma{\tilde{d}}^{(r_1)^\herm} \cdot \ma{\tilde{d}}^{(r_2)} 
   & =  \frac{1}{4} M\cdot (M_{r_1}-1) \cdot M_{r_1}\cdot (M_{r_2}-1) \cdot M_{r_2} \notag
\end{align}
and therefore for $\matelem{\ma{J}}{r_1}{r_2}$, $r_1 \neq r_2$
\ifCLASSOPTIONdraftcls
\begin{align}
 \matelem{\ma{J}}{r_1}{r_2}  = & 
 \ma{\tilde{d}}^{(r_1)^\herm} \cdot \ma{\tilde{d}}^{(r_2)} - 
 \ma{\tilde{d}}^{(r_1)^\herm} \cdot \ma{a}
 \cdot \ma{a}^\herm \cdot \ma{\tilde{d}}^{(r_2)^\herm} 
    = 0. \notag
\end{align}
\else
\begin{align}
 \matelem{\ma{J}}{r_1}{r_2}  = & 
 \ma{\tilde{d}}^{(r_1)^\herm} \cdot \ma{\tilde{d}}^{(r_2)} - 
 \ma{\tilde{d}}^{(r_1)^\herm} \cdot \ma{a}
 \cdot \ma{a}^\herm \cdot \ma{\tilde{d}}^{(r_2)^\herm}  = 0. \notag 
\end{align}
\fi
This shows that $\ma{J}$ is diagonal and real-valued. Consequently, the CRB
becomes
\ifCLASSOPTIONdraftcls
\begin{align}
\ma{C} & = \frac{\sigma_{\rm n}^2}{2\cdot N \cdot \hat{P}_{\rm T}} \cdot  \realof{\ma{J}}^{-1}  = 
   \diagof{\left[C^{(1)}, \; \ldots, C^{(R)}\right]} 
   \notag \\
C^{(r)} & = \frac{\sigma_{\rm n}^2}{2\cdot N \cdot \hat{P}_{\rm T}} \frac{12}{M \cdot (M_r-1) \cdot (M_r+1) }
= \frac{1}{\hat{\rho}} \cdot \frac{6}{M \cdot (M_r-1) \cdot (M_r+1) },
\end{align}
\else
\begin{align}
\ma{C} & = \frac{\sigma_{\rm n}^2}{2\cdot N \cdot \hat{P}_{\rm T}} \cdot  \realof{\ma{J}}^{-1}  = 
   \diagof{\left[C^{(1)}, \; \ldots, C^{(R)}\right]} 
   \notag \\
C^{(r)} & = \frac{\sigma_{\rm n}^2}{2\cdot N \cdot \hat{P}_{\rm T}} \frac{12}{M \cdot (M_r-1) \cdot (M_r+1) }
\notag\\&
= \frac{1}{\hat{\rho}} \cdot \frac{6}{M \cdot (M_r-1) \cdot (M_r+1) },
\end{align}
\fi
which is the desired result. \qed

\subsection{\texorpdfstring{$R$-D Standard Tensor-ESPRIT}{R-D Standard Tensor-ESPRIT}}

The fourth part of the theorem is to show that the MSE of $R$-D Standard
Tensor-ESPRIT is the same as the MSE for $R$-D Standard ESPRIT for $d=1$. Since we have
only shown the expressions for $R$-D Standard Tensor-ESPRIT in the special case $R=2$,
we will also assume this case here.

Note that the MSE expression for Tensor-ESPRIT is in fact quite similar to the one
for matrix-based ESPRIT with the only difference being that the matrix $\ma{W}_{\rm mat}$
is replaced by the matrix $\ma{W}_{\rm ten}$, cf.~\eqref{eqn_subsp_perf_wmat} and~\eqref{eqn_subsp_perf_wten_r2},
respectively.

To simplify this expression for the special case $d=1$, we express
the unfoldings of $\ten{X}_0$ as
\ifCLASSOPTIONdraftcls
\begin{align}   
   \unfnot{\ten{X}_0}{1} & = \ma{a}^{(1)} \cdot \left( \ma{a}^{(2)} \kron \ma{s}\right)^\trans,\quad 
   \unfnot{\ten{X}_0}{2}  = \ma{a}^{(2)} \cdot \left( \ma{s} \kron \ma{a}^{(1)}\right)^\trans,\quad 
   \unfnot{\ten{X}_0}{3}  = \ma{s} \cdot \left( \ma{a}^{(1)} \kron \ma{a}^{(2)}\right)^\trans \notag
\end{align}
\else
\begin{align}   
   \unfnot{\ten{X}_0}{1} & = \ma{a}^{(1)} \cdot \left( \ma{a}^{(2)} \kron \ma{s}\right)^\trans,\; 
   \unfnot{\ten{X}_0}{2}  = \ma{a}^{(2)} \cdot \left( \ma{s} \kron \ma{a}^{(1)}\right)^\trans, \notag \\
   \unfnot{\ten{X}_0}{3} & = \ma{s} \cdot \left( \ma{a}^{(1)} \kron \ma{a}^{(2)}\right)^\trans. \notag
\end{align}
\fi
Consequently, we can relate the necessary subspaces of the unfoldings of $\ten{X}_0$ to $\ma{s}$ and $\ma{a}^{(r)}$
via
\ifCLASSOPTIONdraftcls
\begin{align}
  \begin{split}
    \sig{\ma{u}}_1 & = \frac{\ma{a}^{(1)}}{\sqrt{M_1}}, \quad 
    \sig{\ma{u}}_2  = \frac{\ma{a}^{(2)}}{\sqrt{M_2}}, \quad 
    \sig{\ma{u}}_3   = \frac{\ma{s}}{\sqrt{N \cdot \hat{P}_{\rm T}}},\quad 
    \noi{\ma{U}}_1  = \projp{\ma{a}^{(1)}},  \quad
	  \noi{\ma{U}}_2  = \projp{\ma{a}^{(2)}}  \\
    \sig{\ma{\Sigma}}_1 &= \sig{\ma{\Sigma}}_2 = \sig{\ma{\Sigma}}_3 = \sqrt{M \cdot N \cdot \hat{P}_{\rm T}}\\
    \sig{\ma{v}}_1 &= \frac{\left( \ma{a}^{(2)} \kron \ma{s}\right)^\conj}{\sqrt{M_2\cdot N \cdot \hat{P}_{\rm T}}}, \quad 
    \sig{\ma{v}}_2  = \frac{\left( \ma{s}\kron \ma{a}^{(1)} \right)^\conj}{\sqrt{M_1\cdot N \cdot \hat{P}_{\rm T}}}, 
    \quad 
    \sig{\ma{v}}_3 = \ma{u}_{\rm s} = \frac{\ma{a}}{\sqrt{M}}     \\
    \noiC{\ma{V}}_3\cdot\noiT{\ma{V}}_3 & = \ma{U}_{\rm n} \cdot \ma{U}_{\rm n}^\herm = 
    \projp{\ma{a}}.
  \end{split}
  \label{eqn_app_proof_subsp_d1ste_sx_e}
\end{align}
\else
\begin{align}
 \begin{split}
  & \sig{\ma{u}}_1  = \frac{\ma{a}^{(1)}}{\sqrt{M_1}}, \quad 
    \sig{\ma{u}}_2  = \frac{\ma{a}^{(2)}}{\sqrt{M_2}}, \quad 
    \sig{\ma{u}}_3   = \frac{\ma{s}}{\sqrt{N \cdot \hat{P}_{\rm T}}} 
    \\
  & \noi{\ma{U}}_1   = \projp{\ma{a}^{(1)}}, \;
    \noi{\ma{U}}_2   = \projp{\ma{a}^{(2)}} \\
  & \sig{\ma{\Sigma}}_1 = \sig{\ma{\Sigma}}_2 = \sig{\ma{\Sigma}}_3 = \sqrt{M \cdot N \cdot \hat{P}_{\rm T}}\\
  & \sig{\ma{v}}_1 = \frac{\left( \ma{a}^{(2)} \kron \ma{s}\right)^\conj}{\sqrt{M_2\cdot N \cdot \hat{P}_{\rm T}}}, \quad 
    \sig{\ma{v}}_2  = \frac{\left( \ma{s}\kron \ma{a}^{(1)} \right)^\conj}{\sqrt{M_1\cdot N \cdot \hat{P}_{\rm T}}}, 
    \\ & \quad\quad 
    \sig{\ma{v}}_3 = \ma{u}_{\rm s} = \frac{\ma{a}}{\sqrt{M}}     \\
  & \noiC{\ma{V}}_3\cdot\noiT{\ma{V}}_3  = \ma{U}_{\rm n} \cdot \ma{U}_{\rm n}^\herm = 
    \projp{\ma{a}}. %
 \end{split}
 \label{eqn_app_proof_subsp_d1ste_sx_e}
\end{align}
\fi
Moreover, we have for $\ma{T}_r$
\begin{align}
    \ma{T}_r & =  \sig{\ma{u}}_r \cdot \sigH{\ma{u}}_r = 
    \proj{\ma{a}^{(r)}}
    \quad \mbox{for $r=1,2$ and thus} \notag \\
    \ma{T}_1 \kron \ma{T}_2 & = 
    \proj{\ma{a}^{(1)}} \kron \proj{\ma{a}^{(2)}}
     = \proj{\ma{a}}. \label{eqn_app_proof_subsp_d1rdste_ir2}
\end{align}
From~\eqref{eqn_app_proof_subsp_d1ste_sx_e} and~\eqref{eqn_app_proof_subsp_d1rdste_ir2}
it immediately follows that the first term in $\ma{W}_{\rm ten}$ cancels as it contains
$\left[\ma{T}_1 \kron \ma{T}_2\right] \cdot \noiC{\ma{V}}_3\cdot\noiT{\ma{V}}_3$.
We also find $\ma{t}_{r,m} = \ma{a}^{(r)} \cdot \expof{-\j \mu^{(r)} (m-1)} / M_r$ for $r=1,2$
and $m=1,2,\ldots,M_r$.
To simplify the remaining two terms in $\ma{W}_{\rm ten}$ we first simplify some of their
components. 
Using the identity $\ma{u}_{\rm s} = \ma{a}/\sqrt{M}$ and the explicit expression for $\ma{t}_{r,m} $
it is easy to show that
\begin{align}
   \left(\ma{u}_{\rm s}^\trans \kron \ma{I}_M\right) \cdot \ma{\bar{T}}_1
    & = 
    \frac{1}{\sqrt{M}} \left(\ma{a}^{(2)^\trans} \kron \ma{a}^{(1)} \kron \ma{I}_{M_2}\right)
   \\
   \left(\ma{u}_{\rm s}^\trans \kron \ma{I}_M\right) \cdot \ma{\bar{T}}_2 & = 
   \frac{1}{\sqrt{M}}\left(\ma{a}^{(1)^\trans} \kron \ma{I}_{M_1} \kron \ma{a}^{(2)} \right).
\end{align}
Moreover, using the relations from \eqref{eqn_app_proof_subsp_d1ste_sx_e} we
can rewrite $ \sigC{\ma{U}}_r \siginv{\ma{\Sigma}_r} \sigT{\ma{V}}_r
        \kron \noi{\ma{U}_r} \noiH{\ma{U}_r}$ as
%
\ifCLASSOPTIONdraftcls
	\begin{align}
	\left(\sigC{\ma{U}}_1 \siginv{\ma{\Sigma}_1} \sigT{\ma{V}}_1\right) \kron \left(\noi{\ma{U}_1} \noiH{\ma{U}_1}\right)
	& = 
	\frac{1}{M N \hat{P}_{\rm T}} \cdot \left(\ma{a}^{(1)^\conj} \cdot \left(\ma{a}^{(2)} \kron \ma{s}\right)^\herm\right)
	\kron \projp{\ma{a}^{(1)}} \notag \\
	\left(\sigC{\ma{U}}_2 \siginv{\ma{\Sigma}_2} \sigT{\ma{V}}_2\right) \kron \left(\noi{\ma{U}_2} \noiH{\ma{U}_2}\right)
	& = 
	\frac{1}{M N \hat{P}_{\rm T}} \cdot \left(\ma{a}^{(2)^\conj} \cdot \left(\ma{s} \kron \ma{a}^{(1)}\right)^\herm\right)
	\kron \projp{\ma{a}^{(2)}}. \notag
	\end{align}
\else
	\begin{align}
	& \left(\sigC{\ma{U}}_1 \siginv{\ma{\Sigma}_1} \sigT{\ma{V}}_1\right) \kron \left(\noi{\ma{U}_1} \noiH{\ma{U}_1}\right) \notag \\
	 = &
	\frac{1}{M N \hat{P}_{\rm T}} \cdot 
	  \Big(\ma{a}^{(1)^\conj} \cdot \big(\ma{a}^{(2)} \kron \ma{s}\big)^\herm\Big)
	\kron \projp{\ma{a}^{(1)}} \notag \\
	& \left(\sigC{\ma{U}}_2 \siginv{\ma{\Sigma}_2} \sigT{\ma{V}}_2\right) \kron \left(\noi{\ma{U}_2} \noiH{\ma{U}_2}\right) \notag \\
	 = &
	\frac{1}{M N \hat{P}_{\rm T}} \cdot 
	  \Big(\ma{a}^{(2)^\conj} \cdot \big(\ma{s} \kron \ma{a}^{(1)}\big)^\herm\Big)
	\kron \projp{\ma{a}^{(2)}}. \notag
	\end{align}
\fi
%
Combining these intermediate result, the third term in $\ma{W}_{\rm ten}$ can be expressed as
\begin{align}
 &  \left(\ma{u}_{\rm s}^\trans \kron \ma{I}_M\right) \cdot \ma{\bar{T}}_1
    \cdot
    \Big(
    \big(\sigC{\ma{U}}_2 \siginv{\ma{\Sigma}_2} \sigT{\ma{V}}_2\big) \kron 
    \big(\noi{\ma{U}_2} \noiH{\ma{U}_2}\big)\Big) \notag \\
  = &
\frac{M_2}{M N \hat{P}_{\rm T}\sqrt{M}} \left(
    \ma{s}^\herm \kron \ma{a}^{(1)^\herm}
   \kron
   \ma{a}^{(1)} \kron 
   \projp{\ma{a}^{(2)}}
   \right) \label{eqn_app_proof_subsp_d1ste_ir2}
\end{align}
With similar arguments, the second term in $\ma{W}_{\rm ten}$ can be simplified into
\ifCLASSOPTIONdraftcls
\begin{align}
 &  \left(\ma{u}_{\rm s}^\trans \kron \ma{I}_M\right) \cdot \ma{\bar{T}}_2
    \cdot
    \Big(
    \big(\sigC{\ma{U}}_1 \siginv{\ma{\Sigma}_1} \sigT{\ma{V}}_1\big) \kron 
    \big(\noi{\ma{U}_1} \noiH{\ma{U}_1}\big)\Big) \cdot \ma{K}_{M_2 \times (M_1\cdot N)}\notag \\    
  = &
\frac{M_1}{M N \hat{P}_{\rm T}\sqrt{M}} \left(
    \ma{a}^{(2)^\herm} \kron \ma{s}^\herm 
   \kron \projp{\ma{a}^{(1)}}
    \kron \ma{a}^{(2)}   
   \right)\cdot \ma{K}_{M_2 \times (M_1\cdot N)} \notag \\
   = &
\frac{ M_1}{M N \hat{P}_{\rm T}\sqrt{M}} \left(
    \ma{s}^\herm 
   \kron \projp{\ma{a}^{(1)}}
    \kron \ma{a}^{(2)}  \kron \ma{a}^{(2)^\herm}
   \right).\label{eqn_app_proof_subsp_d1ste_ir3}
\end{align}
\else
\begin{align}
 & \mbox{\small $\left(\ma{u}_{\rm s}^\trans \kron \ma{I}_M\right) \cdot \ma{\bar{T}}_2
    \cdot
    \Big(
    \big(\sigC{\ma{U}}_1 \siginv{\ma{\Sigma}_1} \sigT{\ma{V}}_1\big) \kron 
    \big(\noi{\ma{U}_1} \noiH{\ma{U}_1}\big)\Big) \cdot \ma{K}_{M_2 \times (M_1\cdot N)}$}\notag \\    
&  = 
\frac{M_1}{M N \hat{P}_{\rm T}\sqrt{M}} \left(
    \ma{a}^{(2)^\herm} \kron \ma{s}^\herm 
   \kron \projp{\ma{a}^{(1)}}
    \kron \ma{a}^{(2)}   
   \right)\cdot \ma{K}_{M_2 \times (M_1\cdot N)} \notag \\
&   = 
\frac{ M_1}{M N \hat{P}_{\rm T}\sqrt{M}} \left(
    \ma{s}^\herm 
   \kron \projp{\ma{a}^{(1)}}
    \kron \ma{a}^{(2)}  \kron \ma{a}^{(2)^\herm}
   \right).\label{eqn_app_proof_subsp_d1ste_ir3}
\end{align}
\fi
where the last step is a special case of Property~\eqref{eqn_commat_permkron} for commutation matrices.

Using~\eqref{eqn_app_proof_subsp_d1ste_ir2} and~\eqref{eqn_app_proof_subsp_d1ste_ir3} 
in~\eqref{eqn_subsp_perf_wten_r2}, we obtain
\begin{align}
   \ma{W}_{\rm ten} 
   & = \frac{1}{N \hat{P}_{\rm T}\sqrt{M}}  \cdot \ma{s}^\herm \kron
   \left(
    \projp{\ma{a}^{(1)}}
    \kron \proj{\ma{a}^{(2)}}
    +     
    \proj{\ma{a}^{(1)}} \kron  \projp{\ma{a}^{(2)}}
   \right).   \label{eqn_app_proof_subsp_d1ste_wten_a}
\end{align}
%
Comparing~\eqref{eqn_app_proof_subsp_d1ste_wten_a}
matrix-based counterpart
in~\eqref{eqn_app_proof_subsp_d1rd_at}
we find that for a single source, 
$\ma{W}_{\rm mat}$ and $\ma{W}_{\rm ten}$ are in fact quite similar, the only difference
being that $\projp{\ma{a}}$
is replaced by $\projp{\ma{a}^{(1)}} \kron \proj{\ma{a}^{(2)}}
    + \proj{\ma{a}^{(1)}} \kron  \projp{\ma{a}^{(2)}}$.
Therefore, to show the $R$-D Standard ESPRIT and $R$-D Standard Tensor-ESPRIT
have the same MSE for $d=1$, it is sufficient to show that the corresponding terms $\ma{\tilde{a}}^{(r)}$
are the same, i.e., that
\begin{align}
\ma{\bar{a}}^{(r)}  \cdot  \projp{\ma{a}}
= \ma{\bar{a}}^{(r)} \cdot 
   \left(\projp{\ma{a}^{(1)}} \kron \proj{\ma{a}^{(2)}}
       + \proj{\ma{a}^{(1)}} \kron  \projp{\ma{a}^{(2)}}\right) \label{eqn_app_proof_subsp_d1ste_ident}
\end{align}
where $\ma{\bar{a}}^{(r)} = \ma{a}^\herm \ma{\tilde{J}}_1^{(r)^\herm}
                \left(\ma{\tilde{J}}_2^{(r)}/\expof{\j \cdot \mu^{(r)}} - \ma{\tilde{J}}_1^{(r)}\right)$
for $r=1, 2$. Note that $\ma{\bar{a}}^{(r)}  \cdot  \projp{\ma{a}}$
was shown to be equal to (cf. equation~\eqref{eqn_app_proof_subsp_d1rd_atrk})
%
$ \Big(\ma{\tilde{a}}_1^{(1)} - \ma{\tilde{a}}_2^{(1)}\Big)^\trans \kron \ma{a}^{(2)^\trans} $
for $r=1$ and 
$ \ma{a}^{(1)^\trans} \kron \Big(\ma{\tilde{a}}_1^{(2)} - \ma{\tilde{a}}_2^{(2)}\Big)^\trans $
for $r=2$, where 
   $\ma{\tilde{a}}^{(r)^\trans}_1  = \ma{a}^{(r)^\herm} \cdot \ma{{J}}_1^{(r)^\herm} \cdot     
                \ma{{J}}_2^{(r)}/\expof{\j \cdot \mu^{(r)}}$
and $\ma{\tilde{a}}^{(r)^\trans}_2  = \ma{a}^{(r)^\herm} \cdot \ma{{J}}_1^{(r)^\herm} \cdot     
                \ma{{J}}_1^{(r)}$.
%
Expanding the corresponding right-hand side of~\eqref{eqn_app_proof_subsp_d1ste_ident} we have for $r=1$
%
\ifCLASSOPTIONdraftcls
\begin{align}
   \ma{\bar{a}}^{(r)} \cdot 
  \left(\projp{\ma{a}^{(1)}} \kron \proj{\ma{a}^{(2)}}
      + \proj{\ma{a}^{(1)}} \kron  \projp{\ma{a}^{(2)}}\right) & =    
  \left(\ma{a}^{(1)^\herm} \ma{{J}}_1^{(1)^\herm}
                \left(\ma{{J}}_2^{(1)}/\expof{\j \cdot \mu^{(1)}} - \ma{{J}}_1^{(1)}\right)
                \cdot \projp{\ma{a}^{(1)}}
  \right) \kron \ma{a}^{(2)^\herm} \notag \\
  & = 
  \left(\ma{\tilde{a}}^{(1)^\trans}  - \ma{\tilde{a}}^{(2)^\trans} \right) \kron \ma{a}^{(2)^\herm} \notag,
\end{align}
\else
\begin{align}
  & \ma{\bar{a}}^{(r)} \cdot 
  \left(\projp{\ma{a}^{(1)}} \kron \proj{\ma{a}^{(2)}}
      + \proj{\ma{a}^{(1)}} \kron  \projp{\ma{a}^{(2)}}\right) \notag \\
      = &
  \left(\ma{a}^{(1)^\herm} \ma{{J}}_1^{(1)^\herm}
                \left(\ma{{J}}_2^{(1)}/\expof{\j \cdot \mu^{(1)}} - \ma{{J}}_1^{(1)}\right)
                \cdot \projp{\ma{a}^{(1)}}
  \right) \kron \ma{a}^{(2)^\herm} \notag \\
  = &
  \left(\ma{\tilde{a}}^{(1)^\trans}  - \ma{\tilde{a}}^{(2)^\trans} \right) \kron \ma{a}^{(2)^\herm} \notag,
\end{align}
\fi
where we have used the fact that $\ma{a} = \ma{a}^{(1)} \kron \ma{a}^{(2)}$
from its definition,
$\ma{a}^{(2)^\herm} \cdot \proj{\ma{a}^{(2)}} = \ma{a}^{(2)^\herm}$
and $\ma{a}^{(2)^\herm} \cdot \projp{\ma{a}^{(2)}} = \ma{0}_{1 \times M_2}$
since $\proj{\ma{a}^{(2)}}$ and $\projp{\ma{a}^{(2)}}$ are projectors onto
$\ma{a}^{(2)}$ and its orthogonal complement,
and the identity
$\ma{{J}}_2^{(1)}/\expof{\j \cdot \mu^{(1)}}\ma{a}^{(1)} - \ma{{J}}_1^{(1)} \ma{a}^{(1)} = \ma{0}_{(M_2-1)\times 1}$
since $\ma{a}^{(1)}$ satisfies the shift invariance equation for $r=1$.
This shows that the left-hand side and the right-hand side of~\eqref{eqn_app_proof_subsp_d1ste_ident}
are equal for $r=1$. The proof for $r=2$ proceeds in an analogous fashion. 
Consequently, we have shown that for $d=1$
\begin{align}
    \ma{r}^{(r)^\trans} \cdot \ma{W}_{\rm mat} = 
    \ma{r}^{(r)^\trans} \cdot \ma{W}_{\rm ten}, \quad \mbox{for $r=1, 2$}
\end{align}
and hence the MSE for 2-D Standard ESPRIT and 2-D Standard Tensor-ESPRIT are in fact equal. \qed


\subsection{\texorpdfstring{$R$-D Unitary Tensor-ESPRIT}{R-D Unitary Tensor-ESPRIT}}

The fifth and final part of the theorem is to show that the MSE for $R$-D Unitary ESPRIT is
again equal to the MSE for $R$-D Standard ESPRIT in case of a single source. Again, there 
is no need to derive this in full detail. As it was shown in Appendix~\ref{sec_app_proof_perf__mse_singsrc_1due},
Forward-Backward-Averaging has no effect on $\ma{u}_s$ or $\ma{U}_{\rm n}$ but only affects
$\ma{v}_s$ and $\ma{V}_{\rm n}$. This carries over to the tensor case where only
the quantities involving the symbols are affected. However, since the ``symbol part'' and the ``array part''
can always be factorized (cf. equation~\eqref{eqn_app_proof_subsp_d1ste_wten_a}), the arguments from 
Appendix~\ref{sec_app_proof_perf__mse_singsrc_1due} can still be applied to prove this part
of the theorem. \qed


\smallskip
\bibliographystyle{IEEEsort}
\bibliography{refs}

\end{document}